\def\op#1{\mathop{{\it\fam0} #1}\limits}
\newcommand{\beq}{\begin{equation}}
\newcommand{\eeq}{\end{equation}}
\newcommand{\ben}{\begin{eqnarray}}
\newcommand{\een}{\end{eqnarray}}
\newcommand{\be}{\begin{eqnarray*}}
\newcommand{\ee}{\end{eqnarray*}}
\newcommand{\bea}{\begin{eqalph}}
\newcommand{\eea}{\end{eqalph}}
\newcommand{\di}{{\mathrm {dim}\,}}
\newcommand{\hm}{{\mathrm{Hom}\,}}
\newcommand{\im}{{\mathrm {Im}\, }}
\newcommand{\llr}{\op\longleftarrow}
\newcommand{\al}{\alpha}
\newcommand{\bt}{\beta}
\newcommand{\dl}{\delta}
\newcommand{\la}{\lambda}
\newcommand{\La}{\Lambda}
\newcommand{\f}{\phi}
\newcommand{\vf}{\varphi}
\newcommand{\p}{\pi}
\newcommand{\om}{\omega}
\newcommand{\m}{\mu}
\newcommand{\n}{\nu}
\newcommand{\g}{\gamma}
\newcommand{\G}{\Gamma}
\newcommand{\kp}{\kappa}
\newcommand{\vr}{\varrho}
\newcommand{\thh}{\theta}
\newcommand{\vt}{\vartheta}
\newcommand{\cG}{{\mathfrak g}}
\newcommand{\ve}{\varepsilon}
\newcommand{\up}{\upsilon}
\newcommand{\e}{\epsilon}
\newcommand{\ap}{\approx}
\newcommand{\rdr}{\stackrel{\leftarrow}{\dr}{}}
\newcommand{\bll}{\bullet}
\newcommand{\bbc}{{\bf b}}
\newcommand{\nw}[1]{[{#1}]}
\newcommand{\nm}[1]{|{#1}|}
\newcommand{\id}{{\mathrm{Id}\,}}
\newcommand{\si}{\sigma}
\newcommand{\Si}{\Sigma}
\newcommand{\lto}{{\leftarrow}}
\newcommand{\cO}{{\mathcal O}}
\newcommand{\cA}{{\mathcal A}}
\newcommand{\cJ}{{\mathcal J}}
\newcommand{\cR}{{\mathcal R}}
\newcommand{\gL}{{\mathfrak L}}
\newcommand{\gd}{{\mathfrak d}}
\newcommand{\gS}{{\mathfrak S}}
\newcommand{\gQ}{{\mathfrak Q}}
\newcommand{\gA}{{\mathfrak A}}
\newcommand{\cP}{{\mathcal P}}
\newcommand{\cL}{{\mathcal L}}
\newcommand{\cV}{{\mathcal V}}
\newcommand{\cQ}{{\mathcal Q}}
\newcommand{\cE}{{\mathcal E}}
\newcommand{\cF}{{\mathcal F}}
\newcommand{\cC}{{\mathcal C}}
\newcommand{\cK}{{\mathcal K}}
\newcommand{\ccG}{{\mathcal G}}
\newcommand{\bu}{{\mathbf u}}
\newcommand{\cS}{{\mathcal S}}
\newcommand{\bL}{{\mathbf L}}
\newcommand{\bE}{{\mathbf E}}
\newcommand{\bb}{{\mathbf 1}}
\newcommand{\w}{\wedge}
\newcommand{\wt}{\widetilde}
\newcommand{\wh}{\widehat}
\newcommand{\ol}{\overline}
\newcommand{\dr}{\partial}
\newcommand{\ar}{\op\longrightarrow}
\newcommand{\ot}{\otimes}
\let\ssection=\section
\renewcommand{\section}{\setcounter{equation}{0}\ssection}
\newenvironment{eqalph}{\stepcounter{equation}
\setcounter{equationa}{\value{equation}} \setcounter{equation}{0}
\begin{eqnarray}}{\end{eqnarray}\setcounter{equation}{\value{equationa}}}
\newcounter{equationa}[section]
\newcounter{remark}[section]
\newcounter{example}[section]
\newcounter{theorem}[section]
\newcounter{condition}[section]
\newcounter{lemma}[section]
\newcounter{corollary}[section]
\newcounter{definition}[section]
\def\theremark{\arabic{section}.\arabic{remark}}
\def\thedefinition{\arabic{section}.\arabic{theorem}}
\newenvironment{proof}{{\it Proof.}}{\hfill $\Box$
\medskip }
\newenvironment{remark}{\refstepcounter{remark} \medskip {\bf Remark
\theremark.} }{ \medskip }
\newenvironment{example}{\refstepcounter{remark} \medskip {\bf
Example \theremark.} }{ \medskip }
\newenvironment{theorem}{\refstepcounter{theorem} \medskip{\bf
Theorem \thedefinition.}\it }{ \medskip }
\newenvironment{condition}{\refstepcounter{theorem} \medskip{\bf
Condition \thedefinition.}\it }{\medskip }
\newenvironment{lemma}{\refstepcounter{theorem} \medskip{\bf  Lemma
\thedefinition.}\it}{\medskip }
\newenvironment{corollary}{\refstepcounter{theorem} \medskip{\bf
Corollary \thedefinition.} \it}{ \medskip }
\newenvironment{definition}{\refstepcounter{theorem} \medskip{\bf
Definition \thedefinition.} \it}{ \medskip }
\newcommand{\mar}[1]{}
\begin{document}

\hbox{}

\begin{center}

{\Large\bf Noether theorems in a general setting}

\bigskip

G. Sardanashvily

\medskip

Department of Theoretical Physics, Moscow State University,
Moscow, Russia

\bigskip

\end{center}

\begin{abstract}
The first and second Noether theorems are formulated in a general
case of reducible degenerate Grassmann-graded Lagrangian theory of
even and odd variables on graded bundles. Such Lagrangian theory
is characterized by a hierarchy of non-trivial higher-stage
Noether identities and the corresponding higher-stage gauge
symmetries which are described in the homology terms. In these
terms, the second Noether theorems associate to the Koszul -- Tate
chain complex of higher-stage Noether identities the gauge cochain
sequence whose ascent operator provides higher-order gauge
symmetries of Lagrangian theory. If gauge symmetries are
algebraically closed, this operator is extended to the nilpotent
BRST operator which brings the gauge cochain sequence into the
BRST complex. In this framework, the first Noether theorem is
formulated as a straightforward corollary of the first variational
formula. It associates to any variational Lagrangian symmetry the
conserved current whose total differential vanishes on-shell. We
prove in a general setting that a conserved current of a gauge
symmetry is reduced to a total differential on-shell. The
physically relevant examples of gauge theory on principal bundles,
gauge gravitational theory on natural bundles, topological Chern
-- Simons field theory and topological BF theory are present. The
last one exemplifies a reducible Lagrangian system.
\end{abstract}

\tableofcontents

\section{Introduction}

The Noether theorems are well known to treat symmetries of
Lagrangian systems \cite{KS}. The first Noether theorem associates
to a Lagrangian symmetry the conserved current whose total
differential vanishes on-shell. The second ones provide the
correspondence between Noether identities and gauge symmetries of
a Lagrangian system.

We aim to formulate Noether theorems in a general case of
reducible degenerate Lagrangian systems characterized by a
hierarchy of non-trivial higher-stage Noether identities (Section
5.1). To describe this hierarchy, one need to involve
Grassmann-graded variables. In a general setting, we therefore
consider Grassmann-graded Lagrangian systems of even and odd
variables on a smooth manifold $X$ (Section 3).

Lagrangian theory of even (commutative) variables on an
$n$-dimensional smooth manifold $X$ conventionally is formulated
in terms of smooth fibre bundles over $X$ and jet manifolds of
their sections \cite{bau,book,olv,book13,tak2} in the framework of
general technique of non-linear differential operators and
equations \cite{bry,book,kras}. This formulation is based on the
categorial equivalence of projective $C^\infty(X)$-modules of
finite ranks and vector bundles over $X$ in accordance with the
classical Serre -- Swan theorem, generalized to non-compact
manifolds \cite{book09,ren,sard01}.

At the same time, different geometric models of odd variables
either on graded manifolds or supermanifolds are discussed
\cite{cari03,cia95,franc,mont92,mont,sard13}. Both graded
manifolds and supermanifolds are phrased in terms of sheaves of
graded commutative algebras \cite{bart,book09,sard09}. However,
graded manifolds are characterized by sheaves on smooth manifolds,
while supermanifolds are constructed by gluing of sheaves on
supervector spaces. Since non-trivial higher-stage Noether
identities of a Lagrangian system on a smooth manifold $X$ form
graded $C^\infty(X)$-modules, we follow the above mentioned Serre
-- Swan theorem extended to graded manifolds (Theorem \ref{vv0})
\cite{sard13,SS}. It states that, if a graded commutative
$C^\infty(X)$-ring is generated by a projective
$C^\infty(X)$-module of finite rank, it is isomorphic to a ring of
graded functions on a graded manifold whose body is $X$.
Accordingly, we describe odd variables in terms of graded
manifolds too \cite{book09,book13,sard13}.

Let us recall that a graded manifold is a locally-ringed space,
characterized by a smooth body manifold $Z$ and some structure
sheaf $\gA$ of Grassmann algebras on $Z$
\cite{bart,book09,sard09}.  Its sections form a graded commutative
$C^\infty(Z)$-ring $\cA$ of graded functions on a graded manifold
$(Z,\gA)$. It is called the structure ring of $(Z,\gA)$. The
differential calculus on a graded manifold is defined as the
Chevalley -- Eilenberg differential calculus over its structure
ring (Section 2.3). By virtue of the well-known Batchelor theorem
(Theorem \ref{lmp1a}), there exists a vector bundle $E\to Z$ with
a typical fibre $V$ such that the structure sheaf $\gA$ of
$(Z,\gA)$ is isomorphic to a sheaf $\gA_E$ of germs of sections of
the exterior bundle $\w E^*$ of the dual $E^*$ of $E$ whose
typical fibre is the Grassmann algebra $\w V^*$
\cite{bart,batch1}. This Batchelor's isomorphism is not canonical.
In applications, it however is fixed from the beginning.
Therefore, we restrict our consideration to graded manifolds
$(Z,\gA_E)$, called the simple graded manifolds, modelled over
vector bundles $E\to Z$.

Let us note that a smooth manifold $Z$ itself can be treated as a
trivial simple graded manifold $(Z,C^\infty_Z)$ modelled over a
trivial bundle $Z\times\mathbb R\to Z$ whose structure ring of
graded functions is reduced to a ring $C^\infty(Z)$ of smooth real
functions on $Z$ (Example \ref{triv}). Accordingly, a fibre bundle
$Y\to X$ in a Lagrangian theory of even even variables can be
regarded as a graded bundle of trivial graded manifolds $(Y,
C^\infty_Y) \to (X, C^\infty_X)$ (Example \ref{su20}). It follows
that, in a general setting, one can define a configuration space
of Grassmann-graded Lagrangian theory of even and odd variables as
being a graded bundle
\mar{su11}\beq
(Y, \gA_F) \to (X, C^\infty_X) \label{su11}
\eeq
over a trivial graded manifold $(X, C^\infty_X)$ (Section 2.4)
where $(Y, \gA_F)$ is a simple graded manifold modelled over a
vector bundle $F\to Y$ whose body is a smooth bundle $Y\to X$
\cite{book09,sard13,sard14}. If $Y\to X$ is a vector bundle, this
is a particular case of graded vector bundles in \cite{hern,mont}
whose base is a trivial graded manifold.

Lagrangian theory on a fibre bundles $Y\to X$ can be adequately
formulated in algebraic terms of a variational bicomplex of
exterior forms on the infinite order jet manifold $J^\infty Y$ of
sections of $Y\to X$, without appealing to the calculus of
variations \cite{ander,bau,jmp,book09,olv,book13,tak2}. This
technique is extended to Lagrangian theory on graded bundles
\cite{barn,jmp05a,lmp08,cmp04,book09,sard13}. It is
comprehensively phrased in terms of the Grassmann-graded
variational bicomplex (\ref{7}) of graded exterior forms on a
graded infinite order jet manifold $(J^\infty Y, \cA_{J^\infty
F})$ (Section 3). Lagrangians and the Euler -- Lagrange operator
are defined as the elements (\ref{0709}) and the coboundary
operator (\ref{0709'}) of this bicomplex, respectively. The
cohomology of the variational bicomplex provides the global
variational decomposition (\ref{g99}) for Lagrangians and Euler --
Lagrange operators (Theorem \ref{g103}).

In these terms, the first Noether theorem is formulated as a
straightforward corollary of the variational decomposition
(\ref{g99}) (Section 4). It associates to any variational symmetry
of a Lagrangian $L$ the conserved current (\ref{09200}) whose
total differential vanishes on-shell (Theorem \ref{j45}). It is
important that, in the case of a gauge symmetry, the corresponding
conserved current is reduced to a superpotential, i.e., this is a
total differential on-shell (Theorem \ref{supp}). This fact was
proved in different particular variants
\cite{fat94,book09,got92,julia}. We do this in a very general
setting \cite{gauge09}.

Given a gauge symmetry of a graded Lagrangian, the direct second
Noether theorem (Theorem \ref{825}) states that the Euler --
Lagrange operator obeys the corresponding Noether identities. A
problem is that any Euler -- Lagrange operator satisfies Noether
identities, which therefore must be separated into the trivial and
non-trivial ones. These Noether identities can obey first-stage
Noether identities, which in turn are subject to the second-stage
ones, and so on. Thus, there is a hierarchy of Noether and
higher-stage Noether identities which also must be separated into
the trivial and non-trivial ones. In accordance with general
analysis of Noether identities of differential operators
\cite{book09,oper}, if certain homology regularity conditions hold
(Condition \ref{v155}), one can associate to a Grassmann-graded
Lagrangian system the exact Koszul -- Tate complex (\ref{v94})
possessing the boundary operator whose nilpotentness is equivalent
to all complete non-trivial Noether and higher-stage Noether
identities (\ref{v64}) and (\ref{v93})
\cite{jmp05a,lmp08,book09,sard13}.

It should be noted that the notion of higher-stage Noether
identities has come from that of reducible constraints. The Koszul
-- Tate complex of Noether identities has been invented similarly
to that of constraints under the condition that Noether identities
are locally separated into independent and dependent ones
\cite{barn,fisch}. This condition is relevant for constraints,
defined by a finite set of functions which the inverse mapping
theorem is applied to. However, Noether identities unlike
constraints are differential equations. They are given by an
infinite set of functions on a Fr\'echet manifold of infinite
order jets where the inverse mapping theorem fails to be valid.
Therefore, the regularity condition for the Koszul -- Tate complex
of constraints is replaced with the above mentioned homology
regularity condition.

The inverse second Noether theorem formulated in homology terms
(Theorem \ref{w35}) associates to this Koszul -- Tate complex the
cochain sequence (\ref{w108}) with the ascent operator
(\ref{w108'}), called the gauge operator, whose components are
complete non-trivial gauge and higher-stage gauge symmetries of
Lagrangian field theory \cite{lmp08,book09,sard13}.

The gauge operator unlike the Koszul -- Tate one is not nilpotent,
unless gauge symmetries are abelian. This is the cause why an
intrinsic definition of non-trivial gauge and higher-stage gauge
symmetries meets difficulties. Another problem is that gauge
symmetries need not form an algebra \cite{fulp02,jmp09,gom}.
Therefore, we replace the notion of the algebra of gauge
symmetries with some conditions on the gauge operator. Gauge
symmetries are said to be algebraically closed if the gauge
operator admits a nilpotent extension, called the BRST (Becchi --
Rouet -- Stora -- Tyitin) operator (Section 5.4). If the BRST
operator exists, the above mentioned cochain sequence is brought
into the BRST complex. The Koszul -- Tate and BRST complexes
provide a BRST extension of original Lagrangian theory by
Grassmann-graded ghosts and Noether antifields
\cite{jmp09,book09,sard13}.

Classical field theory is formulated adequately as a Lagrangian
theory on graded bundles \cite{book09,sard08,sard13}. Section 6
contains some examples of relevant field models: gauge theory on
principal bundles, gravitation theory on natural bundles, Chern --
Simons topological theory, topological BF theory. The last one
exemplifies a reducible Lagrangian system.

\section{Graded bundles}

Throughout this work, by the Grassmann gradation is meant the
$\mathbb Z_2$-one, and a Grassmann graded structure simply is
called the graded one if there is no danger of confusion.
Hereafter, the symbol $\nw .$ stands for the Grassmann parity.

Smooth manifolds throughout are assumed to be Hausdorff,
second-countable and, consequently, paracompact and locally
compact, countable at infinity. Given a smooth manifold $X$, its
tangent and cotangent bundles $TX$ and $T^*X$ are endowed with
bundle coordinates $(x^\la,\dot x^\la)$ and $(x^\la,\dot x_\la)$
with respect to holonomic frames $\{\dr_\la\}$ and $\{dx^\la\}$,
respectively.

Given a manifold $X$ and its coordinate chart $(U; x^\la)$, a
multi-index $\La$ of the length $|\La|=k$ throughout denotes a
collection of indices $(\la_1...\la_k)$ modulo permutations. By
$\la+\La$ is meant a multi-index $(\la\la_1\ldots\la_k)$.
Summation over a multi-index $\La$ means separate summation over
each its index $\la_i$. We use the compact notation
$\dr_\La=\dr_{\la_k}\circ\cdots\circ\dr_{\la_1}$ and
$\La=(\la_1...\la_k)$.

\subsection{Grassmann-graded algebraic calculus}

Let us summarize the relevant basics of the Grassmann-graded
algebraic calculus \cite{bart,book09,sard09}.

Let $\cK$ be a commutative ring. A $\cK$-module $Q$ is called
graded if it is endowed with a grading automorphism $\g$,
$\g^2=\id$. A graded module falls into a direct sum of modules
$Q=Q_0 \oplus Q_1$ such that $\g(q)=(-1)^{[q]}q$, $q\in Q_{[q]}$.
One calls $Q_0$ and $Q_1$ the even and odd parts of $Q$,
respectively.

In particular, by a real graded vector space $B=B_0\oplus B_1$ is
meant a graded $\mathbb R$-module. It is said to be
$(n,m)$-dimensional if $B_0=\mathbb R^n$ and $B_1=\mathbb R^m$.

A $\cK$-algebra $\cA$ is called graded if it is a graded
$\cK$-module such that $[aa']=([a]+[a']){\rm mod}\,2$, where $a$
and $a'$ are graded-homogeneous elements of $\cA$. Its even part
$\cA_0$ is a subalgebra of $\cA$, and the odd one $\cA_1$ is an
$\cA_0$-module. If $\cA$ is a graded ring with the unit $\bb$,
then $[\bb]=0$.

A graded algebra $\cA$ is called graded commutative if
$aa'=(-1)^{[a][a']}a'a$.

\begin{example} \label{grass} \mar{grass}
Let $V$ be a real vector space, and let $\La=\w V$ be its exterior
algebra endowed with the Grassmann gradation
\mar{+66}\beq
\La=\La_0\oplus \La_1, \qquad \La_0=\mathbb R\op\bigoplus_{k=1}
\op\w^{2k} V, \qquad \La_1=\op\bigoplus_{k=1} \op\w^{2k-1} V.
\label{+66}
\eeq
It is a real graded commutative ring, called the Grassmann
algebra. A Grassmann algebra, seen as an additive group, admits
the decomposition
\mar{+11}\beq
\La=\mathbb R\oplus R =\mathbb R\oplus R_0\oplus R_1=\mathbb R
\oplus (\La_1)^2 \oplus \La_1, \label{+11}
\eeq
where $R$ is the ideal of nilpotents of $\La$. The corresponding
epimorphism $\si:\La\to\mathbb R$ is called the body map. Note
that there is a different definition of a Grassmann algebra
\cite{jad}. Hereafter, we restrict our consideration to Grassmann
algebras of finite rank when $V=\mathbb R^N$. Given a basis
$\{c^i\}$ for $V$, elements of the Grassmann algebra $\La$
(\ref{+66}) take a form
\mar{z784}\beq
a=\op\sum_{k=0,1,\ldots} \op\sum_{(i_1\cdots i_k)}a_{i_1\cdots
i_k}c^{i_1}\cdots c^{i_k}, \label{z784}
\eeq
where the second sum runs through all the tuples $(i_1\cdots i_k)$
such that no two of them are permutations of each other.
\end{example}

Given a graded algebra $\cA$, a left graded $\cA$-module $Q$ is
defined as a left $\cA$-module where $[aq]=([a]+[q]){\rm mod}\,2$.
Similarly, right graded $\cA$-modules are treated.

\begin{example} \label{lie} \mar{lie}
A graded algebra $\cG$ is called a Lie superalgebra if its product
$[.,.]$, called the Lie superbracket, obeys the relations
\be
&& [\ve,\ve']=-(-1)^{[\ve][\ve']}[\ve',\ve],\\
&& (-1)^{[\ve][\ve'']}[\ve,[\ve',\ve'']]
+(-1)^{[\ve'][\ve]}[\ve',[\ve'',\ve]] +
(-1)^{[\ve''][\ve']}[\ve'',[\ve,\ve']] =0.
\ee
Being decomposed in even and odd parts $\cG=\cG_0\oplus \cG_1$, a
Lie superalgebra $\cG$ obeys the relations
\be
[{\cG_0},{\cG_0}]\subset \cG_0, \qquad [{\cG_0},{\cG_1}]\subset
\cG_1, \qquad [{\cG_1},{\cG_1}]\subset \cG_1.
\ee
In particular, an even part $\cG_0$ of a Lie superalgebra $\cG$ is
a Lie algebra. A graded vector space $P$ is a $\cG$-module if it
is provided with an $\mathbb R$-bilinear map
\be
&& \cG\times P\ni (\ve,p)\to \ve p\in P, \qquad [\ve
p]=([\ve]+[p]){\rm mod}\,2,\\
&& [\ve,\ve']p=(\ve\circ\ve'-(-1)^{[\ve][\ve']}\ve'\circ\ve)p.
\ee
\end{example}

If $\cA$ is graded commutative, a graded $\cA$-module $Q$ is
provided with a graded $\cA$-bimodule structure by letting $qa =
(-1)^{[a][q]}aq$, $a\in\cA$, $q\in Q$.

Given a graded commutative ring $\cA$, the following are standard
constructions of new graded modules from the old ones.

$\bullet$ The direct sum of graded modules and a graded factor
module are defined just as those of modules over a commutative
ring.

$\bullet$ The tensor product $P\ot Q$ of graded $\cA$-modules $P$
and $Q$ is their tensor product as $\cA$-modules such that
\be
&& [p\ot q]=[p]+ [q], \qquad p\in P, \qquad q\in Q, \\
&&  ap\ot q=(-1)^{[p][a]}pa\ot q= (-1)^{[p][a]}p\ot aq,  \qquad a\in\cA.
\ee
In particular, the tensor algebra $\ot P$ of a graded $\cA$-module
$P$ is defined just as that of a module over a commutative ring.
Its quotient $\w P$ with respect to the ideal generated by
elements
\be
p\ot p' + (-1)^{[p][p']}p'\ot p, \qquad p,p'\in P,
\ee
is a bigraded exterior algebra of a graded module $P$ provided
with the graded exterior product
\be
p\w p' =- (-1)^{[p][p']}p'\w p.
\ee

$\bullet$ A morphism $\Phi:P\to Q$ of graded $\cA$-modules seen as
additive groups is said to be an even (resp. odd) graded morphism
if $\Phi$ preserves (resp. changes) the Grassmann parity of all
graded-homogeneous elements of $P$, and if the relations
\be
\Phi(ap)=(-1)^{[\Phi][a]}a\Phi(p), \qquad p\in P, \qquad a\in\cA,
\ee
hold. A morphism $\Phi:P\to Q$ of graded $\cA$-modules as additive
groups is called a graded $\cA$-module morphism if it is
represented by a sum of even and odd graded morphisms. A set
$\hm_\cA(P,Q)$ of graded morphisms of a graded $\cA$-module $P$ to
a graded $\cA$-module $Q$ is naturally a graded $\cA$-module. A
graded $\cA$-module $P^*=\hm_\cA(P,\cA)$ is called the dual of a
graded $\cA$-module $P$.

\subsection{Grassmann-graded differential calculus}

Linear differential operators and the differential calculus over a
graded commutative ring are defined similarly to those in
commutative geometry \cite{book09,sard09,book12}.

Let $\cK$ be a commutative ring and $\cA$ a graded commutative
$\cK$-ring. Let $P$ and $Q$ be graded $\cA$-modules. A
$\cK$-module $\hm_\cK (P,Q)$ of graded $\cK$-module homomorphisms
$\Phi:P\to Q$ can be endowed with the two graded $\cA$-module
structures
\be
(a\Phi)(p)= a\Phi(p),  \qquad  (\Phi\bll a)(p) = \Phi (a p),\qquad
a\in \cA, \quad p\in P.
\ee
Let us put
\mar{ws12}\beq
\dl_a\Phi= a\Phi -(-1)^{[a][\Phi]}\Phi\bll a, \qquad a\in\cA.
\label{ws12}
\eeq
An element $\Delta\in\hm_\cK(P,Q)$ is said to be a $Q$-valued
graded differential operator of order $s$ on $P$ if
$\dl_{a_0}\circ\cdots\circ\dl_{a_s}\Delta=0$ for any tuple of
$s+1$ elements $a_0,\ldots,a_s$ of $\cA$.

In particular, zero order graded differential operators coincide
with graded $\cA$-module morphisms $P\to Q$. A first order graded
differential operator $\Delta$ satisfies a relation
\be
&& \dl_a\circ\dl_b\,\Delta(p)=
ab\Delta(p)- (-1)^{([b]+[\Delta])[a]}b\Delta(ap)-
(-1)^{[b][\Delta]}a\Delta(bp)+\\
&& \qquad (-1)^{[b][\Delta]+([\Delta]+[b])[a]}
=0, \qquad a,b\in\cA, \quad p\in P.
\ee

For instance, let $P=\cA$. Any zero order $Q$-valued graded
differential operator $\Delta$ on $\cA$ is defined by its value
$\Delta(\bb)$.  A first order $Q$-valued graded differential
operator $\Delta$ on $\cA$ fulfils a condition
\be
\Delta(ab)= \Delta(a)b+ (-1)^{[a][\Delta]}a\Delta(b)
-(-1)^{([b]+[a])[\Delta]} ab \Delta(\bb), \qquad  a,b\in\cA.
\ee
It is called the $Q$-valued graded derivation of $\cA$ if
$\Delta(\bb)=0$, i.e., the graded Leibniz rule
\mar{ws10}\beq
\Delta(ab) = \Delta(a)b + (-1)^{[a][\Delta]}a\Delta(b), \quad
a,b\in \cA, \label{ws10}
\eeq
holds. One then observes that any first order graded differential
operator on $\cA$ falls into a sum
\be
\Delta(a)= \Delta(\bb)a +[\Delta(a)-\Delta(\bb)a]
\ee
of a zero order graded differential operator $\Delta(\bb)a$ and a
graded derivation $\Delta(a)-\Delta(\bb)a$. If $\dr$ is a graded
derivation of $\cA$, then $a\dr$ is so for any $a\in \cA$. Hence,
graded derivations of $\cA$ constitute a graded $\cA$-module
$\gd(\cA,Q)$, called the graded derivation module. If $Q=\cA$, a
graded derivation module $\gd\cA$ also is a Lie superalgebra over
a commutative ring $\cK$ with respect to a superbracket
\mar{ws14}\beq
[u,u']=u\circ u' - (-1)^{[u][u']}u'\circ u, \qquad u,u'\in \cA.
\label{ws14}
\eeq

Since $\gd\cA$ is a Lie $\cK$-superalgebra, let us consider the
Chevalley -- Eilenberg complex $C^*[\gd\cA;\cA]$ where a graded
commutative ring $\cA$ is a regarded as a $\gd\cA$-module
\cite{fuks,book09,book12}. It is the complex
\mar{ws85}\beq
0\to \cK\to \cA\ar^d C^1[\gd\cA;\cA]\ar^d \cdots
C^k[\gd\cA;\cA]\ar^d\cdots \label{ws85}
\eeq
where
\be
C^k[\gd\cA;\cA]=\hm_\cK(\op\w^k \gd\cA,\cA)
\ee
are $\gd\cA$-modules of $\cK$-linear graded morphisms of graded
exterior products $\op\w^k \gd\cA$ of a graded $\cK$-module
$\gd\cA$ to $\cA$. Let us bring homogeneous elements of $\op\w^k
\gd\cA$ into the form
\be
\ve_1\w\cdots\ve_r\w\e_{r+1}\w\cdots\w \e_k, \qquad
\ve_i\in\gd\cA_0, \quad \e_j\in\gd\cA_1.
\ee
Then the Chevalley -- Eilenberg coboundary operator $d$ of the
complex (\ref{ws85}) is given by the expression
\mar{ws86}\ben
&& dc(\ve_1\w\cdots\w\ve_r\w\e_1\w\cdots\w\e_s)=
\label{ws86}\\
&&\op\sum_{i=1}^r (-1)^{i-1}\ve_i
c(\ve_1\w\cdots\wh\ve_i\cdots\w\ve_r\w\e_1\w\cdots\e_s)+
\nonumber \\
&& \op\sum_{j=1}^s (-1)^r\ve_i
c(\ve_1\w\cdots\w\ve_r\w\e_1\w\cdots\wh\e_j\cdots\w\e_s)
+\nonumber\\
&& \op\sum_{1\leq i<j\leq r} (-1)^{i+j}
c([\ve_i,\ve_j]\w\ve_1\w\cdots\wh\ve_i\cdots\wh\ve_j
\cdots\w\ve_r\w\e_1\w\cdots\w\e_s)+\nonumber\\
&&\op\sum_{1\leq i<j\leq s} c([\e_i,\e_j]\w\ve_1\w\cdots\w
\ve_r\w\e_1\w\cdots
\wh\e_i\cdots\wh\e_j\cdots\w\e_s)+\nonumber\\
&& \op\sum_{1\leq i<r,1\leq j\leq s} (-1)^{i+r+1}
c([\ve_i,\e_j]\w\ve_1\w\cdots\wh\ve_i\cdots\w\ve_r\w
\e_1\w\cdots\wh\e_j\cdots\w\e_s),\nonumber
\een
where the caret $\,\wh{}\,$ denotes omission.

It is easily justified that the complex (\ref{ws85}) contains a
subcomplex $\cO^*[\gd\cA]$ of $\cA$-linear graded morphisms. The
$\mathbb N$-graded module $\cO^*[\gd\cA]$ is provided with the
structure of a bigraded $\cA$-algebra with respect to the graded
exterior product
\mar{ws103'}\ben
&& \f\w\f'(u_1,...,u_{r+s})= \op\sum_{i_1<\cdots<i_r;j_1<\cdots<j_s} {\rm
Sgn}^{i_1\cdots i_rj_1\cdots j_s}_{1\cdots r+s} \f(u_{i_1},\ldots,
u_{i_r}) \f'(u_{j_1},\ldots,u_{j_s}), \label{ws103'} \\
&& \f\in \cO^r[\gd\cA], \qquad \f'\in \cO^s[\gd\cA], \qquad u_k\in \gd\cA,
\nonumber
\een
where $u_1,\ldots, u_{r+s}$ are graded-homogeneous elements of
$\gd\cA$ and
\be
u_1\w\cdots \w u_{r+s}= {\rm Sgn}^{i_1\cdots i_rj_1\cdots
j_s}_{1\cdots r+s} u_{i_1}\w\cdots\w u_{i_r}\w u_{j_1}\w\cdots\w
u_{j_s}.
\ee
The graded Chevalley -- Eilenberg coboundary operator $d$
(\ref{ws86}) and the graded exterior product $\w$ (\ref{ws103'})
bring $\cO^*[\gd\cA]$ into a differential bigraded algebra
(henceforth DBGA) whose elements obey relations
\mar{ws45}\beq
\f\w \f'=(-1)^{|\f||\f'|+[\f][\f']}\f'\w\f, \qquad  d(\f\w\f')=
d\f\w\f' +(-1)^{|\f|}\f\w d\f'. \label{ws45}
\eeq
It is called the graded differential calculus over a graded
commutative $\cK$-ring $\cA$. In particular, we have
\mar{ws47}\beq
\cO^1[\gd\cA]=\hm_\cA(\gd\cA,\cA)=\gd\cA^*. \label{ws47}
\eeq
One can extend this duality relation to the graded interior
product of $u\in\gd\cA$ with any element $\f\in \cO^*[\gd\cA]$ by
the rules
\mar{ws46}\ben
&& u\rfloor(bda) =(-1)^{[u][b]}bu(a),\qquad a,b \in\cA, \nonumber\\
&& u\rfloor(\f\w\f')=
(u\rfloor\f)\w\f'+(-1)^{|\f|+[\f][u]}\f\w(u\rfloor\f').
\label{ws46}
\een
As a consequence, any graded derivation $u\in\gd\cA$ of $\cA$
yields a derivation
\mar{+117}\ben
&& \bL_u\f= u\rfloor d\f + d(u\rfloor\f), \qquad \f\in\cO^*[\gd\cA], \qquad
u\in\gd\cA, \label{+117} \\
&& \bL_u(\f\w\f')=\bL_u(\f)\w\f' + (-1)^{[u][\f]}\f\w\bL_u(\f'), \nonumber
\een
called the graded Lie derivative of the DBGA $\cO^*[\gd\cA]$.

Note that, if $\cA$ is a commutative ring, the graded Chevalley --
Eilenberg differential calculus comes to the familiar one.

The minimal graded differential calculus $\cO^*\cA\subset
\cO^*[\gd\cA]$  over a graded commutative ring $\cA$ consists of
the monomials $a_0da_1\w\cdots\w da_k$, $a_i\in\cA$. The
corresponding complex
\mar{t100}\beq
0\to\cK\ar \cA\ar^d\cO^1\cA\ar^d \cdots  \cO^k\cA\ar^d \cdots
\label{t100}
\eeq
is called the bigraded de Rham complex of a graded commutative
$\cK$-ring $\cA$.

\subsection{Graded manifolds}

A graded manifold of dimension $(n,m)$ is defined as a
local-ringed space $(Z,\gA)$ where $Z$ is an $n$-dimensional
smooth manifold $Z$ and $\gA=\gA_0\oplus\gA_1$ is a sheaf of
Grassmann algebras $\Lambda$ of rank $m$ (see Example \ref{grass})
such that \cite{bart,book09,sard09,book12,book13}:

$\bullet$ there is the exact sequence of sheaves
\mar{cmp140}\beq
0\to \cR \to\gA \op\to^\si C^\infty_Z\to 0, \qquad
\cR=\gA_1+(\gA_1)^2,\label{cmp140}
\eeq
where $\si$ is a body epimorphism onto a sheaf $C^\infty_Z$ of
smooth real functions on $Z$;

$\bullet$ $\cR/\cR^2$ is a locally free sheaf of
$C^\infty_Z$-modules of finite rank (with respect to pointwise
operations), and the sheaf $\gA$ is locally isomorphic to the
exterior product $\w_{C^\infty_Z}(\cR/\cR^2)$.

The sheaf $\gA$ is called a structure sheaf of a graded manifold
$(Z,\gA)$, and a manifold $Z$ is said to be the body of $(Z,\gA)$.
Sections of the sheaf $\gA$ are called graded functions on a
graded manifold $(Z,\gA)$. They make up a graded commutative
$C^\infty(Z)$-ring $\gA(Z)$ called the structure ring of
$(Z,\gA)$.

By virtue of the well-known Batchelor theorem \cite{bart,batch1},
graded manifolds possess the following structure.

\begin{theorem} \label{lmp1a} \mar{lmp1a}
Let $(Z,\gA)$ be a graded manifold. There exists a vector bundle
$E\to Z$ with an $m$-dimensional typical fibre $V$ such that the
structure sheaf $\gA$ of $(Z,\gA)$ is isomorphic to the structure
sheaf $\gA_E$ of germs of sections of the exterior bundle
\mar{ss12f11}\beq
\w E=Z\times \mathbb R \op\oplus_Z E \op\oplus_Z \op\w^2
E\op\oplus_Z\cdots \op\w^k E, \qquad k=\di E-\di Z,
\label{ss12f11}
\eeq
whose typical fibre is a Grassmann algebra $\w V^*$.
\end{theorem}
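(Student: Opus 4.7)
The plan is to build the vector bundle $E$ from the sheaf-theoretic data and then promote a compatible family of local isomorphisms to a global one, exploiting the availability of partitions of unity on the smooth body $Z$.

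First, I would extract $E$ from the defining data. Since $\cR/\cR^2$ is a locally free sheaf of $C^\infty_Z$-modules of finite rank $m$, its module of global sections is a projective $C^\infty(Z)$-module of rank $m$. By the smooth (non-compact) Serre--Swan theorem already cited in the introduction, this module is the module of sections of a unique vector bundle $E\to Z$ of rank $m$, whose typical fibre is an $m$-dimensional real vector space $V$. The exterior bundle $\w E$ in the sense of (\ref{ss12f11}) then has typical fibre the Grassmann algebra on $V$, and its sheaf of germs of sections $\gA_E$ is locally isomorphic, as a sheaf of graded commutative $C^\infty_Z$-algebras, to $\w_{C^\infty_Z}(\cR/\cR^2)$, hence by the second defining property of $(Z,\gA)$ to $\gA$ itself.

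Second, I would fix a locally finite open cover $\{U_\al\}$ of $Z$ by trivialising charts on which there exist graded algebra isomorphisms $\f_\al:\gA|_{U_\al}\to \gA_E|_{U_\al}$ intertwining the body maps. On overlaps $U_\al\cap U_\bt$ the cocycle $g_{\al\bt}=\f_\bt\circ\f_\al^{-1}$ is an automorphism of $\gA_E$ fixing $C^\infty_Z$. Filtering by powers of the nilpotent ideal, each such automorphism decomposes uniquely into a linear part, which is an automorphism of the vector bundle $E$, and a unipotent part of the form $\id+N$ with $N$ strictly raising the $\cR$-filtration degree. The linear cocycle merely reproduces the transition data of $E$ already encoded by $\cR/\cR^2$, so I may assume without loss of generality that it is trivial; the real content is then to show that the unipotent cocycle is a coboundary.

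Third, the main obstacle — and the one that limits Batchelor's theorem to the smooth category — is cohomological. The sheaf of unipotent automorphisms of $\gA_E$ over $C^\infty_Z$ is a sheaf of (non-abelian) groups filtered by the nilpotence degree of $\id-g$, with successive quotients that are fine sheaves of $C^\infty_Z$-modules built from tensor powers of $E^*$. Paracompactness of $Z$ guarantees the vanishing of $H^1$ of any fine sheaf, so, proceeding by induction on nilpotence degree and using a partition of unity subordinate to $\{U_\al\}$ at each step, I would construct local unipotent automorphisms $\s_\al$ with $g_{\al\bt}=\s_\bt\circ\s_\al^{-1}$ (the required non-abelian cocycle equation closing at each filtration level by virtue of the Baker--Campbell--Hausdorff formula, which terminates on nilpotents). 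The corrected maps $\s_\al^{-1}\circ\f_\al$ then agree on overlaps and glue to the desired global isomorphism $\gA\cong\gA_E$. The hard part is precisely this inductive unipotent gluing; the remaining steps are essentially a dictionary between locally free sheaves and vector bundles.
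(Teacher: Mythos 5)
The paper does not prove this statement at all: it is quoted as the ``well-known Batchelor theorem'' with a pointer to \cite{bart,batch1}, so there is no in-text proof to compare yours against. Your sketch is a correct outline of the standard argument that those references contain. The three ingredients are all in place: recovering the bundle from the locally free sheaf $\cR/\cR^2$ via the smooth Serre--Swan correspondence; observing that the local splittings postulated in the definition of a graded manifold give a \v{C}ech cocycle of $C^\infty_Z$-algebra automorphisms of $\gA_E$, which after normalizing the induced linear action on $\cR/\cR^2$ becomes unipotent; and killing that unipotent cocycle by induction on the nilpotence filtration, each graded quotient of the automorphism sheaf being a fine sheaf of $C^\infty_Z$-modules (built from $E\ot\w^{k+1}E^*$) whose $H^1$ vanishes on the paracompact body $Z$. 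This last step is indeed where all the work is, and your reduction of it to filtration-by-filtration abelian \v{C}ech cohomology is exactly how Batchelor closes the argument; the appeal to Baker--Campbell--Hausdorff is dispensable, since at each stage one only needs that the cocycle condition becomes additive modulo the next filtration level. One cosmetic caution: in this paper's conventions $\cR/\cR^2$ should be identified with the sheaf of sections of $E^*$, not of $E$, so that $\gA\cong\w_{C^\infty_Z}(\cR/\cR^2)$ matches the sheaf of sections of $\w E^*$ with typical fibre $\w V^*$ (the display (\ref{ss12f11}), which writes $\w E$ while asserting typical fibre $\w V^*$, is itself internally inconsistent, and your write-up inherits that ambiguity rather than resolving it).
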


Combining Theorem \ref{lmp1a} and the above mentioned classical
Serre -- Swan theorem leads to the following Serre -- Swan theorem
for graded manifolds \cite{jmp05a,SS}.

\begin{theorem} \label{vv0} \mar{vv0}
Let $Z$ be a smooth manifold. A graded commutative
$C^\infty(Z)$-algebra $\cA$ is isomorphic to the structure ring of
a graded manifold with a body $Z$ iff it is the exterior algebra
of some projective $C^\infty(Z)$-module of finite rank.
\end{theorem}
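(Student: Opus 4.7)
The plan is to prove both directions by leveraging Batchelor's theorem (Theorem \ref{lmp1a}) to reduce the graded-manifold side to a vector bundle and then invoke the ordinary Serre--Swan theorem to cross the bridge between vector bundles and projective $C^\infty(Z)$-modules of finite rank. The one non-trivial computational ingredient, which I would isolate and use in both directions, is the canonical identification
\[
\Gamma(Z,\w E^*)\ \cong\ \op\w_{C^\infty(Z)}\Gamma(Z,E^*)
\]
of graded commutative $C^\infty(Z)$-algebras, valid for any finite-rank vector bundle $E\to Z$. For projective modules of finite rank this is the standard fact that tensorial constructions commute with the global-section functor; I would record this as a short preliminary lemma and then the theorem falls into place.

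For the direct implication, I would start with a graded manifold $(Z,\gA)$ whose structure ring is $\cA$. By Theorem \ref{lmp1a}, after fixing a Batchelor isomorphism we may assume $\gA=\gA_E$ for some vector bundle $E\to Z$ of fibre dimension $m=\di\gA-\di Z$, so that $\cA=\gA_E(Z)=\Gamma(Z,\w E^*)$. The classical Serre--Swan theorem for the (possibly non-compact) manifold $Z$ identifies $\Gamma(Z,E^*)$ with a projective $C^\infty(Z)$-module of finite rank, and the preliminary lemma then exhibits $\cA$ as the exterior algebra of this module.

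For the converse, suppose $\cA\cong\op\w_{C^\infty(Z)}P$ with $P$ a projective $C^\infty(Z)$-module of finite rank. The classical Serre--Swan theorem produces a vector bundle $F\to Z$ with $P\cong\Gamma(Z,F)$; set $E:=F^*$, so $P\cong\Gamma(Z,E^*)$. Applying the preliminary lemma in the reverse direction gives $\cA\cong\Gamma(Z,\w E^*)=\gA_E(Z)$, which is exactly the structure ring of the simple graded manifold $(Z,\gA_E)$ having body $Z$.

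The main obstacle is the preliminary lemma, i.e. the identification $\Gamma(\w E^*)\cong\w_{C^\infty(Z)}\Gamma(E^*)$; everything else is bookkeeping. I would prove it by first checking it locally over a trivializing chart $U$ for $E$, where both sides are manifestly isomorphic to a free exterior algebra over $C^\infty(U)$ on the rank-$m$ basis dual to a local frame, and then globalizing via partitions of unity together with projectivity of $\Gamma(E^*)$: any decomposition $\Gamma(E^*)\oplus N\cong (C^\infty(Z))^k$ exhibits $\w\Gamma(E^*)$ as a direct summand of a free exterior algebra, which is compatible with the gluing of local exterior algebras of sections of $\w E^*$. Once this identification is in hand, the equivalence in the theorem reduces to assembling Batchelor's theorem on one side with the classical Serre--Swan theorem on the other.
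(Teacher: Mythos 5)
Your proposal is correct and follows essentially the same route as the paper's own proof: Batchelor's theorem to reduce to a simple graded manifold $(Z,\gA_E)$, the classical Serre -- Swan theorem to pass between vector bundles and projective $C^\infty(Z)$-modules of finite rank, and the compatibility of the exterior-algebra construction with taking global sections. You merely make explicit (as a separate lemma, and in both directions) the identification $\gA_E(Z)\cong\w_{C^\infty(Z)}E^*(Z)$ that the paper's three-sentence proof leaves implicit.
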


\begin{proof}  By virtue of the Batchelor theorem,
any graded manifold is isomorphic to a simple graded manifold
$(Z,\gA_E)$ modelled over some vector bundle $E\to Z$. Its
structure ring $\cA_E$ (\ref{33f1}) of graded functions consists
of sections of the exterior bundle $\w E^*$ (\ref{ss12f11}). The
classical Serre -- Swan theorem states that a $C^\infty(Z)$-module
is isomorphic to the module of sections of a smooth vector bundle
over $Z$ iff it is a projective module of finite rank.
\end{proof}

It should be emphasized that Batchelor's isomorphism in Theorem
\ref{lmp1a} fails to be canonical. We agree to call $(Z,\gA_E)$ in
Theorem \ref{lmp1a} the simple graded manifold modelled over a
characteristic vector bundle $E\to Z$. Accordingly, a structure
ring $\gA_E(Z)$ of a simple graded manifold $(Z,\gA_E)$ is a
structure module
\mar{33f1}\beq
\cA_E=\gA_E(Z)=\w E^*(Z) \label{33f1}
\eeq
of sections of the exterior bundle $\w E^*$.

\begin{example} \label{triv} \mar{triv}
One can treat a local-ringed space $(Z,\gA_0=C_Z^\infty)$ as a
trivial graded manifold. It is a simple graded manifold whose
characteristic bundle is $E=Z\times\{0\}$. Its structure module is
a ring $C^\infty(Z)$ of smooth real functions on $Z$.
\end{example}

Given a simple graded manifold $(Z,\gA_E)$, every trivialization
chart $(U; z^A,q^a)$ of a vector bundle $E\to Z$ yields a
splitting domain $(U; z^A,c^a)$ of $(Z,\gA_E)$ where $\{c^a\}$ is
the corresponding local fibre basis for $E^*\to X$, i.e., $c^a$
are locally constant sections of $E^*\to X$ such that $q_b\circ
c^a=\dl^a_b$. Graded functions on such a chart are $\La$-valued
functions
\mar{z785}\beq
f=\op\sum_{k=0}^m \frac1{k!}f_{a_1\ldots a_k}(z)c^{a_1}\cdots
c^{a_k}, \label{z785}
\eeq
where $f_{a_1\cdots a_k}(z)$ are smooth functions on $U$. One
calls $\{z^A,c^a\}$ the local generating basis for a graded
manifold $(Z,\gA_E)$. Transition functions $q'^a=\rho^a_b(z^A)q^b$
of bundle coordinates on $E\to Z$ induce the corresponding
transformation $c'^a=\rho^a_b(z^A)c^b$ of the associated local
generating basis for a graded manifold $(Z,\gA_E)$ and the
according coordinate transformation law of graded functions
(\ref{z785}).

Let us consider the graded derivation module $\gd\gA(Z)$ of a real
graded commutative ring $\gA(Z)$. It is a real Lie superalgebra
relative to the superbracket (\ref{ws14}). Its elements are called
the graded vector fields on a graded manifold $(Z,\gA)$. A key
point is the following.

\begin{lemma} \mar{mm} \label{mm}
Graded vector fields $u\in\gd\cA_E$ on a simple graded manifold
$(Z,\gA_E)$ are represented by sections of some vector bundle as
follows \cite{book09,sard09,book13}.
\end{lemma}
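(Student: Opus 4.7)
The plan is to identify $\gd\gA_E$ locally as a free graded module over the structure ring of each splitting domain, and then glue the local models into a vector bundle via their transition functions.

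First, fix a splitting domain $(U; z^A, c^a)$. Because every graded function on $U$ has the polynomial form (\ref{z785}), the graded Leibniz rule (\ref{ws10}) shows that any derivation $u\in\gd\gA_E(U)$ is uniquely determined by its values $u^A = u(z^A)$ and $u^a = u(c^a)$ on the generators, and conversely any prescription of such graded functions extends to a graded derivation. Letting $\dr_A, \dr_a$ denote the graded partial derivatives dual to $z^A, c^a$, one obtains the canonical local decomposition $u|_U = u^A \dr_A + u^a \dr_a$, so that $\gd\gA_E(U)$ is a free graded $\gA_E(U)$-module of finite rank with basis $\{\dr_A, \dr_a\}$.

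Second, on the overlap of two splitting domains related by $z'^B = z'^B(z^A)$ and $c'^b = \rho^b_a(z) c^a$, applying $\dr_A, \dr_a$ to $z'^B, c'^b$ and invoking the Leibniz rule yields
$$\dr_A = \dr_A z'^B \cdot \dr'_B + \dr_A \rho^b_a \cdot c^a \, \dr'_b, \qquad \dr_a = \rho^b_a \dr'_b,$$
whence
$$u'^B = \dr_A z'^B \cdot u^A, \qquad u'^b = \rho^b_a u^a + \dr_A \rho^b_c \cdot c^c u^A.$$
These transition matrices take values in the finite-dimensional Grassmann algebra $\La = \w V^*$ and depend smoothly on $z$.

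Third, the cocycle condition for this matrix follows from the chain rule together with the cocycle conditions already known for the smooth atlas $\{z'(z)\}$ of $Z$ and the linear atlas $\{\rho(z)\}$ of $E\to Z$. Since $\di V = m < \infty$, the typical fibre $(T_z Z \oplus E_z) \ot \w V^*$ is a finite-dimensional real vector space, so the cocycle defines a genuine smooth vector bundle $\cV_E \to Z$ whose module of sections is, by construction, canonically identified with $\gd\gA_E(Z)$ via the local formula $u|_U = u^A \dr_A + u^a \dr_a$. The main obstacle lies in step two, namely checking that the ``odd correction'' $\dr_A \rho^b_a \cdot c^a \dr'_b$ appearing in the transformation of $\dr_A$, which reflects the non-canonical character of the splitting $\dr_A + \dr_a$ under a change of generating basis, is compatible with the cocycle condition; once this is verified, the gluing of vector bundles is routine.
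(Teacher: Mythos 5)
Your proposal is correct and follows essentially the same route as the paper: exhibit the local basis $\{\dr_A,\dr/\dr c^a\}$ on a splitting domain, compute the transition law $u'^a=\rho^a_b u^b+u^A\dr_A(\rho^a_b)c^b$, and glue into a vector bundle $\cV_E$ locally isomorphic to $\w E^*\op\ot_Z(E\op\oplus_Z TZ)$. The only point worth flagging is that ``determined by its values on the generators'' should be read as the standard identification of derivations of $C^\infty(U)$ with vector fields (the coefficients $f_{a_1\ldots a_k}$ are arbitrary smooth functions, not algebraic in $z^A$); with that understood, your argument matches the paper's.
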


\begin{proof}
Due to the canonical splitting $VE= E\times E$, the vertical
tangent bundle $VE$ of $E\to Z$ can be provided with the fibre
bases $\{\dr/\dr c^a\}$, which are the duals of the bases
$\{c^a\}$. Then graded vector fields on a splitting domain
$(U;z^A,c^a)$ of $(Z,\gA_E)$ read
\mar{hn14}\ben
&& u= u^A\dr_A + u^a\frac{\dr}{\dr c^a}, \label{hn14}\\
&& u'^A =u^A, \qquad u'^a=\rho^a_ju^j +u^A\dr_A(\rho^a_j)c^j, \nonumber\\
&& \frac{\dr}{\dr c^a}\circ\frac{\dr}{\dr c^b} =-\frac{\dr}{\dr
c^b}\circ\frac{\dr}{\dr c^a}, \qquad \dr_A\circ\frac{\dr}{\dr
c^a}=\frac{\dr}{\dr c^a}\circ \dr_A. \nonumber
\een
where $u^A, u^a$ are local graded functions on $U$, and they act
on graded functions $f\in\gA_E(U)$ (\ref{z785}) by the rule
\mar{cmp50a}\beq
u(f_{a\ldots b}c^a\cdots c^b)=u^A\dr_A(f_{a\ldots b})c^a\cdots c^b
+u^k f_{a\ldots b}\frac{\dr}{\dr c^k}\rfloor (c^a\cdots c^b).
\label{cmp50a}
\eeq
This rule implies the corresponding coordinate transformation law
\be
u'^A =u^A, \qquad u'^a=\rho^a_ju^j +u^A\dr_A(\rho^a_j)c^j
\ee
of graded vector fields. It follows that graded vector fields
(\ref{hn14}) can be represented by sections of a vector bundle
$\cV_E$ which is locally isomorphic to a vector bundle $\w
E^*\op\ot_Z(E\op\oplus_Z TZ)$.
\end{proof}

Given a structure ring $\cA_E$ of graded functions on a simple
graded manifold $(Z,\gA_E)$ and the real Lie superalgebra
$\gd\cA_E$ of its graded derivations, let us consider the graded
differential calculus
\mar{33f21}\beq
\cS^*[E;Z]=\cO^*[\gd\cA_E] \label{33f21}
\eeq
over $\cA_E$ where $\cS^0[E;Z]=\cA_E$.

\begin{lemma} \mar{mm1} \label{mm1}
Since the graded derivation module $\gd\cA_E$ is isomorphic to the
structure module of sections of a vector bundle $\cV_E\to Z$ in
Lemma \ref{mm}, elements of $\cS^*[E;Z]$ are represented by
sections of the exterior bundle $\w\ol\cV_E$ of the $\cA_E$-dual
$\ol\cV_E\to Z$ of $\cV_E$.
\end{lemma}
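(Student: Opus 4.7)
The plan is to combine the purely algebraic definition of $\cS^*[E;Z]=\cO^*[\gd\cA_E]$ with the concrete geometric identification of $\gd\cA_E$ supplied by Lemma \ref{mm}, and then pass from the module of derivations to its $\cA_E$-dual and its exterior powers.

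First I would record the input. By Lemma \ref{mm}, the graded derivation module $\gd\cA_E$ is isomorphic, as a graded $\cA_E$-module, to the module of global sections of the vector bundle $\cV_E\to Z$, which is locally isomorphic to $\w E^*\op\ot_Z(E\op\oplus_Z TZ)$. In particular, $\gd\cA_E$ is finitely generated and projective over $\cA_E$ (this is visible on splitting domains from the explicit local form (\ref{hn14})). Next I would invoke the duality relation (\ref{ws47}), $\cS^1[E;Z]=\cO^1[\gd\cA_E]=\gd\cA_E^*$, so that $\cS^1[E;Z]$ is precisely the $\cA_E$-dual of the sections of $\cV_E$. Because $\gd\cA_E$ is a finitely generated projective $\cA_E$-module, taking its $\cA_E$-dual is compatible with localization, hence $\cS^1[E;Z]$ is the structure module of sections of a vector bundle $\ol\cV_E\to Z$, namely the bundle whose sections are the $\cA_E$-linear morphisms from sections of $\cV_E$ to $\cA_E$.

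For higher degrees I would argue inductively. By definition, $\cS^k[E;Z]$ consists of the $\cA_E$-linear graded morphisms $\op\w^k\gd\cA_E\to\cA_E$, where the graded exterior product is taken over the base ring $\cK=\mathbb R$. Using that $\gd\cA_E$ is a finitely generated projective $\cA_E$-module, the standard identification of $\cA_E$-multilinear graded alternating morphisms with the graded exterior algebra of the $\cA_E$-dual yields $\cS^k[E;Z]\simeq\op\w^k_{\cA_E}\ol\cV_E(Z)$, and hence the module of sections of $\w^k\ol\cV_E$. Assembling these isomorphisms for all $k$ and checking that the graded exterior product (\ref{ws103'}) on $\cO^*[\gd\cA_E]$ transports to the fibrewise graded exterior product on $\w\ol\cV_E$ — a pointwise verification on a splitting domain — produces the desired isomorphism $\cS^*[E;Z]\simeq\w\ol\cV_E(Z)$.

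The main subtlety I expect is not the algebra but the bookkeeping of module structures: $\gd\cA_E$ carries an $\cA_E$-module structure whose underlying $C^\infty(Z)$-module corresponds to a vector bundle over $Z$, while $\cS^*[E;Z]$ is built out of $\cA_E$-multilinear (not merely $C^\infty(Z)$-multilinear) morphisms with $\mathbb R$-linear graded exterior products. The key step is therefore to verify on a splitting domain $(U;z^A,c^a)$ that the natural local bases $\{dz^A,dc^a\}$ dual to $\{\dr_A,\dr/\dr c^a\}$ generate $\cS^*[E;Z]|_U$ freely over $\cA_E(U)$, and that the transition laws induced by $c'^a=\rho^a_b(z)c^b$ agree with the transition laws of the fibrewise dual of $\cV_E$. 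Once this local match is established, the global result follows by gluing, exactly as in the proof of Lemma \ref{mm}.
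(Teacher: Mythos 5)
Your argument is correct and follows exactly the route the paper intends: the paper states this lemma without a separate proof, treating it as an immediate consequence of Lemma \ref{mm} together with the duality relation (\ref{ws47}) and the fact (recorded in the proof of Lemma \ref{v62}) that $\gd\cA_E=\cV_E(Z)$ is a projective $\cA_E$-module of finite rank, so that $\cA_E$-multilinear graded alternating forms identify with the graded exterior algebra of the dual. Your local verification on splitting domains with the bases $\{dz^A,dc^a\}$ and the transition law induced by $c'^a=\rho^a_b(z)c^b$ is precisely the coordinate description the paper records immediately after the lemma.
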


With respect to the dual fibre bases $\{dz^A\}$ for $T^*Z$ and
$\{dc^b\}$ for $E^*$, sections of $\ol\cV_E$ take a coordinate
form
\be
&& \f=\f_A dz^A + \f_adc^a, \\
&& \f'_a=\rho^{-1}{}_a^b\f_b, \qquad \f'_A=\f_A
+\rho^{-1}{}_a^b\dr_A(\rho^a_j)\f_bc^j.
\ee
The duality isomorphism $\cS^1[E;Z]=\gd\cA_E^*$ (\ref{ws47}) is
given by the graded interior product
\be
u\rfloor \f=u^A\f_A + (-1)^{\nw{\f_a}}u^a\f_a.
\ee
Elements of $\cS^*[E;Z]$ are called graded exterior forms on a
graded manifold $(Z,\gA_E)$. In particular, elements of
$\cS^*[E;Z]$ are graded functions on $(Z,\gA_E)$.

Seen as an $\cA_E$-algebra, the DBGA $\cS^*[E;Z]$ (\ref{33f21}) on
a splitting domain $(U;z^A,c^a)$ is locally generated by graded
one-forms $dz^A$, $dc^i$ such that
\be
dz^A\w dc^i=-dc^i\w dz^A, \qquad dc^i\w dc^j= dc^j\w dc^i.
\ee
Accordingly, the graded Chevalley -- Eilenberg coboundary operator
$d$ (\ref{ws86}), called the graded exterior differential, reads
\be
d\f= dz^A \w \dr_A\f +dc^a\w \frac{\dr}{\dr c^a}\f,
\ee
where derivatives $\dr_\la$, $\dr/\dr c^a$ act on coefficients of
graded exterior forms by the formula (\ref{cmp50a}), and they are
graded commutative with graded forms $dz^A$ and $dc^a$. The
formulas (\ref{ws45}) -- (\ref{+117}) hold.

\begin{lemma} \label{v62} \mar{v62}
The DBGA $\cS^*[E;Z]$ (\ref{33f21}) is a minimal differential
calculus over $\cA_E$, i.e., it is generated by elements $df$,
$f\in \cA_E$ \cite{book09}.
\end{lemma}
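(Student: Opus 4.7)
The plan is to reduce to the local structure of $\cS^*[E;Z]$ on splitting domains, then globalize by a partition of unity. First I would note that, by the local coordinate description of the graded exterior differential given just before the lemma, on any splitting domain $(U; z^A, c^a)$ the DBGA $\cS^*[E;Z]|_U$ is generated as a bigraded $\cA_E|_U$-algebra by the 1-forms $dz^A$ and $dc^a$. Since $z^A$ (the smooth body coordinates) and $c^a$ (the odd generators) both lie in $\gA_E(U)$, these local generators are already in the image of $d:\cA_E|_U \to \cS^1[E;Z]|_U$. Hence the claim holds locally.

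To globalize, I would fix a locally finite cover $\{U_\alpha\}$ of $Z$ by splitting domains with coordinates $(z^A_\alpha, c^a_\alpha)$, a smooth partition of unity $\{\chi_\alpha\}$ subordinate to it, and for each $\alpha$ an auxiliary bump $\psi_\alpha\in C^\infty(Z)$ with $\mathrm{supp}(\psi_\alpha)\subset U_\alpha$ and $\psi_\alpha\equiv 1$ on an open neighborhood of $\mathrm{supp}(\chi_\alpha)$ (smooth Urysohn). Then $\tilde z^A_\alpha := \psi_\alpha z^A_\alpha$ and $\tilde c^a_\alpha := \psi_\alpha c^a_\alpha$, extended by zero outside $U_\alpha$, are well-defined global graded functions in $\cA_E$.

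The key identity is that, as global 1-forms on $Z$,
\[
\chi_\alpha \, d\tilde z^A_\alpha = \chi_\alpha \, dz^A_\alpha, \qquad \chi_\alpha \, d\tilde c^a_\alpha = \chi_\alpha \, dc^a_\alpha,
\]
since $d\tilde z^A_\alpha = \psi_\alpha dz^A_\alpha + z^A_\alpha d\psi_\alpha$ and $d\psi_\alpha$ vanishes on the neighborhood of $\mathrm{supp}(\chi_\alpha)$ where $\psi_\alpha\equiv 1$, while both sides vanish outside $U_\alpha$. Given $\omega\in\cS^k[E;Z]$ with local expansion $\omega|_{U_\alpha} = \sum_{I,J} f^\alpha_{IJ}\, dz^{A_1}_\alpha\w\cdots\w dz^{A_r}_\alpha\w dc^{a_1}_\alpha\w\cdots\w dc^{a_s}_\alpha$ in $\gA_E(U_\alpha)$-coefficients, multiplication by $\chi_\alpha$ and applying the identity to each wedge factor (using distinct auxiliary functions $\psi_\alpha$ for each coordinate to keep the support condition) gives
\[
\chi_\alpha \omega = \sum_{I,J} (\chi_\alpha f^\alpha_{IJ})\, d\tilde z^{A_1}_\alpha\w\cdots\w d\tilde c^{a_s}_\alpha,
\]
where each coefficient $\chi_\alpha f^\alpha_{IJ}$ extends by zero to an element of $\cA_E$. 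Thus $\chi_\alpha\omega$ is a finite sum of monomials $a_0\, da_1\w\cdots\w da_k$ with $a_i\in\cA_E$, and $\omega=\sum_\alpha \chi_\alpha\omega$ (locally finite) exhibits $\omega\in\cO^*\cA_E$.

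I expect the main obstacle to be the support bookkeeping in the third paragraph: making sure the substitutions $dz^A_\alpha \mapsto d\tilde z^A_\alpha$ are \emph{exact} equalities after being damped by $\chi_\alpha$, not merely approximate ones on a neighborhood. This is why the auxiliary $\psi_\alpha$ must satisfy the stronger condition $\psi_\alpha\equiv 1$ on an \emph{open neighborhood} of $\mathrm{supp}(\chi_\alpha)$ (so that $d\psi_\alpha$ vanishes identically there), rather than merely on $\mathrm{supp}(\chi_\alpha)$ itself. Once this is set up, the remaining extension-by-zero and reindexing steps are routine.
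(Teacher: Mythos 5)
Your local analysis is sound and the bump-function identity $\chi_\alpha\, d\tilde z^A_\alpha=\chi_\alpha\, dz^A_\alpha$ is correct, but note that your route is genuinely different from the paper's: the paper does not use partitions of unity at all. It observes (Lemma \ref{mm}) that $\gd\cA_E=\cV_E(Z)$ is the module of sections of a vector bundle, hence a projective $C^\infty(Z)$- and $\cA_E$-module of \emph{finite rank}, so that its $\cA_E$-dual $\cS^1[E;Z]$ is also projective of finite rank and reflexive, and generation by the $df$ follows from this duality. The finite-rank projectivity is precisely what controls finiteness of the generating set.

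That points to the one genuine gap in your argument. The minimal differential calculus $\cO^*\cA_E$ is by definition the subalgebra consisting of \emph{finite} sums of monomials $a_0\,da_1\w\cdots\w da_k$, $a_i\in\cA_E$. Your final decomposition $\omega=\sum_\alpha\chi_\alpha\omega$ is only locally finite; when $Z$ is non-compact (a case the paper explicitly cares about) the cover $\{U_\alpha\}$ is infinite, and an infinite locally finite series of elements of $\cO^*\cA_E$ is not thereby shown to lie in $\cO^*\cA_E$. The standard repair is available: since $\dim Z<\infty$, refine to a cover of finite multiplicity and colour it with finitely many colours so that same-coloured domains are pairwise disjoint; a disjoint union of splitting domains is again a splitting domain (with ``coordinates'' $z^A$, $c^a$ defined piecewise), so $Z$ is covered by \emph{finitely many} possibly disconnected splitting domains and your construction yields a genuinely finite sum. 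Equivalently, one can invoke finite generation of $\cS^1[E;Z]$ as the module of sections of a vector bundle over $Z$ --- which is in essence what the paper's duality argument does. As written, your proof establishes the lemma only for compact $Z$.
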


\begin{proof}
Since $\gd\cA_E=\cV_E(Z)$, this is a projective $C^\infty(Z)$- and
$\cA_E$-module of finite rank, and so is its $\cA_E$-dual
$\cS^1[E;Z]$. Hence, $\gd\cA_E$ is the $\cA_E$-dual of
$\cS^1[E;Z]$ and, consequently, $\cS^1[E;Z]$ is generated by
elements $df$, $f\in \cA_E$.
\end{proof}

The bigraded de Rham complex (\ref{t100}) of the minimal graded
differential calculus $\cS^*[E;Z]$ reads
\mar{+137}\beq
0\to \mathbb R\to \cA_E \ar^d \cS^1[E;Z]\ar^d\cdots
\cS^k[E;Z]\ar^d\cdots. \label{+137}
\eeq
Its cohomology $H^*(\cA_E)$  is called the de Rham cohomology of a
simple graded manifold $(Z,\gA_E)$.

In particular, given the differential graded algebra $\cO^*(Z)$ of
exterior forms on $Z$, there exists a canonical monomorphism
\mar{uut}\beq
\cO^*(Z)\to \cS^*[E;Z] \label{uut}
\eeq
and a body epimorphism $\cS^*[E;Z]\to \cO^*(Z)$ which are cochain
morphisms of the de Rham complex (\ref{+137}) and that of
$\cO^*(Z)$. Then one can show the following
\cite{book09,ijgmmp07}.

\begin{theorem} \label{33t3} \mar{33t3}
The de Rham cohomology of a graded manifold $(Z,\gA_E)$ equals the
de Rham cohomology of its body $Z$.
\end{theorem}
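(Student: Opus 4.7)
The plan is to identify the graded de Rham complex $\cS^*[E;Z]$ with the global sections of a fine sheaf resolution of the constant sheaf $\mathbb R_Z$ on $Z$, and then to conclude by abstract de Rham theory together with the classical de Rham theorem for ordinary manifolds.

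First, I would sheafify: let $\gS^k$ denote the sheaf on $Z$ whose sections over an open $U\subset Z$ are the graded exterior $k$-forms on the restricted simple graded manifold $(U,\gA_{E|_U})$, so that $\Gamma(Z,\gS^k)=\cS^k[E;Z]$. Each $\gS^k$ is a sheaf of $C^\infty_Z$-modules through the body epimorphism $\sigma:\gA_E\to C^\infty_Z$ of (\ref{cmp140}), hence admits smooth partitions of unity. Therefore every $\gS^k$ is fine and, in particular, acyclic for the global-section functor.

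The crucial local step is a graded Poincar\'e lemma: on any splitting domain $(U;z^A,c^a)$ with $U$ diffeomorphic to a star-shaped open subset of $\mathbb R^n$, the complex $(\cS^*[E;U],d)$ is a resolution of $\mathbb R$. To see this, decompose the graded exterior differential as $d=d_z+d_c$, with $d_z=dz^A\w\dr_A$ the usual exterior differential in the $z$-directions and $d_c=dc^a\w\dr/\dr c^a$ a Koszul-type differential in the $c^a/dc^a$ variables. Recalling from the excerpt that the $dc^a$ graded-commute with one another while the $c^a$ anticommute, the pair $(d_z,d_c)$ presents $\cS^*[E;U]$ as the total complex of a bicomplex whose two differentials graded-commute. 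The classical Poincar\'e lemma for star-shaped $U$ kills the $d_z$-cohomology in positive degree, while the odd subcomplex generated by $c^a$ and $dc^a$ is acyclic in positive degree, being the Koszul resolution of $\mathbb R$ by $\w(c)\otimes\mathrm{Sym}(dc)$. A K\"unneth/spectral-sequence combination then yields exactness of $0\to\mathbb R\to\cS^*[E;U]$.

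Passing to stalks, this exhibits $0\to\mathbb R_Z\to\gA_E\to\gS^1\to\gS^2\to\cdots$ as a fine resolution of the constant sheaf $\mathbb R_Z$ on $Z$. Consequently
\[
H^k(\cA_E)=H^k\bigl(\Gamma(Z,\gS^*)\bigr)\cong H^k(Z;\mathbb R),
\]
and the classical de Rham theorem identifies the right-hand side with $H^k_{dR}(Z)$. The canonical cochain morphisms $\cO^*(Z)\to\cS^*[E;Z]$ of (\ref{uut}) and the body epimorphism in the opposite direction realise this isomorphism at the level of complexes, so the identification is natural.

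The main obstacle is the graded Poincar\'e lemma, and specifically the Koszul step: one must verify exactness of the purely odd subcomplex generated by $c^a$ and $dc^a$ with differential $dc^a\w\dr/\dr c^a$. This is standard once one recognises it as the Koszul resolution above, but the sign bookkeeping — needed both for $d_z$ and $d_c$ to graded-commute and for the combined contracting homotopy to satisfy $dh+hd=\mathrm{id}-\pi$ on positive-degree forms — is the most delicate point, and relies on the graded Leibniz rule (\ref{ws10}) applied to the generators of $\cS^*[E;U]$.
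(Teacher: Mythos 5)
Your proof follows essentially the same route as the paper: sheafify the graded de Rham complex, observe that the sheaves of graded $k$-forms are fine sheaves of $C^\infty_Z$-modules and hence acyclic, invoke the graded Poincar\'e lemma to obtain a fine resolution of the constant sheaf $\mathbb R$ on $Z$, and conclude by the abstract de Rham theorem together with $H^*(Z;\mathbb R)\cong H^*_{\rm DR}(Z)$. The only substantive difference is that the paper simply cites the graded Poincar\'e lemma from the literature, whereas you supply a correct sketch of it via the splitting $d=d_z+d_c$ and the acyclicity of the Koszul complex $\w(c)\otimes\mathrm{Sym}(dc)$ in positive degree.
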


\begin{proof}
Let $\gA_E^k$ denote the sheaf of germs of graded $k$-forms on
$(Z,\gA_E)$. Its structure module is $\cS^k[E;Z]$. These sheaves
constitute the complex
\mar{1033}\beq
0\to\mathbb R\ar \gA_E \ar^d \gA_E^1\ar^d\cdots
\gA_E^k\ar^d\cdots. \label{1033}
\eeq
Its members $\gA_E^k$ are sheaves of $C^\infty_Z$-modules on $Z$
and, consequently, are fine and acyclic. Furthermore, the
Poincar\'e lemma for graded exterior forms holds \cite{bart}. It
follows that the complex (\ref{1033}) is a fine resolution of the
constant sheaf $\mathbb R$ on a manifold $Z$.  Then, by virtue of
the abstract de Rham theorem \cite{book09,hir}, there is an
isomorphism
\mar{+136}\beq
H^*(\cA_E)=H^*(Z;\mathbb R)=H^*_{\rm DR}(Z) \label{+136}
\eeq
of the cohomology $H^*(\cA_E)$ to the de Rham cohomology $H^*_{\rm
DR}(Z)$ of a smooth manifold $Z$.
\end{proof}

\begin{corollary} \label{33c1} \mar{33c1}
The cohomology isomorphism (\ref{+136}) accompanies the cochain
monomorphism (\ref{uut}). Hence, any closed graded exterior form
is decomposed into a sum $\f=\si +d\xi$ where $\si$ is a closed
exterior form on $Z$.
\end{corollary}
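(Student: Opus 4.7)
The plan is to prove the two assertions in sequence. The substantive content is the first: I need to show that the cohomology isomorphism of (\ref{+136}) is nothing but the map induced on cohomology by the cochain monomorphism $\iota:\cO^*(Z)\to\cS^*[E;Z]$ of (\ref{uut}). Granted this, the decomposition $\f=\si+d\xi$ will follow at once from surjectivity of the induced map $\iota_*$.

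To handle the first step I would pass to sheaves. The body epimorphism $\si:\gA_E\to C^\infty_Z$ of (\ref{cmp140}) admits a canonical splitting: on any splitting domain $(U;z^A,c^a)$ a smooth function $f\in C^\infty(U)$ is identified with the degree-zero summand of the expansion (\ref{z785}), and this identification is preserved by the transition law $c'^a=\rho_b^a(z)c^b$. The resulting sheaf monomorphism $C^\infty_Z\hookrightarrow\gA_E$ extends degree-wise to a monomorphism of complexes of sheaves $\Om_Z^\bll\hookrightarrow\gA_E^\bll$, where $\Om_Z^\bll$ is the classical de~Rham resolution of $\mathbb R$ on $Z$. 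That this is a chain map is a local check using the formula $d\f=dz^A\w\dr_A\f+dc^a\w(\dr/\dr c^a)\f$ from Section~2.3, which on elements coming from $C^\infty_Z$ reduces to the classical exterior differential since the $c^a$-derivatives annihilate them. Passing to global sections recovers precisely (\ref{uut}). Since both $\Om_Z^\bll$ and $\gA_E^\bll$ are fine acyclic resolutions of the constant sheaf $\mathbb R$, and the monomorphism just constructed is a morphism of resolutions over the identity on $\mathbb R$, naturality of the abstract de~Rham theorem \cite{book09,hir} identifies the induced map on global cohomology with the isomorphism (\ref{+136}).

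Granted this identification, let $\f\in\cS^k[E;Z]$ be closed. Surjectivity of $\iota_*:H^k_{\mathrm{DR}}(Z)\to H^k(\cA_E)$ furnishes a closed $\si\in\cO^k(Z)$ with $[\iota(\si)]=[\f]$, so $\f-\iota(\si)=d\xi$ for some $\xi\in\cS^{k-1}[E;Z]$, giving $\f=\si+d\xi$ upon identifying $\si$ with its image. The main delicate point I anticipate is verifying the chain-map property of the sheaf embedding and its compatibility with the kernel identification $\ker(d)\cap\gA_E^0=\mathbb R=\ker(d)\cap\Om_Z^0$, so that it qualifies as a morphism of resolutions to which the abstract de~Rham naturality applies. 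Once this purely local verification on a splitting domain is in place, the standard acyclic-resolutions machinery delivers the required coincidence of induced maps, and the corollary follows with no further calculation.
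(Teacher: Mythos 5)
Your argument is correct and follows essentially the route the paper intends: Corollary \ref{33c1} is stated as an immediate consequence of the proof of Theorem \ref{33t3}, where the complex of sheaves $\gA_E^k$ is shown to be a fine resolution of $\mathbb R$, and your verification that the canonical splitting $C^\infty_Z\hookrightarrow\gA_E$ extends to a morphism of fine resolutions over the identity on $\mathbb R$ (so that the abstract de Rham isomorphism is induced by the monomorphism (\ref{uut})) is exactly the detail the paper leaves implicit. The passage from surjectivity of the induced map to the decomposition $\f=\si+d\xi$ is then the standard cohomological argument, as in the paper.
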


\subsection{Graded bundles over smooth manifolds}

A morphism of graded manifolds $(Z,\gA) \to (Z',\gA')$ is defined
as that of local-ringed spaces
\mar{su1}\beq
\phi:Z\to Z', \qquad \wh\Phi: \gA'\to \phi_*\gA, \label{su1}
\eeq
where $\phi$ is a manifold morphism and $\wh\Phi$ is a sheaf
morphism of $\gA'$ to the direct image $\phi_*\gA$ of $\gA$ onto
$Z'$ \cite{book05,ten}. The morphism (\ref{su1}) of graded
manifolds is said to be:

$\bullet$ a monomorphism if $\phi$ is an injection and $\wh\Phi$
is an epimorphism;

$\bullet$ an epimorphism if $\phi$ is a surjection and $\wh\Phi$
is a monomorphism.

An epimorphism of graded manifolds $(Z,\gA) \to (Z',\gA')$ where
$Z\to Z'$ is a fibre bundle is called the graded bundle
\cite{hern,stavr}. In this case, a sheaf monomorphism $\wh\Phi$
induces a monomorphism of canonical presheaves $\ol \gA'\to \ol
\gA$, which associates to each open subset $U\subset Z$ the ring
of sections of $\gA'$ over $\phi(U)$. Accordingly, there is a
pull-back monomorphism of the structure rings $\gA'(Z')\to\gA(Z)$
of graded functions on graded manifolds $(Z',\gA')$ and $(Z,\gA)$.

In particular, let $(Y,\gA)$ be a graded manifold whose body $Z=Y$
is a fibre bundle $\pi:Y\to X$. Let us consider the trivial graded
manifold $(X,\gA_0=C^\infty_X)$ (see Example \ref{triv}). Then we
have a graded bundle
\mar{su3}\beq
(Y,\gA) \to (X,C^\infty_X). \label{su3}
\eeq
We agree to call the graded bundle (\ref{su3}) over a trivial
graded manifold $(X, C^\infty_X)$ the graded bundle over a smooth
manifold \cite{sard14}. Let us denote it by $(X,Y,\gA)$. Given a
graded bundle $(X,Y,\gA)$, the local generating basis for a graded
manifold $(Y,\gA)$ can be brought into a form $(x^\la, y^i, c^a)$
where $(x^\la, y^i)$ are bundle coordinates of $Y\to X$.

If $Y\to X$ is a vector bundle, the graded bundle (\ref{su3}) is a
particular case of graded fibre bundles in \cite{hern,mont} when
their base is a trivial graded manifold.

\begin{example} \label{su20} \mar{su20}
Let $Y\to X$ be a fibre bundle. Then a  trivial graded manifold
$(Y,C^\infty_Y)$ together with a real ring monomorphism
$C^\infty(X)\to C^\infty(Y)$ is the graded bundle
$(X,Y,C^\infty_Y)$ (\ref{su3}).
\end{example}

\begin{example} \label{su21} \mar{su21} A graded manifold $(X,\gA)$ itself can
be treated as the graded bundle $(X,X, \gA)$ (\ref{su3})
associated to the identity smooth bundle $X\to X$.
\end{example}

Let $E\to Z$ and $E'\to Z'$ be vector bundles and $\Phi: E\to E'$
their bundle morphism over a morphism $\phi: Z\to Z'$. Then every
section $s^*$ of the dual bundle $E'^*\to Z'$ defines the
pull-back section $\Phi^*s^*$ of the dual bundle $E^*\to Z$ by the
law
\be
v_z\rfloor \Phi^*s^*(z)=\Phi(v_z)\rfloor s^*(\vf(z)), \qquad
v_z\in E_z.
\ee
It follows that a bundle morphism $(\Phi,\phi)$ yields a morphism
of simple graded manifolds
\mar{w901}\beq
(Z,\gA_E) \to (Z',\gA_{E'}). \label{w901}
\eeq
This is a pair $(\phi,\wh\Phi=\phi_*\circ\Phi^*)$ of a morphism
$\phi$ of  body manifolds and the composition $\phi_*\circ\Phi^*$
of the pull-back $\cA_{E'}\ni f\to \Phi^*f\in\cA_E$ of graded
functions and the direct image $\phi_*$ of a sheaf $\gA_E$ onto
$Z'$. Relative to local bases $(z^A,c^a)$ and $(z'^A,c'^a)$ for
$(Z,\gA_E)$ and $(Z',\gA_{E'})$, the morphism (\ref{w901}) of
simple graded manifolds reads $z'=\phi(z)$,
$\wh\Phi(c'^a)=\Phi^a_b(z)c^b$.

The graded manifold morphism (\ref{w901}) is a monomorphism (resp.
epimorphism) if $\Phi$ is a bundle injection (resp. surjection).

In particular, the graded manifold morphism (\ref{w901}) is a
graded bundle if $\Phi$ is a fibre bundle. Let $\cA_{E'} \to
\cA_E$ be the corresponding pull-back monomorphism of the
structure rings. By virtue of Lemma \ref{v62} it yields a
monomorphism of the DBGAs
\mar{xxx}\beq
\cS^*[E';Z']\to \cS^*[E;Z]. \label{xxx}
\eeq

Let $(Y,\gA_F)$ be a simple graded manifold modelled over a vector
bundle $F\to Y$. This is a graded bundle $(X,Y,\gA_F)$ modelled
over a composite bundle
\mar{su5}\beq
F\to Y\to X.  \label{su5}
\eeq
The structure ring of graded functions on a simple graded manifold
$(Y,\gA_F)$ is the graded commutative $C^\infty(X)$-ring $\cA_F=\w
F^*(Y)$ (\ref{33f1}). Let the composite bundle (\ref{su5}) be
provided with adapted bundle coordinates $(x^\la,y^i,q^a)$
possessing transition functions
\be
x'^\la(x^\mu), \qquad y'^i(x^\m,y^j), \qquad
q'^a=\rho^a_b(x^\mu,y^j)q^b.
\ee
Then the corresponding local generating basis for a simple graded
manifold $(Y,\gA_F)$ is $(x^\la,y^i,c^a)$ together with transition
functions
\be
x'^\la(x^\mu), \qquad y'^i(x^\m,y^j), \qquad
c'^a=\rho^a_b(x^\mu,j^j)c^b.
\ee
We call it the local generating basis for a graded bundle
$(X,Y,\gA_F)$.

\subsection{Graded jet manifolds}

As was mentioned above, Lagrangian theory on a smooth fibre bundle
$Y\to X$ is formulated in terms of the variational bicomplex on
jet manifolds $J^*Y$ of $Y$. These are fibre bundles over $X$ and,
therefore, they can be regarded as trivial graded bundles $(X,
J^kY, C^\infty_{J^kY})$. Then let us describe their partners in
the case of graded bundles (\ref{su11}) as follows.

Note that, given a graded manifold $(X,\gA)$ and its structure
ring $\cA$, one can define the jet module $J^1\cA$ of a
$C^\infty(X)$-ring $\cA$ \cite{book05,book12}. If $(X,\gA_E)$ is a
simple graded manifold modelled over a vector bundle $E\to X$, the
jet module $J^1\cA_E$ is a module of global sections of the jet
bundle $J^1(\w E^*)$. A problem is that $J^1\cA_E$ fails to be a
structure ring of some graded manifold. By this reason, we have
suggested a different construction of jets of graded manifolds,
though it is applied only to simple graded manifolds
\cite{book09,book13,sard13}.

Let $(X,\cA_E)$ be a simple graded manifold modelled over a vector
bundle $E\to X$. Let us consider a $k$-order jet manifold $J^kE$
of $E$ It is a vector bundle over $X$. Then let $(X,\cA_{J^kE})$
be a simple graded manifold modelled over $J^kE\to X$. We agree to
call $(X,\cA_{J^kE})$ the graded $k$-order jet manifold of a
simple graded manifold $(X,\cA_E)$. Given a splitting domain $(U;
x^\la,c^a)$ of a graded manifold $(Z,\cA_E)$, we have a splitting
domain
\be
(U; x^\la,c^a, c^a_\la,c^a_{\la_1\la_2}, \ldots
c^a_{\la_1\ldots\la_k}), \qquad
c'{}^a_{\la\la_1\ldots\la_r}=\rho^a_b(x)c^a_{\la\la_1\ldots\la_r}
+ \dr_\la\rho^a_b(x) c^a_{\la_1\ldots\la_r},
\ee
of a graded jet manifold $(X,\cA_{J^kE})$.

As was mentioned above, a graded manifold is a particular graded
bundle over its body (Example \ref{su21}). Then the definition of
graded jet manifolds is generalized to graded bundles over smooth
manifolds as follows \cite{sard14}.

Let $(X,Y,\gA_F)$ be a graded bundle modelled over the composite
bundle (\ref{su5}). It is readily observed that the jet manifold
$J^rF$ of $F\to X$ is a vector bundle $J^rF\to J^rY$ coordinated
by $(x^\la, y^i_\La, q^a_\La)$, $0\leq |\La|\leq r$. Let
$(J^rY,\gA_r=\gA_{J^rF})$ be a simple graded manifold modelled
over this vector bundle. Its local generating basis is $(x^\la,
y^i_\La, c^a_\La)$, $0\leq|\La|\leq r$. We call $(J^rY,\gA_r)$ the
graded $r$-order jet manifold of a graded bundle $(X,Y,\gA_F)$.

In particular, let $Y\to X$ be a smooth bundle seen as a trivial
graded bundle $(X, Y, C^\infty_Y)$ modelled over a composite
bundle $Y\times\{0\}\to Y\to X$. Then its graded jet manifold is a
trivial graded bundle $(X, J^rY, C^\infty_{J^rY)})$, i.e., a jet
manifold $J^rY$ of $Y$.

Thus, the above definition of graded jet manifolds of graded
bundles is compatible with the conventional definition of jets of
fibre bundles. It differs from that of jet graded bundles in
\cite{hern,mont},  but reproduces the heuristic notion of jets of
odd ghosts in BRST field theory \cite{barn,bran01}.

Jet manifolds $J^*Y$ of a fibre bundle $Y\to X$ form the inverse
sequence
\mar{j1}\beq
Y\op\longleftarrow^\pi J^1Y \longleftarrow \cdots J^{r-1}Y
\op\longleftarrow^{\pi^r_{r-1}} J^rY\longleftarrow\cdots,
\label{j1}
\eeq
of affine bundles $\pi^r_{r-1}$. One can think of elements of its
projective limit $J^\infty Y$ as being infinite order jets of
sections of $Y\to X$ identified by their Taylor series at points
of $X$. The set $J^\infty Y$ is endowed with the projective limit
topology which makes $J^\infty Y$ into a paracompact Fr\'echet
manifold \cite{book09,tak2}. It is called the infinite order jet
manifold. A bundle coordinate atlas $(x^\la, y^i)$ of $Y$ provides
$J^\infty Y$ with the adapted manifold coordinate atlas
\mar{j3}\ben
 && (x^\la, y^i_\La), \quad 0\leq|\La|, \qquad
{y'}^i_{\la+\La}=\frac{\dr x^\m}{\dr x'^\la}d_\m y'^i_\La,
\label{j3} \\
&& d_\la= \dr_\la + y^i_\la \dr_i +
\op\sum_{0<|\La|}y^i_{\la+\La}\dr^\La_i, \nonumber
\een
where $d_\la$ are total derivatives. A fibre bundle $Y$ is a
strong deformation retract of the infinite order jet manifold
$J^\infty Y$ \cite{ander,jmp}. Then by virtue of the Vietoris --
Begle theorem \cite{bred}, there is an isomorphism
\mar{j19'}\beq
H^*(J^\infty Y;\mathbb R)=H^*(Y;\mathbb R)=H^*_{\rm DR}(Y)
\label{j19'}
\eeq
between the cohomology of $J^\infty Y$ with coefficients in the
constant sheaf $\mathbb R$ and the de Rham cohomology of $Y$.

The inverse sequence (\ref{j1}) of jet manifolds yields the direct
sequence of graded differential algebras $\cO_r^*$ of exterior
forms on finite order jet manifolds
\mar{5.7}\beq
\cO^*(X)\op\longrightarrow^{\pi^*} \cO^*(Y)
\op\longrightarrow^{\pi^1_0{}^*} \cO_1^* \longrightarrow \cdots
\cO_{r-1}^*\op\longrightarrow^{\pi^r_{r-1}{}^*}
 \cO_r^* \longrightarrow\cdots, \label{5.7}
\eeq
where $\pi^r_{r-1}{}^*$ are the pull-back monomorphisms. Its
direct limit
\mar{ppp}\beq
\cO^*_\infty =\op\lim^\to \cO_r^* \label{ppp}
\eeq
consists of all exterior forms on finite order jet manifolds
modulo the pull-back identification. The $\cO^*_\infty$
(\ref{ppp}) is a differential graded algebra which inherits the
operations of the exterior differential $d$ and exterior product
$\w$ of exterior algebras $\cO^*_r$. One can show that the
cohomology $H^*(\cO_\infty^*)$ of the de Rham complex
\mar{5.13} \beq
0\longrightarrow \mathbb R\longrightarrow \cO^0_\infty
\op\longrightarrow^d\cO^1_\infty \op\longrightarrow^d \cdots
\label{5.13}
\eeq
of a differential graded algebra $\cO^*_\infty$ equals the de Rham
cohomology $H^*_{\rm DR}(Y)$ of a fibre bundle $Y$
\cite{and,book09,book13}. This follows from the fact that, by
virtue of the well-known theorem, the cohomology
$H^*(\cO_\infty^*)$ is isomorphic to the direct limit of the
cohomology groups $H^*(\cO_r^*)=H^*_{\rm DR}(j^rY)$, but all of
them equal the de Rham cohomology $H^*_{\rm DR}(Y)$ of $Y$ because
$J^rY\to J^{r-1}Y$ are affine bundles and, consequently, $Y$ is a
strong deformation retract of any finite order jet manifold
$J^rY$.

The fibre bundles $J^{r+1}Y\to J^rY$ (\ref{j1}) and the
corresponding bundles $J^{r+1}F\to J^rF$ also yield the graded
bundles
\be
(J^{r+1}Y,\gA_{r+1}) \to (J^rY,\gA_r),
\ee
including pull-back monomorphism of the structure rings
\mar{34f1}\beq
\cS^0_r[F;Y]\to \cS^0_{r+1}[F;Y] \label{34f1}
\eeq
of graded functions on graded manifolds $(J^rY,\gA_r)$ and
$(J^{r+1}Y,\gA_{r+1})$. As a consequence, we have the inverse
sequence of graded manifolds
\be
(Y,\cA_F)\op\longleftarrow (J^1Y,\gA_{J^1F}) \longleftarrow \cdots
(J^{r-1}Y, \gA_{J^{r-1}F}) \op\longleftarrow (J^rY,
\gA_{J^rF})\longleftarrow\cdots.
\ee
One can think on its inverse limit $(J^\infty Y, \cA_{J^\infty
F})$ as the graded infinite order jet manifold  whose body is an
infinite order jet manifold $J^\infty Y$ and whose structure sheaf
$\cA_{J^\infty F}$ is a sheaf of germs of graded functions on
graded manifolds $(J^*Y,\gA_{J^*F})$ \cite{book09,sard13}. However
$(J^\infty Y, \cA_{J^\infty F})$ fails to be a graded manifold in
a strict sense because the projective image $J^\infty Y$ of the
sequence (\ref{j1}) is a Fr\'eche manifold, but not the smooth
one.

By virtue of Lemma \ref{v62}, the differential calculus
$\cS^*_r[F;Y]$ are minimal. Therefore, the monomorphisms of
structure rings (\ref{34f1}) yields the pull-back monomorphisms
(\ref{xxx}) of DBGAs
\mar{v4'}\beq
\pi_r^{r+1*}:\cS^*_r[F;Y]\to \cS^*_{r+1}[F;Y]. \label{v4'}
\eeq
As a consequence, we have the direct system of DBGAs
\mar{j2}\beq
\cS^*[F;Y]\ar^{\pi^*} \cS^*_1[F;Y]\ar\cdots \cS^*_{r-1}[F;Y]
\op\ar^{\pi^{r*}_{r-1}}\cS^*_r[F;Y]\ar\cdots. \label{j2}
\eeq
The DBGA  $\cS^*_\infty[F;Y]$ that we associate to a graded bundle
$(Y,\gA_F)$ is defined as the direct limit
\mar{5.77a}\beq
\cS^*_\infty [F;Y]=\op\lim^\to \cS^*_r[F;Y] \label{5.77a}
\eeq
of the direct system (\ref{j2}). It consists of all graded
exterior forms $\f\in \cS^*[F_r;J^rY]$ on graded manifolds
$(J^rY,\gA_r)$ modulo the monomorphisms (\ref{v4'}). Its elements
obey the relations (\ref{ws45}).

The cochain monomorphisms $\cO^*_r\to \cS^*_r[F;Y]$ (\ref{uut})
provide a monomorphism of the direct system (\ref{5.7}) to the
direct system (\ref{j2}) and, consequently, a monomorphism
\mar{v7}\beq
\cO^*_\infty\to \cS^*_\infty[F;Y]  \label{v7}
\eeq
of their direct limits. In particular, $\cS^*_\infty[F;Y]$ is an
$\cO^0_\infty$-algebra. Accordingly, the body epimorphisms
$\cS^*_r[F;Y]\to \cO^*_r$ yield an epimorphism of
$\cO^0_\infty$-algebras
\mar{v7'}\beq
\cS^*_\infty[F;Y]\to \cO^*_\infty.  \label{v7'}
\eeq
It is readily observed that the morphisms (\ref{v7}) and
(\ref{v7'}) are cochain morphisms between the de Rham complex
(\ref{5.13}) of $\cO^*_\infty$ and the de Rham complex
\mar{g110}\beq
0\to\mathbb R\ar \cS^0_\infty[F;Y]\ar^d \cS^1_\infty[F;Y]\cdots
\ar^d\cS^k_\infty[F;Y] \ar\cdots \label{g110}
\eeq
of a DBGA $\cS^*_\infty[F;Y]$. Moreover, the corresponding
homomorphisms of cohomology groups of these complexes are
isomorphisms as follows \cite{book09}.

\begin{lemma} \label{v9} \mar{v9} There is an isomorphism
\mar{v10'}\beq
H^*(\cS^*_\infty[F;Y])= H^*_{DR}(Y) \label{v10'}
\eeq
of the cohomology $H^*(\cS^*_\infty[F;Y])$ of the de Rham complex
(\ref{g110}) to the de Rham cohomology  $H^*_{DR}(Y)$ of $Y$.
\end{lemma}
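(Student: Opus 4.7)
The plan is to combine three ingredients: (i) the commutation of cohomology with direct limits, (ii) Theorem \ref{33t3}, which identifies the de Rham cohomology of a simple graded manifold with that of its body, and (iii) the fact that each affine jet projection $J^rY\to J^{r-1}Y$ induces an isomorphism on de Rham cohomology. Since $\cS^*_\infty[F;Y]$ is by definition the direct limit (\ref{5.77a}) of the DBGAs $\cS^*_r[F;Y]$ under the cochain monomorphisms (\ref{v4'}), the standard fact that cohomology commutes with direct limits of cochain complexes of modules gives
\[
H^*(\cS^*_\infty[F;Y])=\op\lim^\to H^*(\cS^*_r[F;Y]).
\]

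Next, each $\cS^*_r[F;Y]$ is, by construction, the graded differential calculus $\cS^*[J^rF;J^rY]$ of the simple graded manifold $(J^rY,\gA_{J^rF})$, i.e.\ its bigraded de Rham complex (\ref{+137}). By Theorem \ref{33t3} applied to this graded manifold, one has
\[
H^*(\cS^*_r[F;Y])=H^*_{\rm DR}(J^rY).
\]
Moreover, the cochain monomorphism (\ref{uut}) for $(J^rY,\gA_{J^rF})$ is compatible with the pull-back maps $\pi^{r+1*}_r$ on both sides, so the transition maps in the direct system $\{H^*(\cS^*_r[F;Y])\}$ are identified with the usual de Rham pull-backs $H^*_{\rm DR}(J^{r-1}Y)\to H^*_{\rm DR}(J^rY)$ induced by the affine bundle projections $\pi^r_{r-1}$ of (\ref{j1}).

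Since each projection $\pi^r_{r-1}:J^rY\to J^{r-1}Y$ is an affine bundle, $Y$ is a strong deformation retract of every finite order jet manifold $J^rY$, so all pull-backs $H^*_{\rm DR}(Y)\to H^*_{\rm DR}(J^rY)$ are isomorphisms. Consequently, the direct system $\{H^*(\cS^*_r[F;Y])\}$ is (up to canonical isomorphism) constant with value $H^*_{\rm DR}(Y)$, and its direct limit is $H^*_{\rm DR}(Y)$, yielding the claimed isomorphism (\ref{v10'}).

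The only genuine point to check is the commutation of cohomology with the direct limit in the graded setting; this is a purely algebraic fact, since each $\cS^*_r[F;Y]$ is a complex of $\mathbb{R}$-modules and direct limits over $\mathbb{N}$ are exact. Everything else reduces to Theorem \ref{33t3} and the classical affine-bundle homotopy argument already used for the smooth case (\ref{5.13}); hence no essentially new obstacle arises from the Grassmann-graded structure.
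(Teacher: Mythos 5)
Your proposal is correct and follows essentially the same route as the paper's own proof: pass cohomology through the direct limit (\ref{5.77a}), apply Theorem \ref{33t3} (i.e.\ the isomorphism (\ref{+136})) to each graded jet manifold $(J^rY,\gA_{J^rF})$, and use that $Y$ is a strong deformation retract of every $J^rY$ to identify all terms with $H^*_{\rm DR}(Y)$. The only difference is that you spell out the compatibility of the transition maps, which the paper leaves implicit.
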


\begin{proof}
The complex (\ref{g110}) is the direct limit of the de Rham
complexes of the differential graded algebras $\cS^*_r[F;Y]$.
Therefore, the direct limit of cohomology groups of these
complexes is the cohomology of the de Rham complex (\ref{g110}) in
accordance with the above mentioned theorem. By virtue of the
cohomology isomorphism (\ref{+136}), cohomology of the de Rham
complex of $\cS^*_r[F;Y]$ equals the de Rham cohomology of $J^rY$
and, consequently, that of $Y$, which is the strong deformation
retract of any jet manifold $J^rY$. Hence, the isomorphism
(\ref{v10'}) holds.
\end{proof}

\begin{corollary} \mar{34c1} \label{34c1}
Any closed graded form $\f\in \cS^*_\infty[F;Y]$ is decomposed
into the sum $\f=\si +d\xi$ where $\si$ is a closed exterior form
on $Y$.
\end{corollary}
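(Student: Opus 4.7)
The plan is to mimic the proof of Corollary \ref{33c1} (the analogous statement for a single graded manifold), using Lemma \ref{v9} and the known description of $H^*(\cO^*_\infty)$ in place of Theorem \ref{33t3}. Concretely, given a closed graded form $\f\in \cS^*_\infty[F;Y]$, I would look at its cohomology class $[\f]\in H^*(\cS^*_\infty[F;Y])$ and exploit the fact, proved in Lemma \ref{v9}, that this cohomology is canonically isomorphic to $H^*_{DR}(Y)$.

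First I would recall from the discussion preceding Lemma \ref{v9} that the cochain monomorphism (\ref{v7}), $\cO^*_\infty\to \cS^*_\infty[F;Y]$, induces the very isomorphism of Lemma \ref{v9}, and that the de Rham cohomology $H^*(\cO^*_\infty)$ of the complex (\ref{5.13}) is itself equal to $H^*_{DR}(Y)$ (through the pull-back monomorphism $\cO^*(Y)\hookrightarrow \cO^*_\infty$ from (\ref{5.7}), because $Y$ is a strong deformation retract of every $J^rY$). Combining these two facts, the composite cochain morphism $\cO^*(Y)\to \cO^*_\infty\to \cS^*_\infty[F;Y]$ induces an isomorphism on cohomology; in particular, every class in $H^*(\cS^*_\infty[F;Y])$ is represented by the image of a closed exterior form on $Y$.

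Applying this to $[\f]$, there exists a closed exterior form $\si$ on $Y$ whose image in $\cS^*_\infty[F;Y]$ is cohomologous to $\f$. By definition of cohomology, this means precisely that $\f-\si=d\xi$ for some $\xi\in \cS^*_\infty[F;Y]$, i.e., $\f=\si+d\xi$, which is the claimed decomposition.

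The main obstacle, which is really a matter of careful bookkeeping rather than a conceptual difficulty, is to verify that the abstract cohomology isomorphism in Lemma \ref{v9} is genuinely realized by the concrete cochain embedding $\cO^*(Y)\hookrightarrow \cO^*_\infty\hookrightarrow \cS^*_\infty[F;Y]$, so that the representative $\si$ can indeed be chosen to be a closed form on the body $Y$ and not merely an element of $\cO^*_\infty$ or of $\cS^*_\infty[F;Y]$. Once that is in hand, the statement is immediate.
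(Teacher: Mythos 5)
Your argument is exactly the paper's: Corollary \ref{34c1} is deduced from Lemma \ref{v9} by observing that the isomorphism $H^*(\cS^*_\infty[F;Y])=H^*_{DR}(Y)$ is realized by the cochain monomorphisms $\cO^*(Y)\to\cO^*_\infty\to\cS^*_\infty[F;Y]$, so every class is represented by a closed exterior form on $Y$ (this is the same mechanism spelled out in Corollary \ref{33c1} for a single graded manifold). The point you flag as needing verification is precisely the content of the parenthetical remark accompanying Corollary \ref{33c1}, and it carries over to the direct-limit setting as you indicate.
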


One can think of  elements of $\cS^*_\infty[F;Y]$ as being graded
differential forms on an infinite order jet manifold $J^\infty Y$
\cite{book09,sard13}. Indeed, let $\gS^*_r[F;Y]$ be the sheaf of
DBGAs on $J^rY$ and $\ol\gS^*_r[F;Y]$ its canonical presheaf. Then
the above mentioned presheaf monomorphisms $\ol \gA_r\to \ol
\gA_{r+1}$ yield the direct system of presheaves
\mar{v15}\beq
\ol\gS^*[F;Y]\ar \ol\gS^*_1[F;Y] \ar\cdots \ol\gS^*_r[F;Y]
\ar\cdots, \label{v15}
\eeq
whose direct limit $\ol\gS_\infty^*[F;Y]$ is a presheaf of DBGAs
on the infinite order jet manifold $J^\infty Y$. Let
$\gQ^*_\infty[F;Y]$ be the sheaf of DBGAs of germs of the presheaf
$\ol\gS_\infty^*[F;Y]$. One can think of the pair $(J^\infty Y,
\gQ^0_\infty[F;Y])$ as being a graded Fr\'echet manifold, whose
body is the infinite order jet manifold $J^\infty Y$ and the
structure sheaf $\gQ^0_\infty[F;Y]$ is the sheaf of germs of
graded functions on graded manifolds $(J^rY,\gA_r)$. We agree to
call $(J^\infty Y, \gQ^0_\infty[F;Y])$ the graded infinite order
jet manifold. The structure module $\cQ^*_\infty[F;Y]$ of sections
of $\gQ^*_\infty[F;Y]$ is a DBGA such that, given an element
$\f\in \cQ^*_\infty[F;Y]$ and a point $z\in J^\infty Y$, there
exist an open neighbourhood $U$ of $z$ and a graded exterior form
$\f^{(k)}$ on some finite order jet manifold $J^kY$ so that
$\f|_U= \pi^{\infty*}_k\f^{(k)}|_U$. In particular, there is the
monomorphism
\mar{34f5}\beq
\cS^*_\infty[F;Y] \to\cQ^*_\infty[F;Y]. \label{34f5}
\eeq

Due to this monomorphism, one can restrict $\cS^*_\infty[F;Y]$ to
the coordinate chart (\ref{j3}) of $J^\infty Y$ and say that
$\cS^*_\infty[F;Y]$ as an $\cO^0_\infty$-algebra is locally
generated by the elements
\be
(c^a_\La,
dx^\la,\thh^a_\La=dc^a_\La-c^a_{\la+\La}dx^\la,\thh^i_\La=
dy^i_\La-y^i_{\la+\La}dx^\la), \qquad 0\leq |\La|,
\ee
where $c^a_\La$, $\thh^a_\La$ are odd and $dx^\la$, $\thh^i_\La$
are even. We agree to call $(y^i,c^a)$ the local generating basis
for $\cS^*_\infty[F;Y]$. Let the collective symbol $s^A$ stand for
its elements. Accordingly, the notation $s^A_\La$ for their jets
and the notation
\mar{kk}\beq
\theta^A_\La=ds^A_\La- s^A_{\la+\La}dx^\la \label{kk}
\eeq
for the contact forms are introduced. For the sake of simplicity,
we further denote $[A]=[s^A]$.

\begin{remark} \mar{pkp} \label{pkp}
Let $(X,Y,\gA_F)$ and $(X,Y',\gA_{F'})$ be graded bundles modelled
over composite bundles $F\to Y\to X$ and $F'\to Y'\to X$,
respectively. Let $F\to F'$ be a fibre bundle over a fibre bundle
$Y\to Y'$ over $X$. Then we have a graded bundle
\be
(X,Y,\gA_F) \to (X,Y',\gA_{F'})
\ee
together with the pull-back monomorphism (\ref{xxx}) of DBGAs
\mar{lmk}\beq
\cS^* [F';Y'] \to \cS^* [F;Y]. \label{lmk}
\eeq
Let $(X,J^rY,\gA_{J^rF})$ and $(X,J^rY',\gA_{J^rF'})$ be graded
bundles modelled over composite bundles $J^rF\to J^rY\to X$ and
$J^rF'\to J^rY'\to X$, respectively. Since $J^rF\to J^rF'$ is a
fibre bundle over a a fibre bundle $J^rY\to J^rY'$ over $X$
\cite{book}, we also get a graded bundle
\be
(X,J^rY,\gA_{J^rF})\to (X,J^rY',\gA_{J^rF'})
\ee
together with the pull-back monomorphism of DBGAs
\mar{lmm}\beq
\cS^*_r [F';Y'] \to \cS^*_r [F;Y]. \label{lmm}
\eeq
The monomorphisms (\ref{lmk}) -- (\ref{lmm}), $r=1, 2,\ldots$,
provide a monomorphism of the direct limits
\mar{llm}\beq
\cS^*_\infty [F';Y'] \to \cS^*_\infty [F;Y]. \label{llm}
\eeq
of DBGAs $\cS^*_r [F';Y']$ and $\cS^*_r [F;Y]$, $r=0,1, 2,\ldots$.
\end{remark}

\begin{remark} \mar{ghg} \label{ghg}
Let $(X,Y,\gA_F)$ and $(X,Y',\gA_{F'})$ be graded bundles modelled
over composite bundles $F\to Y\to X$ and $F'\to Y'\to X$,
respectively. We define their product over $X$ as the graded
bundle
\mar{tyt}\beq
(X,Y,\gA_F)\op\times_X(X,Y',\gA_{F'})=(X,Y\op\times_X Y',
\gA_{F\op\times_X F'}) \label{tyt}
\eeq
modelled over a composite bundle
\mar{wqw}\beq
F\op\times_X F'=F\op\times_{Y\times Y'} F'\to Y\op\times_X Y'\to
X. \label{wqw}
\eeq
Let us consider the corresponding DBGA
\mar{bvb}\beq
\cS^*_\infty [F\op\times_XF';Y\op\times_X Y']. \label{bvb}
\eeq
Then in accordance with Remark \ref{pkp}, there are the
monomorphisms (\ref{llm}) of BGDAs
\mar{opo}\beq
\cS^*_\infty [F;Y]\to \cS^*_\infty [F\op\times_XF;Y\op\times_X
Y'], \qquad \cS^*_\infty [F';Y']\to \cS^*_\infty
[F\op\times_XF;Y\op\times_X Y']. \label{opo}
\eeq
\end{remark}

\section{Graded Lagrangian formalism}

Let $(X,Y,\gA_F)$ be a graded bundle modelled over the composite
bundle (\ref{su5}) over an $n$-dimensional smooth manifold $X$,
and let $\cS^*_\infty [F;Y]$ be the associated DBGA (\ref{5.77a})
of graded exterior forms on graded jet manifolds of $(X,Y,\gA_F)$.
As was mentioned above Grassmann-graded Lagrangian theory of even
and odd variables on a graded bundle is formulated in terms of the
variational bicomplex which the DBGA $\cS^*_\infty [F;Y]$ is split
in \cite{lmp08,cmp04,book09,sard13}.

A DBGA $\cS^*_\infty[F;Y]$ is decomposed into
$\cS^0_\infty[F;Y]$-modules $\cS^{k,r}_\infty[F;Y]$ of $k$-contact
and $r$-horizontal graded forms together with the corresponding
projections
\be
h_k:\cS^*_\infty[F;Y]\to \cS^{k,*}_\infty[F;Y], \qquad
h^m:\cS^*_\infty[F;Y]\to \cS^{*,m}_\infty[F;Y].
\ee
Accordingly, the graded exterior differential $d$ on
$\cS^*_\infty[F;Y]$ falls into a sum $d=d_V+d_H$ of the vertical
graded differential
\be
d_V \circ h^m=h^m\circ d\circ h^m, \qquad d_V(\f)=\thh^A_\La \w
\dr^\La_A\f, \qquad \f\in\cS^*_\infty[F;Y],
\ee
and the total graded differential
\be
d_H\circ h_k=h_k\circ d\circ h_k, \qquad d_H\circ h_0=h_0\circ d,
\qquad d_H(\f)=dx^\la\w d_\la(\f),
\ee
where
\be
d_\la = \dr_\la + \op\sum_{0\leq|\La|} s^A_{\la+\La}\dr_A^\La
\ee
are the graded total derivatives. These differentials obey the
nilpotent relations
\be
d_H\circ d_H=0, \qquad d_V\circ d_V=0, \qquad d_H\circ
d_V+d_V\circ d_H=0.
\ee

A DBGA $\cS^*_\infty[F;Y]$ also is provided with the graded
projection endomorphism
\be
&& \vr=\op\sum_{k>0} \frac1k\ol\vr\circ h_k\circ h^n:
\cS^{*>0,n}_\infty[F;Y]\to \cS^{*>0,n}_\infty[F;Y], \\
&& \ol\vr(\f)= \op\sum_{0\leq|\La|} (-1)^{\nm\La}\thh^A\w
[d_\La(\dr^\La_A\rfloor\f)], \qquad \f\in \cS^{>0,n}_\infty[F;Y],
\ee
such that $\vr\circ d_H=0$, and with the nilpotent graded
variational operator
\mar{34f10}\beq
\dl=\vr\circ d :\cS^{*,n}_\infty[F;Y]\to
\cS^{*+1,n}_\infty[F;Y].\label{34f10}
\eeq
With these operators a DBGA $\cS^{*,}_\infty[F;Y]$ is decomposed
into the Grassmann-graded variational bicomplex
\mar{7}\beq
\begin{array}{ccccrlcrlcccrlcrl}
 & &  &  & & \vdots & & & \vdots  & & & & &
\vdots & &   & \vdots \\
& & & & _{d_V} & \put(0,-7){\vector(0,1){14}} & & _{d_V} &
\put(0,-7){\vector(0,1){14}} & &  & & _{d_V} &
\put(0,-7){\vector(0,1){14}}& & _{-\dl} & \put(0,-7){\vector(0,1){14}} \\
 &  & 0 & \to & &\cS^{1,0}_\infty[F;Y] &\op\to^{d_H} & &
\cS^{1,1}_\infty[F;Y] & \op\to^{d_H} & \cdots & & &
\cS^{1,n}_\infty[F;Y] &\op\to^\vr &  & \vr(\cS^{1,n}_\infty[F;Y])\to  0\\
& & & & _{d_V} &\put(0,-7){\vector(0,1){14}} & & _{d_V} &
\put(0,-7){\vector(0,1){14}} & & &  & _{d_V} &
\put(0,-7){\vector(0,1){14}}
 & & _{-\dl} & \put(0,-7){\vector(0,1){14}} \\
0 & \to & \mathbb R & \to & & \cS^0_\infty[F;Y] &\op\to^{d_H} & &
\cS^{0,1}_\infty[F;Y] & \op\to^{d_H} & \cdots & & &
\cS^{0,n}_\infty[F;Y] & \equiv &  & \cS^{0,n}_\infty[F;Y] \\
& & & & & \put(0,-7){\vector(0,1){14}} & &  &
\put(0,-7){\vector(0,1){14}} & & & &  &
\put(0,-7){\vector(0,1){14}} & &  & \\
0 & \to & \mathbb R & \to & & \cO^0(X) &\op\to^d & & \cO^1(X) &
\op\to^d & \cdots & & &
\cO^n(X) & \op\to^d &  & 0 \\
& & & & &\put(0,-5){\vector(0,1){10}} & & &
\put(0,-5){\vector(0,1){10}} & &  &  & &
\put(0,-5){\vector(0,1){10}} & &  & \\
& & & & &0 & &  & 0 & & & & &  0 & &  &
\end{array}
\label{7}
\eeq
Its relevant cohomology has been found
\cite{book09,ijgmmp07,sard13}.

We restrict our consideration to the short variational subcomplex
\mar{g111}\ben
&& 0\to \mathbb R\to \cS^0_\infty[F;Y]\ar^{d_H}\cS^{0,1}_\infty[F;Y]
\cdots \ar^{d_H}\cS^{0,n}_\infty[F;Y]\ar^\dl \bE_1, \label{g111}\\
&& \bE_1=\vr(\cS^{1,n}_\infty[F;Y]), \qquad n=\di X, \nonumber
\een
of the bicomplex (\ref{7}) and the subcomplex of one-contact
graded forms
\mar{g112}\beq
 0\to \cS^{1,0}_\infty[F;Y]\ar^{d_H} \cS^{1,1}_\infty[F;Y]\cdots
\ar^{d_H}\cS^{1,n}_\infty[F;Y]\ar^\vr \bE_1\to 0. \label{g112}
\eeq

They possess the following cohomology
\cite{cmp04,ijgmmp07,sard13}.

\begin{theorem} \label{v11} \mar{v11}
Cohomology of the complex (\ref{g111}) equals the de Rham
cohomology $H^*_{DR}(Y)$ of $Y$.
\end{theorem}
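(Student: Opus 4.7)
The plan is to realize the short variational complex (\ref{g111}) as the complex of global sections of a fine, hence acyclic, resolution of the constant sheaf $\mathbb{R}$ on $J^\infty Y$, and then invoke the abstract de Rham theorem together with the retraction isomorphism (\ref{j19'}), following the pattern already used in the proof of Theorem \ref{33t3} and of Lemma \ref{v9}.

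First, I would pass from the DBGA $\cS^{0,k}_\infty[F;Y]$ to the sheaf $\gS^{0,k}_\infty$ of germs of $k$-horizontal graded forms on the graded infinite order jet manifold $(J^\infty Y,\gQ^0_\infty[F;Y])$, and similarly introduce a sheaf $\gE_1$ whose module of global sections is $\bE_1$. As sheaves of $C^\infty_{J^\infty Y}$-modules over a paracompact base, all of them admit partitions of unity and are therefore fine and acyclic for \v Cech cohomology.

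The central step is a graded algebraic Poincar\'e lemma asserting exactness of the sheaf complex
\[
0\to\mathbb{R}\to\gS^0_\infty\ar^{d_H}\gS^{0,1}_\infty\ar^{d_H}\cdots\ar^{d_H}\gS^{0,n}_\infty\ar^{\dl}\gE_1\to 0 .
\]
On a splitting domain $(U;x^\la,y^i,c^a)$, every graded horizontal form is a finite polynomial in the odd jet generators $c^a_\La$ whose coefficients are ordinary horizontal forms in $(x^\la,y^i_\La)$. I would construct a contracting homotopy $h$ for $d_H$ of Vinogradov--Anderson type, built from the Euler vector field on jet coordinates and acting trivially on the $c^a_\La$, so that the identity $d_H h+h d_H=\id$ holds coefficientwise in horizontal degree $<n$; at degree $n$ the corresponding identity is replaced by the local splitting $\f=\vr(\f)+d_H\xi$ dictated by the definition $\dl=\vr\circ d$ of (\ref{34f10}), which simultaneously yields exactness at $\gS^{0,n}_\infty$ and surjectivity onto $\ker$ inside $\gE_1$. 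Exactness at $\gS^0_\infty$ is immediate: a graded function killed by $d_H$ has all total derivatives vanishing and hence equals its body constant.

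With local exactness in hand, the abstract de Rham theorem identifies the cohomology of the complex of global sections (\ref{g111}) of this resolution with $H^*(J^\infty Y;\mathbb{R})$, and then (\ref{j19'}) gives the desired $H^*_{DR}(Y)$. The principal obstacle is the construction of the horizontal homotopy in the Grassmann-graded setting: one must verify that $h$ preserves the Grassmann parity, is well defined on the direct limit $\cS^*_\infty[F;Y]$, and produces at contact degree zero, horizontal degree $n$ exactly the projector $\vr$ rather than a weaker cochain-homotopy relation. Corollary \ref{34c1} furnishes a useful consistency check, since any globally $d$-closed graded form splits as $\si+d\xi$ with $\si$ an honest closed exterior form on $Y$, and projecting this decomposition onto its horizontal part reproduces the expected class in $H^*_{DR}(Y)$.
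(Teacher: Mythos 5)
Your overall strategy is the right one: the paper itself states Theorem \ref{v11} without proof, referring to \cite{cmp04,ijgmmp07,sard13}, and the argument given there is precisely the sheaf-theoretic one you outline (fine resolution on $J^\infty Y$, graded algebraic Poincar\'e lemma, abstract de Rham theorem, then the retraction isomorphism (\ref{j19'})), in parallel with the proofs of Theorem \ref{33t3} and Lemma \ref{v9}. There is, however, one genuine gap in your sketch. The abstract de Rham theorem computes the cohomology of the complex of \emph{global sections of the sheaves}, i.e.\ of the modules $\cQ^{0,k}_\infty[F;Y]$, whose elements have only \emph{locally} finite jet order. The complex (\ref{g111}) is built from $\cS^{0,k}_\infty[F;Y]$, the direct limit of the finite-order algebras, and the inclusion (\ref{34f5}) is a monomorphism which need not be an isomorphism. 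So after the sheaf argument you still owe a proof that the induced map $H^*(\cS^{0,*}_\infty)\to H^*(\cQ^{0,*}_\infty)$ on $d_H$- and $\dl$-cohomology is an isomorphism --- in particular, that a globally finite-order cocycle which is a $d_H$-coboundary of a locally finite-order form is already the coboundary of a globally finite-order one. This is the main technical content of the cited proofs (handled there by controlling the jet order of the homotopy operators), and it cannot be elided; note also that the direct-limit device used in Lemma \ref{v9} is unavailable here, since $d_H$ raises jet order and so does not even preserve the finite-order subalgebras $\cS^{0,*}_r[F;Y]$.

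A second, smaller flaw: your homotopy cannot ``act trivially on the $c^a_\La$''. The total derivative $d_\la$ contains the summand $\sum c^a_{\la+\La}\dr^\La_a$, so $d_H$ raises the jet order of the odd generators exactly as it does for the even ones, and a coefficientwise homotopy in the variables $(x^\la,y^i_\La)$ alone will not satisfy $d_Hh+hd_H=\id$ on forms involving the $c^a_\La$. The correct construction treats the whole collective basis $s^A=(y^i,c^a)$ uniformly; it does go through, and at bidegree $(0,n)$ it does reproduce the projector $\vr$ and hence exactness at $\cS^{0,n}_\infty[F;Y]$ and at $\bE_1$, but the Grassmann sign bookkeeping there is part of what has to be verified rather than assumed.
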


\begin{theorem} \label{v11'} \mar{v11'}
The complex (\ref{g112}) is exact.
\end{theorem}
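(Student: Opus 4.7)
The plan is to prove exactness of (\ref{g112}) by the standard local-to-global method adapted to the Grassmann-graded variational bicomplex: first establish a contracting homotopy for $d_H$ on one-contact horizontal forms over a splitting domain (a graded algebraic Poincar\'e lemma), then sheafify and deduce exactness of global sections from fineness of the associated sheaves on $J^\infty Y$. This parallels the arguments already used for Lemma \ref{v9} and Theorem \ref{v11}, but now on the one-contact row of the bicomplex (\ref{7}).

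\textbf{Local homotopy.} On a splitting domain $(U;x^\la,s^A)$ the DBGA $\cS^*_\infty[F;Y]|_U$ is freely generated over $\cO^0_\infty|_U$ by $dx^\la$ and the contact forms $\thh^A_\La$, so every $\f\in\cS^{1,m}_\infty[F;Y]|_U$ admits a unique normal form $\f=\sum_\La\thh^A_\La\wedge\f^\La_A$ with $\f^\La_A\in\cS^{0,m}_\infty[F;Y]|_U$. On this normal form I would introduce the graded analogue of Anderson's variational homotopy
$R:\cS^{1,m}_\infty[F;Y]|_U\to\cS^{1,m-1}_\infty[F;Y]|_U$, defined as a sum over multi-indices $\La$ with combinatorial weights $1/(|\La|+1)$ and signs dictated by (\ref{ws12}) and (\ref{ws45}) (so that the derivations $\dr^\La_A$ acting on odd components acquire the expected $(-1)^{[A]\cdots}$ factors). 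A direct manipulation of multi-index sums, parallel to Anderson's computation in the ungraded case, should yield
$$d_H\circ R+R\circ d_H=\id\ \ \text{on}\ \cS^{1,m}_\infty[F;Y]|_U\ \ (0<m<n),$$
together with $R\circ d_H=\id$ on $\cS^{1,0}_\infty[F;Y]|_U$ (since the target of $R$ vanishes by degree, giving injectivity of $d_H$ at the left end) and $d_H\circ R+\vr=\id$ on $\cS^{1,n}_\infty[F;Y]|_U$ (giving $\ker\vr=\im d_H$). Surjectivity of $\vr$ onto $\bE_1$ is tautological from the definition of $\bE_1$ in (\ref{g111}).

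\textbf{Globalization.} Associate to each $\cS^{1,m}_\infty[F;Y]$ its sheaf of germs $\gQ^{1,m}_\infty[F;Y]$ on $J^\infty Y$ exactly as in the construction preceding (\ref{34f5}). The local identities just obtained show that the resulting complex of sheaves is exact. Each $\gQ^{1,m}_\infty[F;Y]$ is a module over the pull-back of $C^\infty_X$ to the paracompact Fr\'echet manifold $J^\infty Y$, hence is a fine sheaf and therefore acyclic. The standard homological fact that an exact complex of acyclic sheaves induces an exact complex of global sections then delivers exactness of (\ref{g112}).

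\textbf{Expected obstacle.} The main technical hurdle is the precise verification of the graded homotopy identity $d_H R+R d_H=\id$: the signs accumulated from passing $d_H$ past the odd contact generators $\thh^a_\La$ via the graded wedge rule (\ref{ws45}) must cancel exactly against those produced by the graded derivations $\dr^\La_A$ acting through (\ref{ws12}), so as to reproduce the same combinatorial telescoping that makes Anderson's ungraded formula work. Once this graded Poincar\'e lemma is secured on one-contact horizontal forms, the remainder of the argument is routine sheaf theory, entirely analogous to the cohomology computation recorded in Lemma \ref{v9}.
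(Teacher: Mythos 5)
Your proposal is correct and is essentially the argument the paper relies on: Theorem \ref{v11'} is stated with a pointer to \cite{cmp04,ijgmmp07,sard13}, where exactness of the one-contact row is obtained precisely by a graded local homotopy operator for $d_H$ on a splitting domain (with $d_H R+\vr=\id$ in top horizontal degree) followed by globalization through fine sheaves on the paracompact Fr\'echet manifold $J^\infty Y$, exactly as in Lemma \ref{v9}. The one point to phrase more carefully is fineness: it should be derived from the module structure over $\gQ^0_\infty[F;Y]$, with partitions of unity subordinate to covers refined from finite-order jet manifolds, rather than over the pull-back of $C^\infty_X$ alone, since the latter only yields partitions of unity for covers lifted from $X$.
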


Decomposed into a variational bicomplex, the DBGA
$\cS^*_\infty[F;Y]$ describes Grassmann-graded Lagrangian theory
on a graded bundle $(X,Y,\gA_F)$. Its graded Lagrangian is defined
as an element
\mar{0709}\beq
L=\cL\om\in \cS^{0,n}_\infty[F;Y], \qquad \om=dx^1\w\cdots\w dx^n,
\label{0709}
\eeq
of the graded variational complex (\ref{g111}). Accordingly, a
graded exterior form
\mar{0709'}\beq
\dl L= \thh^A\w \cE_A\om=\op\sum_{0\leq|\La|}
 (-1)^{|\La|}\thh^A\w d_\La (\dr^\La_A L)\om\in \bE_1 \label{0709'}
\eeq
is said to be its graded Euler -- Lagrange operator. Its kernel
defines the Euler -- Lagrange equation
\mar{eq}\beq
\dl L=0, \qquad \cE_A=\op\sum_{0\leq|\La|}
 (-1)^{|\La|}\thh^A\w d_\La (\dr^\La_A L)=0. \label{eq}
\eeq

Therefore, we agree to call a pair $(\cS^{0,n}_\infty[F;Y],L)$ the
Grassmann-graded (or, simply, graded) Lagrangian system and
$\cS^*_\infty[F;Y]$ its structure algebra.

The following is a corollary of Theorems \ref{v11} and (\ref{v11'}
\cite{cmp04,book09,sard13}.

\begin{theorem} \label{cmp26} \mar{cmp26}
Every $d_H$-closed graded form $\f\in\cS^{0,m<n}_\infty[F;Y]$
falls into the sum
\mar{g214}\beq
\f=h_0\si + d_H\xi, \qquad \xi\in \cS^{0,m-1}_\infty[F;Y],
\label{g214}
\eeq
where $\si$ is a closed $m$-form on $Y$. Any $\dl$-closed (i.e.,
variationally trivial) graded Lagrangian $L\in
\cS^{0,n}_\infty[F;Y]$ is a sum
\mar{g215}\beq
L=h_0\si + d_H\xi, \qquad \xi\in \cS^{0,n-1}_\infty[F;Y],
\label{g215}
\eeq
where $\si$ is a closed $n$-form on $Y$.
\end{theorem}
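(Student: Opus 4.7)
The strategy is to deduce both statements directly from Theorem \ref{v11}, which identifies the cohomology of the short variational subcomplex (\ref{g111}) with the de Rham cohomology $H^*_{DR}(Y)$ of the body. The key intermediate observation is that this isomorphism is realized concretely by the horizontal projection $h_0$ applied to pulled-back exterior forms on $Y$.

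\textbf{Step 1.} First I would verify that the composite
\[
h_0 : \bigl(\cO^*(Y), d\bigr) \longrightarrow \bigl(\cS^{0,*}_\infty[F;Y], d_H\bigr),
\]
obtained by pulling back exterior forms on $Y$ to $J^\infty Y$ via (\ref{v7}) and then projecting onto the $(0,*)$-component, is a cochain map. This rests on the standard identity $h_0 \circ d = d_H \circ h_0$ on forms lifted from $Y$: indeed, the difference $d - d_H = d_V$ raises contact degree by one and hence vanishes after composition with $h_0$ on either side. By Theorem \ref{v11}, this cochain map realizes the cohomology isomorphism $H^*_{DR}(Y) \cong H^*(\cS^{0,*}_\infty[F;Y], d_H)$ at positions $m < n$, and the analogous isomorphism at position $n$ (where the outgoing arrow is $\dl$).

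\textbf{Step 2.} For the first assertion, suppose $\f \in \cS^{0,m}_\infty[F;Y]$ with $m < n$ and $d_H\f = 0$. Then $\f$ determines a class in the $m$-th cohomology group of (\ref{g111}), which by Theorem \ref{v11} equals $H^m_{DR}(Y)$. By Step 1, this class admits a representative of the form $h_0\si$ for some closed $m$-form $\si$ on $Y$. Hence $\f - h_0\si$ is $d_H$-exact, yielding $\xi \in \cS^{0,m-1}_\infty[F;Y]$ such that $\f = h_0\si + d_H\xi$, which is (\ref{g214}).

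\textbf{Step 3.} For the variationally trivial Lagrangian, I would apply the same argument at the next position of (\ref{g111}): if $L \in \cS^{0,n}_\infty[F;Y]$ satisfies $\dl L = 0$, then $L$ defines a class in $\ker\dl/\im d_H$, which is again $H^n_{DR}(Y)$ by Theorem \ref{v11}. Choosing a closed $n$-form $\si$ on $Y$ representing the same class via $h_0$, I obtain $L - h_0\si = d_H\xi$ for some $\xi \in \cS^{0,n-1}_\infty[F;Y]$, giving (\ref{g215}). The main obstacle is the identification in Step 1 that the abstract isomorphism of Theorem \ref{v11} is realized by the natural map $h_0$; this however is built into the construction of that isomorphism (which in turn relies on $Y$ being a strong deformation retract of $J^\infty Y$ together with the Poincaré lemma for vertical graded forms), so once Step 1 is in place the decompositions follow with no further computation.
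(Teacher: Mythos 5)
Your proposal is correct and follows essentially the same route as the paper: both arguments rest on Theorem \ref{v11} together with the observation that the cohomology isomorphism with $H^*_{DR}(Y)$ is induced by the natural cochain morphism carrying closed forms on $Y$ to $h_0\si$. The only cosmetic difference is that the paper factors this comparison through the even subcomplex (\ref{b317'}) of $\cO^*_\infty$ via the monomorphism (\ref{v7}) and the body epimorphism (\ref{v7'}), whereas you map directly from $\cO^*(Y)$; the substance is identical.
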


\begin{proof}
The complex (\ref{g111}) possesses the same cohomology as the
short variational complex
\mar{b317'}\beq
0\to\mathbb R\to \cO^0_\infty  \ar^{d_H}\cO^{0,1}_\infty \cdots
\op\ar^{d_H} \cO^{0,n}_\infty  \op\ar^\dl \bE_1 \label{b317'}
\eeq
of the differential graded algebra  $\cO^*_\infty$. The
monomorphism (\ref{v7}) and the body epimorphism (\ref{v7'}) yield
the corresponding cochain morphisms of the complexes (\ref{g111})
and (\ref{b317'}). Therefore, cohomology of the complex
(\ref{g111}) is the image of the cohomology of $\cO^*_\infty$.
\end{proof}

\begin{corollary} \mar{34c5} \label{34c5}
Any variationally trivial odd Lagrangian is $d_H$-exact.
\end{corollary}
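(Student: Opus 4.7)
The plan is to invoke Theorem \ref{cmp26} and then project on Grassmann parity. By that theorem, a variationally trivial graded Lagrangian $L\in\cS^{0,n}_\infty[F;Y]$ decomposes as $L=h_0\sigma+d_H\xi$, where $\sigma$ is a closed $n$-form on the body manifold $Y$ and $\xi\in\cS^{0,n-1}_\infty[F;Y]$.

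The key observation is that $\sigma$ is an ordinary exterior form on the smooth manifold $Y$, so $\sigma\in\cO^n(Y)$ carries even Grassmann parity; its horizontal projection $h_0\sigma$ therefore lies in the even part of $\cS^{0,n}_\infty[F;Y]$. Moreover, the total differential $d_H=dx^\lambda\wedge d_\lambda$ is an even operator on $\cS^*_\infty[F;Y]$, since both $dx^\lambda$ and the graded total derivatives $d_\lambda$ are even, so $d_H$ preserves the $\mathbb{Z}_2$-grading.

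Decompose $\xi=\xi_0+\xi_1$ into its even and odd components. The identity $L=h_0\sigma+d_H\xi_0+d_H\xi_1$ then splits by Grassmann parity into
\[
L_{0}=h_0\sigma+d_H\xi_0,\qquad L_{1}=d_H\xi_1,
\]
where $L_0$, $L_1$ denote the even and odd parts of $L$. Since $L$ is odd by hypothesis, $L_0=0$, and consequently $L=L_1=d_H\xi_1$, which is the claimed $d_H$-exactness.

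I foresee no genuine obstacle. The only point to verify is that the decomposition furnished by Theorem \ref{cmp26} can be taken parity-homogeneous, but this is automatic: the short variational complex (\ref{g111}) is a complex of $\mathbb{Z}_2$-graded $\cO^0_\infty$-modules with even differentials, so cohomology classes possess homogeneous representatives and the splitting above is legitimate. Note that the same argument shows more generally that any variationally trivial Lagrangian whose even part has no closed-form contribution from $Y$ is $d_H$-exact; the odd case is the cleanest instance because the obstruction $h_0\sigma$ automatically sits in the opposite parity.
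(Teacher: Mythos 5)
Your proof is correct and follows exactly the route the paper intends: Corollary \ref{34c5} is stated as an immediate consequence of Theorem \ref{cmp26}, the point being precisely that the obstruction term $h_0\si$ comes from a closed $n$-form on the body $Y$ and is therefore Grassmann-even, so it cannot contribute to an odd Lagrangian. Your explicit parity splitting of $\xi$ and the observation that $d_H$ is an even operator fill in the (unstated) details faithfully.
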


The exactness of the complex (\ref{g112}) at the term
$\cS^{1,n}_\infty[F;Y]$ results in the following
\cite{cmp04,book09,sard13}.

\begin{theorem} \label{g103} \mar{g103}
Given a graded Lagrangian $L$, there is the decomposition
\mar{g99,'}\ben
&& dL=\dl L - d_H\Xi_L,
\qquad \Xi\in \cS^{n-1}_\infty[F;Y], \label{g99}\\
&& \Xi_L=L+\op\sum_{s=0} \thh^A_{\nu_s\ldots\nu_1}\w
F^{\la\nu_s\ldots\nu_1}_A\om_\la, \label{g99'}\\
&& F_A^{\nu_k\ldots\nu_1}= \dr_A^{\nu_k\ldots\nu_1}\cL-d_\la
F_A^{\la\nu_k\ldots\nu_1} +\si_A^{\nu_k\ldots\nu_1},\qquad
k=1,2,\ldots,\nonumber
\een
where local graded functions $\si$ obey the relations
\be
\si^\nu_A=0,\qquad \si_A^{(\nu_k\nu_{k-1})\ldots\nu_1}=0.
\ee
\end{theorem}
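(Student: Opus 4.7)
The plan is to combine Theorem \ref{v11'} on the exactness of the one-contact complex with an explicit recursive integration by parts. First I note that $L\in\cS^{0,n}_\infty[F;Y]$ with $n=\di X$, so $d_H L\in\cS^{0,n+1}_\infty[F;Y]=0$; hence $dL=d_V L$ lies entirely in $\cS^{1,n}_\infty[F;Y]$. Because the graded variational operator $\dl=\vr\circ d$ of (\ref{34f10}) factors through the endomorphism $\vr$ that projects $\cS^{1,n}_\infty[F;Y]$ onto $\bE_1$ and satisfies $\vr\circ\vr=\vr$ on this component, I have $\vr(dL-\dl L)=\vr(dL)-\vr(dL)=0$. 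Theorem \ref{v11'} states that (\ref{g112}) is exact at $\cS^{1,n}_\infty[F;Y]$, so $\ker\vr=\im d_H$; this yields $-\Xi_L\in\cS^{1,n-1}_\infty[F;Y]$ with $dL-\dl L=-d_H\Xi_L$, which is (\ref{g99}).

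Second, to extract the explicit form (\ref{g99'}), I would expand
\[
d_V L=\op\sum_{0\leq|\La|}\thh^A_\La\w\dr^\La_A\cL\,\om
\]
and perform integration by parts recursively, using the identities $d_H\thh^A_\La=dx^\la\w\thh^A_{\la+\La}$ and $dx^\la\w\om_\m=\dl^\la_\m\om$. These combine into the key reduction
\[
\thh^A_{\la+\La}\w f\,\om=-d_H(\thh^A_\La\w f\,\om_\la)-\thh^A_\La\w(d_\la f)\,\om,
\]
which lowers the order of the contact jet by one at the cost of a $d_H$-exact boundary and a $-d_\la f$ correction. Applying this to each term of $d_V L$ from the top jet order downward and collecting coefficients of $\thh^A_{\nu_s\cdots\nu_1}\w\om_\la$ inside the $d_H$-exact part produces coefficients $F^{\la\nu_s\cdots\nu_1}_A$ that must satisfy the recursion $F^{\nu_k\cdots\nu_1}_A=\dr^{\nu_k\cdots\nu_1}_A\cL-d_\la F^{\la\nu_k\cdots\nu_1}_A+\si^{\nu_k\cdots\nu_1}_A$ appearing in (\ref{g99'}). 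The process terminates at the finite jet order of $L$.

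The hard part will be pinning down the non-uniqueness in this recursion. The partial derivatives $\dr^{\nu_k\cdots\nu_1}_A\cL$ are completely symmetric, but nothing in the integration by parts uniquely determines the intermediate $F^{\nu_k\cdots\nu_1}_A$: any shift by a tensor $\si^{\nu_k\cdots\nu_1}_A$ whose symmetrization over the leading pair $\nu_k,\nu_{k-1}$ vanishes contracts to zero against the symmetric contact jet $\thh^A_{\nu_k\cdots\nu_1}$ and thus leaves both (\ref{g99}) and the assembled $\Xi_L$ compatible. The normalization $\si^\nu_A=0$ is forced at the first step, where the Euler--Lagrange form $\cE_A\thh^A\w\om$ is uniquely determined and no antisymmetric freedom is available; the higher $\si$'s parametrize the residual freedom in reassigning shared indices among successive $F$'s subject to the stated symmetry condition. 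Existence of a consistent family $F^{\nu_k\cdots\nu_1}_A$ for every admissible $\si$ then follows, and any such choice realizes (\ref{g99'}).
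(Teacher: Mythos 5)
Your argument is correct and follows the paper's own route: the paper derives (\ref{g99}) precisely from the exactness of the one-contact complex (\ref{g112}) at $\cS^{1,n}_\infty[F;Y]$ (Theorem \ref{v11'}), exactly as in your first paragraph, and your recursive integration by parts with the residual $\si$-freedom is the standard coordinate computation behind (\ref{g99'}) that the paper delegates to its references. No gaps; the only cosmetic point is that the $\Xi_L$ produced by exactness is the one-contact part, to which $L$ is added by convention to obtain the Lepage equivalent (\ref{g99'}), which does not affect $d_H\Xi_L$.
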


The form $\Xi_L$ (\ref{g99'}) provides a global Lepage equivalent
of a graded Lagrangian $L$. In particular, one can locally choose
$\Xi_L$ (\ref{g99'}) where all functions $\si$ vanish.

The formula (\ref{g99}), called the variational decomposition,
play a prominent role in the formulation and the proof of Noether
theorems.

\begin{example} \mar{first} \label{first} Let us consider first order Lagrangian
theory on a fibre bundle $Y\to X$. Its structure algebra
(\ref{5.77a}) is the $\cO_\infty^*$ (\ref{ppp}). The first order
Lagrangian (\ref{0709}) reads
\mar{23f2}\beq
L=\cL\om: J^1Y\to \op\w^n T^*X. \label{23f2}
\eeq
The corresponding second-order Euler -- Lagrange operator
(\ref{0709'}) takes a form
\mar{305}\ben
&& \cE_L: J^2Y\to T^*Y\w(\op\w^nT^*X), \nonumber \\
&& \cE_L= (\dr_i\cL- d_\la\pi^\la_i) \thh^i\w\om, \qquad
\pi^\la_i=\dr^\la_i\cL. \label{305}
\een
Its kernel defines the second order Euler -- Lagrange equation
\mar{b327}\beq
(\dr_i- d_\la\dr^\la_i)\cL=0. \label{b327}
\eeq
Given the first order Lagrangian $L$ (\ref{23f2}), its Lepage
equivalents $\Xi_L$ (\ref{g99'}) in the variational decomposition
(\ref{g99}) read
\mar{22f44}\beq
\Xi_L=L+(\p^\la_i-d_\m \si^{\m\la}_i)\thh^i\w\om_\la
+\si^{\la\m}_i \thh^i_\m\w\om_\la, \label{22f44}
\eeq
where $\si^{\m\la}_i=-\si^{\la\m}_i$ are skew-symmetric local
functions on $Y$. One usually choose the Poincar\'e -- Cartan form
\mar{303}\beq
H_L=\cL\om +\p^\la_i\thh^i\w\om_\la.  \label{303}
\eeq
\end{example}

\section{First Noether theorem}

As was mentioned above, the first Noether theorem (Theorem
\ref{j45}) is a straightforward corollary of the variational
decomposition (\ref{g99}).

\subsection{Infinitesimal graded transformations of Lagrangian systems}

Given a graded Lagrangian system $(\cS^*_\infty[F;Y], L)$, by its
infinitesimal transformations are meant contact graded derivations
of the real graded commutative ring $\cS^0_\infty[F;Y]$. These
derivations constitute a $\cS^0_\infty[F;Y]$-module $\gd
\cS^0_\infty[F;Y]$ which is a real Lie superalgebra with respect
to the Lie superbracket (\ref{ws14}). The following holds
\cite{cmp04,book09}.

\begin{theorem} \label{35t1} \mar{35t1}
The derivation module $\gd\cS^0_\infty[F;Y]$ is isomorphic to the
$\cS^0_\infty[F;Y]$-dual $(\cS^1_\infty[F;Y])^*$ of the module of
graded one-forms $\cS^1_\infty[F;Y]$.
\end{theorem}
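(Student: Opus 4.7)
The plan is to mimic the finite-dimensional duality $\cS^1[E;Z] = (\gd \cA_E)^*$ implicit in Lemma \ref{v62} and the formula (\ref{ws47}), then lift it to the direct limit $\cS^*_\infty[F;Y]$. I would set up the isomorphism explicitly in both directions, relying on the graded interior product already introduced in (\ref{ws46}) for one direction and on the minimality of the differential calculus for the other.

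In the forward direction, I would define
\[
\Psi:\gd\cS^0_\infty[F;Y]\to (\cS^1_\infty[F;Y])^*, \qquad \Psi(u)(\f) = u\rfloor \f,
\]
which is well-defined by (\ref{ws46}) and manifestly $\cS^0_\infty[F;Y]$-linear in $\f$. Its Grassmann parity agrees with $[u]$, so $\Psi$ is a morphism of graded $\cS^0_\infty[F;Y]$-modules. In the reverse direction, given $\Phi\in (\cS^1_\infty[F;Y])^*$, I would set $u_\Phi(f):=\Phi(df)$ for $f\in\cS^0_\infty[F;Y]$. The graded Leibniz rule (\ref{ws10}) is inherited from that of $d$ (see (\ref{ws45})) and the graded $\cS^0_\infty[F;Y]$-linearity of $\Phi$: indeed
\[
u_\Phi(fg) = \Phi(df\cdot g + (-1)^{[f]}f\cdot dg) = u_\Phi(f)\,g + (-1)^{[f][u_\Phi]}\,f\,u_\Phi(g),
\]
so $u_\Phi\in\gd\cS^0_\infty[F;Y]$.

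For mutual inverseness, the identity $\Psi^{-1}\circ\Psi(u)(f) = u\rfloor df = u(f)$ is immediate from (\ref{ws46}). For the other composition, $\Psi\circ\Psi^{-1}(\Phi)$ and $\Phi$ coincide on every exact one-form $df$ by construction, and both are $\cS^0_\infty[F;Y]$-linear; so they coincide on any $\cS^0_\infty[F;Y]$-linear combination of exact one-forms. This is the step that uses minimality.

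The main obstacle, and the step that requires a genuine argument rather than bookkeeping, is extending Lemma \ref{v62} from each finite-order $\cS^*_r[F;Y]$ to the direct limit $\cS^*_\infty[F;Y]$: I need to know that $\cS^1_\infty[F;Y]$ is generated as an $\cS^0_\infty[F;Y]$-module by $\{df:f\in\cS^0_\infty[F;Y]\}$. I would justify this by noting that any $\f\in \cS^1_\infty[F;Y]$ comes, via the monomorphism (\ref{v4'}), from some $\cS^1_r[F;Y]$; at that finite level Lemma \ref{v62} presents $\f$ as a finite $\cS^0_r[F;Y]$-combination of $df_i$ with $f_i\in\cS^0_r[F;Y]\subset\cS^0_\infty[F;Y]$, and this decomposition transports to the limit. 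Once minimality is in place at the level of $\cS^*_\infty[F;Y]$, the argument sketched above closes and yields the required isomorphism $\gd\cS^0_\infty[F;Y]\cong(\cS^1_\infty[F;Y])^*$.
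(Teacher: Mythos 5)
Your proposal is correct and follows essentially the same route as the paper: both reduce to finite jet order (via Lemma \ref{mm1} and the classical Serre--Swan theorem) to show that $\cS^1_\infty[F;Y]$ is generated by the exact forms $df$, $f\in\cS^0_\infty[F;Y]$, and both realize the correspondence through $\Phi\mapsto(f\mapsto\Phi(df))$ together with its inverse. Your explicit check of mutual inverseness via the interior product is a slightly more careful packaging of the paper's ``monomorphisms in both directions'' argument, but it is not a different proof.
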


\begin{proof} At first, let us show that $\cS^*_\infty[F;Y]$ is generated by elements
$df$, $f\in \cS^0_\infty[F;Y]$. It suffices to justify that any
element of $\cS^1_\infty[F;Y]$ is a finite
$\cS^0_\infty[F;Y]$-linear combination of elements $df$, $f\in
\cS^0_\infty[F;Y]$. Indeed, every $\f\in\cS^1_\infty[F;Y]$ is a
graded exterior form on some finite order jet manifold $J^rY$,
i.e., a section of a vector bundle $\ol \cV_{J^rF}\to J^rY$ in
accordance with Lemma \ref{mm1}. Then by virtue of the classical
Serre -- Swan theorem, a $C^\infty(J^rY)$-module $\cS^1_r[F;Y]$ of
graded one-forms on $J^rY$ is a projective module of finite rank,
i.e., $\f$ is represented by a finite $C^\infty(J^rY)$-linear
combination of elements $df$, $f\in \cS^0_r[F;Y]\subset
\cS^0_\infty[F;Y]$. Any element $\Phi\in (\cS^1_\infty[F;Y])^*$
yields a derivation $\vt_\Phi(f)=\Phi(df)$ of the  real ring
$\cS^0_\infty[F;Y]$. Since the module $\cS^1_\infty[F;Y]$ is
generated by elements $df$, $f\in \cS^0_\infty[F;Y]$, different
elements of $(\cS^1_\infty[F;Y])^*$ provide different derivations
of $\cS^0_\infty[F;Y]$, i.e., there is a monomorphism
$(\cS^0_\infty[F;Y])^*\to \gd\cS^0_\infty[F;Y]$. By the same
formula, any derivation $\vt\in \gd\cS^0_\infty[F;Y]$ sends $df\to
\vt(f)$ and, since $\cS^0_\infty[F;Y]$ is generated by elements
$df$, it defines a morphism $\Phi_\vt:\cS^1_\infty[F;Y]\to
\cS^0_\infty[F;Y]$. Moreover, different derivations $\vt$ provide
different morphisms $\Phi_\vt$. Thus, we have a monomorphism
$\gd\cS^0_\infty[F;Y]\to (\cS^1_\infty[F;Y])^*$ and, consequently,
isomorphism $\gd\cS^0_\infty[F;Y]=(\cS^0_\infty[F;Y])^*$.
\end{proof}

The proof of Theorem \ref{35t1} gives something more. It follows
that the DBGA $\cS^*_\infty[F;Y]$ is minimal differential calculus
over the real graded commutative ring $\cS^0_\infty[F;Y]$.

Let $\vt\rfloor\f$, $\vt\in \gd\cS^0_\infty[F;Y]$, $\f\in
\cS^1_\infty[F;Y]$, denote the corresponding interior product.
Extended to the DBGA $\cS^*_\infty[F;Y]$, it obeys the rule
\be
\vt\rfloor(\f\w\si)=(\vt\rfloor \f)\w\si
+(-1)^{|\f|+[\f][\vt]}\f\w(\vt\rfloor\si), \qquad \f,\si\in
\cS^*_\infty[F;Y].
\ee

Restricted to a coordinate chart (\ref{j3}) of $J^\infty Y$, the
algebra $\cS^*_\infty[F;Y]$ is a free $\cS^0_\infty[F;Y]$-module
generated by one-forms $dx^\la$, $\thh^A_\La$. Due to the
isomorphism stated in Theorem \ref{35t1}, any graded derivation
$\vt\in\gd\cS^0_\infty[F;Y]$ takes a local form
\mar{gg3}\beq
\vt=\vt^\la \dr_\la + \vt^A\dr_A + \op\sum_{0<|\La|}\vt^A_\La
\dr^\La_A, \label{gg3}
\eeq
where
\be
\dr^\La_A(s_\Si^B)=\dr^\La_A\rfloor ds_\Si^B=\dl_A^B\dl^\La_\Si
\ee
up to permutations of multi-indices $\La$ and $\Si$. Its
coefficients $\vt^\la$, $\vt^A$, $\vt^A_\La$ are local smooth
functions of finite jet order possessing the transformation law
\mar{g71}\ben
&& \vt'^\la=\frac{\dr x'^\la}{\dr x^\m}\vt^\m, \qquad
\vt'^A=\frac{\dr s'^A}{\dr s^B}\vt^B + \frac{\dr s'^A}{\dr
x^\m}\vt^\m, \nonumber\\
&& \vt'^A_\La=\op\sum_{|\Si|\leq|\La|}\frac{\dr s'^A_\La}{\dr
s^B_\Si}\vt^B_\Si + \frac{\dr s'^A_\La}{\dr x^\m}\vt^\m.
\label{g71}
\een

Every graded derivation $\vt$ (\ref{gg3}) of a ring
$\cS^0_\infty[F;Y]$ yields a derivation (called the  Lie
derivative) $\bL_\vt$ of the BGDA $\cS^*_\infty[F;Y]$ given by the
relations
\be
&& \bL_\vt\f=\vt\rfloor d\f+ d(\vt\rfloor\f), \qquad \f\in
\cS^*_\infty[F;Y],\\
&& \bL_\vt(\f\w\si)=\bL_\vt(\f)\w\si
+(-1)^{[\vt][\f]}\f\w\bL_\vt(\si),
\ee
of the DBGA $\cS^*_\infty[F;Y]$.

The graded derivation $\vt$ (\ref{gg3}) is called contact if the
Lie derivative $\bL_\vt$ preserves the ideal of contact graded
forms of the DBGA $\cS^*_\infty[F;Y]$ generated by the contact
one-forms (\ref{kk}).

\begin{lemma} \mar{nnn} \label{nnn}
With respect to the local generating basis $(s^A)$ for the DBGA
$\cS^*_\infty[F;Y]$, any its contact graded derivation takes a
form
\mar{g105}\beq
\vt=\vt_H+\vt_V=\up^\la d_\la + [\up^A\dr_A +\op\sum_{|\La|>0}
d_\La(\up^A-s^A_\m\up^\m)\dr_A^\La], \label{g105}
\eeq
where $\vt_H$ and $\vt_V$ denotes the horizontal and vertical
parts of $\vt$ \cite{cmp04,book09}.
\end{lemma}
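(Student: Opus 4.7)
The plan is to exploit the contact-preservation condition $\bL_\vt(\cI)\subset\cI$, where $\cI$ is the contact ideal generated by the forms $\theta^B_\Si$, and to solve the resulting recursion for the jet-level coefficients of (\ref{gg3}).

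First, I would compute the graded Lie derivative of a generating contact form,
\[
\bL_\vt\theta^A_\La \;=\; d\vt^A_\La \;-\; \vt^A_{\La+\m}\,dx^\m \;-\; s^A_{\La+\m}\,d\vt^\m ,
\]
and invoke the canonical splitting $df=d_\m f\,dx^\m+\sum_{|\Si|\ge 0}\dr^\Si_B f\,\theta^B_\Si$ valid for every $f\in\cS^0_\infty[F;Y]$. This reduces the contact condition to the vanishing of the coefficient of $dx^\m$ in $\bL_\vt\theta^A_\La$, which after a short rearrangement amounts to the recursion
\[
\vt^A_{\La+\m}\;=\;d_\m\vt^A_\La \;-\; s^A_{\La+\nu}\,d_\m\vt^\nu .
\]

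Next, I would solve this recursion by induction on $|\La|$, starting from the free data $\vt^\la$ and $\vt^A$. The base case $|\La|=0$ gives $\vt^A_\m = d_\m(\vt^A-s^A_\nu\vt^\nu)+s^A_{\m+\nu}\vt^\nu$. The inductive step uses $d_\m s^A_{\La+\nu}=s^A_{\La+\m+\nu}$ together with commutativity of the $d_\m$, yielding
\[
\vt^A_\La \;=\; d_\La\bigl(\vt^A-s^A_\nu\vt^\nu\bigr)\;+\;s^A_{\La+\nu}\,\vt^\nu ,
\qquad |\La|\ge 1 .
\]

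Finally, setting $\up^\la=\vt^\la$ and $\up^A=\vt^A$, the polynomial residues $s^A_{\La+\nu}\vt^\nu\,\dr^\La_A$ reassemble into the horizontal piece $\up^\la d_\la$ via the expansion $d_\la=\dr_\la+\sum_{|\La|\ge 0}s^A_{\la+\La}\dr^\La_A$, while the terms $d_\La(\vt^A-s^A_\nu\vt^\nu)\,\dr^\La_A$ combine with $\up^A\dr_A$ into the vertical piece $\vt_V$ of (\ref{g105}), generated by the characteristic $\up^A-s^A_\m\up^\m$. The main bookkeeping obstacle is the induction step, where one must verify that iterating the recursion integrates cleanly to $d_\La$ acting on the characteristic without leaving residual jet terms; this relies crucially on the symmetry of $\dr^\La_A$ in $\La$ and the commutativity of the total derivatives. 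Grassmann signs pose no difficulty here, since $d_\m$ is Grassmann-even and the parity of $\vt^A$ coincides with that of $s^A$.
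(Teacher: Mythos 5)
Your computation is precisely the ``direct computation similar to that of the first part of B\"acklund's theorem'' which the paper invokes without carrying out: deriving the recursion $\vt^A_{\m+\La}=d_\m\vt^A_\La-s^A_{\La+\nu}d_\m\vt^\nu$ from the requirement that the $dx^\m$-component of $\bL_\vt\thh^A_\La$ vanish, and integrating it to $d_\La$ of the characteristic $\up^A-s^A_\m\up^\m$ plus the horizontal remainder $s^A_{\La+\nu}\up^\nu$ that reassembles into $\up^\la d_\la$. The argument is correct and is essentially the paper's own route in expanded form; the only ingredient the paper adds is the closing remark that the resulting coefficients satisfy the transformation law (\ref{g71}), a routine verification you leave implicit.
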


\begin{proof}
The expression (\ref{g105}) results from a direct computation
similar to that of the first part of B\"acklund's theorem
\cite{ibr}. One can then justify that local functions (\ref{g105})
satisfy the transformation law (\ref{g71}).
\end{proof}

A glance at the expression (\ref{g105}) shows that a contact
graded derivation $\vt$ is the infinite order jet prolongation
\mar{inf}\beq
\vt=J^\infty\up \label{inf}
\eeq
of its restriction
\mar{jj15}\beq
\up=\up^\la\dr_\la +\up^A\dr_A =\up_H + \up_V= \up^\la d_\la +
(u^A\dr_A - s^A_\la\dr^\la_A) \label{jj15}
\eeq
to the graded commutative ring $S^0[F;Y]$. We call the $\up$
(\ref{jj15}) the generalized graded vector field on a graded
manifold $(Y,\gA_F)$. This fails to be a graded vector field on
$(Y,\gA_F)$ because its component depends on jets of elements of
the local generating basis for $(Y,\gA_F)$ in general. At the same
tine, any graded vector field $u$ on $(Y,\gA_F)$ is the
generalized graded vector field (\ref{jj15}) generating the
contact graded derivation $J^\infty u$ (\ref{inf}).

In particular, the vertical contact graded derivation (\ref{jj15})
reads
\mar{j40}\beq
\vt= \up^A\dr_A +\op\sum_{|\La|>0} d_\La\up^A\dr_A^\La.
\label{j40}
\eeq

\begin{lemma}
Any vertical contact graded derivation (\ref{j40}) obeys the
relations
\mar{g6,'}\ben
&& \vt\rfloor d_H\f=-d_H(\vt\rfloor\f), \label{g6}\\
&& \bL_\vt(d_H\f)=d_H(\bL_\vt\f), \qquad \f\in\cS^*_\infty[F;Y].
\label{g6'}
\een
\end{lemma}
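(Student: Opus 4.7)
The plan is to prove (\ref{g6}) first by a derivation-plus-generators argument, and then deduce (\ref{g6'}) from (\ref{g6}) by a purely formal manipulation with the Cartan magic formula.

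For (\ref{g6}), I introduce the operator
\[
A := \vt\rfloor \circ d_H + d_H \circ (\vt\rfloor\cdot )
\]
on the DBGA $\cS^*_\infty[F;Y]$. Both $\vt\rfloor$ and $d_H$ are graded derivations: $\vt\rfloor$ of bidegree $(-1,0)$ (with horizontal/contact degree) and Grassmann parity $[\vt]$, while $d_H$ is of bidegree $(0,1)$ and even Grassmann parity. Standard graded-algebra formalism then gives that their graded (anti)commutator $A$ is itself a graded derivation of $\cS^*_\infty[F;Y]$. Consequently $A=0$ as soon as $A$ annihilates a generating family. I would therefore check $A$ on the local generating elements $s^A_\La$, $dx^\la$, $\thh^A_\La$ of the DBGA (cf.\ the discussion after (\ref{kk})). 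On $s^A_\La$ one has $\vt\rfloor s^A_\La=0$ and $\vt\rfloor d_Hs^A_\La = \vt\rfloor(dx^\mu s^A_{\mu+\La})=0$, using that $\vt$ is vertical so $\vt\rfloor dx^\mu=0$. On $dx^\la$, both summands obviously vanish. The only non-trivial check is on the contact generators $\thh^A_\La$: using $d_H\thh^A_\La = dx^\mu\w\thh^A_{\mu+\La}$ (immediate from the definition of $\thh^A_\La$ in (\ref{kk}) and of $d_\mu$) together with $\vt\rfloor\thh^B_\Si = d_\Si\up^B$ from (\ref{j40}), one gets
\[
A(\thh^A_\La) = -\,dx^\mu\w d_{\mu+\La}\up^A + d_H(d_\La\up^A) = -dx^\mu\w d_{\mu+\La}\up^A + dx^\mu\w d_{\mu+\La}\up^A = 0.
\]

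For (\ref{g6'}), I would not recheck on generators but argue formally. The Lie derivative (\ref{+117}) reads, as an operator identity on $\cS^*_\infty[F;Y]$, $\bL_\vt = \vt\rfloor\circ d + d\circ(\vt\rfloor\cdot)$. The decomposition $d=d_H+d_V$, together with $d_H^2=0=d_V^2$ and $d_Hd_V+d_Vd_H=0$, yields at once
\[
d\,d_H + d_H\,d = (d_Vd_H) + (d_Hd_V) = 0.
\]
Using the first identity (\ref{g6}) as the operator relation $\vt\rfloor\, d_H = -\,d_H\,\vt\rfloor$, I would then compute
\[
\bL_\vt\,d_H = \vt\rfloor\,d\,d_H + d\,\vt\rfloor\,d_H = \vt\rfloor\,d\,d_H - d\,d_H\,\vt\rfloor,
\]
\[
d_H\,\bL_\vt = d_H\,\vt\rfloor\,d + d_H\,d\,\vt\rfloor = -\,\vt\rfloor\,d_H\,d + d_H\,d\,\vt\rfloor,
\]
and subtracting gives
\[
\bL_\vt\,d_H - d_H\,\bL_\vt = \vt\rfloor\,(d\,d_H+d_H\,d) - (d\,d_H+d_H\,d)\,\vt\rfloor = 0,
\]
which is (\ref{g6'}).

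The main obstacle is the bookkeeping in the first step: one must verify carefully that the anticommutator of the odd derivations $\vt\rfloor$ and $d_H$ is again a graded derivation (with the correct signs depending on the Grassmann parity $[\vt]$) so that it is legitimate to conclude $A=0$ from its vanishing on generators. Once that formal point is settled the computation on generators is routine, and the passage to (\ref{g6'}) is a one-line consequence of (\ref{g6}) and the identity $dd_H+d_Hd=0$.
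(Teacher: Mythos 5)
Your proof is correct and follows essentially the same route as the paper: the observation that $A=\vt\rfloor\circ d_H+d_H\circ(\vt\rfloor\cdot)$ is a graded derivation is exactly the paper's remark that the relation (\ref{g6}) is stable under exterior products, and the subsequent check on graded functions and on the contact generators $\thh^A_\La$ via $\vt\rfloor\thh^A_\La=\up^A_\La$, $d_H\up^A_\La=\up^A_{\la+\La}dx^\la$ and $d_H\thh^A_\La=dx^\la\w\thh^A_{\la+\La}$ coincides with the equalities (\ref{0480}) -- (\ref{0480'}) used there. The paper dismisses (\ref{g6'}) as an immediate corollary of (\ref{g6}); your Cartan-formula computation using $d=d_H+d_V$ and $dd_H+d_Hd=0$ simply makes that step explicit.
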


\begin{proof}
It is easily justified that, if $\f$ and $\f'$ satisfy the
relation (\ref{g6}), then $\f\w\f'$ does well. Then it suffices to
prove the relation (\ref{g6}) when $\f$ is a function and
$\f=\thh^A_\La$. The result follows from the equalities
\mar{0480,'}\ben
&& \vt\rfloor \thh^A_\La=\up^A_\La, \quad
d_H(\up^A_\La)=\up^A_{\la+\La}dx^\la, \quad
d_H\thh^A_\la=dx^\la\w\thh^A_{\la+\La}, \label{0480}\\
&& d_\la\circ \up^A_\La\dr_A^\La= \up^A_\La\dr_A^\La \circ d_\la.
\label{0480'}
\een
The relation (\ref{g6'}) is a corollary of the equality
(\ref{g6}).
\end{proof}

The vertical contact graded derivation $\vt$ (\ref{j40}) is said
to be nilpotent if
\mar{g133}\ben
&&\bL_\vt(\bL_\vt\f)= \op\sum_{0\leq|\Si|,0\leq|\La|}
(\up^B_\Si\dr^\Si_B(\up^A_\La)\dr^\La_A + \label{g133}\\
&& \qquad (-1)^{[s^B][\up^A]}\up^B_\Si\up^A_\La\dr^\Si_B \dr^\La_A)\f=0
\nonumber
\een
for any horizontal graded form $\f\in S^{0,*}_\infty$.

\begin{lemma} \label{041} \mar{041} The vertical contact graded
derivation (\ref{j40}) is  nilpotent only if it is odd and iff the
equality
\be
\bL_\vt(\up^A)=\op\sum_{0\leq|\Si|} \up^B_\Si\dr^\Si_B(\up^A)=0
\ee
holds for all $\up^A$.
\end{lemma}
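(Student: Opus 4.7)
The plan is to split formula (\ref{g133}) for $\bL_\vt(\bL_\vt\f)$ into its first-order and second-order parts in the graded derivatives $\dr^\La_A$ and to analyze each separately. Since these two parts are differential operators of different orders in $\f$, their sum can vanish on every horizontal $\f\in\cS^{0,*}_\infty[F;Y]$ only if each of them does. This separation reduces the lemma to two tractable computations.

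First, I would attack the second-order part
$$\sum (-1)^{[s^B][\up^A]}\,\up^B_\Si\up^A_\La\,\dr^\Si_B\dr^\La_A\f.$$
Relabelling $(A,\La)\leftrightarrow(B,\Si)$ and using the graded commutation rules $\up^A_\La\up^B_\Si=(-1)^{[\up^A][\up^B]}\up^B_\Si\up^A_\La$ together with $\dr^\La_A\dr^\Si_B=(-1)^{[s^A][s^B]}\dr^\Si_B\dr^\La_A$, this sum is rewritten in a second equivalent form. Averaging the two representations, the effective coefficient of $\dr^\Si_B\dr^\La_A\f$ acquires a factor $(-1)^{[s^B][\up^A]}+(-1)^{[s^A][s^B]+[s^A][\up^B]+[\up^A][\up^B]}$. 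Substituting $[\up^A]=[\vt]+[s^A]$ and reducing modulo $2$, the difference of the two exponents collapses to $[\vt]$, so the bracketed factor equals $(1+(-1)^{[\vt]})(-1)^{[s^B][\up^A]}$. Hence the second-order part vanishes identically for odd $\vt$, whereas for a nonzero even $\vt$ it produces a genuine second-order operator in $\f$ that cannot be cancelled by the first-order part. This gives both the implication that a nontrivial nilpotent $\vt$ must be odd and the automatic vanishing of the second-order obstruction in the odd case.

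Next, I would turn to the first-order part. Since each $\up^A_\La$ is a graded function, $\sum_{B,\Si}\up^B_\Si\,\dr^\Si_B(\up^A_\La)=\bL_\vt(\up^A_\La)$, so this part equals $\sum_{A,\La}\bL_\vt(\up^A_\La)\,\dr^\La_A\f$. Linear independence of the operators $\dr^\La_A$ on the generating basis forces $\bL_\vt(\up^A_\La)=0$ for all $A$ and $\La$. By the explicit form (\ref{j40}) of $\vt$ one has $\up^A_\La=d_\La\up^A$, and the contact-derivation identity (\ref{g6'}) shows that $\bL_\vt$ commutes with $d_H$ and hence with every iterated total derivative $d_\La$. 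Therefore $\bL_\vt(\up^A_\La)=d_\La\bL_\vt(\up^A)$, so the system $\bL_\vt(\up^A_\La)=0$ collapses to its $\La=\emptyset$ subsystem $\bL_\vt(\up^A)=0$. Combining with the previous paragraph, $\bL_\vt^2=0$ on horizontal forms is equivalent, in the odd case, precisely to the stated equalities.

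The principal obstacle will be the careful Grassmann-sign bookkeeping in the symmetrization of the second-order part: three independent sources of signs---the explicit prefactor $(-1)^{[s^B][\up^A]}$, the graded commutation of the two $\up$'s, and the graded commutation of the two derivatives $\dr^\La_A$---must be combined, and only after the substitution $[\up^A]=[\vt]+[s^A]$ does the dependence on $[\vt]$ isolate itself into a single overall factor. Once this sign identity is in hand, the remainder of the argument is a direct consequence of the contact-derivation property (\ref{g6'}) and the independence of the graded derivatives $\dr^\La_A$.
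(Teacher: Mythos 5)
Your proposal is correct and follows essentially the same route as the paper's own (very terse) proof: the paper evaluates the nilpotency condition (\ref{g133}) on $\f=s^A$ to isolate the first-order part, giving $\bL_\vt(\up^A)=0$, and on $\f=s^A_\La s^B_\Si$ to isolate the graded-symmetrized second-order part, which forces $\vt$ to be odd, using the commutation relation (\ref{0490}) to reduce $\bL_\vt(\up^A_\La)=d_\La\bL_\vt(\up^A)$. Your decomposition into first- and second-order operator parts, the sign bookkeeping showing the symmetrized coefficient carries the factor $1+(-1)^{[\vt]}$, and the collapse of the system to $\La=\emptyset$ are exactly the details the paper leaves implicit.
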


\begin{proof} There is the relation
\mar{0490}\beq
d_\la\circ \up^A_\La\dr_A^\la= \up^A_\La\dr_A^\la \circ d_\la.
\label{0490}
\eeq
Then the result follows from the equality (\ref{g133}) where one
puts $\f=s^A$ and $\f=s^A_\La s^B_\Si$.
\end{proof}

\begin{remark} \mar{zxz} \label{zxz}
If there is no danger of confusion, the common symbol $\up$
further stands for a generalized graded vector field $\up$, the
contact graded derivation $\vt$ (\ref{inf}) determined by $\up$,
and the Lie derivative $\bL_\vt$. We agree to call all these
operators, simply, a graded derivation of the structure algebra of
a Lagrangian system.
\end{remark}

\begin{remark} \label{rr35} \mar{rr5}
For the sake of convenience, right graded derivations
\mar{rgh}\beq
\op\up^\lto =\rdr_A\up^A \label{rgh}
\eeq
also are considered. They act on graded functions and differential
forms $\f$ on the right by the rules
\be
&& \op\up^\lto(\f)=d\f\lfloor\op\up^\lto +d(\f\lfloor\op\up^\lto), \\
&&\op\up^\lto(\f\w\f')=(-1)^{[\f']}\op\up^\lto(\f)\w\f'+
\f\w\op\up^\lto(\f'),\\
&& \thh_{\La A}\lfloor\rdr^{\Si B}=\dl^A_B\dl^\Si_\La.
\ee
\end{remark}

\subsection{Lagrangian symmetries and conservation laws}

Let $(\cS^*_\infty[F;Y], L)$ be a Grassmann-graded Lagrangian
system. A contact graded derivation $\vt$ (\ref{g105}) is called
the variational symmetry of a graded Lagrangian  $L$ if a Lie
derivative $\bL_\vt L$ of $L$ along $\vt$ is $d_H$-exact, i.e.,
\mar{35f1}\beq
\bL_\vt L=d_H\si. \label{35f1}
\eeq

A corollary of the variational decomposition (\ref{g99}) is the
first variational formula for a graded Lagrangian
\cite{jmp05,cmp04,book09}.

\begin{theorem} \label{j44} \mar{j44}
The Lie derivative of a graded Lagrangian along any contact graded
derivation (\ref{g105}) obeys the first variational formula
\mar{g107}\beq
\bL_\vt L= \up_V\rfloor\dl L +d_H(h_0(\vt\rfloor \Xi_L)) + d_V
(\up_H\rfloor\om)\cL, \label{g107}
\eeq
where $\Xi_L$ is the Lepage equivalent (\ref{g99'}) of a
Lagrangian $L$.
\end{theorem}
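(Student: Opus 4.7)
The plan is to combine Cartan's identity $\bL_\vt L = \vt\rfloor dL + d(\vt\rfloor L)$ with the variational decomposition $dL = \dl L - d_H\Xi_L$ of Theorem \ref{g103} and the splitting $\vt = \vt_H + \vt_V$ of Lemma \ref{nnn}, treating the vertical and horizontal parts separately.

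For the vertical piece: since $L = \cL\om$ is a horizontal top-form, $\vt_V\rfloor L = 0$, so $\bL_{\vt_V}L = \vt_V\rfloor dL$. The anticommutation relation (\ref{g6}) yields $\vt_V\rfloor d_H\Xi_L = -d_H(\vt_V\rfloor\Xi_L)$. Because the Euler--Lagrange form $\dl L = \thh^A\wedge\cE_A\om$ carries only the lowest-degree contact generators $\thh^A$, the interior product with $\vt_V$ extracts solely the $\dr_A$-coefficient, giving $\vt_V\rfloor\dl L = \up_V\rfloor\dl L$. Moreover each $\dr^\La_A\rfloor\thh^B_\Si$ is a scalar and $\vt_V\rfloor\om_\la = 0$, so $\vt_V\rfloor\Xi_L$ is already horizontal, i.e.\ $h_0(\vt_V\rfloor\Xi_L) = \vt_V\rfloor\Xi_L$. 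This produces
\[
\bL_{\vt_V}L = \up_V\rfloor\dl L + d_H\bigl(h_0(\vt_V\rfloor\Xi_L)\bigr).
\]

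For the horizontal piece: the key observation is that $\vt_H = \up^\la d_\la$ annihilates every contact form, since $d_\la\rfloor\thh^B_\La = s^B_{\la+\La} - s^B_{\la+\La} = 0$. Hence $\vt_H\rfloor\Xi_L$ reduces modulo $h_0$ to $\vt_H\rfloor L = \up^\la\cL\om_\la$. A direct computation of $\bL_{\vt_H}L = \vt_H\rfloor dL + d(\vt_H\rfloor L)$ using $dL = d_V\cL\wedge\om$, $d\om_\la = 0$, and $d_V\om_\la = 0$ produces a cancellation of the two $\up^\la d_V\cL\wedge\om_\la$ contributions, leaving $d_H(\up^\la\cL\om_\la) + \cL\,d_V\up^\la\wedge\om_\la$, which is exactly $d_H\bigl(h_0(\vt_H\rfloor\Xi_L)\bigr) + d_V(\up_H\rfloor\om)\cL$. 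Summing the two contributions and using $h_0(\vt\rfloor\Xi_L) = h_0(\vt_H\rfloor\Xi_L) + \vt_V\rfloor\Xi_L$ yields (\ref{g107}).

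The main obstacle is reconciling the two splittings used: $\vt = \vt_H + \vt_V$ as contact derivations from Lemma \ref{nnn}, versus the order-zero generalized vector field splitting $\up = \up_H + \up_V$ that appears in the statement of the theorem. One must verify that the order-zero component of $\vt_V$ contracted against the purely first-order contact content of $\dl L$ delivers exactly $\up_V\rfloor\dl L$, and that $\up_H\rfloor\om = \up^\la\om_\la$ so that the $d_V$-correction produced by the horizontal derivation repackages correctly as $d_V(\up_H\rfloor\om)\cL$. Once these identifications are settled, the remaining work is a routine graded Leibniz expansion.
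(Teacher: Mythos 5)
Your proposal is correct and follows essentially the same route as the paper's proof: Cartan's formula, the decomposition $dL=\dl L-d_H\Xi_L$, the relation (\ref{g6}) for the vertical part, and the observations that $\vt_V\rfloor\Xi_L$ is already horizontal while $h_0(\vt_H\rfloor\Xi_L)=\vt_H\rfloor L$. The only difference is presentational — you separate the vertical and horizontal contributions into two explicit computations, whereas the paper carries them through a single chain of equalities.
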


\begin{proof}
The formula (\ref{g107}) comes from the decomposition (\ref{g99})
and the relations (\ref{g6}) -- (\ref{g6'}) as follows:
\be
&& \bL_\vt L=\vt\rfloor dL + d(\vt\rfloor L) =[\vt_V\rfloor dL
-d_V\cL\w \up_H\rfloor\om] +[d_H(\up_H\rfloor L) + \\
&& \qquad d_V(\cL
\up_H\rfloor\om)]= \vt_V\rfloor dL + d_H(\up_H\rfloor L) + d_V
(\up_H\rfloor\om)\cL= \\
&& \qquad   \up_V\rfloor\dl L -\vt_V\rfloor d_H\Xi_L + d_H(\up_H\rfloor L)
+ d_V (\up_H\rfloor\om)\cL
=  \\
&& \qquad \up_V\rfloor\dl L +d_H(\vt_V\rfloor\Xi_L + \up_H\rfloor L)
+d_V (\up_H\rfloor\om)\cL,
\ee
where
\be
\vt_V\rfloor\Xi_L=h_0(\up_V\rfloor\Xi_L)
\ee
since $\Xi_L-L$ is a one-contact form and
\be
\vt_H\rfloor L=h_0(\up_H\rfloor \Xi_L).
\ee
\end{proof}

A glance at the expression (\ref{g107}) shows the following.

\begin{lemma} \mar{35l10} \label{35l10}
(i) A contact graded derivation $\vt$ is a variational symmetry
only if it is projected onto $X$.

(ii) Any projectable contact graded derivation is a variational
symmetry of a variationally trivial graded Lagrangian. It follows
that, if $\vt$ is a variational symmetry of a graded Lagrangian
$L$, it also is a variational symmetry of a Lagrangian $L+L_0$,
where $L_0$ is a variationally trivial graded Lagrangian.

(iii) A contact graded derivations $\vt$ is a variational symmetry
iff its vertical part $\up_V$ (\ref{g105}) is well.

(iv) It is a variational symmetry iff the graded density
$\up_V\rfloor \dl L$ is $d_H$-exact.
\end{lemma}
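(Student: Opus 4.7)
The plan is to extract all four statements directly from the first variational formula (\ref{g107}),
\[
\bL_\vt L = \up_V\rfloor \dl L + d_H(h_0(\vt\rfloor \Xi_L)) + d_V(\up_H\rfloor\om)\cL,
\]
by splitting each side according to horizontal/contact bidegree. The first two summands on the right are purely horizontal forms in $\cS^{0,n}_\infty[F;Y]$, while the third summand is the one-contact form
\[
d_V(\up_H\rfloor\om)\cL = \cL \sum_{A,\La}(\dr^\La_A\up^\la)\,\thh^A_\La\w\om_\la \in \cS^{1,n-1}_\infty[F;Y].
\]

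For (i), the variational symmetry condition $\bL_\vt L = d_H\si$ places a purely horizontal $(0,n)$-form on the left, so the $(1,n{-}1)$-component on the right must vanish. Reading off the coefficients in the free $\cS^0_\infty[F;Y]$-basis $\{\thh^A_\La\w\om_\la\}$ yields $\cL\,\dr^\La_A\up^\la = 0$ for every $A,\La,\la$, and for any nondegenerate (in particular, nonzero) $\cL$ this forces $\dr^\La_A\up^\la = 0$, i.e.\ $\up^\la$ is a function of $x^\m$ alone and $\vt$ is projected onto $X$. For (ii), a variationally trivial $L_0$ satisfies $\dl L_0 = 0$, and projectability of $\vt$ kills the third summand; (\ref{g107}) reduces to $\bL_\vt L_0 = d_H(h_0(\vt\rfloor\Xi_{L_0}))$, which is visibly $d_H$-exact. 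The corollary $\bL_\vt(L + L_0) = d_H(\si + \si_0)$ then follows by $\mathbb R$-linearity of $\bL_\vt$ and $d_H$.

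For (iii) and (iv), by (i) I may assume $\vt$ projectable, so (\ref{g107}) simplifies to $\bL_\vt L = \up_V\rfloor \dl L + d_H(h_0(\vt\rfloor \Xi_L))$. Applying the same formula to $\vt_V$ itself, whose horizontal part is zero and whose third summand therefore automatically vanishes, gives $\bL_{\vt_V} L = \up_V\rfloor \dl L + d_H(h_0(\vt_V\rfloor \Xi_L))$; subtracting yields $\bL_\vt L - \bL_{\vt_V} L = d_H(h_0(\vt_H\rfloor \Xi_L))$, a $d_H$-exact remainder. Consequently $\bL_\vt L$ is $d_H$-exact iff $\bL_{\vt_V} L$ is, giving (iii); and $\bL_{\vt_V} L$ is $d_H$-exact iff $\up_V\rfloor \dl L$ is, since the remaining term is manifestly $d_H$-exact, giving (iv). The only mildly delicate step is the nondegeneracy argument in (i); the other claims are immediate consequences of (\ref{g107}) together with the linearity of $\bL_\vt$ and $d_H$.
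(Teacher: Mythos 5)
Your proof is correct and follows exactly the route the paper intends: the paper offers no written argument beyond the remark that the lemma follows from ``a glance at'' the first variational formula (\ref{g107}), and your bidegree splitting of that formula — isolating the one-contact term $d_V(\up_H\rfloor\om)\cL$ for (i)--(ii) and comparing $\bL_\vt L$ with $\bL_{\vt_V}L$ for (iii)--(iv) — is the standard way to make that glance precise. The only caveat you already flag (the case of a Lagrangian vanishing on an open set in item (i)) is likewise glossed over by the paper.
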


Variational symmetries of a graded Lagrangian $L$ constitute a
real vector subspace $\ccG_L$ of the graded derivation module
$\gd\cS^0_\infty[F;Y]$. By virtue of item (ii) of Lemma
\ref{35l10}, the Lie superbracket
\be
\bL_{[\vt,\vt']}= [\bL_\vt,\bL_{\vt'}]
\ee
of variational symmetries is a variational symmetry and,
therefore, their vector space $\ccG_L$ is a real Lie superalgebra.

An immediate corollary of the first variational formula
(\ref{g107}) is the following first Noether theorem.

\begin{theorem} \label{j45} \mar{j45} If the contact graded derivation $\vt$
(\ref{g105}) is a variational symmetry of a graded Lagrangian $L$,
the first variational formula (\ref{g107}) leads to the weak
conservation law
\mar{35f2}\beq
0\ap d_H(h_0(\vt\rfloor\Xi_L)-\si) \label{35f2}
\eeq
of the current
\mar{09200}\beq
\cJ_\vt=\cJ^\m_\vt\om_\m=\si- h_0(\vt\rfloor\Xi_L), \qquad
\om_\mu=\dr_\mu\rfloor\om. \label{09200}
\eeq
on the shell Ker$\,\dl L$ (\ref{eq}).
\end{theorem}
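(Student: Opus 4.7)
The plan is to combine the first variational formula (\ref{g107}) with the definition (\ref{35f1}) of a variational symmetry and then reduce the resulting identity modulo the Euler--Lagrange equations (\ref{eq}). This is essentially a rearrangement, since all the nontrivial work has already been done in Theorem \ref{j44} and Lemma \ref{35l10}.

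First I would invoke Lemma \ref{35l10}(i): any variational symmetry $\vt$ is projectable onto $X$, so its horizontal component has coefficients $\up^\la$ depending only on the base coordinates $x^\mu$. Consequently $\up_H\rfloor\om=\up^\la\om_\la$ is an $(n-1)$-form pulled back from $X$, and its vertical differential vanishes,
$$d_V(\up_H\rfloor\om)\,\cL=0.$$
With this, the first variational formula (\ref{g107}) collapses to
$$\bL_\vt L=\up_V\rfloor\dl L + d_H\bigl(h_0(\vt\rfloor\Xi_L)\bigr).$$

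Next I would substitute the variational-symmetry condition $\bL_\vt L=d_H\si$ and rearrange:
$$\up_V\rfloor\dl L = d_H\bigl(\si - h_0(\vt\rfloor\Xi_L)\bigr) = d_H\cJ_\vt,$$
with $\cJ_\vt$ given by (\ref{09200}). This identity holds globally on $\cS^{*}_\infty[F;Y]$, not merely on-shell. Restricting it to the shell $\mathrm{Ker}\,\dl L$ where the Euler--Lagrange form $\dl L$ vanishes, the left-hand side becomes weakly zero, which yields precisely the weak conservation law (\ref{35f2}).

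I do not anticipate any real obstacle, since Theorem \ref{j44} already packages the nontrivial content (the variational decomposition (\ref{g99}) and the contact-derivation identities (\ref{g6})--(\ref{g6'})) into the first variational formula. The only subtle point is justifying that the $d_V$-term in (\ref{g107}) drops out, and this is handled cleanly by the projectability clause of Lemma \ref{35l10}(i); everything else is formal manipulation of the identity together with the definition of the shell.
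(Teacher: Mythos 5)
Your proof is correct and follows exactly the route the paper intends: the paper states Theorem \ref{j45} as an ``immediate corollary'' of the first variational formula (\ref{g107}) without writing out the steps, and your argument is the standard unpacking of that claim. Your explicit observation that the term $d_V(\up_H\rfloor\om)\cL$ vanishes by the projectability clause of Lemma \ref{35l10}(i) is the one point the paper leaves tacit, and you handle it correctly.
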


Obviously, the conserved current (\ref{09200}) is defined up to a
$d_H$-closed graded horizontal $(n-1)$-form, e.g. a total
differential $d_H U$ of some graded horizontal $(n-2)$-form
\mar{002}\beq
U=\frac12 U^{\nu\m}\om_{\nu\m}, \qquad
\om_{\nu\m}=\dr_\nu\rfloor\om_\m, \label{002}
\eeq
called the superpotential.

A variational symmetry $\vt$ of a graded Lagrangian $L$ is called
its exact symmetry or, simply, a symmetry if
\mar{22f0}\beq
\bL_\vt L=0. \label{22f0}
\eeq
In this case, the weak conservation law (\ref{35f2}) takes a form
\mar{22f5}\beq
0\ap d_H(h_0(\vt\rfloor\Xi_L))=-d_H\cJ_\vt, \label{22f5}
\eeq
where
\mar{22f6}\beq
\cJ_\vt=\cJ^\m_\vt\om_\m= -h_0(\vt\rfloor\Xi_L) \label{22f6}
\eeq
is called the symmetry current.  Of course, the symmetry current
(\ref{22f6}) also is defined with the accuracy of a $d_H$-closed
term.

Let $\vt$ be an exact symmetry of a Lagrangian $L$. Whenever $L_0$
is a variationally trivial Lagrangian, $\vt$ is a variational
symmetry of the Lagrangian $L+L_0$ such that the weak conservation
law (\ref{35f2}) for this Lagrangian is reduced to the weak
conservation law (\ref{22f5}) for a Lagrangian $L$ as follows:
\be
\bL_\vt(L+L_0)=d_H\si\ap d_H\si -d_H\cJ_\up.
\ee

\begin{remark} In accordance with the standard terminology,
variational and exact symmetries generated by generalized graded
vector fields (\ref{jj15}) are called generalized symmetries
because they depend on derivatives of variables. Generalized
symmetries of differential equations and Lagrangian systems have
been intensively investigated \cite{bry,fat,cmp04,ibr,kras,olv}.
Accordingly, by variational symmetries and symmetries one means
only those generated by vector fields $\up$ on a graded bundle
$(Y,\gA_F)$. We agree to call them classical symmetries. In this
case, the relation
\mar{22f9}\beq
\bL_\vt\cE_L=\dl(\bL_\vt L)=0 \label{22f9}
\eeq
holds \cite{book,olv}. It follows that $\vt$ also is a symmetry of
the Euler -- Lagrange operator $\cE_L$ of $L$. However, the
equality (\ref{22f9}) fails to be true in the case of generalized
symmetries.
\end{remark}

\begin{example} \mar{vvv} \label{vvv} Following Example
\ref{first}, let us consider first order Lagrangian theory on a
fibre bundle $Y\to X$. Given its Lagrangian $L$ (\ref{23f2}), let
$\up$ (\ref{jj15}) be its classical symmetry given by a
projectable vector field
\be
\up=u=u^\la\dr_\la + u^i\dr_i=u_H + u_V= u^\la d_\la + (u^i\dr_i -
y^i_\la\dr^\la_i)
\ee
on a fibre bundle $Y\to X$. In this case, it is sufficient to
consider the first order jet prolongation
\mar{23f41}\beq
J^1u= u^\la\dr_\la + u^i\dr_i + (d_\la u^i - y^i_\m d_\la
u^\m)\dr^\la_i. \label{23f41}
\eeq
of $u$ onto $J^1Y$, bun not the infinite order one (\ref{inf}).
Then the first variational formula (\ref{g107}) takes a form
\mar{23f42}\beq
\bL_{J^1u}L= u_V\rfloor\cE_L + d_H(h_0(u\rfloor H_L)),
\label{23f42}
\eeq
where $\Xi_L=H_L$ is the Poincar\'e -- Cartan form (\ref{303}).
Its coordinate expression reads
\mar{J4}\ben
&& \dr_\la u^\la\cL +[u^\la\dr_\la+
u^i\dr_i +(d_\la u^i -y^i_\m\dr_\la u^\m)\dr^\la_i]\cL =  \label{J4}\\
&& \qquad (u^i-y^i_\la u^\la)\cE_i
- d_\la[\pi^\la_i(u^\m y^i_\m -u^i) -u^\la\cL]. \nonumber
\een
If $u$ is an exact symmetry of $L$, we obtain the weak
conservation law (\ref{22f5}):
\mar{K4}\beq
0\ap - d_\la[\pi^\la_i(u^\m y^i_\m-u^i )-u^\la\cL], \label{K4}
\eeq
of the symmetry current (\ref{22f6}):
\mar{Q30}\beq
\cJ_u =[\pi^\la_i(u^\m y^i_\m-u^i )-u^\la\cL]\om_\la \label{Q30}
\eeq
along a vector field $u$.
\end{example}

\subsection{Gauge symmetries}

Treating gauge symmetries of Lagrangian field theory, one is
traditionally based on an example of Yang -- Mills gauge theory of
principal connections on principal bundles (Section 6.1). This
notion of gauge symmetries has been generalized to Lagrangian
theory on an arbitrary fibre bundle $Y\to X$
\cite{jpa05,jmp09,book09}. Here, we extend this notion to
Lagrangian theory on graded bundles in a general setting
(Definition \ref{sgauge}).

Let $(\cS^*_\infty[F;Y], L)$ be a Grassmann-graded Lagrangian
system on a graded bundle $(X,Y, \gA_F)$ with the local generating
basis $(s^A)$. Let
\be
E=E^0\op\oplus_X E^1
\ee
be a graded vector bundle over $X$ possessing an even part $E^0\to
X$ and the odd one $E^1\to X$. We regard it as a composite bundle
\mar{psp}\beq
E\to E^0\to X \label{psp}
\eeq
and consider a graded bundle $(X,E^0,\gA_E)$ modelled over this
composite bundle. Then we define the product (\ref{tyt}) of graded
bundles $(X,Y, \gA_F)$ and $(X,E^0,\gA_E)$ over the product
(\ref{wqw}) of the composite bundles $E$ (\ref{psp}) and $F$
(\ref{su5}). It reads
\mar{olo}\beq
(X, E^0\op\times_X Y, \gA_{E\op\times_X F}) \label{olo}
\eeq
Let us consider the corresponding DBGA
\mar{tyt'}\beq
\cS^*_\infty[E\op\times_XF;E^0\op\times_X Y] \label{tyt'}
\eeq
together with the monomorphisms (\ref{opo}) of DBGAs
\mar{tgv}\beq
\cS^*_\infty[F;Y]\to \cS^*_\infty[E\op\times_XF;E^0\op\times_X Y],
\qquad \cS^*_\infty[E;E^0]\to
\cS^*_\infty[E\op\times_XF;E^0\op\times_X Y] \label{tgv}
\eeq

Given a Lagrangian $L\in \cS^{0,n}_\infty[F;Y]$, let us define its
pull-back
\mar{kjk}\beq
L\in \cS^{0,n}_\infty[F;Y]\subset
\cS^*_\infty[E\op\times_XF;E^0\op\times_X Y], \label{kjk}
\eeq
and consider an extended Lagrangian system
\mar{mkm}\beq
(\cS^*_\infty[E\op\times_XF;E^0\op\times_X Y],L) \label{mkm}
\eeq
provided with the local generating basis $(s^A,c^r)$.

\begin{definition} \mar{sgauge} \label{sgauge}
A gauge transformation of the Lagrangian $L$ (\ref{kjk}) is
defined to be the contact graded derivation $\vt$ (\ref{inf}) of
the ring $\cS^0_\infty[E\op\times_XF;E^0\op\times_X Y]$
(\ref{tyt'}) such that a derivation $\vt$ equals zero on a subring
\be
\cS^0_\infty[E;E^0]\subset
\cS^0_\infty[E\op\times_XF;E^0\op\times_X Y],
\ee
A gauge transformation $\vt$ is called the gauge symmetry if it is
a variational symmetry of the Lagrangian $L$ (\ref{kjk}), i.e., a
density $\up_V\rfloor \dl L$ is $d_H$-exact.
\end{definition}

In view of the fist condition in Definition \ref{sgauge}, the
variables $c^r$ of the extended Lagrangian system (\ref{mkm}) can
be treated as gauge parameters of a gauge symmetry $\vt$.

Furthermore, we additionally assume that a gauge symmetry $\vt$ is
linear in gauge parameters $c^r$ and their jets $c^r_\La$ (see
Remark \ref{s8}). Then the generalized graded vector field $\up$
(\ref{jj15}) generating a gauge symmetry $\vt$ reads
\mar{gg2}\beq
\up=\left(\op\sum_{0\leq|\La|\leq m}
\up^{\la\La}_r(x^\m)c^r_\La\right)\dr_\la +
\left(\op\sum_{0\leq|\La|\leq m}
\up^{A\La}_r(x^\m,s^B_\Si)c^r_\La\right)\dr_A. \label{gg2}
\eeq
In accordance with Remark \ref{zxz}, we also call it the gauge
symmetry.

By virtue of item (iii) of Lemma \ref{35l10}, the generalized
vector field $\up$ (\ref{gg2}) is a gauge symmetry iff its
vertical part is so. Therefore, we can restrict our consideration
to vertical gauge symmetries.

\begin{remark} \mar{qwe} \label{qwe}
Let $E=E^0$, i.e., gauge parameters are even. A glance at the
expression (\ref{gg2}) shows that, in this case, a gauge symmetry
$\up$ is a linear differential operator on a real space of
sections of a vector bundle $E^0\to X$ with values in a real space
$\ccG_L$ of variational symmetries of a Lagrangian $L$
\cite{jpa05,jmp05,book09}.
\end{remark}

\subsection{Gauge conservation laws}

Being a variational symmetry, the gauge symmetry  $\up$
(\ref{gg2}) defines the weak conservation law (\ref{35f2}). The
peculiarity of this conservation law is that the conserved current
$\cJ_\up$ (\ref{09200}) is reduced to the superpotential
(\ref{002}) as follows.

\begin{theorem} \mar{supp} \label{supp} If $\up$ (\ref{gg2}) is a gauge
symmetry of a Lagrangian $L$, the corresponding conserved current
$\cJ_\up$ (\ref{09200}) is linear in gauge parameters (up to a
$d_h$-closed term), and it takes a form
\mar{005}\beq
\cJ_\up=W +d_HU=(W^\m + d_\nu U^{\nu\m})\om_\m, \label{005}
\eeq
where a term $W$ vanishes on-shell and $U^{\nu\m}=-U^{\m\nu}$ is
the superpotential (\ref{002}) which takes the form (\ref{spp}).
\end{theorem}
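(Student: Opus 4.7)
The plan is to combine the first variational formula for a variational symmetry with the linearity of the gauge transformation in the gauge parameters, and then invoke the acyclicity stated in Theorem \ref{cmp26}. By Lemma \ref{35l10}(iii) we may replace $\up$ with its vertical part without changing whether it is a variational symmetry or the density $\up_V\rfloor\dl L$, so I will assume $\up$ is purely vertical from the outset. Then the first variational formula (\ref{g107}) together with $\bL_\up L=d_H\si$ gives the off-shell strong identity
\[
\up\rfloor\dl L \;=\; d_H\bigl(\si-h_0(\vt\rfloor\Xi_L)\bigr) \;=\; d_H\cJ_\up ,
\]
and in view of (\ref{gg2}) the left-hand side is a sum
$\sum_{0\le|\La|\le m}\up^{A\La}_r c^r_\La\,\cE_A\,\om$ that is linear in the jets $c^r_\La$ of the gauge parameters.

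Next, I would integrate by parts to move all total derivatives off the $c^r_\La$'s: iterating the elementary identity
$c^r_{\la+\La}X = d_\la(c^r_\La X)-c^r_\La d_\la X$ produces a decomposition
\[
\up\rfloor\dl L \;=\; c^r\,\eta_r\,\om + d_H T,
\qquad \eta_r\;=\;\sum_\La (-1)^{|\La|}d_\La(\up^{A\La}_r\cE_A),
\]
where $T\in\cS^{0,n-1}_\infty[E\times_XF;E^0\times_XY]$ is linear both in the jets $c^r_\Si$ and in the Euler--Lagrange densities $\cE_A$ and their total derivatives (hence vanishes on-shell). Comparing with Step 1 yields $c^r\eta_r\,\om = d_H(\cJ_\up-T)$; since the right-hand side lies in the image of $d_H$ and since $c^r$ are independent gauge parameters, applying the variational operator $\dl$ (which annihilates $d_H$-exact forms by the bicomplex (\ref{7})) forces $\eta_r=0$. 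This is exactly the direct second Noether theorem, and it leaves us with the off-shell identity $d_H(\cJ_\up-T)=0$.

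Finally, I invoke Theorem \ref{cmp26} on the $d_H$-closed graded horizontal form $\cJ_\up-T$ of degree $n-1<n$: it decomposes as $h_0\si_0+d_H U$ with $\si_0$ a closed $(n-1)$-form on $Y$. The term $h_0\si_0$ is independent of the gauge parameters and hence can be absorbed in the intrinsic $d_H$-closed ambiguity of the conserved current (\ref{09200}), leaving
\[
\cJ_\up \;=\; W + d_H U, \qquad W:=T,
\]
where $W$ vanishes on-shell. The explicit form (\ref{002}) of $U$, and in particular the antisymmetry $U^{\nu\m}=-U^{\m\nu}$, is forced by the structure of the iterated integration by parts in Step 2: the highest-jet contributions enter $T$ only through symmetric total derivatives $d_{\nu\m}(\cdots)$, so that their rewriting as $d_\nu(\cdots)\om_\m$ naturally antisymmetrises the coefficient. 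The main obstacle is precisely this last combinatorial bookkeeping: keeping track of all jet orders $|\La|\le m$ simultaneously and verifying that every contribution can be arranged in the antisymmetric superpotential form (\ref{spp}), rather than just the weaker statement that $\cJ_\up$ is $d_H$-exact modulo on-shell terms.
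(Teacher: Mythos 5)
Your route is genuinely different from the paper's and it does correctly establish the qualitative content of the theorem. The paper never invokes the cohomology Theorem \ref{cmp26}: it expands the current $\cJ^\m_\up$ explicitly in the jets $c^r_{\m_k\ldots\m_M}$ of the gauge parameters, equates coefficients of each independent jet in the identity $0=\up_V^A\cE_A-d_\m\cJ^\m_\up$ to obtain the cascade (\ref{g4g})--(\ref{g7g}), and then resums these relations by hand, splitting each $J^{\nu\m_k\ldots}_r$ into symmetric and antisymmetric parts; the symmetric parts are eliminated recursively (they are expressed through on-shell-vanishing terms), and the antisymmetric parts assemble into the explicit superpotential (\ref{spp}). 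Your argument replaces all of this with a single integration by parts on $\up_V\rfloor\dl L$, the direct second Noether theorem ($\eta_r=0$, obtained correctly by applying $\dl$ to the $d_H$-exact density $c^r\eta_r\om$), and the acyclicity statement of Theorem \ref{cmp26} applied to the $d_H$-closed form $\cJ_\up-T$. This is cleaner and buys the decomposition $\cJ_\up=W+d_HU$ modulo a $d_H$-closed term with less bookkeeping; what it loses is all explicit information about $U$.

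That loss is a genuine gap relative to the statement as written, which asserts that $U^{\nu\m}=-U^{\m\nu}$ \emph{takes the form} (\ref{spp}). In your scheme $U$ is produced by the abstract cohomological decomposition $\f=h_0\si+d_H\xi$, so there is no computation from which (\ref{spp}) -- or even the antisymmetry of the coefficients and the linearity of $U$ in the gauge parameters -- can be read off; your closing remark that the antisymmetry is ``forced by the structure of the iterated integration by parts'' does not apply, because the $U$ you obtain does not come from that integration by parts at all (only $T=W$ does). To close the gap one must return to the paper's coefficient-matching: decompose $\cJ^\m_\up$ in the jets $c^r_\La$, derive (\ref{g4g})--(\ref{g6g}), and perform the recursive elimination of the symmetrized coefficients $J^{(\nu\m_k)\m_{k+1}\ldots\m_M}_r$. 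A second, smaller imprecision: Theorem \ref{cmp26} applied in the extended algebra yields a closed form $\si_0$ on $E^0\times_XY$, which may depend on the (even) gauge parameters, so the claim that $h_0\si_0$ is ``independent of the gauge parameters'' is unjustified -- though harmless here, since any $d_H$-closed term may be absorbed into the ambiguity of the current.
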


\begin{proof}
Let the gauge symmetry $\up$ (\ref{gg2}) be at most of jet order
$N$ in parameters. Then the conserved current $\cJ_\up$
(\ref{09200}), being linear in gauge parameters, is decomposed
into a sum
\mar{g2g}\ben
&& \cJ_\up^\m= J^{\m\m_1\ldots\m_M}_rc^r_{\m_1\ldots\m_M} +
\op\sum_{1<k< M} J^{\m\m_k\ldots\m_M}_rc^r_{\m_k\ldots\m_M} + \label{g2g}\\
&& \qquad J^{\m\m_M}_rc^r_{\m_M} +J^\m_rc^r, \qquad N\leq
M,\nonumber
\een
and the first variational formula (\ref{g107}) takes a form
\be
&& 0=[ \op\sum_{k=1}^N
\up_V^A{}_r^{\m_k\ldots\m_N}c^r_{\m_k\ldots\m_N}
+\up_V^A{}_rc^r]\cE_A -\\
&& \qquad d_\m(\op\sum_{k=1}^M
J^{\m\m_k\ldots\m_M}_rc^r_{\m_k\ldots\m_M} +J^\m_rc^r).
\ee
This equality provides the following set of equalities for each
$c^r_{\m\m_1\ldots\m_M}$, $c^r_{\m_k\ldots\m_M}$
$(k=1,\ldots,M-N-1)$, $c^r_{\m_k\ldots\m_N}$ $(k=1,\ldots,N-1)$,
$c^r_\m$ and $c^r$:
\mar{g4g,-6}\ben
&& 0=J^{(\m\m_1)\ldots\m_M}_r, \label{g4g}\\
&& 0=J^{(\m_k\m_{k+1})\ldots\m_M}_r +d_\nu
J^{\nu\m_k\ldots\m_M}_r, \qquad 1\leq k<M-N,
\label{g4g'}\\
&& 0=\up_V^A{}_r^{\m_k\ldots\m_N}\cE_A-
J^{(\m_k\m_{k+1})\ldots\m_N}_r
-d_\nu J^{\nu\m_k\ldots\m_N}_r,\qquad 1\leq k<N, \label{g5g}\\
&& 0= \up_V^A{}_r^\m\cE_A - J^\m_r - d_\nu J^{\nu\m}_r, \label{g6g}
\een
where $(\m\nu)$ means symmetrization of indices in accordance with
the splitting
\be
J^{\m_k\m_{k+1}\ldots\m_N}_r=J^{(\m_k\m_{k+1})\ldots\m_N}_r+
J^{[\m_k\m_{k+1}]\ldots\m_N}_r.
\ee
We also have the equality
\mar{g7g}\beq
0= \up_V^A{}_r\cE_A - d_\m J^\m_r, \label{g7g}
\eeq
With the equalities (\ref{g4g}) -- (\ref{g6g}), the decomposition
(\ref{g2g}) takes a form
\be
&& \cJ_\up^\m= J^{[\m\m_1]\ldots\m_M}_rc^r_{\m_1\ldots\m_M} +\\
&& \qquad \op\sum_{1< k\leq M-N} [(J^{[\m\m_k]\ldots\m_M}_r -
d_\nu
J^{\nu\m\m_k\ldots\m_M}_r)c^r_{\m_k\ldots\m_M}]+ \\
&& \qquad \op\sum_{1<k< N}[(\up_V^iA{}_r^{\m\m_k\ldots\m_N}\cE_A -
d_\nu J^{\nu\m\m_k\ldots\m_N}_r +
 J^{[\m\m_k]\ldots\m_N}_r)c^r_{\m_k\ldots\m_N}]+\\
&& \qquad (\up_V^A{}_r^{\m\m_N}\cE_A -d_\nu J^{\nu\m\m_N}_r +
J^{[\m\m_N]}_r)c^r_{\m_N} + (\up_V^A{}_r^\m\cE_A - d_\nu
J^{\nu\m}_r)c^r.
\ee
A direct computation
\be
&& \cJ_\up^\m=
d_\nu(J^{[\m\nu]\m_2\ldots\m_M}_rc^r_{\m_2\ldots\m_M}) -
d_\nu J^{[\m\nu]\m_2\ldots\m_M}_rc^r_{\m_2\ldots\m_M}+\\
&&  \qquad \op\sum_{1< k\leq M-N}
[d_\nu(J^{[\m\nu]\m_{k+1}\ldots\m_M}_rc^r_{\m_{k+1}\ldots\m_M})
-\\
&&\qquad d_\nu
J^{[\m\nu]\m_{k+1}\ldots\m_M}_rc^r_{\m_{k+1}\ldots\m_M}- d_\nu
J^{\nu\m\m_k\ldots\m_M}_rc^r_{\m_k\ldots\m_M}]+\\
&& \qquad \op\sum_{1<k<N} [(\up_V^A{}_r^{\m\m_k\ldots\m_N}\cE_A
- d_\nu J^{\nu\m\m_k\ldots\m_N}_r)c^r_{\m_k\ldots\m_N} +\\
&&  \qquad
d_\nu(J^{[\m\nu]\m_{k+1}\ldots\m_N}_rc^r_{\m_{k+1}\ldots\m_N})
-d_\nu J^{[\m\nu]\m_{k+1}\ldots\m_N}_rc^r_{\m_{k+1}\ldots\m_N}]+\\
&& \qquad [(\up_V^A{}_r^{\m\m_N}\cE_A - d_\nu
J^{\nu\m\m_N}_r)c^r_{\m_N} + d_\nu (J^{[\m\nu]}_rc^r)
- d_\nu J^{[\m\nu]}_rc^r] +\\
&& \qquad (\up_V^A{}_r^\m\cE_A - d_\nu J^{\nu\m}_r)c^r\\
&& = d_\nu(J^{[\m\nu]\m_2\ldots\m_M}_rc^r_{\m_2\ldots\m_M})+\\
&&  \qquad \op\sum_{1< k\leq M-N}
[d_\nu(J^{[\m\nu]\m_{k+1}\ldots\m_M}_rc^r_{\m_{k+1}\ldots\m_M}) -
d_\nu
J^{(\nu\m)\m_k\ldots\m_M}_rc^r_{\m_k\ldots\m_M}]+\\
&& \qquad \op\sum_{1<k<N} [(\up_V^A{}_r^{\m\m_k\ldots\m_N}\cE_A
- d_\nu J^{(\nu\m)\m_k\ldots\m_N}_r)c^r_{\m_k\ldots\m_N} +\\
&&  \qquad d_\nu(J^{[\m\nu]\m_{k+1}\ldots\m_N}_rc^r_{\m_{k+1}\ldots\m_N})]+\\
&& \qquad [(\up_V^A{}_r^{\m\m_N}\cE_A - d_\nu
J^{(\nu\m)\m_N}_r)c^r_{\m_N} + d_\nu (J^{[\m\nu]}_rc^r)]
+(\up_V^A{}_r^\m\cE_A - d_\nu J^{(\nu\m)}_r)c^r
\ee
leads to the expression
\mar{g8g}\ben
&& \cJ_\up^\m=(\op\sum_{1<k\leq N}\up_V^i{}_r^{\m\m_k\ldots\m_N}
c^r_{\m_k\ldots\m_N}+ \up_V^A{}_r^\m c^r)\cE_A -\label{g8g}\\
&& \qquad (\op\sum_{1<k\leq M}d_\nu
J^{(\nu\m)\m_k\ldots\m_M}c^r_{\m_k\ldots\m_M}+ d_\nu
J^{(\nu\m)}_rc^r)- \nonumber\\
&& \qquad d_\nu(\op\sum_{1<k\leq
M}J^{[\nu\m]\m_k\ldots\m_M}c^r_{\m_k\ldots\m_M} +
J^{[\nu\m]}_rc^r).\nonumber
\een
The first summand of this expression vanishes on-shell. Its second
one contains the terms $d_\nu J^{(\nu\m_k)\m_{k+1}\ldots\m_M}$,
$k=1,\ldots, M$. By virtue of the equalities (\ref{g4g'}) --
(\ref{g5g}), every $d_\nu J^{(\nu\m_k)\m_{k+1}\ldots\m_M}$ is
expressed into the terms vanishing on-shell and the term $d_\nu
J^{(\nu\m_{k-1})\m_k\ldots\m_M}$. Iterating the procedure and
bearing in mind the equality (\ref{g4g}), one can easily show that
the second summand of the expression (\ref{g8g}) also vanishes
on-shell. Thus a symmetry current takes the form (\ref{005}) where
\mar{spp}\beq
U^{\nu\mu}=- \op\sum_{1<k\leq
M}J^{[\nu\m]\m_k\ldots\m_M}c^r_{\m_k\ldots\m_M} -
J^{[\nu\m]}_rc^r). \label{spp}
\eeq
\end{proof}

\begin{example} \mar{xcx} \label{xcx}
If a gauge symmetry
\mar{g20g}\beq
\up=(\up^\la_r c^r + \up^{\la\m}_r c^r_\m)\dr_\la + (\up^A_r c^r +
\up^{A\m}_r c^r_\m)\dr_A \label{g20g}
\eeq
of a graded Lagrangian $L$ depends at most on the first jets of
gauge parameters, then the decomposition (\ref{g8g}) takes a form
\mar{g21g}\ben
&& \cJ_\up^\m=\up_V^A{}_r^\m \cE_A c^r -d_\nu(J^{[\nu\m]}_r c^r) =
\label{g21g}\\
&& \qquad
(\up^{i\m}_ar- s^A_\la \up^{\la\m}_r)c^r\cE_A +
d_\nu[(\up^{A[\m}_r- s^A_\la \up^{\la[\m}_r)c^r\dr^{\nu]}_A\cL+
\up^{[\nu\m]}_rc^r\cL].\nonumber
\een
\end{example}

\section{Second Noether theorems}

Let $(\cS^*_\infty[F;Y],L)$ be a Grassmann-graded Lagrangian
system. Describing Noether and higher-stage identities of its
Euler -- Lagrange operator, we follow the general analysis of
Noether and higher-stage Noether identities of differential
operators on fibre bundles in Section 5.5. In the case of an Euler
-- Lagrange operator as a variation of a Lagrangian, one can
however formulate the second Noether theorems (Theorems \ref{w35}
-- \ref{825}) which associate to these identities the gauge and
higher-stage gauge symmetries of a Lagrangian system
\cite{book09,sard13}.

\subsection{Noether and higher-stage Noether identities}

Without a lose of generality, let a Lagrangian $L$ be even.

Its Euler -- Lagrange operator $\dl L$ (\ref{0709'}) is assumed to
be at least of order 1 in order to guarantee that transition
functions of $Y$ do not vanish on-shell. This Euler -- Lagrange
operator $\dl L\in \bE_1\subset \cS^{1,n}_\infty[F;Y]$ takes its
values into the graded vector bundle
\mar{41f33}\beq
\ol{VF}=V^*F\op\ot_F\op\w^n T^*X\to F, \label{41f33}
\eeq
where $V^*F$ is the vertical cotangent bundle of $F\to X$. It
however is not a vector bundle over $Y$. Therefore, we restrict
our consideration to the case of the pull-back composite bundle
$F$ (\ref{su5}) that is
\mar{41f1}\beq
F=Y\op\times_X F^1\to Y\to X, \label{41f1}
\eeq
where $F^1\to X$ is a vector bundle.

\begin{remark} \mar{rtr} \label{rtr}
Let us introduce the following notation. Given the vertical
tangent bundle $VE$ of a fibre bundle $E\to X$, by its
density-dual bundle is meant the fibre bundle
\mar{41f2}\beq
\ol{VE}=V^*E\op\ot_E \op\w^n T^*X. \label{41f2}
\eeq
If $E\to X$ is a vector bundle, we have
\mar{41f3}\beq
\ol{VE}=\ol E\op\times_X E, \qquad \ol E=E^*\op\ot_X\op\w^n T^*X,
\label{41f3}
\eeq
where $\ol E$ is called the density-dual of $E$. Let
\mar{grd}\beq
E=E^0\op\oplus_X E^1 \label{grd}
\eeq
be a graded vector bundle over $X$. Its graded density-dual is
defined to be
\be
\ol E=\ol E^1\op\oplus_X \ol E^0
\ee
with an even part $\ol E^1\to X$ and the odd one $\ol E^1\to X$.
Given the graded vector bundle $E$ (\ref{grd}), we consider the
product (\ref{olo}) of graded bundles over the product (\ref{wqw})
of the composite bundles $E$ (\ref{psp}) and $F$ (\ref{su5}) and
the corresponding DBGA (\ref{tyt'}) which we denote:
\mar{41f5'}\beq
P^*_\infty[F\op\times_XE;Y]=\cS^*_\infty[F\op\times_XE;Y\op\times_X
E^0]. \label{41f5'}
\eeq
In particular, we treat the composite bundle $F$ (\ref{su5}) as a
graded vector bundle over $Y$ possessing only an odd part. The
density-dual (\ref{41f2}) of the vertical tangent bundle $VF$ of
$F\to X$ is $\ol{VF}$ (\ref{41f33}). If $F$ is the pull-back
bundle (\ref{41f1}), then
\mar{41f4}\beq
\ol{VF}=\ol F^1\op\oplus_Y ((V^*Y\op\ot_Y\op\w^n T^*X)\op\oplus_Y
F^1 )\label{41f4}
\eeq
is a graded vector bundle over $Y$. This bundle can be seen as the
product (\ref{wqw}) of composite bundles
\be
\ol{VF^1}=\ol F^1\op\oplus_X F^1\to \ol F^1\to X, \qquad
\ol{VY}\to Y \to X,
\ee
and we consider the corresponding graded bundle (\ref{tyt}) and
the DBGA (\ref{bvb}) which we denote
\mar{41f5}\beq
\cP^*_\infty [\ol {VF};Y]=\cS^*_\infty [\ol {VF};Y\op\times_X\ol
F^1]=\cS^*_\infty [\ol{VF}^1\op\times_X \ol{VY}; Y\op\times_X\ol
F^1]. \label{41f5}
\eeq
\end{remark}

\begin{lemma} \label{41l1} \mar{41l1} One can associate to any
Grassmann-graded Lagrangian system $(\cS^*_\infty[F;Y],L)$ the
chain complex (\ref{v042}) whose one-boundaries vanish on-shell.
\end{lemma}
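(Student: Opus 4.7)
The plan is to build a Koszul--Tate--type chain complex whose degree-one generators are Noether antifields dual to the field variables and whose boundary operator encodes the Euler--Lagrange operator. Since $\dl L$ takes values in the graded vector bundle $\ol{VF}$ (\ref{41f33}) which, for the pull-back composite bundle $F$ (\ref{41f1}), decomposes as in (\ref{41f4}), I propose to work inside the DBGA $\cP^*_\infty[\ol{VF};Y]$ (\ref{41f5}). Its local generating basis extends $(s^A)$ by antifield variables $\ol s_A$ whose Grassmann parity is opposite to that of $s^A$, together with their jets $\ol s_{A\La}$.

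Next, I would assign antifield number zero to every $s^A_\Si$ and antifield number one to each $\ol s_{A\La}$. This yields an $\mathbb{N}$-grading on the horizontal module $\cP^{0,n}_\infty[\ol{VF};Y]$; write $\ol P^k$ for its degree-$k$ component, consisting of graded horizontal $n$-forms polynomial of degree $k$ in the $\ol s_{A\La}$. The boundary operator $\ol\dl$ is then defined as the right graded derivation of antifield degree $-1$ determined by
\[
\ol\dl\,\ol s_A = \cE_A,\qquad \ol\dl(s^A_\Si)=0,\qquad \ol\dl(dx^\la)=0,
\]
and required to commute with the total derivatives, so that $\ol\dl\,\ol s_{A\La}=d_\La\cE_A$; extension to the whole module uses the right Leibniz rule of Remark \ref{rr5}. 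Nilpotency $\ol\dl^2=0$ is immediate: on antifield-degree-zero elements $\ol\dl$ vanishes, while $\ol\dl^2\ol s_A=\ol\dl\cE_A=0$ because $\cE_A$ carries antifield degree zero; the Leibniz rule then propagates nilpotency to all of $\cP^{0,n}_\infty[\ol{VF};Y]$.

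The resulting chain complex $\cdots\ar\ol P^2\ar^{\ol\dl}\ol P^1\ar^{\ol\dl}\ol P^0$ is precisely (\ref{v042}). For the one-boundaries, any $\Phi\in\ol P^1$ can be written $\Phi=\sum\Phi^{A\La}\,\ol s_{A\La}\,\om$ with coefficients $\Phi^{A\La}\in\cS^0_\infty[F;Y]$, whence
\[
\ol\dl\Phi = \sum \Phi^{A\La}\,d_\La\cE_A\,\om.
\]
Each $d_\La\cE_A$ vanishes on the shell $\mathrm{Ker}\,\dl L$ defined by the Euler--Lagrange equation (\ref{eq}), so $\ol\dl\Phi$ vanishes on-shell, as claimed.

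The only substantive obstacle I anticipate is bookkeeping: verifying that the antifields $\ol s_A$ and their jets transform compatibly under the transition functions of $F\to Y\to X$ so that $\ol\dl$ is globally well-defined on $\cP^{0,n}_\infty[\ol{VF};Y]$, and that $\ol\dl$ indeed commutes with the jet prolongation operators $d_\la$ acting on coefficients. Once these compatibility checks are in place, nilpotency and on-shell vanishing of one-boundaries are tautological consequences of $\cE_A$ being antifield-free together with the definition of the shell.
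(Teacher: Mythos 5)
Your construction is essentially identical to the paper's: the same enlargement to $\cP^*_\infty[\ol{VF};Y]$ with parity-shifted antifields $\ol s_A$ of antifield number one, the same right graded derivation $\ol\dl=\rdr^A\cE_A$ commuting with total derivatives, and the same observation that every $\ol\dl$-image carries a factor $d_\La\cE_A$ and therefore vanishes on ${\rm Ker}\,\dl L$. One correction of bookkeeping: the one-boundaries of the complex (\ref{v042}) are the elements $\ol\dl\Phi$ with $\Phi\in\cP^{0,n}_\infty[\ol{VF};Y]_2$ of antifield number two --- these are the trivial Noether identities of the form $\sum T^{(A\La)(B\Si)}d_\Si\cE_B\,\ol s_{\La A}\,\om$ --- whereas you computed $\ol\dl$ on antifield-number-one chains, which lands in antifield number zero. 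The argument is unaffected, since the surviving factor $d_\Si\cE_B$ in the genuine one-boundaries still forces them to vanish on-shell.
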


\begin{proof} Let us consider the density-dual $\ol{VF}$ (\ref{41f4})
of the vertical tangent bundle $VF\to F$, and let us enlarge the
original DBGA $\cS^*_\infty[F;Y]$ to the DBGA
$\cP^*_\infty[\ol{VF};Y]$ (\ref{41f5}) with the local generating
basis
\be
(s^A, \ol s_A), \qquad [\ol s_A]=([A]+1){\rm mod}\,2.
\ee
Following the terminology of Lagrangian BRST theory
\cite{barn,gom}, we agree to call its elements $\ol s_A$ the
Noether antifields of antifield number Ant$[\ol s_A]= 1$. The DBGA
$\cP^*_\infty[\ol{VF};Y]$ is endowed with the nilpotent right
graded derivation
\mar{41f6}\beq
\ol\dl=\rdr^A \cE_A, \label{41f6}
\eeq
where $\cE_A$ are the variational derivatives (\ref{0709'}). Then
we have a chain complex
\mar{v042}\beq
0\lto \im\ol\dl \llr^{\ol\dl} \cP^{0,n}_\infty[\ol{VF};Y]_1
\llr^{\ol\dl} \cP^{0,n}_\infty[\ol{VF};Y]_2 \label{v042}
\eeq
of graded densities of antifield number $\leq 2$. Its
one-boundaries $\ol\dl\Phi$, $\Phi\in
\cP^{0,n}_\infty[\ol{VF};Y]_2$, by very definition, vanish
on-shell.
\end{proof}

Any one-cycle
\mar{0712}\beq
\Phi= \op\sum_{0\leq|\La|} \Phi^{A,\La}\ol s_{\La A} \om \in
\cP^{0,n}_\infty[\ol{VF};Y]_1\label{0712}
\eeq
of the complex (\ref{v042}) is a differential operator on a bundle
$\ol{VF}$ such that it is linear on fibres of $\ol{VF}\to F$ and
its kernel contains the graded Euler -- Lagrange operator $\dl L$
(\ref{0709'}), i.e.,
\mar{0713}\beq
\ol\dl\Phi=0, \qquad \op\sum_{0\leq|\La|} \Phi^{A,\La}d_\La
\cE_A\om=0. \label{0713}
\eeq
Refereing to Definition \ref{46d1} of Noether identities of a
differential operator in Section 5.5, one can say that the
one-cycles (\ref{0712}) define the Noether identities (\ref{0713})
of an Euler -- Lagrange operator $\dl L$, which we agree to call
the Noether identities of a Grassmann-graded Lagrangian system
$(\cS^*_\infty[F;Y],L)$.

In particular, one-chains $\Phi$ (\ref{0712}) are necessarily
Noether identities if they are boundaries. Therefore, these
Noether identities are called trivial.  They are of the form
\be
&& \Phi= \op\sum_{0\leq|\La|,|\Si|} T^{(A\La)(B\Si)}d_\Si\cE_B\ol
s_{\La A}\om,\\
&& T^{(A\La)(B\Si)}=-(-1)^{[A][B]} T^{(B\Si)(A\La)}.
\ee
Accordingly, non-trivial Noether identities modulo the trivial
ones are associated to elements of the first homology
$H_1(\ol\dl)$ of the complex (\ref{v042}). A Lagrangian $L$ is
called degenerate if there are non-trivial Noether identities.

Non-trivial Noether identities can obey first-stage Noether
identities. In order to describe them, let us assume that a module
$H_1(\ol \dl)$ is finitely generated. Namely, there exists a
graded projective $C^\infty(X)$-module $\cC_{(0)}\subset H_1(\ol
\dl)$ of finite rank possessing a local basis $\{\Delta_r\om\}$:
\mar{41f7}\beq
\Delta_r\om=\op\sum_{0\leq|\La|} \Delta_r^{A,\La}\ol s_{\La
A}\om,\qquad \Delta_r^{A,\La}\in \cS^0_\infty[F;Y], \label{41f7}
\eeq
such that any element $\Phi\in H_1(\ol \dl)$ factorizes as
\mar{xx2}\beq
\Phi= \op\sum_{0\leq|\Xi|} \Phi^{r,\Xi} d_\Xi \Delta_r \om, \qquad
\Phi^{r,\Xi}\in \cS^0_\infty[F;Y], \label{xx2}
\eeq
through elements (\ref{41f7}) of $\cC_{(0)}$. Thus, all
non-trivial Noether identities (\ref{0713}) result from the
Noether identities
\mar{v64}\beq
\ol\dl\Delta_r= \op\sum_{0\leq|\La|} \Delta_r^{A,\La} d_\La
\cE_A=0, \label{v64}
\eeq
called the complete Noether identities. Clearly, the factorization
(\ref{xx2}) is independent of specification of a local basis
$\{\Delta_r\om\}$. Note that, being representatives of $H_1(\ol
\dl)$, the graded densities $\Delta_r\om$ (\ref{41f7}) are not
$\ol\dl$-exact.

A Lagrangian system whose non-trivial Noether identities are
finitely generated is called finitely degenerate. Hereafter,
degenerate Lagrangian systems only of this type are considered.

\begin{lemma} \label{41l2} \mar{41l2}
If the homology $H_1(\ol\dl)$ of the complex (\ref{v042}) is
finitely generated in the above mentioned sense, this complex can
be extended to the one-exact chain complex (\ref{v66}) with a
boundary operator whose nilpotency conditions are equivalent to
the complete Noether identities (\ref{v64}).
\end{lemma}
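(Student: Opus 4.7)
The plan is to build the one-exact extension by Koszul--Tate style: adjoin one new antifield generator per complete Noether identity $\Delta_r$, and define the extended boundary on those generators to hit $\Delta_r$. Nilpotency on the new generators then becomes the complete Noether identity (\ref{v64}), and one-exactness reduces to the finite generation hypothesis.

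First, I would realize the generators of $H_1(\ol\dl)$ geometrically. Since by assumption $\cC_{(0)}\subset H_1(\ol\dl)$ is a projective graded $C^\infty(X)$-module of finite rank with local basis $\{\Delta_r\omega\}$, the graded Serre--Swan theorem (Theorem \ref{vv0}) produces a graded vector bundle
\[
E_0=E^0_0\op\oplus_X E^1_0\to X
\]
whose graded density-dual $\ol E_0$ has structure module $\cC_{(0)}$, where the parities are shifted so that $[\ol c_r]=([\Delta_r]+1)\mathrm{mod}\,2$. Form the enlarged graded bundle over $X$ by taking the product (\ref{tyt}) of $(X,Y,\gA_F)$ with the bundles producing $\ol{VF}$ and $\ol E_0$, and let the associated DBGA be the natural enlargement of $\cP^*_\infty[\ol{VF};Y]$ with the extra local generating basis element $\ol c_r$ of antifield number $\mathrm{Ant}[\ol c_r]=2$. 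Denote this extended algebra by $\cP^*_\infty\{0\}$; it inherits the right graded derivation formalism of Remark \ref{rr35}.

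Next, I would extend the boundary operator $\ol\dl$ of (\ref{41f6}) to a right graded derivation $\ol\dl_0$ on $\cP^*_\infty\{0\}$ by prescribing
\[
\ol\dl_0\,\ol s_A=\cE_A,\qquad \ol\dl_0\,\ol c_r=\Delta_r=\op\sum_{0\leq|\La|}\Delta_r^{A,\La}\ol s_{\La A},
\]
and extending by the graded Leibniz rule and total derivative commutation, so that $\ol\dl_0$ lowers antifield number by one. The resulting chain complex is the desired (\ref{v66}). The nilpotency condition $\ol\dl_0\circ\ol\dl_0=0$ on the original generators reduces to $\ol\dl_0\circ\ol\dl_0\,\ol s_A=0$, which holds automatically because $\ol\dl^2=0$ on $\cP^*_\infty[\ol{VF};Y]$. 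On the new generators it reads
\[
\ol\dl_0\circ\ol\dl_0\,\ol c_r=\ol\dl(\Delta_r)=\op\sum_{0\leq|\La|}\Delta_r^{A,\La}d_\La\cE_A,
\]
and vanishing of this expression is precisely the complete Noether identity (\ref{v64}). Thus nilpotency of $\ol\dl_0$ is equivalent to (\ref{v64}).

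Finally, for one-exactness at $\cP^{0,n}_\infty[\ol{VF};Y]_1$ in the extended complex, take any one-cycle $\Phi$. Its class lies in $H_1(\ol\dl)$, hence by the finite-generation factorization (\ref{xx2}) we may write $\Phi=\sum_{|\Xi|}\Phi^{r,\Xi}d_\Xi\Delta_r\,\omega$ modulo a $\ol\dl$-boundary; but $d_\Xi\Delta_r=\ol\dl_0(d_\Xi\ol c_r)$ in the extended algebra, so $\Phi$ becomes a $\ol\dl_0$-boundary, as required. The trivial Noether identities contribute nothing new since they were already boundaries in (\ref{v042}). The main technical point to be careful about is the first step: checking that the finitely generated projective hypothesis on $H_1(\ol\dl)$ really produces a graded vector bundle whose antifield extension sits inside the DBGA framework of Section 2, so that all the graded calculus (right derivations, total derivatives, Leibniz rule) applies uniformly to the new generators $\ol c_r$ in the same way as to $\ol s_A$. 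Once this is in place, nilpotency and one-exactness are immediate formal consequences of the definition of $\ol\dl_0$ and the factorization (\ref{xx2}).
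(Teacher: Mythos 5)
Your proposal is correct and follows essentially the same route as the paper's proof: realize $\cC_{(0)}$ as sections of the density-dual $\ol E_0$ of a graded vector bundle via the graded Serre--Swan theorem, adjoin the Noether antifields $\ol c_r$ of antifield number 2, extend the boundary operator by $\ol c_r\mapsto\Delta_r$ (the paper writes this as $\dl_0=\ol\dl+\rdr^r\Delta_r$), identify nilpotency with the complete Noether identities (\ref{v64}), and obtain one-exactness from the factorization (\ref{xx2}) since $d_\Xi\Delta_r$ becomes a $\dl_0$-boundary. No gaps.
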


\begin{proof}
By virtue of Serre -- Swan Theorem \ref{vv0}, a graded module
$\cC_{(0)}$ is isomorphic to a module of sections of the
density-dual $\ol E_0$ of some graded vector bundle $E_0\to X$.
Let us enlarge $\cP^*_\infty[\ol{VF};Y]$ to the DBGA
\mar{41f14}\beq
\ol\cP^*_\infty\{0\}=\cP^*_\infty[\ol{VF}\op\times_X \ol E_0;Y]=
\cS^*_\infty[\ol{VF}\op\times_X \ol E_0;\ol E^1_0\op\times_X\ol
F^1\op\times_X Y] \label{41f14}
\eeq
possessing the local generating basis $(s^A,\ol s_A, \ol c_r)$
where $\ol c_r$ are Noether antifields of Grassmann parity
\be
[\ol c_r]=([\Delta_r]+1){\rm mod}\,2
\ee
and antifield number ${\rm Ant}[\ol c_r]=2$. The DBGA
(\ref{41f14}) is provided with the odd right graded derivation
\mar{41f10}\beq
\dl_0=\ol\dl + \rdr^r\Delta_r \label{41f10}
\eeq
which is nilpotent iff the complete Noether identities (\ref{v64})
hold. Then $\dl_0$ (\ref{41f10}) is a boundary operator of a chain
complex
\mar{v66}\beq
0\lto \im\ol\dl \op\lto^{\ol\dl}
\cP^{0,n}_\infty[\ol{VF};Y]_1\op\lto^{\dl_0}
\ol\cP^{0,n}_\infty\{0\}_2 \op\lto^{\dl_0}
\ol\cP^{0,n}_\infty\{0\}_3 \label{v66}
\eeq
of graded densities of antifield number $\leq 3$. Let $H_*(\dl_0)$
denote its homology. We have
\be
H_0(\dl_0)=H_0(\ol\dl)=0.
\ee
Furthermore, any one-cycle $\Phi$ up to a boundary takes the form
(\ref{xx2}) and, therefore, it is a $\dl_0$-boundary
\be
\Phi= \op\sum_{0\leq|\Si|} \Phi^{r,\Xi} d_\Xi \Delta_r\om
=\dl_0\left(\op\sum_{0\leq|\Si|} \Phi^{r,\Xi}\ol c_{\Xi
r}\om\right).
\ee
Hence, $H_1(\dl_0)=0$, i.e., the complex (\ref{v66}) is one-exact.
\end{proof}

Let us consider the second homology $H_2(\dl_0)$ of the complex
(\ref{v66}). Its two-chains  read
\mar{41f9}\beq
\Phi= G + H= \op\sum_{0\leq|\La|} G^{r,\La}\ol c_{\La r}\om +
\op\sum_{0\leq|\La|,|\Si|} H^{(A,\La)(B,\Si)}\ol s_{\La A}\ol
s_{\Si B}\om. \label{41f9}
\eeq
Its two-cycles define the first-stage Noether identities
\mar{v79}\beq
\dl_0 \Phi=0, \qquad   \op\sum_{0\leq|\La|} G^{r,\La}d_\La\Delta_r
\om =-\ol\dl H. \label{v79}
\eeq
Conversely, let the equality (\ref{v79}) hold. Then it is a cycle
condition of the two-chain (\ref{41f9}).

\begin{remark}
Note that this definition of first-stage Noether identities is
independent on specification of a generating module $\cC_{(0)}$.
Given a different one, there exists a chain isomorphism between
the corresponding complexes (\ref{v66}).
\end{remark}

The first-stage Noether identities (\ref{v79}) are trivial either
if a two-cycle $\Phi$ (\ref{41f9}) is a $\dl_0$-boundary or its
summand $G$ vanishes on-shell. Therefore, non-trivial first-stage
Noether identities fails to exhaust the second homology
$H_2(\dl_0)$ of the complex (\ref{v66}) in general.

\begin{lemma} \label{v134'} \mar{v134'}
Non-trivial first-stage Noether identities modulo the trivial ones
are identified with elements of the homology $H_2(\dl_0)$ iff any
$\ol\dl$-cycle $\f\in \ol\cP^{0,n}_\infty\{0\}_2$ is a
$\dl_0$-boundary.
\end{lemma}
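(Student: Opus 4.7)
The plan is to unfold both definitions of triviality for first-stage Noether identities and show they agree exactly under the stated hypothesis. A first-stage Noether identity is a two-cycle $\Phi=G+H$ of $\dl_0$ in (\ref{41f9}), and it is called trivial if either (a) $\Phi$ is a $\dl_0$-boundary or (b) the $\ol c$-component $G$ vanishes on-shell. Identifying non-trivial identities modulo trivial ones with $H_2(\dl_0)$ therefore amounts to showing that every trivial two-cycle of type (b) is already a $\dl_0$-boundary, and this is what I will link to the stated exactness condition in the extended complex.

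For the forward direction, take an $\ol\dl$-cycle $H$ lying in the $\ol c$-free subspace $\cP^{0,n}_\infty[\ol{VF};Y]_2\subset\ol\cP^{0,n}_\infty\{0\}_2$. On this subspace $\rdr^r$ annihilates every chain, so $\dl_0$ restricts to $\ol\dl$, and the two-chain $\Phi=0+H$ is a $\dl_0$-cycle with a trivially vanishing $G$-part. By the assumed identification of non-trivial identities with $H_2(\dl_0)$, this trivial cycle must be a $\dl_0$-boundary, which is exactly what we need.

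The converse is the substantive half. Starting from a trivial two-cycle $\Phi=G+H$ with coefficients $G^{r,\La}$ vanishing on-shell, I would invoke the basic Koszul--Tate exactness underlying Lemma \ref{41l2}: on-shell vanishing of a graded function is equivalent to being $\ol\dl$-exact from antifield number $1$. Hence one writes $G^{r,\La}=\ol\dl Q^{r,\La}$ with each $Q^{r,\La}$ linear in the Noether antifields $\ol s_{\Si A}$, and sets $\tilde\Phi=Q^{r,\La}\ol c_{\La r}\om$. Applying the graded Leibniz rule for the right derivation $\dl_0=\ol\dl+\rdr^r\Delta_r$, and using $\ol\dl\ol c_{\La r}=0$ together with $(\rdr^r\Delta_r)(\ol c_{\La r})=d_\La\Delta_r$, one obtains $\dl_0\tilde\Phi=G\pm H'$ with $H'=\pm Q^{r,\La}d_\La\Delta_r\,\om \in\cP^{0,n}_\infty[\ol{VF};Y]_2$. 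The remainder $\Phi-\dl_0\tilde\Phi=H\mp H'$ is then a $\dl_0$-cycle lying in the $\ol c$-free subspace, hence an $\ol\dl$-cycle in $\cP^{0,n}_\infty[\ol{VF};Y]_2$; the hypothesis supplies a $\Psi$ with $\dl_0\Psi=H\mp H'$, and $\Phi=\dl_0(\tilde\Phi+\Psi)$ finishes the argument.

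The chief obstacle is the construction of $\tilde\Phi$: one must genuinely exploit Koszul--Tate exactness to lift the on-shell relation $G^{r,\La}=\ol\dl Q^{r,\La}$ to a three-chain in the extended DBGA whose $\dl_0$-image reproduces $G$ while dumping the remainder into the $\ol c$-free subspace, where the hypothesis becomes applicable. Without this lifting step there is no way to convert on-shell vanishing of $G$ into a $\dl_0$-exactness statement. The graded sign bookkeeping in the right-derivation Leibniz rule is routine but must be tracked carefully to ensure $H'$ indeed lies in $\cP^{0,n}_\infty[\ol{VF};Y]_2$ and not somewhere with residual $\ol c$-content.
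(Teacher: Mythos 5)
Your proof is correct, and its substantive half coincides with the paper's. For the direction ``hypothesis $\Rightarrow$ identification'' you write the on-shell-vanishing coefficients as $G^{r,\La}=\ol\dl Q^{r,\La}$, subtract the $\dl_0$-boundary of the lifted three-chain $\tilde\Phi=Q^{r,\La}\ol c_{\La r}\om$, and apply the hypothesis to the $\ol c$-free remainder; this is exactly the paper's decomposition $\Phi=\dl_0\Psi+(\ol\dl-\dl_0)\Psi+H$, with your $\tilde\Phi$ playing the role of $\Psi$ and $(\ol\dl-\dl_0)\Psi=\mp H'$. Where you genuinely diverge is the converse. You observe that an $\ol\dl$-cycle $H$ in the $\ol c$-free subspace is itself a two-cycle of the form (\ref{41f9}) with $G=0$, hence trivial by the paper's own definition (``$G$ vanishes on-shell''), so the assumed identification forces it to be a $\dl_0$-boundary --- a one-line argument. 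The paper instead factorizes such a cycle through the complete Noether identities $d_\Xi\Delta_r$ using the generating property of $\cC_{(0)}$, lifts it to a three-chain $\Psi=G'^{(A,\La)(r,\Xi)}\ol c_{\Xi r}\ol s_{\La A}\om$, and applies the assumption to the error term $\si=\dl_0\Psi-\Phi$, which is a trivial two-cycle with a genuine $G$-part. Your shortcut is legitimate given the stated definition of triviality and buys brevity; the paper's longer route additionally exhibits the explicit factorized form of such cycles, which is what the subsequent remark on $\dl_0$-exact two-cycles records. One caveat worth making explicit: in both directions you (rightly) read the hypothesis as concerning $\ol\dl$-cycles in the antifield-quadratic, $\ol c$-free subspace $\cP^{0,n}_\infty[\ol{VF};Y]_2$, which is how the paper's own proof and Condition \ref{v155} use it, even though the lemma's statement names the larger module $\ol\cP^{0,n}_\infty\{0\}_2$.
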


\begin{proof}
It suffices to show that, if a summand $G$ of a two-cycle $\Phi$
(\ref{41f9}) is $\ol\dl$-exact, then $\Phi$ is a boundary. If
$G=\ol\dl \Psi$, let us write
\mar{v169'}\beq
\Phi=\dl_0\Psi +(\ol \dl-\dl_0)\Psi + H. \label{v169'}
\eeq
Hence, the cycle condition (\ref{v79}) reads
\be
\dl_0\Phi=\ol\dl((\ol\dl-\dl_0)\Psi + H)=0.
\ee
Since any $\ol\dl$-cycle $\f\in \ol\cP^{0,n}_\infty\{0\}_2$, by
assumption, is $\dl_0$-exact, then
\be
(\ol \dl-\dl_0)\Psi + H
\ee
is a $\dl_0$-boundary. Consequently, $\Phi$ (\ref{v169'}) is
$\dl_0$-exact. Conversely, let $\Phi\in
\ol\cP^{0,n}_\infty\{0\}_2$ be a $\ol\dl$-cycle, i.e.,
\be
\ol\dl\Phi= 2\Phi^{(A,\La)(B,\Sigma)}\ol s_{\La A} \ol\dl\ol
s_{\Sigma B}\om= 2\Phi^{(A,\La)(B,\Sigma)}\ol s_{\La A} d_\Si
\cE_B\om=0.
\ee
It follows that
\be
\Phi^{(A,\La)(B,\Sigma)} \ol\dl\ol s_{\Sigma B}=0
\ee
for all indices $(A,\La)$. Omitting a $\ol\dl$-boundary term, we
obtain
\be
\Phi^{(A,\La)(B,\Sigma)} \ol s_{\Sigma B}= G^{(A,\La)(r,\Xi)}d_\Xi
\Delta_r.
\ee
Hence, $\Phi$ takes a form
\be
\Phi=G'^{(A,\La)(r,\Xi)} d_\Xi\Delta_r \ol s_{\La A}\om.
\ee
Then there exists a three-chain
\be
\Psi= G'^{(A,\La)(r,\Xi)} \ol c_{\Xi r} \ol s_{\La A}\om
\ee
such that
\mar{41f12}\beq
\dl_0\Psi=\Phi +\si = \Phi + G''^{(A,\La)(r,\Xi)}d_\La\cE_A \ol
c_{\Xi r} \om. \label{41f12}
\eeq
Owing to the equality $\ol\dl\Phi=0$, we have $\dl_0\si=0$. Thus,
$\si$ in the expression (\ref{41f12}) is $\ol\dl$-exact
$\dl_0$-cycle. By assumption, it is $\dl_0$-exact, i.e.,
$\si=\dl_0\psi$.   Consequently, a $\ol\dl$-cycle $\Phi$ is a
$\dl_0$-boundary
\be
\Phi=\dl_0\Psi -\dl_0\psi.
\ee
\end{proof}

\begin{remark}
It is easily justified that the two-cycle $\Phi$ (\ref{41f9}) is
$\dl_0$-exact iff $\Phi$ up to a $\ol\dl$-boundary takes a form
\be
\Phi= \op\sum_{0\leq |\La|, |\Si|} G'^{(r,\Si)(r',\La)}
d_\Si\Delta_r d_\La\Delta_{r'}\om.
\ee
\end{remark}

A degenerate Lagrangian system is called reducible if it admits
non-trivial first stage Noether identities.

If the condition of Lemma \ref{v134'} is satisfied, let us assume
that non-trivial first-stage Noether identities are finitely
generated as follows. There exists a graded projective
$C^\infty(X)$-module $\cC_{(1)}\subset H_2(\dl_0)$ of finite rank
possessing a local basis $\{\Delta_{r_1}\om\}$:
\mar{41f13}\beq
\Delta_{r_1}\om=\op\sum_{0\leq|\La|} \Delta_{r_1}^{r,\La}\ol
c_{\La r}\om + h_{r_1}\om,   \label{41f13}
\eeq
such that any element $\Phi\in H_2(\dl_0)$ factorizes as
\mar{v80'}\beq
\Phi= \op\sum_{0\leq|\Xi|} \Phi^{r_1,\Xi} d_\Xi \Delta_{r_1}\om,
\qquad \Phi^{r_1,\Xi}\in \cS^0_\infty[F;Y], \label{v80'}
\eeq
through elements (\ref{41f13}) of $\cC_{(1)}$. Thus, all
non-trivial first-stage Noether identities (\ref{v79}) result from
the equalities
\mar{v82'}\beq
 \op\sum_{0\leq|\La|} \Delta_{r_1}^{r,\La} d_\La \Delta_r +\ol\dl
h_{r_1} =0, \label{v82'}
\eeq
called the complete first-stage Noether identities.  Note that, by
virtue of the condition of Lemma \ref{v134'}, the first summands
of the graded densities $\Delta_{r_1}\om$ (\ref{41f13}) are not
$\ol\dl$-exact.

A degenerate Lagrangian system is called finitely reducible if it
admits finitely generated non-trivial first-stage Noether
identities.

\begin{lemma} \label{v139'} \mar{v139'} The one-exact complex
(\ref{v66}) of a finitely reducible Lagrangian system is extended
to the two-exact one (\ref{v87'}) with a boundary operator whose
nilpotency conditions are equivalent to the complete Noether
identities (\ref{v64}) and the complete first-stage Noether
identities (\ref{v82'}).
\end{lemma}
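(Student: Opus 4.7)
The plan is to extend the one-exact complex (\ref{v66}) by adjoining a third layer of Noether antifields associated to the generators $\Delta_{r_1}$ of $H_2(\dl_0)$, in direct analogy with the passage from (\ref{v042}) to (\ref{v66}) carried out in the proof of Lemma \ref{41l2}. First, I apply the Serre--Swan Theorem \ref{vv0} to the finitely generated projective $C^\infty(X)$-module $\cC_{(1)}\subset H_2(\dl_0)$: there exists a graded vector bundle $E_1\to X$ whose density-dual $\ol E_1$ has sections isomorphic to $\cC_{(1)}$. I then enlarge $\ol\cP^*_\infty\{0\}$ to $\ol\cP^*_\infty\{1\}=\cP^*_\infty[\ol{VF}\op\times_X\ol E_0\op\times_X\ol E_1;Y]$ by adjoining Noether antifields $\ol c_{r_1}$ of Grassmann parity $([\Delta_{r_1}]+1)\,\mathrm{mod}\,2$ and antifield number $3$.

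Next, I define the candidate boundary operator as the odd right graded derivation $\dl_1=\dl_0+\rdr^{r_1}\Delta_{r_1}$ on $\ol\cP^*_\infty\{1\}$. The nilpotency $\dl_1\circ\dl_1=0$ splits by antifield number: on chains without $\ol c_{r_1}$ it reduces to $\dl_0^2=0$, which by Lemma \ref{41l2} is equivalent to the complete Noether identities (\ref{v64}); on the new generator $\ol c_{r_1}$ it reduces to $\dl_0\Delta_{r_1}=0$, and substituting the explicit form (\ref{41f13}) of $\Delta_{r_1}$ together with the extension $\ol\dl\,\ol c_{\La r}=d_\La\Delta_r$ yields precisely the complete first-stage Noether identity (\ref{v82'}). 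Adding the arrow $\dl_1:\ol\cP^{0,n}_\infty\{1\}_3\to \ol\cP^{0,n}_\infty\{0\}_2$ to (\ref{v66}) produces the chain complex (\ref{v87'}).

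Finally, I verify two-exactness. Zero- and one-exactness are inherited from Lemma \ref{41l2} because $\dl_1$ coincides with $\dl_0$ on chains of antifield number $\leq 2$, no $\ol c_{r_1}$ occurring there. For two-exactness, pick a $\dl_1$-cycle $\Phi$ of antifield number $2$; since $\ol c_{r_1}$ has antifield number $3$, $\Phi\in \ol\cP^{0,n}_\infty\{0\}_2$ and $\dl_1\Phi=\dl_0\Phi=0$, so $\Phi$ represents a class in $H_2(\dl_0)$. By the hypothesis that $\cC_{(1)}$ generates this homology in the sense of (\ref{v80'}), we have $\Phi=\sum\Phi^{r_1,\Xi}d_\Xi\Delta_{r_1}\om+\dl_0\Psi$; replacing each $d_\Xi\Delta_{r_1}\om$ by $\dl_1(\ol c_{\Xi r_1}\om)$ and noting that $\dl_1\Psi=\dl_0\Psi$ for $\Psi$ free of $\ol c_{r_1}$ exhibits $\Phi$ as a $\dl_1$-boundary, so $H_2(\dl_1)=0$.

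The main obstacle will be the careful verification that the nilpotency condition $\dl_0\Delta_{r_1}=0$ reproduces exactly (\ref{v82'}): since $\Delta_{r_1}$ in (\ref{41f13}) contains both the $\ol c$-linear piece $\sum\Delta_{r_1}^{r,\La}\ol c_{\La r}$ and the tail $h_{r_1}$, which is at least quadratic in $\ol s_A$, the right graded Leibniz rule (Remark \ref{rr35}) must be applied to both, with $\rdr^r\Delta_r$ contributing only through the first piece (giving $\sum\Delta_{r_1}^{r,\La}d_\La\Delta_r$) and $\ol\dl$ contributing only through the second (giving $\ol\dl h_{r_1}$). The sign bookkeeping for right graded derivations and the compatibility of $\ol\dl$ with total derivatives on antifields is the principal technical burden.
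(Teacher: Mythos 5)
Your proposal is correct and follows essentially the same route as the paper: invoke the Serre--Swan theorem to realize $\cC_{(1)}$ as sections of a density-dual $\ol E_1$, adjoin the antifields $\ol c_{r_1}$, set $\dl_1=\dl_0+\rdr^{r_1}\Delta_{r_1}$, identify nilpotency with the identities (\ref{v64}) and (\ref{v82'}), and kill $H_2$ via the factorization (\ref{v80'}). Your explicit handling of the $\dl_0$-boundary ambiguity in that factorization is a small refinement the paper leaves implicit, but the argument is the same.
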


\begin{proof}
By virtue of Serre -- Swan Theorem \ref{vv0}, a graded module
$\cC_{(1)}$ is isomorphic to a module of sections of the
density-dual $\ol E_1$ of some graded vector bundle $E_1\to X$.
Let us enlarge the DBGA $\ol\cP^*_\infty\{0\}$ (\ref{41f14}) to
the DBGA
\be
\ol\cP^*_\infty\{1\}=\cP^*_\infty[\ol{VF}\op\times_X \ol
E_0\op\times_X\ol E_1;Y]
\ee
possessing the local generating basis $\{s^A,\ol s_A, \ol c_r, \ol
c_{r_1}\}$ where $\ol c_{r_1}$ are  first stage Noether antifields
of Grassmann parity
\be
[\ol c_{r_1}]=([\Delta_{r_1}]+1){\rm mod}\,2
\ee
and antifield number Ant$[\ol c_{r_1}]=3$. This DBGA is provided
with the odd right graded derivation
\mar{v205'}\beq
\dl_1=\dl_0 + \rdr^{r_1} \Delta_{r_1} \label{v205'}
\eeq
which is nilpotent iff the complete Noether identities (\ref{v64})
and the complete first-stage Noether identities (\ref{v82}) hold.
Then $\dl_1$ (\ref{v205'}) is a boundary operator of the chain
complex
\mar{v87'}\beq
0\lto \im\ol\dl \op\lto^{\ol\dl}
\cP^{0,n}_\infty[\ol{VF};Y]_1\op\lto^{\dl_0}
\ol\cP^{0,n}_\infty\{0\}_2 \op\lto^{\dl_1}
\ol\cP^{0,n}_\infty\{1\}_3 \op\lto^{\dl_1}
\ol\cP^{0,n}_\infty\{1\}_4 \label{v87'}
\eeq
of graded densities of antifield number $\leq 4$. Let $H_*(\dl_1)$
denote its homology. It is readily observed that
\be
H_0(\dl_1)=H_0(\ol\dl), \qquad H_1(\dl_1)=H_1(\dl_0)=0.
\ee
By virtue of the expression (\ref{v80'}), any two-cycle of the
complex (\ref{v87'}) is a boundary
\be
 \Phi= \op\sum_{0\leq|\Xi|} \Phi^{r_1,\Xi} d_\Xi \Delta_{r_1}\om
=\dl_1\left(\op\sum_{0\leq|\Xi|} \Phi^{r_1,\Xi} \ol c_{\Xi
r_1}\om\right).
\ee
It follows that $H_2(\dl_1)=0$, i.e., the complex (\ref{v87'}) is
two-exact.
\end{proof}

If the third homology $H_3(\dl_1)$ of the complex (\ref{v87'}) is
not trivial, its elements correspond to second-stage Noether
identities which the complete first-stage ones satisfy, and so on.
Iterating the arguments, one comes to the following.

A degenerate Grassmann-graded Lagrangian system
$(\cS^*_\infty[F;Y],L)$ is called $N$-stage reducible if it admits
finitely generated non-trivial $N$-stage Noether identities, but
no non-trivial $(N+1)$-stage ones. It is characterized as follows
\cite{jmp05a,book09}.

$\bullet$ There are graded vector bundles $E_0,\ldots, E_N$ over
$X$, and a DBGA $\cP^*_\infty[\ol{VF};Y]$ is enlarged to a DBGA
\mar{v91}\beq
\ol\cP^*_\infty\{N\}=\cP^*_\infty[\ol{VF}\op\times_X \ol
E_0\op\times_X\cdots\op\times_X \ol E_N;Y] \label{v91}
\eeq
with the local generating basis
\be
(s^A,\ol s_A, \ol c_r, \ol c_{r_1}, \ldots, \ol c_{r_N})
\ee
where $\ol c_{r_k}$ are $k$-stage Noether antifields of antifield
number Ant$[\ol c_{r_k}]=k+2$.

$\bullet$ The DBGA (\ref{v91}) is provided with a nilpotent right
graded derivation
\mar{v92,'}\ben
&&\dl_{\rm KT}=\dl_N=\ol\dl +
\op\sum_{0\leq|\La|}\rdr^r\Delta_r^{A,\La}\ol s_{\La A} +
\op\sum_{1\leq k\leq N}\rdr^{r_k} \Delta_{r_k},
\label{v92}\\
&& \Delta_{r_k}\om= \op\sum_{0\leq|\La|}
\Delta_{r_k}^{r_{k-1},\La}\ol c_{\La r_{k-1}}\om +
\label{v92'}\\
&& \qquad \op\sum_{0\leq |\Si|, |\Xi|}(h_{r_k}^{(r_{k-2},\Si)(A,\Xi)}\ol
c_{\Si r_{k-2}}\ol s_{\Xi A}+...)\om \in
\ol\cP^{0,n}_\infty\{k-1\}_{k+1}, \nonumber
\een
of  antifield number -1. The index $k=-1$ here stands for $\ol
s_A$. The nilpotent derivation $\dl_{\rm KT}$ (\ref{v92}) is
called the Koszul -- Tate operator.

$\bullet$ With this graded derivation, a module
$\ol\cP^{0,n}_\infty\{N\}_{\leq N+3}$ of densities of antifield
number $\leq (N+3)$ is decomposed into the exact  Koszul -- Tate
chain complex
\mar{v94}\ben
&& 0\lto \im \ol\dl \llr^{\ol\dl}
\cP^{0,n}_\infty[\ol{VF};Y]_1\llr^{\dl_0}
\ol\cP^{0,n}_\infty\{0\}_2\llr^{\dl_1}
\ol\cP^{0,n}_\infty\{1\}_3\cdots
\label{v94}\\
&& \qquad
 \llr^{\dl_{N-1}} \ol\cP^{0,n}_\infty\{N-1\}_{N+1}
\llr^{\dl_{\rm KT}} \ol\cP^{0,n}_\infty\{N\}_{N+2}\llr^{\dl_{\rm
KT}} \ol\cP^{0,n}_\infty\{N\}_{N+3} \nonumber
\een
which satisfies the following homology regularity condition.

\begin{condition} \label{v155} \mar{v155} Any $\dl_{k<N}$-cycle
\be
\f\in \ol\cP_\infty^{0,n}\{k\}_{k+3}\subset
\ol\cP_\infty^{0,n}\{k+1\}_{k+3}
\ee
is a $\dl_{k+1}$-boundary.
\end{condition}

\begin{remark}
The exactness of the complex (\ref{v94}) means that any
$\dl_{k<N}$-cycle $\f\in \cP_\infty^{0,n}\{k\}_{k+3}$, is a
$\dl_{k+2}$-boundary, but not necessary a $\dl_{k+1}$-one.
\end{remark}

$\bullet$ The nilpotentness $\dl_{\rm KT}^2=0$ of the Koszul --
Tate operator (\ref{v92}) is equivalent to the complete
non-trivial Noether identities (\ref{v64}) and the complete
non-trivial $(k\leq N)$-stage Noether identities
\mar{v93}\ben
&& \op\sum_{0\leq|\La|} \Delta_{r_k}^{r_{k-1},\La}d_\La
\left(\op\sum_{0\leq|\Si|} \Delta_{r_{k-1}}^{r_{k-2},\Si}\ol
c_{\Si
r_{k-2}}\right) = \label{v93}\\
&& \qquad -  \ol\dl\left(\op\sum_{0\leq |\Si|,
|\Xi|}h_{r_k}^{(r_{k-2},\Si)(A,\Xi)}\ol c_{\Si r_{k-2}}\ol s_{\Xi
A}\right). \nonumber
\een
This item means the following.

\begin{lemma} Any $\dl_k$-cocycle $\Phi\in
\cP^{0,n}_\infty\{k\}_{k+2}$ is a $k$-stage Noether identity, and
{\it vice versa}.
\end{lemma}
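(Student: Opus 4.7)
The plan is to verify the equivalence by unpacking, step by step, what it means for an element of $\ol\cP^{0,n}_\infty\{k\}_{k+2}$ to be $\dl_k$-closed, and matching this with the algebraic content of the $k$-stage Noether identity (\ref{v93}). First, I would decompose a general $(k+2)$-chain as
$$\Phi=\op\sum_{0\leq|\La|} G^{r_k,\La}\ol c_{\La r_k}\om +\Psi,\qquad G^{r_k,\La}\in\cS^0_\infty[F;Y],$$
where $\Psi\in \ol\cP^{0,n}_\infty\{k-1\}_{k+2}$ is built only from antifields of stages $\leq k-1$; this separation is natural because the $k$-stage antifield $\ol c_{r_k}$ carries antifield number $k+2$. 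Using the decomposition $\dl_k=\dl_{k-1}+\rdr^{r_k}\Delta_{r_k}$ extracted from (\ref{v92}), together with the Leibniz rule and the observation that $\dl_{k-1}$ annihilates the $\ol c_{r_k}$-dependent part, the cycle condition $\dl_k\Phi=0$ reduces to
$$\op\sum_{0\leq|\La|} G^{r_k,\La}\, d_\La\Delta_{r_k}\om=-\dl_{k-1}\Psi.$$
Substituting the explicit form (\ref{v92'}) of $\Delta_{r_k}$ and isolating the coefficient of the highest $\ol c_{r_{k-1}}$-monomial reproduces exactly the complete $k$-stage Noether identity (\ref{v93}), while the lower-antifield terms are absorbed into $\dl_{k-1}\Psi$.

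For the converse, I would observe that each generator $\Delta_{r_k}\om$ in (\ref{v92'}) is itself an element of $\ol\cP^{0,n}_\infty\{k\}_{k+2}$, and that by direct computation $\dl_k\Delta_{r_k}\om$ yields precisely the left-hand side of (\ref{v93}); this vanishes by hypothesis, so $\Delta_{r_k}\om$ is a $\dl_k$-cycle. Hence every $k$-stage Noether identity produces a $\dl_k$-cycle. Combining this with the factorization argument already used in the proofs of Lemmas \ref{41l2} and \ref{v139'}, iterated through each stage, shows that any $\dl_k$-cycle in $\ol\cP^{0,n}_\infty\{k\}_{k+2}$ coincides, modulo $\dl_k$-boundaries, with an $\cS^0_\infty$-linear combination of the generators $\Delta_{r_k}\om$, i.e.\ with a $k$-stage Noether identity.

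The main obstacle I anticipate is the construction of the lower-antifield correction $h_{r_k}^{(r_{k-2},\Si)(A,\Xi)}$ entering (\ref{v92'}), without which $\dl_k$ would fail to be nilpotent and the bare $k$-stage Noether identity (\ref{v93}) could not be lifted to an honest $\dl_k$-cycle. What (\ref{v93}) provides directly is only that a certain expression involving $\Delta_{r_{k-1}}$ is $\ol\dl$-exact; extending this to a genuine $\dl_{k-1}$-boundary of the specific shape demanded in (\ref{v92'}) is exactly the step that requires the homology regularity Condition \ref{v155}. That condition propagates inductively on the stage $k$, ensuring that every $\dl_{k-2}$-cycle of appropriate antifield number is a $\dl_{k-1}$-boundary, and thereby secures the equivalence uniformly at all stages.
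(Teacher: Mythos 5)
Your forward direction is essentially the paper's: you decompose $\Phi$ into its $\ol c_{r_k}$-linear part $G$ and a remainder built from lower-stage antifields, apply $\dl_k$, and read off the identity from the $\ol c_{\Si r_{k-1}}$-dependent part of the cycle condition. One small correction: what a general cycle produces is the $k$-stage Noether identity (\ref{v145'}) with arbitrary coefficients $G^{r_k,\La}$ and $H^{(A,\Xi)(r_{k-1},\Si)}$, not the complete identity (\ref{v93}); the latter is the special case attached to the basis elements $\Delta_{r_k}$.

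The converse is where your argument has a genuine gap. You verify that the generators $\Delta_{r_k}\om$ are $\dl_k$-cycles and then invoke the factorization of arbitrary identities through the complete ones. But that factorization is a statement about elements of the homology $H_{k+2}(\dl_k)$ --- i.e., about chains already known to be cycles --- so using it to show that an arbitrary identity (\ref{v145'}) comes from a cycle is circular (it is moreover imposed as an assumption of the reducibility setup, not derived). The paper's converse avoids this: starting from an arbitrary identity (\ref{v145'}), it forms the chain $G+H$ of the shape (\ref{v156'}), computes $\dl_k(G+H)$, and observes that the result equals $-\sum G^{r_k,\La}d_\La h_{r_k} +\sum H^{(A,\Xi)(r_{k-1},\Si)}\ol s_{\Xi A}d_\Si \Delta_{r_{k-1}}$, which lies in $\ol\cP^{0,n}_\infty\{k-2\}_{k+1}$ and is a $\dl_{k-2}$-cycle; Condition \ref{v155} then gives $\dl_k(G+H)=\dl_{k-1}\Psi$, so that $G+H-\Psi$ is the required cocycle whose $\ol c_{\Si r_{k-1}}$-dependent part of the cycle condition is exactly (\ref{v145'}). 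You correctly sense that Condition \ref{v155} is the engine, but you apply it to the construction of the corrections $h_{r_k}$ inside $\Delta_{r_k}$ rather than to the correction $\Psi$ of an arbitrary chain $G+H$, which is the step the lemma actually requires.
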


\begin{proof} Any $(k+2)$-chain $\Phi\in \cP^{0,n}_\infty\{k\}_{k+2}$ takes a form
\mar{v156'}\ben
&& \Phi= G+H=\op\sum_{0\leq|\La|} G^{r_k,\La}\ol c_{\La r_k}\om +
\label{v156'}\\
&& \qquad \op\sum_{0\leq \Si, 0\leq\Xi}(H^{(A,\Xi)(r_{k-1},\Si)}\ol s_{\Xi
A}\ol c_{\Si r_{k-1}}+...)\om.\nonumber
\een
If it is a $\dl_k$-cycle, then
\mar{v145'}\ben
&& \op\sum_{0\leq|\La|} G^{r_k,\La}d_\La
\left(\op\sum_{0\leq|\Si|} \Delta_{r_k}^{r_{k-1},\Si}\ol c_{\Si
r_{k-1}}\right) + \label{v145'}\\
&& \qquad \ol\dl\left(\op\sum_{0\leq \Si, 0\leq\Xi}H^{(A,\Xi)(r_{k-1},\Si)}\ol
s_{\Xi A}\ol c_{\Si r_{k-1}}\right)=0 \nonumber
\een
are the $k$-stage Noether identities. Conversely, let the
condition (\ref{v145'}) hold. Then it can be extended to a cycle
condition as follows. It is brought into the form
\be
&& \dl_k\left(\op\sum_{0\leq|\La|} G^{r_k,\La}\ol c_{\La r_k} +
\op\sum_{0\leq \Si, 0\leq\Xi}H^{(A,\Xi)(r_{k-1},\Si)}\ol
s_{\Xi A}\ol c_{\Si r_{k-1}}\right)=\\
&& \qquad  -\op\sum_{0\leq|\La|} G^{r_k,\La}d_\La h_{r_k}
+\op\sum_{0\leq \Si, 0\leq\Xi}H^{(A,\Xi)(r_{k-1},\Si)}\ol s_{\Xi
A}d_\Si \Delta_{r_{k-1}}.
\ee
A glance at the expression (\ref{v92'}) shows that the term in the
right-hand side of this equality belongs to
$\cP^{0,n}_\infty\{k-2\}_{k+1}$. It is a $\dl_{k-2}$-cycle and,
consequently, a $\dl_{k-1}$-boundary $\dl_{k-1}\Psi$ in accordance
with Condition \ref{v155}. Then the equality (\ref{v145'}) is a
$\ol c_{\Si r_{k-1}}$-dependent part of the cycle condition
\be
\dl_k\left(\op\sum_{0\leq|\La|} G^{r_k,\La}\ol c_{\La r_k} +
\op\sum_{0\leq \Si, 0\leq\Xi}H^{(A,\Xi)(r_{k-1},\Si)}\ol s_{\Xi
A}\ol c_{\Si r_{k-1}} -\Psi\right)=0,
\ee
but $\dl_k\Psi$ does not make a contribution to this condition.
\end{proof}

\begin{lemma}
Any trivial $k$-stage Noether identity is a $\dl_k$-boundary
$\Phi\in \cP^{0,n}_\infty\{k\}_{k+2}$.
\end{lemma}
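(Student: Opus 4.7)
The plan is to lift the argument of Lemma \ref{v134'} one stage higher, using the homology regularity Condition \ref{v155} as the inductive engine. A trivial $k$-stage Noether identity should mean, in analogy with the $k=1$ case, a $\dl_k$-cycle $\Phi\in \ol\cP^{0,n}_\infty\{k\}_{k+2}$ of the form (\ref{v156'}) whose leading part $G=\sum G^{r_k,\La}\ol c_{\La r_k}\om$ has coefficients vanishing on-shell, i.e.\ there exists $\Psi\in \ol\cP^{0,n}_\infty\{k\}_{k+3}$ with $G=\ol\dl\Psi$. The goal is to exhibit a chain $\Xi$ with $\Phi=\dl_k\Xi$.

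First I would write
\[
\Phi=\dl_k\Psi + [(\ol\dl-\dl_k)\Psi + H],
\]
so that the problem reduces to showing that the remainder $R:=(\ol\dl-\dl_k)\Psi+H$ is itself a $\dl_k$-boundary. By construction $R$ lies in $\ol\cP^{0,n}_\infty\{k-1\}_{k+2}$, because $\dl_k-\ol\dl$ strips one Noether antifield from $\Psi$ (paying for it with lower-stage antifields from the $\Delta_{r_j}$-terms in (\ref{v92})), and $H$ by definition already sits in the subalgebra without $\ol c_{r_k}$.

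Next I would exploit the cycle condition. Since $\dl_k\Phi=0$ and $\dl_k^2=0$, applying $\dl_k$ to the above decomposition yields $\dl_k R=0$; as $R$ contains no $\ol c_{r_k}$, this is the same as $\dl_{k-1}R=0$. Now Condition \ref{v155} says that any $\dl_{k-1}$-cycle in $\ol\cP^{0,n}_\infty\{k-1\}_{k+2}$ is a $\dl_k$-boundary, so there exists $\Theta$ with $R=\dl_k\Theta$. Combining the two steps gives $\Phi=\dl_k(\Psi+\Theta)$, as required.

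The main obstacle will be verifying that the homology regularity Condition \ref{v155} is available at the precise antifield number $k+2$ in the subcomplex $\ol\cP^{0,n}_\infty\{k-1\}$; the statement of the condition is phrased one step higher, so one must read it at the correct level by shifting the index $k\mapsto k-1$. A secondary technical point is checking that the splitting $\Phi=G+H$ and the subsequent decomposition are unambiguous modulo $\dl_k$-boundaries, which follows because any ambiguity in choosing $\Psi$ with $\ol\dl\Psi=G$ lies in $\ker\ol\dl$, and by Condition \ref{v155} together with the $\ol\dl$-exactness established in Lemma \ref{v139'} and its higher-stage analogues, such ambiguities produce additional $\dl_k$-boundaries that can be absorbed into $\Theta$.
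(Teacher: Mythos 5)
Your proposal is correct and follows essentially the same route as the paper's own proof: the decomposition $\Phi=\dl_k\Psi+(\ol\dl-\dl_k)\Psi+H$, the observation that the cycle condition makes the remainder a $\dl_{k-1}$-cycle in $\ol\cP^{0,n}_\infty\{k-1\}_{k+2}$, and the appeal to Condition \ref{v155} (read with the index shifted by one, exactly as you note) to conclude it is a $\dl_k$-boundary. The additional remarks about the ambiguity in choosing $\Psi$ are not needed but do no harm.
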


\begin{proof}
The $k$-stage Noether identities (\ref{v145'}) are trivial either
if a $\dl_k$-cycle $\Phi$ (\ref{v156'}) is a $\dl_k$-boundary or
its summand $G$ vanishes on-shell. Let us show that, if the
summand $G$ of $\Phi$ (\ref{v156'}) is $\ol\dl$-exact, then $\Phi$
is a $\dl_k$-boundary. If $G=\ol\dl \Psi$, one can write
\mar{vv169}\beq
\Phi=\dl_k\Psi +(\ol \dl-\dl_k)\Psi + H. \label{vv169}
\eeq
Hence, the $\dl_k$-cycle condition reads
\be
\dl_k\Phi=\dl_{k-1}((\ol\dl-\dl_k)\Psi + H)=0.
\ee
By virtue of Condition \ref{v155}, any $\dl_{k-1}$-cycle $\f\in
\ol\cP^{0,n}_\infty\{k-1\}_{k+2}$ is $\dl_k$-exact. Then
\be
(\ol \dl-\dl_k)\Psi + H
\ee
is a $\dl_k$-boundary. Consequently, $\Phi$ (\ref{v156'}) is
$\dl_k$-exact.
\end{proof}

\begin{lemma} All non-trivial $k$-stage Noether identity (\ref{v145'}), by
assumption, factorize as
\be
\Phi= \op\sum_{0\leq|\Xi|} \Phi^{r_k,\Xi} d_\Xi \Delta_{r_k}\om,
\qquad \Phi^{r_1,\Xi}\in \cS^0_\infty[F;Y],
\ee
through the complete ones (\ref{v93}).
\end{lemma}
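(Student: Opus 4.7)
The plan is to iterate the arguments used in the base case (Lemma \ref{41l2}) and the first-stage case (Lemma \ref{v139'}), with the homology $H_2(\dl_0)$ replaced by $H_{k+2}(\dl_k)$ inside the Koszul--Tate complex (\ref{v94}). By the two immediately preceding lemmas, a chain $\Phi\in\cP^{0,n}_\infty\{k\}_{k+2}$ is a $k$-stage Noether identity iff it is a $\dl_k$-cocycle, and it is trivial iff it is a $\dl_k$-boundary. Consequently, non-trivial $k$-stage Noether identities modulo trivial ones are in bijection with the homology module $H_{k+2}(\dl_k)$.

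Next, I would invoke the finite reducibility hypothesis at stage $k$, which is the direct analogue of the condition producing the generators $\Delta_r$ at stage $0$ (equation (\ref{41f7})) and $\Delta_{r_1}$ at stage $1$ (equation (\ref{41f13})). Namely, one assumes a finitely generated graded projective $C^\infty(X)$-module $\cC_{(k)}\subset H_{k+2}(\dl_k)$ with local basis $\{\Delta_{r_k}\om\}$ of the form (\ref{v92'}), such that every class of $H_{k+2}(\dl_k)$ factorizes through these generators via horizontal total derivatives $d_\Xi$. Reading this on representatives, every non-trivial $k$-stage Noether identity $\Phi$ can be written, modulo a $\dl_k$-boundary (i.e.\ modulo a trivial Noether identity), in the form $\Phi=\sum_{0\leq|\Xi|}\Phi^{r_k,\Xi}d_\Xi\Delta_{r_k}\om$ with coefficients $\Phi^{r_k,\Xi}\in\cS^0_\infty[F;Y]$, which is the claimed factorization.

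The only non-routine point, and the one requiring care, is that a general $\dl_k$-cocycle has the bipartite form $\Phi=G+H$ of (\ref{v156'}), and one must know that the antifield-bilinear tail $H$ does not obstruct the factorization. This is handled exactly as in Lemma \ref{v139'}: the homology regularity Condition \ref{v155} guarantees that every $\dl_{k-1}$-cycle in $\ol\cP^{0,n}_\infty\{k-1\}_{k+2}$ is a $\dl_k$-boundary, so the piece of $\Phi$ that would break a naive factorization can be absorbed into a trivial Noether identity, leaving only the factorized contribution from the generators $\Delta_{r_k}\om$. With this absorption step, the argument of Lemma \ref{v139'} transcribes verbatim to the $k$-stage setting, yielding the assertion.
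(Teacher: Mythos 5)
Your proposal is correct and matches the paper's intent: the paper states this lemma without any proof, since the factorization is literally the content of the finite-generation hypothesis built into the definition of an $N$-stage reducible Lagrangian system (the phrase ``by assumption'' in the statement), exactly as at stage $0$ in (\ref{xx2}) and stage $1$ in (\ref{v80'}). Your reconstruction --- identifying non-trivial $k$-stage Noether identities modulo trivial ones with $H_{k+2}(\dl_k)$ via the two preceding lemmas and Condition \ref{v155}, then reading the finite generation of $\cC_{(k)}\subset H_{k+2}(\dl_k)$ on representatives --- is precisely the intended argument.
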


It may happen that a Grassmann-graded Lagrangian field system
possesses non-trivial Noether identities of any stage. However, we
restrict our consideration to $N$-reducible Lagrangian systems for
a finite integer $N$. In this case, the Koszul -- Tate operator
(\ref{v92}) and the gauge operator (\ref{w108'}) below contain
finite terms.

\subsection{Inverse second Noether theorem}

Different variants of the second Noether theorem have been
suggested in order to relate reducible Noether identities and
gauge symmetries \cite{barn,jmp05,jpa05,fulp}. The inverse second
Noether theorem (Theorem \ref{w35}), that we formulate in homology
terms, associates to the Koszul -- Tate complex (\ref{v94}) of
non-trivial Noether identities the cochain sequence (\ref{w108})
with the ascent operator $\bu$ (\ref{w108'}) whose components are
complete non-trivial gauge and higher-stage gauge symmetries of
Lagrangian system. Let us start with the following notation.

\begin{remark} \label{42n1} \mar{42n1}
Given the DBGA $\ol\cP^*_\infty\{N\}$ (\ref{v91}), we consider the
the DBGA
\mar{w5}\beq
P^*_\infty\{N\}=P^*_\infty[F\op\times_XE_0\op\times_X\cdots
\op\times_X E_N;Y], \label{w5}
\eeq
possessing the local generating basis
\be
(s^A, c^r, c^{r_1}, \ldots, c^{r_N}), \qquad [c^{r_k}]=([\ol
c_{r_k}]+1){\rm mod}\,2,
\ee
and the DBGA
\mar{w6}\beq
\cP^*_\infty\{N\}=\cP^*_\infty[\ol{VF}\op\times_X
E_0\op\times_X\cdots \op\times_X E_N \op\times_X \ol
E_0\op\times_X\cdots\op\times_X \ol E_N;Y] \label{w6}
\eeq
with the local generating basis
\be
(s^A, \ol s_A, c^r, c^{r_1}, \ldots, c^{r_N},\ol c_r, \ol c_{r_1},
\ldots, \ol c_{r_N}),
\ee
(see Remark \ref{rtr} for the notation). Their elements $c^{r_k}$
are called $k$-stage ghosts of ghost number gh$[c^{r_k}]=k+1$ and
antifield number
\be
{\rm Ant}[c^{r_k}]=-(k+1).
\ee
A $C^\infty(X)$-module $\cC^{(k)}$ of $k$-stage ghosts is the
density-dual of a module $\cC_{(k+1})$ of $(k+1)$-stage Noether
antifields. In accordance with Remark \ref{pkp}, the DBGAs
$\ol\cP^*_\infty\{N\}$ (\ref{v91}) and the BGDA $P^*(N)$
(\ref{w5}) are subalgebras of the DBGA $\cP^*_\infty\{N\}$
(\ref{w6}). The Koszul -- Tate operator $\dl_{\rm KT}$ (\ref{v92})
is naturally extended to a graded derivation of the DBGA
$\cP^*_\infty\{N\}$.
\end{remark}

\begin{remark} \label{42n10} \mar{42n10} Any
graded differential form $\f\in \cS^*_\infty[F;Y]$ and any finite
tuple $(f^\La)$, $0\leq |\La|\leq k$, of local graded functions
$f^\La\in \cS^0_\infty[F;Y]$ obey the following relations
\cite{jpa05}:
\mar{qq1}\ben
&& \op\sum_{0\leq |\La|\leq k} f^\La d_\La \f\w \om= \op\sum_{0\leq
|\La|}(-1)^{|\La|}d_\La (f^\La)\f\w \om +d_H\si,
\label{qq1a}\\
&& \op\sum_{0\leq |\La|\leq k} (-1)^{|\La|}d_\La(f^\La \f)=
\op\sum_{0\leq |\La|\leq k} \eta (f)^\La d_\La \f, \label{qq1b}\\
&& \eta (f)^\La = \op\sum_{0\leq|\Si|\leq k-|\La|}(-1)^{|\Si+\La|}
\frac{(|\Si+\La|)!}{|\Si|!|\La|!} d_\Si f^{\Si+\La},
\label{qq1c}\\
&& \eta(\eta(f))^\La=f^\La. \label{qq1d}
\een
\end{remark}

\begin{theorem} \label{w35} \mar{w35} Given the Koszul -- Tate complex (\ref{v94}),
a module of graded densities $P_\infty^{0,n}\{N\}$ is decomposed
into the cochain sequence
\mar{w108,'}\ben
&& 0\to \cS^{0,n}_\infty[F;Y]\ar^{\bu}
P^{0,n}_\infty\{N\}^1\ar^{\bu}
P^{0,n}_\infty\{N\}^2\ar^{\bu}\cdots, \label{w108}\\
&& \bu=u + u^{(1)}+\cdots + u^{(N)}=\label{w108'}\\
&& \qquad u^A\frac{\dr}{\dr s^A} +
u^r\frac{\dr}{\dr c^r} +\cdots  + u^{r_{N-1}}\frac{\dr}{\dr
c^{r_{N-1}}}, \nonumber
\een
graded in ghost number. Its ascent operator $\bu$ (\ref{w108'}) is
an odd graded derivation of ghost number 1 where $u$ (\ref{w33})
is a variational symmetry of a graded Lagrangian $L$ and the
graded derivations $u_{(k)}$ (\ref{w38}), $k=1,\ldots, N$, obey
the relations (\ref{w34}).
\end{theorem}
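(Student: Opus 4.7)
The strategy is to construct the ascent operator $\bu$ by dualising the Koszul--Tate operator $\dl_{\rm KT}$ (\ref{v92}): each Noether antifield $\ol c_{r_k}$ (together with $\ol s_A$) is to be replaced by its ghost partner $c^{r_k}$ (respectively by $s^A$), and each integration by parts that occurs inside the Noether identities is to be reversed via the formula (\ref{qq1a}). I would treat the classical (ghost-degree-one) component $u$ first and then proceed by induction on the stage index $k$.

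For the base case I start from the complete Noether identity (\ref{v64}), $\op\sum_\La\Delta_r^{A,\La}d_\La\cE_A=0$. Multiplying by a ghost $c^r$, which sits in the extended DBGA $P^*_\infty\{N\}$ (\ref{w5}), and applying (\ref{qq1a}) with $f^\La=c^r\Delta_r^{A,\La}$ and $\f=\cE_A$, I obtain
$$
0\;=\;\Bigl(\op\sum_\La(-1)^{|\La|}d_\La\bigl(c^r\Delta_r^{A,\La}\bigr)\Bigr)\cE_A\,\om\;+\;d_H\si .
$$
Setting $u^A:=\op\sum_\La(-1)^{|\La|}d_\La(c^r\Delta_r^{A,\La})$, the derivation $u=u^A\dr_A+\cdots$ then satisfies $u^A\cE_A\om=-d_H\si$, so by item (iv) of Lemma \ref{35l10} it is a variational symmetry; by construction it is linear in the ghosts $c^r$ and their jets, and hence a gauge symmetry in the sense of Definition \ref{sgauge} for the pull-back Lagrangian on the ghost-extended graded bundle.

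For the higher stages I iterate the same recipe. The complete $k$-stage Noether identity (\ref{v93}) contracted with a $k$-stage ghost $c^{r_k}$, when processed by the identities (\ref{qq1a})--(\ref{qq1b}), splits into a part linear in $c^{r_{k-1}}$ (which defines the new coefficient $u^{r_{k-1}}$) plus lower-order contributions coming from the auxiliary tensors $h_{r_k}$ in (\ref{v92'}), which vanish on-shell. This produces $u^{(k)}=u^{r_{k-1}}\dr/\dr c^{r_{k-1}}$ and thereby the full operator $\bu=u+u^{(1)}+\cdots+u^{(N)}$. It is manifestly odd, of ghost number one, and preserves the antifield/ghost bigrading of $P^*_\infty\{N\}$, so the cochain sequence (\ref{w108}) is well defined.

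The main obstacle is checking the compatibility relations (\ref{w34}) between successive $u^{(k)}$; these are the precise dual statement of the Koszul--Tate nilpotency $\dl_{\rm KT}^2=0$. I would obtain them by applying $\dl_{\rm KT}^2=0$ to each generator $\ol c_{r_k}$, expanding with (\ref{v92'}), contracting with $c^{r_k}$, and again using (\ref{qq1a})--(\ref{qq1c}); the involution (\ref{qq1d}) guarantees that the integration-by-parts dualisation is lossless and returns the required relations in the precise form (\ref{w34}). Under this dualisation the auxiliary cochains $h_{r_k}$ in (\ref{v92'}) are exactly what accounts for the failure of $\bu$ to be strictly nilpotent: they become the on-shell-vanishing terms which, in the next section, will be absorbed when $\bu$ is extended (under the algebraic-closure hypothesis) to the nilpotent BRST operator.
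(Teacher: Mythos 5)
Your proposal is correct and follows essentially the same route as the paper: the coefficients $u^A$ and $u^{r_{k-1}}$ are the $\eta$-duals of the $\Delta$'s obtained by contracting the complete Noether and higher-stage Noether identities with ghosts and integrating by parts via (\ref{qq1a}) -- (\ref{qq1d}), the base case gives $u^A\cE_A\om=d_H\si$ so that $u$ is a variational symmetry by Lemma \ref{35l10}, and the relations (\ref{w34}) are read off from the nilpotency of $\dl_{\rm KT}$ by taking the variational derivative with respect to $\ol c_{r_{k-2}}$ of the part linear in those antifields. The only difference is presentational: the paper packages all of these ghost-contracted identities at once as the statement that $\dl_{\rm KT}$ is an exact symmetry of the extended Lagrangian $L_e=L+\dl_{\rm KT}(\op\sum_{0\leq k\leq N} c^{r_k}\ol c_{r_k}\om)$ and then invokes the first variational formula (\ref{g107}), whereas you process each stage by hand; the resulting computations coincide.
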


\begin{proof} Given the Koszul -- Tate operator (\ref{v92}), let us extend an original
Lagrangian $L$ to the Lagrangian
\mar{w8}\beq
L_e=L+L_1=L + \op\sum_{0\leq k\leq N} c^{r_k}\Delta_{r_k}\om=L
+\dl_{\rm KT}( \op\sum_{0\leq k\leq N} c^{r_k}\ol c_{r_k}\om)
\label{w8}
\eeq
of zero antifield number. It is readily observed that the Koszul
-- Tate operator $\dl_{\rm KT}$ is an exact symmetry of the
extended Lagrangian $L_e\in \cP^{0,n}_\infty\{N\}$ (\ref{w8}).
Since the graded derivation $\dl_{\rm KT}$ is vertical, it follows
from the first variational formula (\ref{g107}) that
\mar{w16}\ben
&& \left[\frac{\op\dl^\lto \cL_e}{\dl \ol s_A}\cE_A
+\op\sum_{0\leq k\leq N} \frac{\op\dl^\lto \cL_e}{\dl \ol
c_{r_k}}\Delta_{r_k}\right]\om = \label{w16}\\
&& \qquad \left[\up^A\cE_A + \op\sum_{0\leq k\leq N}\up^{r_k}\frac{\dl
\cL_e}{\dl c^{r_k}}\right]\om= d_H\si, \nonumber \\
&& \up^A= \frac{\op\dl^\lto \cL_e}{\dl \ol s_A}=u^A+w^A
=\op\sum_{0\leq|\La|} c^r_\La\eta(\Delta^A_r)^\La +
 \op\sum_{1\leq i\leq N}\op\sum_{0\leq|\La|}
c^{r_i}_\La\eta(\op\dr^\lto{}^A(h_{r_i}))^\La, \nonumber\\
&& \up^{r_k}=\frac{\op\dl^\lto \cL_e}{\dl \ol c_{r_k}} =u^{r_k}+
w^{r_k}= \op\sum_{0\leq|\La|}
c^{r_{k+1}}_\La\eta(\Delta^{r_k}_{r_{k+1}})^\La + \nonumber\\
&& \qquad \op\sum_{k+1<i\leq N} \op\sum_{0\leq|\La|}
c^{r_i}_\La\eta(\op\dr^\lto{}^{r_k}(h_{r_i}))^\La. \nonumber
\een
The equality (\ref{w16}) is split into the set of equalities
\mar{w19,20}\ben
&& \frac{\op\dl^\lto (c^r\Delta_r)}{\dl \ol s_A}\cE_A \om
=u^A\cE_A \om=d_H\si_0, \label{w19}\\
&&  \left[\frac{\op\dl^\lto (c^{r_k}\Delta_{r_k})}{\dl \ol s_A}\cE_A
+\op\sum_{0\leq i<k} \frac{\op\dl^\lto (c^{r_k}\Delta_{r_k})}{\dl
\ol c_{r_i}}\Delta_{r_i}\right] \om= d_H\si_k,  \label{w20}
\een
where $k=1,\ldots,N$. A glance at the equality (\ref{w19}) shows
that, by virtue of the first variational formula (\ref{g107}), the
odd graded derivation
\mar{w33}\beq
u= u^A\frac{\dr}{\dr s^A}, \qquad u^A =\op\sum_{0\leq|\La|}
c^r_\La\eta(\Delta^A_r)^\La, \label{w33}
\eeq
of $P^0_\infty\{0\}$ is a variational symmetry of a graded
Lagrangian $L$. Every equality (\ref{w20}) falls into a set of
equalities graded by the polynomial degree in antifields. Let us
consider that of them linear in antifields $\ol c_{r_{k-2}}$. We
have
\be
&& \frac{\op\dl^\lto}{\dl \ol
s_A}\left(c^{r_k}\op\sum_{0\leq|\Si|,|\Xi|}h_{r_k}^{(r_{k-2},\Si)(A,\Xi)}
\ol
c_{\Si r_{k-2}}\ol s_{\Xi A}\right)\cE_A\om + \\
&& \qquad \frac{\op\dl^\lto}{\dl \ol
c_{r_{k-1}}}\left(c^{r_k}\op\sum_{0\leq|\Si|}\Delta_{r_k}^{r'_{k-1},\Si}\ol
c_{\Si r'_{k-1}}\right)\op\sum_{0\leq|\Xi|}
\Delta_{r_{k-1}}^{r_{k-2},\Xi}\ol c_{\Xi r_{k-2}}\om= d_H\si_k.
\ee
This equality is brought into the form
\be
 && \op\sum_{0\leq|\Xi|}
(-1)^{|\Xi|}d_\Xi\left(c^{r_k}\op\sum_{0\leq|\Si|}
h_{r_k}^{(r_{k-2},\Si)(A,\Xi)} \ol c_{\Si r_{k-2}}\right)\cE_A \om
+ \\
&& \qquad u^{r_{k-1}}\op\sum_{0\leq|\Xi|} \Delta_{r_{k-1}}^{r_{k-2},\Xi}\ol
c_{\Xi r_{k-2}} \om= d_H\si_k.
\ee
Using the relation (\ref{qq1a}), we obtain the equality
\mar{ddd1}\ben
&& \op\sum_{0\leq|\Xi|} c^{r_k}\op\sum_{0\leq|\Si|}
h_{r_k}^{(r_{k-2},\Si)(A,\Xi)} \ol c_{\Si r_{k-2}} d_\Xi\cE_A\om
+ \label{ddd1}\\
&& \qquad u^{r_{k-1}}\op\sum_{0\leq|\Xi|}
\Delta_{r_{k-1}}^{r_{k-2},\Xi}\ol c_{\Xi r_{k-2}}\om=
d_H\si'_k.\nonumber
\een
The variational derivative of both its sides with respect to $\ol
c_{r_{k-2}}$ leads to the relation
\mar{w34}\ben
&&\op\sum_{0\leq|\Si|} d_\Si u^{r_{k-1}}\frac{\dr}{\dr
c^{r_{k-1}}_\Si} u^{r_{k-2}} =\ol\dl(\al^{r_{k-2}}),\label{w34}\\
&& \al^{r_{k-2}} = -\op\sum_{0\leq|\Si|}
\eta(h_{r_k}^{(r_{k-2})(A,\Xi)})^\Si d_\Si(c^{r_k} \ol s_{\Xi A}),
\nonumber
\een
which the odd graded derivation
\mar{w38}\beq
u^{(k)}= u^{r_{k-1}}\frac{\dr}{\dr
c^{r_{k-1}}}=\op\sum_{0\leq|\La|}
c^{r_k}_\La\eta(\Delta^{r_{k-1}}_{r_k})^\La\frac{\dr}{\dr
c^{r_{k-1}}}, \quad k=1,\ldots,N, \label{w38}
\eeq
satisfies. Graded derivations $u$ (\ref{w33}) and $u^{(k)}$
(\ref{w38}) are assembled into the ascent operator $\bu$
(\ref{w108'}) of the cochain sequence (\ref{w108}).
\end{proof}

A glance at the expression (\ref{w33}) shows that the variational
symmetry $u$ is a graded derivation of a ring
\be
P^0_\infty[0]=S^0_\infty[F\op\times_X E_0; Y\op\times_X E^0_0]
\ee
which satisfies Definition \ref{sgauge}. Consequently,  $u$
(\ref{w33}) is a gauge symmetry of a graded Lagrangian $L$ which
is associated to the complete non-trivial Noether identities
(\ref{v64}). Therefore, it is a non-trivial gauge symmetry.
Moreover, it is complete in the following sense. Let
\be
\op\sum_{0\leq|\Xi|} C^RG^{r,\Xi}_R d_\Xi \Delta_r \om
\ee
be some projective $C^\infty(X)$-module of finite rank of
non-trivial Noether identities (\ref{xx2}) parameterized by the
corresponding ghosts $C^R$. We have the equalities
\be
&& 0=\op\sum_{0\leq|\Xi|} C^RG^{r,\Xi}_R d_\Xi
\left(\op\sum_{0\leq|\La|}\Delta_r^{A,\La}d_\La \cE_A\right) \om=\\
&& \qquad \op\sum_{0\leq|\La|}\left(\op\sum_{0\leq|\Xi|}\eta(G^r_R)^\Xi C^R_\Xi\right)
\Delta_r^{A,\La}d_\La \cE_A\om+d_H(\si)=\\
&& \qquad
\op\sum_{0\leq|\La|}(-1)^{|\La|}d_\La\left(\Delta_r^{A,\La}\op\sum_{0\leq|\Xi|}\eta(G^r_R)^\Xi
C^R_\Xi\right)\cE_A \om +d_H\si =\\
&& \qquad
\op\sum_{0\leq|\La|}\eta(\Delta_r^A)^\La
d_\La\left(\op\sum_{0\leq|\Xi|}\eta(G^r_R)^\Xi C^R_\Xi\right)\cE_A
\om
+d_H\si=\\
&&\qquad \op\sum_{0\leq|\La|}u_r^{A,\La}d_\La\left(\op\sum_{0\leq|\Xi|}\eta(G^r_R)^\Xi
C^R_\Xi\right)\cE_A \om +d_H\si.
\ee
It follows that the graded derivation
\be
d_\La\left(\op\sum_{0\leq|\Xi|}\eta(G^r_R)^\Xi
C^R_\Xi\right)u_r^{A,\La}\frac{\dr}{\dr s^A}
\ee
is a variational symmetry of a graded Lagrangian $L$ and,
consequently, its gauge symmetry parameterized by ghosts $C^R$. It
factorizes through the gauge symmetry (\ref{w33}) by putting
ghosts
\be
c^r= \op\sum_{0\leq|\Xi|}\eta(G^r_R)^\Xi C^R_\Xi.
\ee
Thus, we come to the following definition.

Turn now to the relation (\ref{w34}). For $k=1$, it takes a form
\be
\op\sum_{0\leq|\Si|} d_\Si u^r\frac{\dr}{\dr c^r_\Si} u^A =\ol
\dl(\al^A)
\ee
of a first-stage gauge symmetry condition on-shell which the
non-trivial gauge symmetry $u$ (\ref{w33}) satisfies. Therefore,
one can treat the odd graded derivation
\be
u^{(1)}= u^r\frac{\dr}{\dr c^r}, \qquad u^r=\op\sum_{0\leq|\La|}
c^{r_1}_\La\eta(\Delta^r_{r_1})^\La,
\ee
as a first-stage gauge symmetry associated to the complete
first-stage Noether identities
\be
 \op\sum_{0\leq|\La|} \Delta_{r_1}^{r,\La}d_\La
\left(\op\sum_{0\leq|\Si|} \Delta_r^{A,\Si}\ol s_{\Si A}\right) =
- \ol\dl\left(\op\sum_{0\leq |\Si|,
|\Xi|}h_{r_1}^{(B,\Si)(A,\Xi)}\ol s_{\Si B}\ol s_{\Xi A}\right).
\ee

Iterating the arguments, one comes to the relation (\ref{w34})
which provides a $k$-stage gauge symmetry condition which is
associated to the complete non-trivial $k$-stage Noether
identities (\ref{v93}).

The odd graded derivation $u_{(k)}$ (\ref{w38}) is called the
$k$-stage gauge symmetry. It is non-trivial and complete as
follows. Let
\be
\op\sum_{0\leq|\Xi|} C^{R_k}G^{r_k,\Xi}_{R_k} d_\Xi \Delta_{r_k}
\om
\ee
be a projective $C^\infty(X)$-module of finite rank of non-trivial
$k$-stage Noether identities (\ref{xx2}) factorizing through the
complete ones (\ref{v92'}) and parameterized by the corresponding
ghosts $C^{R_k}$. One can show that it defines a $k$-stage gauge
symmetry factorizing through $u^{(k)}$ (\ref{w38}) by putting
$k$-stage ghosts
\be
c^{r_k}= \op\sum_{0\leq|\Xi|}\eta(G^{r_k}_{R_k})^\Xi C^{R_k}_\Xi.
\ee

Thus, components of the ascent operator $\bu$ (\ref{w108'}) in
inverse second Noether Theorem \ref{w35} are complete non-trivial
gauge and higher-stage gauge symmetries. Therefore, we agree to
call this operator the gauge operator.

\begin{remark}
With the gauge operator (\ref{w108'}), the extended Lagrangian
$L_e$ (\ref{w8}) takes a form
\mar{lmp2}\beq
L_e= L+\bu( \op\sum_{0\leq k\leq N} c^{r_{k-1}}\ol c_{r_{k-1}})
\om + L^*_1 + d_H\si, \label{lmp2}
\eeq
where $L^*_1$ is a term of polynomial degree in antifields
exceeding 1.
\end{remark}

\subsection{Direct second Noether theorem}

The correspondence between of complete non-trivial gauge and
higher-stage gauge symmetries to  complete non-trivial Noether and
higher-stage Noether identities is unique due to the following
direct second Noether theorem.

\begin{theorem} \mar{825} \label{825}
(i) If $u$ (\ref{w33}) is a gauge symmetry, the variational
derivative of the $d_H$-exact density $u^A\cE_A\om$ (\ref{w19})
with respect to ghosts $c^r$ leads to the equality
\mar{gg11}\ben
&& \dl_r(u^A\cE_A
\om)= \op\sum_{0\leq |\La|}(-1)^{|\La|}d_\La[u^{A\La}_r\cE_A]= \label{gg11}\\
&&\op\sum_{0\leq|\La|}(-1)^{|\La|}d_\La(\eta(\Delta^A_r)^\La\cE_A)=
\op\sum_{0\leq|\La|}(-1)^{|\La|} \eta(\eta(\Delta^A_r))^\La
d_\La\cE_A=0, \nonumber
\een
which reproduces the complete Noether identities (\ref{v64}) by
means of the relation (\ref{qq1d}).

(ii) Given the $k$-stage gauge symmetry condition (\ref{w34}), the
variational derivative of the equality (\ref{ddd1}) with respect
to ghosts $c^{r_k}$ leads to the equality, reproducing the
$k$-stage Noether identities (\ref{v93})
 by means of the relations (\ref{qq1b}) -- (\ref{qq1d}).
\end{theorem}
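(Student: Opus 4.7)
The plan is to obtain the direct theorem as a dual computation to the inverse one: starting from the $d_H$-exactness of the graded density produced by the gauge symmetry condition, I would extract the Noether identities by taking variational derivatives with respect to the ghosts and then invoking the combinatorial identities (\ref{qq1b}) and (\ref{qq1d}).

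For part (i), I would begin with the equality (\ref{w19}), $u^A\cE_A\om = d_H\si_0$, which was already established in the proof of Theorem \ref{w35} as a consequence of $u$ being a gauge symmetry. Since $d_H$-exact horizontal $n$-forms lie in the kernel of the variational operator $\dl$ (this follows at once from $\dl=\vr\circ d$ and the relation $\vr\circ d_H=0$ recorded just above (\ref{34f10})), applying $\dl_r$ to both sides annihilates the right-hand side. On the left, the linearity of $u^A = \sum_\La c^r_\La\eta(\Delta^A_r)^\La$ in the ghost jets gives $\dr^\La_r u^A = \eta(\Delta^A_r)^\La$, so the standard Euler--Lagrange formula yields $\dl_r(u^A\cE_A\om)=\sum_\La(-1)^{|\La|}d_\La(\eta(\Delta^A_r)^\La\cE_A)$, which is exactly the first chain of equalities in (\ref{gg11}). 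Applying (\ref{qq1b}) with $f^\La=\eta(\Delta^A_r)^\La$ and $\f=\cE_A$ transports the total derivatives inside to give $\sum_\La\eta(\eta(\Delta^A_r))^\La d_\La\cE_A$, and the involution (\ref{qq1d}) identifies $\eta(\eta(\Delta^A_r))^\La=\Delta^{A,\La}_r$. Setting the whole expression equal to zero then recovers the complete Noether identities (\ref{v64}).

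For part (ii), I would proceed analogously starting from the $d_H$-exact density (\ref{ddd1}), which encodes the $k$-stage gauge symmetry condition. Taking $\dl_{r_k}$ kills the $d_H$-exact right-hand side. On the left, the first term contributes via $\dr^\La_{r_k}$ a piece of the form $\sum(-1)^{|\La|}d_\La[h_{r_k}^{(r_{k-2},\Si)(A,\Xi)}\ol c_{\Si r_{k-2}}\cE_A]$, which after (\ref{qq1b})--(\ref{qq1d}) reproduces the $\ol\dl$-exact side of (\ref{v93}); the second term contributes via the structure $u^{r_{k-1}}=\sum_\La c^{r_k}_\La\eta(\Delta^{r_{k-1}}_{r_k})^\La$ a piece which, after the same sequence of manipulations, yields $\sum_\La\Delta^{r_{k-1},\La}_{r_k}d_\La(\Delta^{r_{k-2},\Xi}_{r_{k-1}}\ol c_{\Xi r_{k-2}})$. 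Matching the two sides reconstructs the complete $k$-stage Noether identities (\ref{v93}) exactly.

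The main obstacle is the combinatorial bookkeeping in part (ii): one must track which ghost- and antifield-dependent contributions survive variational differentiation, and check that the double application of $\eta$ coming from the Euler--Lagrange formula and the identity (\ref{qq1b}) cleanly inverts back to the original structure coefficients $\Delta^{r_{k-1},\La}_{r_k}$ and $h_{r_k}^{(r_{k-2},\Si)(A,\Xi)}$ appearing in (\ref{v92'}). Once that correspondence is tabulated, the identity (\ref{qq1d}) supplies the essential involution, and the desired equalities follow by term-by-term comparison.
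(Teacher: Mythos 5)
Your proposal is correct and follows essentially the same route as the paper, which embeds its argument directly in the chain of equalities (\ref{gg11}): apply the variational derivative with respect to the ghosts to the $d_H$-exact densities (\ref{w19}) and (\ref{ddd1}), use the linearity of $u^A$ (resp. $u^{r_{k-1}}$) in the ghost jets, and invoke the combinatorial relations (\ref{qq1b}) and (\ref{qq1d}) to recover (\ref{v64}) and (\ref{v93}). The only minor imprecision is in part (ii), where the first summand of (\ref{ddd1}) contains $c^{r_k}$ undifferentiated, so its variational derivative is obtained by simply stripping off $c^{r_k}$ rather than through a $\sum(-1)^{|\La|}d_\La$ expansion; this does not affect the conclusion.
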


\begin{example} \mar{vbv} \label{vbv} If the gauge symmetry $u$ (\ref{gg2}) is of second
jet order in gauge parameters, i.e.,
\mar{0656}\beq
u_V=(u_r^A c^r +u^{A\m}_r c^r_\m + u_r^{A\nu\m}c^r_{\nu\m})\dr_A,
\label{0656}
\eeq
the corresponding Noether identities (\ref{gg11}) take a form
\mar{0657}\beq
u^A_r\cE_A - d_\m(u^{A\m}_r\cE_A) + d_{\nu\m}(u_r^{A\nu\m}
\cE_A)=0, \label{0657}
\eeq
and {\it vice versa}.
\end{example}

\begin{remark} \mar{asd} \label{asd}
A glance at the expression (\ref{0657}) shows that, if the gauge
symmetry $u_V$ (\ref{0656}) is independent of jets of gauge
parameters, then all variational derivatives of a Lagrangian
equals zero, i.e., this Lagrangian is variationally trivial.
Therefore, such gauge symmetries usually are not considered. At
the same time, let a Lagrangian $L$ be variationally trivial. Its
variational derivatives $\cE_A\equiv 0$ obey irreducible complete
Noether identities
\mar{a2a}\beq
\ol\dl\Delta_A=0, \qquad \Delta_A=\ol s_A. \label{a2a}
\eeq
By the formula (\ref{w108'}), the associated irreducible gauge
symmetry is given by the gauge operator
\mar{a3a}\beq
\bu=c^A\frac{\dr}{\dr s^A}. \label{a3a}
\eeq
\end{remark}

\begin{remark} \mar{s8} \label{s8}
One can consider gauge symmetries which need not be be linear in
gauge parameters. Let us call then the generalized gauge
symmetries. However, direct second Noether Theorem \ref{gg11} is
not relevant to generalized gauge symmetries because, in this
case, an Euler -- Lagrange operator satisfies the identities
depending on gauge parameters.
\end{remark}

\subsection{BRST operator}

In contrast with the Koszul -- Tate operator (\ref{v92}), the
gauge operator $\bu$ (\ref{w108}) need not be nilpotent. Following
the basic example of Yang -- Mills gauge theory (Section 6.1), let
us study its extension to a nilpotent graded derivation
\mar{w109}\ben
&& \bbc=\bu+ \g=\bu + \op\sum_{1\leq k\leq N+1}\g^{(k)}=
\bu + \op\sum_{1\leq k\leq N+1}\g^{r_{k-1}}\frac{\dr}{\dr
c^{r_{k-1}}} \label{w109} \\
&& \qquad =\left(u^A\frac{\dr}{\dr s^A}+ \g^r\frac{\dr}{\dr c^r}\right) +
\op\sum_{0\leq k\leq N-1} \left(u^{r_k}\frac{\dr}{\dr c^{r_k}}+
\g^{r_{k+1}}\frac{\dr}{\dr c^{r_{k+1}}}\right) \nonumber
\een
of ghost number 1 by means of antifield-free terms $\g^{(k)}$ of
higher polynomial degree in ghosts $c^{r_i}$ and their jets
$c^{r_i}_\La$, $0\leq i<k$. We call $\bbc$ (\ref{w109}) the BRST
operator, where $k$-stage gauge symmetries are extended to
$k$-stage BRST transformations acting both on $(k-1)$-stage and
$k$-stage ghosts \cite{jmp09,book09}. If the BRST operator exists,
the cochain sequence (\ref{w108}) is brought into the BRST complex
\be
0\to \cS^{0,n}_\infty[F;Y]\ar^{\bbc}
P^{0,n}_\infty\{N\}^1\ar^{\bbc}
P^{0,n}_\infty\{N\}^2\ar^{\bbc}\cdots.
\ee

There is  the following necessary condition of the existence of
such a BRST extension.

\begin{theorem} \label{lmp6} \mar{lmp6} The gauge operator
(\ref{w108}) admits the nilpotent extension (\ref{w109}) only if
the gauge symmetry conditions (\ref{w34}) and the higher-stage
Noether identities (\ref{v93}) are satisfied off-shell.
\end{theorem}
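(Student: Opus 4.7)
The plan is to impose $\bbc^2 = 0$ on the generators of the antifield-free DBGA $P^*_\infty\{N\}$ and to separate the resulting constraints by their polynomial degree in the Euler--Lagrange derivatives $\cE_A$ and their total derivatives. The decisive structural fact is that each component $\g^{(k)}$ is, by the very definition (\ref{w109}), antifield-free and polynomial in the ghosts $c^{r_i}, c^{r_i}_\La$ alone: it contains no $\cE$-factors and its contact prolongation cannot manufacture any. On the other hand, the obstruction $\ol\dl(\al^{r_{k-2}})$ on the right-hand side of the gauge symmetry condition (\ref{w34}), once written out as an element of $P^*_\infty\{N\}$, is a polynomial strictly linear in $\cE_A$ and $d_\La \cE_A$. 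Hence the two types of contribution live in different filtration degrees and must cancel separately in the nilpotency equation.

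Since $\bbc$ is an odd graded derivation of ghost number one, $\bbc^2 = 0$ is equivalent to $\bbc^2 s^A = 0$ and $\bbc^2 c^{r_{k-1}} = 0$ for $k = 1,\ldots,N+1$. First I would compute $\bbc^2 s^A = \bbc(u^A)$. Its contact prolongation yields
\[
\bbc(u^A) = \sum_{\La} d_\La u^B\, \dr^\La_B u^A + \sum_{\La} d_\La(u^r + \g^r)\, \eta(\Delta^A_r)^\La .
\]
Extracting the sector linear in the first-stage ghosts $c^{r_1}$, the $u^r$-contribution reproduces the left-hand side of (\ref{w34}) at $k = 1$ (with $u^A$ playing the role of $u^{r_{k-2}}$), which by that relation equals $\ol\dl(\al^A)$; the competing $\g^r$-contribution is $\cE$-free. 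The $\cE$-linear component of $\bbc^2 s^A = 0$ therefore reads $\ol\dl(\al^A) = 0$, which is the off-shell form of (\ref{w34}) at stage one. Iterating on $\bbc^2 c^{r_{k-1}} = \bbc(u^{r_{k-1}} + \g^{r_{k-1}})$ for $k = 1,\ldots,N+1$, the analogous decomposition forces the off-shell form of (\ref{w34}) at every stage. The off-shell higher-stage Noether identities (\ref{v93}) then follow from the direct second Noether Theorem \ref{825}: variational differentiation with respect to the ghosts $c^{r_k}$ and the duality (\ref{qq1d}) transport each off-shell gauge condition into an off-shell Noether identity in $\ol\cP^*_\infty\{N\}$.

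The main obstacle is making the $\cE$-degree splitting rigorous. One must verify that both $\bu$ and $\g$ preserve the filtration of $P^*_\infty\{N\}$ by polynomial degree in $(\cE_A, d_\La\cE_A, \ldots)$, and that $\g$ lies in degree zero of this filtration. Both facts are structural: the components $u^A$ and $u^{r_k}$ are ghost-polynomials multiplied by structure functions $\eta(\Delta^\bullet_\bullet)^\La$, none of which depend on $\cE$, while each $\g^{(k)}$ is by construction an antifield-free polynomial in ghosts. Once this filtration is in place, the nilpotency equation splits automatically into its $\cE$-degree-zero and $\cE$-positive-degree components; the positive-degree component --- which no redefinition of $\g^{(k)}$ can absorb --- is precisely the off-shell condition stated in the theorem.
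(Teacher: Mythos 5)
Your final conclusions are the right ones, but the mechanism you put at the centre of the argument --- the filtration of $P^*_\infty\{N\}$ by ``polynomial degree in $(\cE_A, d_\La\cE_A,\ldots)$'' --- does not exist as a structure with respect to which the nilpotency equation can be split. The variational derivatives $\cE_A$ are not free generators of the algebra: they are particular elements of the subring generated by the jets $s^A_\La$, so an expression built from the $s^A_\La$ has no well-defined ``$\cE$-degree'', and your claim that $\g^{(k)}$ ``contains no $\cE$-factors'' is not an invariant statement. Even if one replaces your filtration by the only invariant object available --- the ideal generated by the $d_\La\cE_A$ (the on-shell ideal) --- a filtration by an ideal never forces components lying in different strata to vanish separately: from $f+g=0$ with $f$ in the ideal one only learns $g=-f$, not $g=0$. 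So the step ``the nilpotency equation splits automatically into its $\cE$-degree-zero and $\cE$-positive-degree components'', which you yourself flag as the main obstacle, is precisely the step that fails.

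The decomposition that does the work --- and that the paper uses --- is by polynomial degree in the ghosts, which is a genuine grading because the $c^{r_i}_\La$ are free generators. Since every $\g^{(k)}$ is by construction of ghost polynomial degree at least $2$, while $u^A$ and each $u^{r_{k-1}}$ are linear in ghosts, the ghost-degree-one part of $\bbc(\bbc)c^{r_{k-2}}=0$ consists of the single term $u^{(k)}(u^{(k-1)})$ and nothing else; this is the set of equalities (\ref{w110}), i.e.\ (\ref{850}). Combined with the identity (\ref{w34}) this gives $\ol\dl(\al^{r_{k-2}})=0$ directly --- no competing term has to be excluded, so no $\cE$-degree argument is needed. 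You in fact perform exactly this extraction when you isolate ``the sector linear in $c^{r_1}$'' in $\bbc^2 s^A$; had you observed that $\g^r$ is quadratic in the degree-zero ghosts and therefore simply absent from that sector, the $\cE$-filtration would have been seen to be superfluous. Your second step --- transporting the off-shell gauge conditions into the off-shell higher-stage Noether identities (\ref{v93}) via Theorem \ref{825} and the relations (\ref{qq1a})--(\ref{qq1d}) --- coincides with the paper's.
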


\begin{proof}
It is easily justified that, if the graded derivation $\bbc$
(\ref{w109}) is nilpotent, then  the right hand sides of the
equalities (\ref{w34}) equal zero, i.e.,
\mar{850}\beq
u^{(k+1)}(u^{(k)})=0, \qquad 0\leq k\leq N-1, \qquad u^{(0)}=u.
\label{850}
\eeq
Using the relations (\ref{qq1a}) -- (\ref{qq1d}), one can show
that, in this case, the right hand sides of the higher-stage
Noether identities (\ref{v93}) also equal zero \cite{jmp05}. It
follows that the summand $G_{r_k}$ of each cocycle $\Delta_{r_k}$
(\ref{v92'}) is $\dl_{k-1}$-closed. Then its summand $h_{r_k}$
also is $\dl_{k-1}$-closed and, consequently, $\dl_{k-2}$-closed.
Hence it is $\dl_{k-1}$-exact by virtue of Condition \ref{v155}.
Therefore, $\Delta_{r_k}$ contains only the term $G_{r_k}$ linear
in antifields.
\end{proof}

It follows at once from the equalities (\ref{850}) that the
higher-stage gauge operator
\be
u_{\rm HS}=\bu-u=u^{(1)}+\cdots + u^{(N)}
\ee
is nilpotent, and
\mar{a1a}\beq
\bu(\bu)=u(\bu). \label{a1a}
\eeq
Therefore, the nilpotency condition for the BRST operator $\bbc$
(\ref{w109}) takes a form
\mar{851}\beq
\bbc(\bbc)=(u+\g)(\bu) +(u+u_{\rm HS}+\g)(\g)=0. \label{851}
\eeq
Let us denote
\be
&& \g^{(0)}=0, \\
&& \g^{(k)}=\g^{(k)}_{(2)} +\cdots + \g^{(k)}_{(k+1)},
\quad k=1,\ldots, N+1,\\
&& \g^{r_{k-1}}_{(i)}= \op\sum_{k_1+\cdots+ k_i=k+1
-i}\left(\op\sum_{0\leq |\La_{k_j}|}
\g^{r_{k-1},\La_{k_1},\ldots,\La_{k_i}}_{(i)r_{k_1},\ldots,r_{k_i}}c^{r_{k_1}}_{\La_{k_1}}
\cdots c^{r_{k_i}}_{\La_{k_i}}\right), \\
&& \g^{(N+2)}=0,
\ee
where $\g^{(k)}_{(i)}$ are terms of polynomial degree $2\leq i\leq
k+1$ in ghosts. Then the nilpotent property (\ref{851}) of $\bbc$
falls into a set of equalities
\mar{w110,3}\ben
&& u^{(k+1)}(u^{(k)})
=0, \qquad 0\leq k\leq N-1,  \label{w110}\\
&& (u +\g^{(k+1)}_{(2)})(u^{(k)}) + u_{\rm HS}(\g^{(k)}_{(2)})=0,
\qquad 0\leq k\leq N+1, \label{w111} \\
&& \g_{(i)}^{(k+1)}(u^{(k)}) + u (\g_{(i-1)}^{(k)}) +
u_{\rm HS}(\g^{k}_{(i)}) + \label{w113}\\
&&\qquad \op\sum_{2\leq m\leq
i-1}\g_{(m)}(\g_{(i-m+1)}^{(k)})=0, \qquad  i-2\leq k\leq N+1,
\nonumber
\een
of ghost polynomial degree 1, 2 and $3\leq i\leq N+3$,
respectively.

The equalities (\ref{w110}) are exactly the gauge symmetry
conditions (\ref{850}) in Theorem \ref{lmp6}.

The equality (\ref{w111}) for $k=0$ reads
\mar{852}\beq
(u+ \g^{(1)})(u)=0, \qquad \op\sum_{0\leq |\La|} (d_\La(u^A)
\dr_A^\La u^B+ d_\La(\g^r)u^{B,\La}_r)=0. \label{852}
\eeq
It takes a form of the Lie antibracket
\mar{s12}\beq
[u,u]=-2\g^{(1)}(u)=-2\op\sum_{0\leq |\La|}
d_\La(\g^r)u^{B,\La}_r\dr_B \label{s12}
\eeq
of the odd gauge symmetry $u$. Its right-hand side is non-linear
in ghosts. Following Remark \ref{s8}, we treat it as a generalized
gauge symmetry factorizing through the gauge symmetry $u$. Thus,
we come to the following.

\begin{theorem} \label{830} \mar{830}
The gauge operator (\ref{w108}) admits the nilpotent extension
(\ref{w109}) only if the Lie antibracket of the odd gauge symmetry
$u$ (\ref{w33}) is a generalized gauge symmetry factorizing
through $u$.
\end{theorem}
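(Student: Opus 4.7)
The plan is to read off the assertion directly from the degree-2 piece of the nilpotency equation (\ref{851}), already isolated in equation (\ref{852}). First I would assume the BRST extension $\bbc$ (\ref{w109}) of the gauge operator $\bu$ exists and is nilpotent, and I would organize the identity $\bbc(\bbc)=0$ according to total polynomial degree in the ghosts $c^{r_k}_\La$. This is precisely the grading that produces the hierarchy (\ref{w110})--(\ref{w113}): (\ref{w110}) at polynomial degree 1, (\ref{w111}) at degree 2, and (\ref{w113}) at the higher degrees.

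Next I would single out the lowest-stage case $k=0$ of (\ref{w111}), namely the equality $(u+\g^{(1)})(u)=0$ of (\ref{852}). Since $u$ is an odd graded derivation of $P^{0}_\infty[0]$, the superbracket (\ref{ws14}) specializes to $[u,u]=2\,u\circ u$, so that (\ref{852}) rearranges to
\[
\tfrac{1}{2}[u,u]=u\circ u=-\g^{(1)}(u),
\]
which is exactly the identity (\ref{s12}). Writing out $\g^{(1)}(u)$ as a derivation of the structure ring, using that $u=u^B\dr/\dr s^B$ with $u^B$ linear in the ghost jets $c^r_\La$ with coefficients $u^{B,\La}_r=\eta(\Delta^B_r)^\La$, I get
\[
[u,u]=-2\,\op\sum_{0\le |\La|}d_\La(\g^r)\,u^{B,\La}_r\,\dr_B.
\]

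The decisive step is to interpret the right-hand side. Comparing with the gauge symmetry $u=\sum c^r_\La u^{B,\La}_r\dr_B$ (\ref{w33}), one sees that $[u,u]$ has \emph{precisely} the form of $u$ with the gauge parameters $c^r_\La$ replaced by the composite expressions $-2\,d_\La(\g^r)$; this is exactly the content of saying that $[u,u]$ factorizes through $u$. Because the substituted quantities $d_\La(\g^r)$ are polynomials of degree $\ge 2$ in the ghosts, the resulting symmetry is nonlinear in gauge parameters, hence a \emph{generalized} gauge symmetry in the sense of Remark \ref{s8} rather than a gauge symmetry in the strict sense of Definition \ref{sgauge}.

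The only real obstacle is a conceptual one: one must pin down what ``factorizing through $u$'' formally means in the graded setting. I would address it by fixing, once and for all, the identification of a (generalized) gauge symmetry with the prescription of ghost-valued coefficients to be inserted into the slots $c^r_\La$ of the fixed differential-operator expression $\sum c^r_\La u^{B,\La}_r\dr_B$; modulo this convention the proof reduces to the two-line algebraic computation above. No further work is needed, since the fact that $[u,u]$ is itself a variational (hence gauge) symmetry is automatic from the Lie superalgebra structure of $\ccG_L$ discussed after Lemma \ref{35l10}.
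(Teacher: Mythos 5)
Your proposal is correct and follows essentially the same route as the paper: the paper likewise extracts the ghost-degree-2 component (\ref{w111}) of the nilpotency condition (\ref{851}) at $k=0$, rewrites (\ref{852}) as the Lie antibracket identity (\ref{s12}), and reads off from $[u,u]=-2\sum_{0\leq|\La|}d_\La(\g^r)u^{B,\La}_r\dr_B$ that the antibracket is a generalized gauge symmetry obtained by substituting the ghost-quadratic expressions $d_\La(\g^r)$ into the parameter slots of $u$. Your added observations (that $[u,u]=2\,u\circ u$ for the odd derivation $u$, and that $[u,u]$ is automatically a variational symmetry by the Lie superalgebra structure of $\ccG_L$) are consistent with the paper and require no further justification.
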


The equalities (\ref{w111}) -- (\ref{w113}) for $k=1$ take a form
\mar{853,4}\ben
&& (u +\g^{(2)}_{(2)})(u^{(1)}) + u^{(1)}(\g^{(1)})=0, \label{853} \\
&& \g_{(3)}^{(2)}(u^{(1)}) + (u + \g^{(1)})(\g^{(1)})=0. \label{854}
\een
In particular, if a Lagrangian system is irreducible, i.e.,
$u^{(k)}=0$ and $\bu=u$, the BRST operator reads
\mar{hhh}\beq
 \bbc= u+\g^{(1)}=u^A\dr_A + \g^r\dr_r=
\op\sum_{0\leq|\La|} u^{A,\La}_r c^r_\La \dr_A +
\op\sum_{0\leq|\La|,|\Xi|}\g^{r,\La,\Xi}_{pq}c^p_\La c^q_\Xi
\dr_r. \label{hhh}
\eeq
In this case, the nilpotency conditions (\ref{853}) - (\ref{854})
are reduced to the equality
\mar{0691}\beq
(u + \g^{(1)})(\g^{(1)})=0. \label{0691}
\eeq
Furthermore, let a gauge symmetry $u$ be affine in fields $s^A$
and their jets. It follows from the nilpotency condition
(\ref{852}) that the BRST term $\g^{(1)}$ is independent of
original fields and their jets.  Then the relation (\ref{0691})
takes a form of the Jacobi identity
\mar{s11}\beq
\g^{(1)})(\g^{(1)})=0 \label{s11}
\eeq
for coefficient functions $\g^{r,\La,\Xi}_{pq}(x)$ in the Lie
antibracket (\ref{s12}).

The relations (\ref{s12}) and (\ref{s11}) motivate us to think of
the equalities (\ref{w111}) -- (\ref{w113}) in a general case of
reducible gauge symmetries as being {\it sui generis} generalized
commutation relations and Jacobi identities of gauge symmetries,
respectively \cite{jmp09,book09}. Based on Theorem \ref{830}, we
therefore say that non-trivial gauge symmetries are algebraically
closed (in the terminology of \cite{gom}) if the gauge operator
$\bu$ (\ref{w108'}) admits the nilpotent BRST extension $\bbc$
(\ref{w109}).

\begin{example} \label{44e1} \mar{44e1}
A Lagrangian system is called  abelian if its gauge symmetry $u$
is  abelian and the higher-stage gauge symmetries are independent
of original fields, i.e., if $u(\bu)=0$. It follows from the
relation (\ref{a1a}) that, in this case, the gauge operator itself
is the BRST operator $\bu=\bbc$. In particular, a Lagrangian
system with a variationally trivial Lagrangian (Remark \ref{asd})
is abelian, and $\bu$ (\ref{a3a}) is the BRST operator. The
topological BF theory exemplifies a reducible abelian Lagrangian
system (Section 6.4).
\end{example}

\subsection{Lagrangian BRST theory}

The DBGA $P^*_\infty\{N\}$ (\ref{w6}) is a particular
field-antifield theory of the following type
\cite{barn,lmp08,gom}.

Let us consider a pull-back composite bundle
\be
W=Z\op\times_X Z'\to Z\to X
\ee
where $Z'\to X$ is a vector bundle. Let us regard it as a graded
vector bundle over $Z$ possessing only odd part. The density-dual
$\ol{VW}$ of the vertical tangent bundle $VW$ of $W\to X$ is a
graded vector bundle
\be
\ol{VW}=((\ol Z'\op\oplus_Z V^*Z)\op\ot_Z\op\w^n T^*X)\op\oplus_Y
Z'
\ee
over $Z$ (cf. (\ref{41f4})). Let us consider the DBGA
$\cP^*_\infty[\ol{VW};Z]$ (\ref{41f5}) with the local generating
basis
\be
(z^a,\ol z_a), \qquad [\ol z_a]=([z^a]+1){\rm mod}\,2.
\ee
Its elements $z^a$ and $\ol z_a$ are called fields and antifields,
respectively.

Graded densities of this DBGA are endowed with the antibracket
\mar{f11}\beq
\{\gL \om,\gL'\om\}=\left[\frac{\op\dl^\lto \gL}{\dl \ol
z_a}\frac{\dl \gL'}{\dl z^a} +
(-1)^{[\gL']([\gL']+1)}\frac{\op\dl^\lto \gL'}{\dl \ol
z_a}\frac{\dl \gL}{\dl z^a}\right]\om. \label{f11}
\eeq
With this antibracket, one associates to any (even) Lagrangian
$\gL\om$ the odd vertical graded derivations
\mar{w37,lmp1}\ben
&&\up_\gL=\op\cE^\lto{}^a\dr_a=\frac{\op\dl^\lto \gL}{\dl \ol z_a}
\frac{\dr}{\dr z^a}, \label{w37}\\
&&\ol\up_\gL=\rdr^a\cE_a=\frac{\op\dr^\lto}{\dr \ol z_a}\frac{\dl
\gL}{\dl z^a}, \label{w37'}\\
&& \vt_\gL=\up_\gL+ \ol\up_\gL^l=(-1)^{[a]+1}\left(\frac{\dl \gL}{\dl
\ol z^a}\frac{\dr}{\dr z_a}+\frac{\dl \gL}{\dl z^a}\frac{\dr}{\dr
\ol z_a}\right),  \label{lmp1}
\een
such that
\be
\vt_\gL(\gL'\om)=\{\gL\om,\gL' \om\}.
\ee

\begin{theorem} \label{w39} \mar{w39} The following conditions are
equivalent.

(i) The antibracket of a Lagrangian $\gL\om$ is $d_H$-exact, i.e.,
\mar{w44}\beq
\{\gL\om,\gL\om\}=2\frac{\op\dl^\lto \gL}{\dl \ol z_a}\frac{\dl
\gL}{\dl z^a}\om =d_H\si. \label{w44}
\eeq

(ii) The graded derivation $\up$ (\ref{w37}) is a variational
symmetry of a Lagrangian $\gL\om$.

(iii) The graded derivation $\ol\up$ (\ref{w37'}) is a variational
symmetry of $\gL\om$.

(iv) The graded derivation $\vt_\gL$ (\ref{lmp1}) is nilpotent.
\end{theorem}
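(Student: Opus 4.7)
The plan is to route all four conditions through the horizontal density $\rho := \frac{\op\dl^\lto \gL}{\dl \ol z_a}\frac{\dl \gL}{\dl z^a}\om$. Since $\gL$ is even, the sign factor $(-1)^{[\gL]([\gL]+1)}$ in (\ref{f11}) equals $+1$, so the two summands of the symmetric antibracket coincide and $\{\gL\om,\gL\om\}=2\rho$, as already recorded in (\ref{w44}). Condition (i) is therefore precisely the $d_H$-exactness of $\rho$, and I would prove each of (ii), (iii), (iv) to be equivalent to this same $d_H$-exactness.

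For (i) $\Leftrightarrow$ (ii), I would observe that the odd vertical contact graded derivation $\up_\gL$ of (\ref{w37}) has components $\frac{\op\dl^\lto \gL}{\dl \ol z_a}$ along the field generators $z^a$ and zero along the antifield generators $\ol z_a$. Its interior product with the Euler -- Lagrange operator $\dl L$ therefore retains only the $z^a$-part, giving $\up_\gL\rfloor\dl L=\rho$. By item (iv) of Lemma \ref{35l10}, $\up_\gL$ is a variational symmetry of $\gL\om$ iff $\rho$ is $d_H$-exact, which is (i). An analogous computation for the right graded derivation $\ol\up_\gL$ of (\ref{w37'}), using the right-derivation conventions of Remark \ref{rr35}, produces $\rho$ up to an overall sign and so establishes (i) $\Leftrightarrow$ (iii).

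For (i) $\Leftrightarrow$ (iv), I would use that $\vt_\gL$ is odd, so $\vt_\gL\circ\vt_\gL=\tfrac12[\vt_\gL,\vt_\gL]$ is itself a graded derivation of the structure ring and consequently vanishes iff it vanishes on the generators $z^a$ and $\ol z_a$. A direct computation from (\ref{lmp1}), via the graded Leibniz rule and the contact prolongation implicit in $\vt_\gL$, identifies (up to overall signs) $\vt_\gL^2(z^a)$ with $\dl\rho/\dl \ol z_a$ and $\vt_\gL^2(\ol z_a)$ with $\dl\rho/\dl z^a$, i.e.\ with the components of the Euler -- Lagrange operator of $\rho$. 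Hence $\vt_\gL^2=0$ is equivalent to the vanishing of $\dl\rho$; by Theorem \ref{cmp26}, the resulting variationally trivial density $\rho$ is $d_H$-exact modulo a closed $n$-form on the body, and the product-of-variational-derivatives structure of $\rho$ rules out any nontrivial cohomology class, giving (i).

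The main technical obstacle is the Grassmann-sign and left/right derivation bookkeeping in step (iv): the formula (\ref{lmp1}) for $\vt_\gL$ carries the factor $(-1)^{[a]+1}$ and combines a left derivation along $z^a$ with the left translate of the right derivation along $\ol z_a$, so the identification $\vt_\gL^2(\cdot)=\pm\dl\rho/\dl(\cdot)$ demands a consistent application of the graded Leibniz rule together with the conventions of Remark \ref{rr35}. Once these signs are tracked correctly, however, all three non-trivial equivalences collapse to the tautology $\{\gL\om,\gL\om\}=2\rho$ combined with the standard fact that $d_H$-exactness of a horizontal density is detected by the vanishing of its variational derivatives.
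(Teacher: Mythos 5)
Your proposal is correct and follows essentially the same route as the paper: conditions (ii) and (iii) are reduced to (i) through the first variational formula, and (iv) is reduced to the vanishing of the variational derivative of the odd density $\rho=\op\cE^\lto{}^a\cE_a\om$. Two small points deserve attention. First, the identification $\vt_\gL^2(\cdot)=\pm\,\dl\rho/\dl(\cdot)$ does not follow from the graded Leibniz rule and sign bookkeeping alone: the paper's computation hinges on the graded symmetry of the second variational derivative, $(\dl\circ\dl)(L)=0$, i.e. $\eta(\dr_B\cE_A)^\La=(-1)^{[A][B]}\dr_A^\La\cE_B$, which is what converts the terms $\eta(\dr_b\cE_a)^\La$ produced by $\dl\rho$ into the prolonged components of $\vt_\gL$ acting on $\cE_b$; this identity should be named as an ingredient rather than filed under ``technical obstacle.'' Second, the passage from $\dl\rho=0$ back to $d_H$-exactness of $\rho$ is justified not by the ``product-of-variational-derivatives structure'' but by the fact that $\rho$ is an \emph{odd} density, so Corollary \ref{34c5} applies and the cohomological obstruction of Theorem \ref{cmp26} is absent.
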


\begin{proof} By virtue of the first variational formula
(\ref{g107}), conditions (ii) and (iii) are equivalent to
condition (i). The equality (\ref{w44}) is equivalent to that the
odd density $\op\cE^\lto{}^a\cE_a\om$ is variationally trivial.
Replacing right variational derivatives $\op\cE^\lto{}^a$ with
$(-1)^{[a]+1}\cE^a$, we obtain
\be
2\op\sum_a (-1)^{[a]}\cE^a\cE_a \om=d_H\si.
\ee
The variational operator acting on this relation results in the
equalities
\be
&&
\op\sum_{0\leq|\La|}(-1)^{[a]+|\La|}d_\La(\dr^\La_b(\cE^a\cE_a))=\\
&& \qquad
\op\sum_{0\leq|\La|}(-1)^{[a]}[\eta(\dr_b\cE^a)^\La\cE_{\La a} +
\eta(\dr_b\cE_a)^\La\cE^a_\La)]=0, \\
&& \op\sum_{0\leq|\La|}(-1)^{[a]+|\La|}d_\La(\dr^{\La
b}(\cE^a\cE_a)) = \\
&& \qquad \op\sum_{0\leq|\La|}(-1)^{[a]}[\eta(\dr^b\cE^a)^\La\cE_{\La a} +
\eta(\dr^b\cE_a)\cE^a_\La] = 0.
\ee
Due to the identity
\be
(\dl\circ\dl)(L)=0, \qquad
\eta(\dr_B\cE_A)^\La=(-1)^{[A][B]}\dr_A^\La\cE_B,
\ee
we obtain
\be
&& \op\sum_{0\leq|\La|}(-1)^{[a]}[(-1)^{[b]([a]+1)}\dr^{\La
a}\cE_b\cE_{\La
a} + (-1)^{[b][a]}\dr_a^\La\cE_b\cE^a_\La]=0, \\
&& \op\sum_{0\leq|\La|}(-1)^{[a]+1}[(-1)^{([b]+1)([a]+1)}\dr^{\La
a}\cE^b\cE_{\La a} + (-1)^{([b]+1)[a]}\dr_a^\La\cE^b\cE^a_\La]=0
\ee
for all $\cE_b$ and $\cE^b$. This is exactly condition (iv).
\end{proof}

The equality (\ref{w44}) is called the classical master equation.
For instance, any variationally trivial Lagrangian satisfies the
master equation. A solution of the master equation (\ref{w44}) is
called non-trivial if both the derivations (\ref{w37}) and
(\ref{w37'}) do not vanish.

Being an element of the DBGA $\cP^*_\infty\{N\}$ (\ref{w6}), an
original Lagrangian $L$ obeys the master equation (\ref{w44}) and
yields the graded derivations $\up_L=0$ (\ref{w37}) and
$\ol\up_L=\ol\dl$ (\ref{w37'}), i.e., it is a trivial solution of
the master equation.

The graded derivations (\ref{w37}) -- (\ref{w37'}) associated to
the extended Lagrangian $L_e$ (\ref{lmp2}) are extensions
\be
&& \up_e= \bu+ \frac{\op\dl^\lto \cL^*_1}{\dl \ol s_A}\frac{\dr}{\dr
s^A} + \op\sum_{0\leq k\leq N} \frac{\op\dl^\lto \cL^*_1}{\dl \ol
c_{r_k}}\frac{\dr}{\dr c^{r_k}}, \\
&& \ol\up_e= \dl_{\rm KT} +
\frac{\rdr }{\dr \ol s_A}\frac{\dl \cL_1}{\dl s^A}
\ee
of the gauge and Koszul -- Tate operators, respectively. However,
the Lagrangian $L_e$ need not satisfy the master equation.
Therefore, let us consider its extension
\mar{w61}\beq
L_E=L_e+L'=L+L_1+L_2+\cdots \label{w61}
\eeq
by means of even densities $L_i$, $i\geq 2$, of zero antifield
number and polynomial degree $i$ in ghosts. The corresponding
graded derivations (\ref{w37}) -- (\ref{w37'}) read
\mar{w102,3}\ben
&& \up_E= \up_e+ \frac{\op\dl^\lto \cL'}{\dl \ol
s_A}\frac{\dr}{\dr s^A} + \op\sum_{0\leq k\leq N}
\frac{\op\dl^\lto \cL'}{\dl \ol
c_{r_k}}\frac{\dr}{\dr c^{r_k}}, \label{w102} \\
&& \ol\up_E= \ol\up_e + \frac{\rdr }{\dr \ol
s_A}\frac{\dl\cL'}{\dl s^A} + \op\sum_{0\leq k\leq N} \frac{\rdr
}{\dr \ol c_{r_k}}\frac{\dl \cL'}{\dl c^{r_k}}. \label{w103}
\een
The Lagrangian $L_E$ (\ref{w61}) where $L+L_1=L_e$ is called a
proper extension of an original Lagrangian $L$. The following is a
corollary of Theorem \ref{w39}.

\begin{corollary} \label{w120} \mar{w120} A Lagrangian $L$ is extended
to a  proper  solution $L_E$ (\ref{w61}) of the master equation
only if the gauge operator $\bu$ (\ref{w108}) admits a nilpotent
extension.
\end{corollary}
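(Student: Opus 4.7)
The plan is to apply Theorem \ref{w39} and then extract the nilpotent BRST extension by projecting out antifields. Suppose $L_E=L+L_1+L_2+\cdots$ of (\ref{w61}) is a proper solution of the master equation (\ref{w44}). By the equivalence (i)$\,\Leftrightarrow\,$(iv) of Theorem \ref{w39}, the odd vertical graded derivation $\vt_{L_E}$ of (\ref{lmp1}) on $\cP^*_\infty\{N\}$ is nilpotent, $\vt_{L_E}\circ\vt_{L_E}=0$.

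Next I would construct a candidate BRST operator. Let $\pi:\cP^*_\infty\{N\}\to P^*_\infty\{N\}$ be the graded-algebra epimorphism setting every antifield $\ol s_A$ and $\ol c_{r_k}$ to zero. Define an odd graded derivation $\bbc$ of ghost number $1$ on $P^*_\infty\{N\}$ by prescribing its action on the generators,
\[
\bbc(s^A)=\pi\!\left(\frac{\op\dl^\lto \cL_E}{\dl\ol s_A}\right), \qquad \bbc(c^{r_k})=\pi\!\left(\frac{\op\dl^\lto \cL_E}{\dl\ol c_{r_k}}\right),
\]
and extending by the graded Leibniz rule and commutation with the total derivatives. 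Since $L$ contributes nothing to these variations, while $L_1=\sum_{0\le k\le N}c^{r_k}\Delta_{r_k}\om$ contributes exactly the components of the gauge operator $\bu$ in (\ref{w108'}) (by the structure of $\Delta_{r_k}$ in (\ref{v92'})), and each higher density $L_i$, $i\ge 2$, being polynomial of degree $i$ in ghosts, contributes antifield-free terms polynomial of degree $\ge 2$ in the ghosts. Collecting the latter into $\g=\g^{(1)}+\cdots+\g^{(N+1)}$ we find $\bbc=\bu+\g$ with exactly the shape (\ref{w109}).

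It remains to deduce $\bbc\circ\bbc=0$ from $\vt_{L_E}\circ\vt_{L_E}=0$. Filter $\cP^*_\infty\{N\}$ by polynomial degree in antifields. The second summand of $\vt_{L_E}$, namely $(-1)^{[a]+1}(\dl \cL_E/\dl z^a)(\dr/\dr\ol z_a)$, vanishes on the subalgebra $P^*_\infty\{N\}$ because it carries an explicit $\dr/\dr\ol z_a$; only the first summand survives, and after $\pi$ it acts on generators as $\bbc$. Verifying that $\vt_{L_E}$ preserves the antifield-ideal filtration, the identity $\vt_{L_E}^2=0$ descends to the quotient by this ideal and yields $\bbc^2=0$ on generators; by the derivation property, $\bbc^2=0$ throughout $P^*_\infty\{N\}$, so $\bbc$ is the required nilpotent extension of $\bu$.

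The main obstacle is this last descent: one must check that no antifield-free contribution to $\vt_{L_E}^2(\phi)$, for $\phi\in P^*_\infty\{N\}$, is produced by the $(\dl \cL_E/\dl z^a)(\dr/\dr\ol z_a)$ branch acting on intermediate antifield-containing terms generated by the first application of $\vt_{L_E}$. This reduces to careful bookkeeping on the polynomial degree in antifields inside the variational derivatives of the $L_i$, to ensure that the filtration is strictly preserved under composition so that its degree-zero component isolates $\bbc^2$. Once this is settled the corollary follows directly.
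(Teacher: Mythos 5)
Your proposal aims at a strictly stronger conclusion than the paper draws, and the step you defer as ``careful bookkeeping'' is a genuine obstruction, not a technicality. The paper's own argument is essentially one line: by condition (iv) of Theorem \ref{w39}, a proper solution $L_E$ (\ref{w61}) of the master equation (\ref{w44}) makes the derivation $\vt_E=\up_E+\ol\up^l_E$ (\ref{lmp1}) nilpotent on the full field--antifield algebra $\cP^*_\infty\{N\}$, and $\up_E$ extends the gauge operator $\bu$ by construction of $L_e$; this KT-BRST operator $\vt_E$ is the ``nilpotent extension'' the corollary refers to, as the sentence following the corollary makes explicit. You instead try to manufacture the antifield-free BRST operator $\bbc=\bu+\g$ of (\ref{w109}) by setting antifields to zero in the variational derivatives $\op\dl^\lto\cL_E/\dl\ol z_a$.

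That descent fails in general. The branch $\ol\up^l_E$ of $\vt_E$ does not preserve the ideal $I$ generated by antifields: it sends $\ol s_A$ to $\pm\cE_A$, which is antifield-free. Hence for antifield-free $\phi$ one has $\vt_E(\phi)=\bbc(\pi(\phi))+\chi$ with $\chi\in I$, and $\chi\neq 0$ precisely when $L_E$ is not affine in antifields (e.g. when $L_2$ contains terms quadratic in $\ol s_A$, the hallmark of an open gauge algebra). Applying $\vt_E$ again, the term $\ol\up^l_E(\chi)$ produces antifield-free contributions proportional to the variational derivatives $\cE_A$, so projecting $\vt_E^2(\phi)=0$ yields only $\bbc^2(\phi)\ap 0$ on-shell rather than $\bbc^2=0$. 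This is exactly why Theorem \ref{lmp6} imposes off-shell conditions as necessary for the existence of $\bbc$, and why Theorem \ref{w130} is a separate, non-trivially converse statement whose proof shows that when $\bbc$ exists $L_E$ is affine in antifields and $\bbc=\up_E$. To repair your argument, either retreat to the paper's reading --- take the nilpotent extension to be $\vt_E$ itself, which requires no descent --- or add the hypothesis that $L_E$ is affine in antifields, in which case $\chi=0$ and your projection does yield a nilpotent $\bbc$.
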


By virtue of condition (iv) of Theorem \ref{w39}, this nilpotent
extension is the derivation $\vt_E=\up_E +\ol\up^l_E$
(\ref{lmp1}), called the  KT-BRST operator. With this operator,
the module of densities $P^{0,n}_\infty\{N\}$ is split into the
KT-BRST complex
\mar{lmp12}\ben
&& \cdots\ar \cP^{0,n}_\infty\{N\}_2\ar \cP^{0,n}_\infty\{N\}_1\ar
\cP^{0,n}_\infty\{N\}_0\ar \label{lmp12}\\
&& \qquad  \cP^{0,n}_\infty\{N\}^1\ar \cP^{0,n}_\infty\{N\}^2\ar
\cdots.\nonumber
\een
Putting all ghosts zero, we obtain a cochain morphism of this
complex onto the Koszul -- Tate complex, extended to
$\ol\cP^{0,n}_\infty\{N\}$ and reversed into the cochain one.
Letting all Noether antifields zero, we come to a cochain morphism
of the KT-BRST complex (\ref{lmp12}) onto the cochain sequence
(\ref{w108}), where the gauge operator is extended to the
antifield-free part of the KT-BRST operator.

\begin{theorem} \label{w130} \mar{w130} If the gauge
operator $\bu$ (\ref{w108}) can be extended to the BRST operator
$\bbc$ (\ref{w109}), then the master equation has a non-trivial
proper solution
\mar{w133}\ben
&& L_E=L_e + \op\sum_{1\leq k\leq N}\g^{r_{k-1}}\ol
c_{r_{k-1}}\om=\label{w133}\\
&& \qquad  L+\bbc\left( \op\sum_{0\leq k\leq N} c^{r_{k-1}}\ol c_{r_{k-1}}\right) \om
+d_H\si. \nonumber
\een
\end{theorem}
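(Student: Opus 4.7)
The plan is to prove the theorem in two stages: first verify that the two expressions for $L_E$ in (\ref{w133}) agree, and then verify that $L_E$ satisfies the classical master equation.

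For the first stage, I would begin by invoking Theorem \ref{lmp6}: the existence of a nilpotent BRST extension forces each Koszul--Tate generator $\Delta_{r_k}$ in (\ref{v92'}) to reduce to its part linear in antifields, so the corrections $h_{r_k}$ vanish. Consequently, the remainder term $L_1^*$ in the decomposition (\ref{lmp2}) is zero, giving
\[
L_e \;=\; L + \bu\!\left(\sum_{0\leq k\leq N} c^{r_{k-1}}\ol c_{r_{k-1}}\right)\!\om + d_H\si.
\]
Writing $\bbc = \bu + \g$ with $\g = \sum_k \g^{r_{k-1}}\dr/\dr c^{r_{k-1}}$ antifield-free, and noting that $\g$ acts on the ghost factor of each summand $c^{r_{k-1}}\ol c_{r_{k-1}}$ to produce $\g^{r_{k-1}}\ol c_{r_{k-1}}$, one obtains $\bbc(Q) = \bu(Q) + \sum \g^{r_{k-1}}\ol c_{r_{k-1}}$, which yields both equalities asserted in (\ref{w133}).

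For the second stage, I would apply the equivalence (i)$\Leftrightarrow$(iv) of Theorem \ref{w39}, reducing the master equation to the nilpotency of the graded derivation $\vt_{L_E}=\up_{L_E}+\ol\up_{L_E}^l$ associated to $L_E$. The crux is to identify $\vt_{L_E}$ with the KT--BRST operator built from $\bbc$ and $\dl_{\rm KT}$. Computing the right variational derivatives $\op\dl^\lto L_E/\dl\ol s_A$ and $\op\dl^\lto L_E/\dl\ol c_{r_{k-1}}$, with the help of the integration-by-parts identities (\ref{qq1a})--(\ref{qq1d}) applied to the bilinear term $\bbc(Q)\om$, one recovers on the fields exactly the BRST transformations $u^A$ and $u^{r_{k-1}}+\g^{r_{k-1}}$. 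Dually, the left variational derivatives $\dl L_E/\dl s^A$ and $\dl L_E/\dl c^{r_k}$ extend the Koszul--Tate action on antifields by contributions coming from the BRST terms $\g^{(k)}$, reproducing $\dl_{\rm KT}$ together with the consistency corrections it acquires under a proper extension.

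Given this identification, nilpotency of $\vt_{L_E}$ splits into two families of conditions: nilpotency on the fields/ghosts $(s^A,c^{r_k})$, which is exactly the hypothesis $\bbc^2=0$, and nilpotency on the antifields $(\ol s_A,\ol c_{r_k})$, which unpacks into the gauge-symmetry closure relations (\ref{w110})--(\ref{w113}) together with the higher-stage Noether identities (\ref{v93}). Under the BRST hypothesis all of these hold \emph{off-shell} (again by Theorem \ref{lmp6}), so every cross term in $\vt_{L_E}^2$ is killed and the master equation (\ref{w44}) follows. Properness of the solution is automatic since the antifield-linear part of $L_E$ coincides with $L_e-L = \sum c^{r_k}\Delta_{r_k}\om$, which is precisely $L_1$ in (\ref{w61}).

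The main obstacle is the bookkeeping in stage two: one must show that the variational derivatives of the mixed bilinear $\bbc(Q)\om$ produce, after repeated integrations by parts, exactly the operators $\bbc$ and (corrected) $\dl_{\rm KT}$ with the correct signs in the graded setting, and that the higher-stage closure conditions translate into the vanishing of $\vt_{L_E}^2$ on each generator. This is a standard but delicate computation; once the identification $\vt_{L_E}\leftrightarrow \bbc+\dl_{\rm KT}$ is established, the nilpotency is a direct consequence of the assumptions and of the strict (off-shell) validity of the structure equations provided by the BRST hypothesis.
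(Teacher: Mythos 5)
Your first stage reproduces the paper's opening move exactly: Theorem \ref{lmp6} kills the terms $h_{r_k}$ in the densities $\Delta_{r_k}$ (\ref{v92'}), so $L_e$ and $L_E$ are affine in antifields and the two expressions in (\ref{w133}) agree. That part is fine.

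The gap is in your second stage. You route the argument through the equivalence (i)$\Leftrightarrow$(iv) of Theorem \ref{w39}, which obliges you to prove nilpotency of the \emph{full} operator $\vt_{L_E}=\up_{L_E}+\ol\up_{L_E}^l$, including its action on the antifields $\ol s_A$, $\ol c_{r_k}$. Your claim that this nilpotency ``splits into two families of conditions'' --- one on fields/ghosts (which is $\bbc^2=0$) and one on antifields (the off-shell relations (\ref{w110})--(\ref{w113}) and (\ref{v93})) --- is not an accurate account of what must be checked: $\vt_{L_E}$ sends $\ol s_A$ to $\pm\dl L_E/\dl s^A$, which depends on antifields, so $\vt_{L_E}^2(\ol s_A)$ mixes $\up_{L_E}$ and $\ol\up_{L_E}$ and its vanishing is, by the very proof of Theorem \ref{w39}, nothing but the variational derivative of the master equation with respect to $s^A$. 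Verifying (iv) generator by generator is therefore essentially equivalent to verifying (i) itself, and the computation you defer as ``standard but delicate'' is the whole content of the theorem. As written, the proposal does not close this.

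The paper avoids all of this by using the equivalence (i)$\Leftrightarrow$(ii) instead. Since $L_E$ is affine in antifields, the right variational derivatives $\op\dl^\lto{}^A(\cL_E)$ and $\op\dl^\lto{}^{r_k}(\cL_E)$ are exactly the components $\bbc^A$, $\bbc^{r_k}$ of the BRST operator, so $\up_E=\bbc$; applying $\bbc$ to $\cL_E$ then acts only on the antifield coefficients, giving $\bbc(\bbc^A)\ol s_A+\bbc(\bbc^{r_k})\ol c_{r_k}$ plus the $d_H$-exact term $\bbc(L)=u(L)$, and the first group vanishes by $\bbc^2=0$. Hence $\up_E$ is a variational symmetry of $L_E$ and the master equation follows immediately. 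If you want to salvage your route, you should either carry out the direct computation of $\vt_{L_E}^2$ on every generator (acknowledging the cross terms), or switch to condition (ii), where the affineness in antifields does all the work.
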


\begin{proof} By virtue of Theorem \ref{lmp6}, if the BRST operator
$\bbc$ (\ref{w109}) exists, the densities $\Delta_{r_k}$
(\ref{v92'}) contain only the terms $G_{r_k}$ linear in
antifields. It follows that the extended Lagrangian $L_e$
(\ref{w8}) and, consequently, the Lagrangian $L_E$ (\ref{w133})
are affine in antifields. In this case, we have
\be
u^A=\op\dl^\lto{}^A(\cL_e), \qquad
u^{r_k}=\op\dl^\lto{}^{r_k}(\cL_e)
\ee
for all indices $A$ and $r_k$ and, consequently,
\be
\bbc^A=\op\dl^\lto{}^A(\cL_E), \qquad
\bbc^{r_k}=\op\dl^\lto{}^{r_k}(\cL_E),
\ee
i.e., $\bbc=\up_E$ is the graded derivation (\ref{w102}) defined
by the Lagrangian $L_E$. Its nilpotency condition takes a form
\be
\bbc(\op\dl^\lto{}^A(\cL_E))=0,\qquad
\bbc(\op\dl^\lto{}^{r_k}(\cL_E))=0.
\ee
Hence, we obtain
\be
\bbc(\cL_E)=\bbc(\op\dl^\lto{}^A(\cL_E)\ol s_A +
\op\dl^\lto{}^{r_k}(\cL_E) \ol c_{r_k})=0,
\ee
i.e., $\bbc$ is a variational symmetry of $L_E$. Consequently,
$L_E$ obeys the master equation.
\end{proof}

For instance, let a gauge symmetry $u$ be abelian, and let the
higher-stage gauge symmetries be independent of original fields,
i.e., $u(\bu)=0$. Then  $\bu=\bbc$ and $L_E=L_e$.

The proper solution $L_E$ (\ref{w133}) of the master equation is
called the  BRST extension of an original Lagrangian $L$.

\subsection{Appendix. Noether identities of differential operators}

Noether identities of a Lagrangian system in Section 4.1 are
particular Noether identities of differential operators which are
described in homology terms as follows \cite{oper}.

Let $E\to X$ be a vector bundle, and let $\cE$ be a $E$-valued
$k$-order differential operator on a fibre bundle $Y\to X$. It is
represented by a section $\cE^a$ of the pull-back bundle
\be
J^kY\op\times E\to J^kY
\ee
endowed with bundle coordinates $(x^\la, y^j_\Si,\chi^a)$,
$0\leq|\Si|\leq k$ \cite{bry,book,kras}.

\begin{definition} \label{46d1} \mar{46d1}
One says that a differential operator $\cE$ obeys Noether
identities if there exist an $r$-order differential operator
$\Phi$ on the pull-back bundle
\mar{46f1}\beq
E_Y=Y\op\times_X \label{46f1}E\to X
\eeq
such that its restriction onto $E$ is a linear differential
operator and its kernel contains $\cE$, i.e.,
\mar{45f1}\beq
\Phi=\op\sum_{0\leq|\La|} \Phi_a^\La \chi^a_\La, \qquad
\op\sum_{0\leq|\La|} \Phi_a^\La \cE^a_\La=0. \label{45f1}
\eeq
\end{definition}

Any differential operator admits Noether identities, e.g.,
\mar{v44}\beq
\Phi= \op\sum_{0\leq|\La|,|\Si|} T_{ab}^{\La\Si}d_\Si\cE^b
\chi^a_\La, \qquad T_{ab}^{\La\Si}=- T_{ba}^{\Si\La}. \label{v44}
\eeq
Therefore, they  must be separated into the trivial and
non-trivial ones.

\begin{lemma} \label{v203} \mar{v203}
One can associate to $\cE$ a chain complex whose boundaries vanish
on Ker$\cE$.
\end{lemma}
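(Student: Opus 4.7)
The plan is to mirror the construction of Lemma 4.1.1, where the Euler--Lagrange operator $\dl L$ gave rise to the chain complex (\ref{v042}), but with the arbitrary $E$-valued differential operator $\cE$ in its place. The idea is to introduce antifield generators dual to the components $\cE^a$, package them into a differential bigraded algebra, and equip it with a Koszul--Tate-type right graded derivation $\ol\dl$ whose one-boundaries are total-derivative-dressed linear combinations of the $\cE^a$; such expressions then vanish identically on $\mathrm{Ker}\,\cE$ by construction.

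First, I would form the density-dual $\ol E=E^*\op\ot_X\op\w^n T^*X$ and regard it as a graded vector bundle over $X$ of purely odd parity, with Grassmann parity opposite to that of $\chi^a$. Pulling back over $Y$ and passing to infinite-order jets gives the DBGA $\cP^*_\infty[\ol E;Y]$ with local generating basis $(y^j,\ol\chi_a)$ satisfying $[\ol\chi_a]=([\chi^a]+1)\,\mathrm{mod}\,2$ and antifield number $\mathrm{Ant}[\ol\chi_a]=1$. On this algebra I would introduce the right graded derivation $\ol\dl=\rdr^a\cE^a$, determined by $\ol\dl(\ol\chi_a)=\cE^a$ and extended by the graded Leibniz rule and by commutation with the total derivatives $d_\la$. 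The resulting complex, graded by antifield number, reads
\[
0\lto \im\ol\dl \llr^{\ol\dl} \cP^{0,n}_\infty[\ol E;Y]_1 \llr^{\ol\dl} \cP^{0,n}_\infty[\ol E;Y]_2.
\]

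The boundary-vanishing claim is then essentially tautological: any one-chain has the form $\Phi=\sum\Phi^{a,\La}\ol\chi_{\La a}\,\om$, and its boundary
\[
\ol\dl\Phi=\op\sum_{0\leq|\La|}\Phi^{a,\La}\, d_\La\cE^a\,\om
\]
is a linear combination of total derivatives of the components of $\cE$, hence vanishes on $\mathrm{Ker}\,\cE$. The only step requiring genuine attention is checking that $\ol\dl$ is a well-defined nilpotent odd graded derivation of antifield number $-1$; this reduces to the oddness of the $\ol\chi_a$ relative to the $\chi^a$ and to the absence of any $y^j$-differentiation in $\ol\dl$, so that no compatibility relations among distinct components $\cE^a$ need be imposed. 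Consequently the construction goes through for an arbitrary differential operator, in contrast to the Lagrangian setting of Lemma 4.1.1 where the additional variational structure of $\cE_A$ is exploited in the subsequent enlargements of the complex that encode non-trivial Noether identities.
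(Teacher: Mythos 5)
Your construction is correct and matches the paper's proof in all essentials: one adjoins odd antifield generators in bijection with the components $\cE^a$, equips the resulting graded algebra with the right graded derivation $\ol\dl=\rdr_a\cE^a$ sending each generator to the corresponding component, and observes that one-boundaries are by construction linear combinations of total derivatives of the $\cE^a$, hence vanish on ${\rm Ker}\,\cE$; nilpotency is automatic since $\cE^a$ is independent of the antifields. The only cosmetic difference is that the paper models the odd generators $\ve^a$ on the pull-back bundle $E_Y=Y\times_X E$ itself and works with graded functions $\cS^0_\infty[E_Y;Y]$, rather than with densities over the density-dual $\ol E$ as you do (the latter mirrors Lemma \ref{41l1} for the Euler--Lagrange operator), but this choice of twist does not affect the argument.
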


\begin{proof}
Let us consider the composite graded manifold $(Y,\gA_{E_Y})$
modelled over the vector bundle $E_Y\to Y$. Let
$\cS^0_\infty[E_Y;Y]$ be the ring of graded functions on the
infinite order jet manifold $J^\infty Y$ possessing the local
generating basis $(y^i,\ve^a)$ of Grassmann parity $[\ve^a]=1$. It
is provided with the nilpotent graded derivation
\mar{vv}\beq
\ol\dl=\rdr_a\cE^a. \label{vv}
\eeq
whose definition is independent of the choice of the local
generating basis. Then we have the chain complex
\mar{v042x}\beq
0\lto \im\dl \llr^{\ol\dl} \cS^0_\infty[E_Y;Y]_1 \llr^{\ol\dl}
\cS^0_\infty[E_Y;Y]_2 \label{v042x}
\eeq
of graded functions of antifield number $k\leq 2$. Its
one-boundaries $\ol\dl\Phi$, $\Phi\in \cS^0_\infty[E_Y;Y]_2$,  by
very definition, vanish on Ker$\cE$.
\end{proof}

Every one-cycle
\mar{0712x}\beq
\Phi= \op\sum_{0\leq|\La|} \Phi_a^\La \ve^a_\La\in
\cS^0_\infty[E_Y;Y]_1 \label{0712x}
\eeq
of the complex (\ref{v042x}) defines a linear differential
operator on pull-back bundle $E_Y$ (\ref{46f1}) such that it is
linear on $E$ and its kernel contains $\cE$, i.e.,
\mar{0713x}\beq
\dl \Phi=0, \qquad \op\sum_{0\leq|\La|} \Phi_a^\La d_\La \cE^a =0.
\label{0713x}
\eeq
In accordance with Definition \ref{46d1}, the one-cycles
(\ref{0712x}) define the Noether identities (\ref{0713x}) of a
differential operator $\cE$. These Noether identities are trivial
if a cycle is a boundary, i.e., it takes a form (\ref{v44}).
Accordingly, non-trivial Noether identities modulo the trivial
ones are associated to elements of the homology $H_1(\dl)$ of the
complex (\ref{0712x}).

A differential operator is called degenerate if it obeys
non-trivial Noether identities.

One can say something more if the $\cO_\infty^0$-module $H_1(\dl)$
is finitely generated, i.e., it possesses the following particular
structure. There are elements $\Delta\in H_1(\dl)$ making up a
projective $C^\infty(X)$-module $\cC_{(0)}$ of finite rank which,
by virtue of the Serre -- Swan theorem, is isomorphic to the
module of sections of some vector bundle $E_0\to X$. Let
$\{\Delta^r\}$:
\mar{71}\beq
\Delta^r=\op\sum_{0\leq|\La|} \Delta_a^{\La r} \ve^a_\La, \qquad
\Delta_a^{\La r}\in \cO_\infty^0, \label{v71}
\eeq
be local bases for this $C^\infty(X)$-module. Then every element
$\Phi\in H_1(\dl)$ factorizes as
\mar{v63}\beq
\Phi= \op\sum_{0\leq|\Xi|} G_r^\Xi d_\Xi \Delta^r, \qquad
G_r^\Xi\in \cO_\infty Y, \label{v63}
\eeq
through elements of $\cC_{(0)}$, i.e., any Noether identity
(\ref{0713x}) is a corollary of the Noether identities
\mar{v64x}\beq
 \op\sum_{0\leq|\La|} \Delta_a^{\La r} d_\La \cE^a=0,
\label{v64x}
\eeq
called complete Noether identities.

\begin{remark} \label{46n1} \mar{46n1}
Given an integer $N\geq 1$, let $E_1, \ldots, E_N$ be vector
bundles over $X$. Let us denote
\be
\cP^0_\infty\{N\}=\cS_\infty^0[E_{N-1}\op\oplus_X\cdots\op\oplus_X
E_1\op\oplus_XE_Y;Y\op\times_X E_0\op\oplus_X\cdots\op\oplus_X
E_N]
\ee
if $N$ is even and
\be
\cP^0_\infty\{N\}=\cS_\infty^0[E_N\op\oplus_X\cdots\op\oplus_X
E_1\op\oplus_XE_Y;Y\op\times_X E_0\op\oplus_X\cdots\op\oplus_X
E_{N-1}]
\ee
if $N$ is odd.
\end{remark}

\begin{lemma} \label{v137} \mar{v137}
If the homology $H_1(\dl)$ of the complex (\ref{v042x}) is
finitely generated, this complex  can be extended to the one-exact
complex (\ref{v66x}) with a boundary operator whose nilpotency
conditions are equivalent to complete Noether identities.
\end{lemma}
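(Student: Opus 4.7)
The plan is to mirror the proof of Lemma \ref{41l2} (the graded Lagrangian analogue), now in the setting of general differential operators. First I would apply the classical Serre--Swan theorem to the finitely generated projective $C^\infty(X)$-module $\cC_{(0)}\subset H_1(\ol\dl)$ to realise it as the section module of some vector bundle $E_0\to X$, with the family $\{\Delta^r\}$ in (\ref{v71}) arising from a local basis of $E_0$. Then I would enlarge the ring $\cS^0_\infty[E_Y;Y]$ by adjoining new graded generators $c_r$ dual to $\Delta^r$, of Grassmann parity $[c_r]=([\Delta^r]+1)\bmod 2$ and antifield number $\mathrm{Ant}[c_r]=2$, obtaining a ring $\cP^0_\infty\{0\}$ in the sense of Remark \ref{46n1}.

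Next, I would extend $\ol\dl$ to the right graded derivation $\dl_0=\ol\dl+\rdr^r\Delta^r$ on $\cP^0_\infty\{0\}$. Since $\dl_0^2=0$ on $\ve^a$ automatically and $\dl_0^2(c_r)=\ol\dl\Delta^r$, nilpotency of $\dl_0$ is strictly equivalent to $\ol\dl\Delta^r=0$ for every $r$, which is precisely the system (\ref{v64x}) of complete Noether identities. This produces the desired extended chain complex
\be
0\lto\im\ol\dl\llr^{\ol\dl}\cS^0_\infty[E_Y;Y]_1\llr^{\dl_0}\cP^0_\infty\{0\}_2\llr^{\dl_0}\cP^0_\infty\{0\}_3.
\ee

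Finally I would check one-exactness. Since $\cC_{(0)}$ generates $H_1(\ol\dl)$, any $\dl_0$-cycle $\Phi\in\cS^0_\infty[E_Y;Y]_1$ is a $\ol\dl$-cycle (it carries no $c_r$-dependence) and, modulo a trivial Noether identity of the form (\ref{v44}) (which is already a $\ol\dl$-boundary, hence a $\dl_0$-boundary), factorises as $\Phi=\sum_{|\Xi|\geq 0}G_r^\Xi d_\Xi\Delta^r$ as in (\ref{v63}). This factorisation exhibits $\Phi$ as the $\dl_0$-image of $\sum_{|\Xi|\geq 0}G_r^\Xi c_{r,\Xi}$, so $H_1(\dl_0)=0$; vanishing of $H_0(\dl_0)$ is immediate from the definition of $\im\ol\dl$.

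The only real obstacle is bookkeeping: one must fix sign conventions for the right derivation $\rdr^r$ and the parity assignment of $c_r$ so that $\dl_0$ is genuinely an odd graded derivation of degree $-1$ in antifield number, and then verify that the computation of $\dl_0^2$ reproduces exactly the complete Noether identities with no spurious terms. Once these conventions are aligned, the argument becomes a direct transcription of the proof of Lemma \ref{41l2}, with the graded Euler--Lagrange variational derivatives $\cE_A$ replaced throughout by the differential operator components $\cE^a$.
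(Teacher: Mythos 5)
Your proposal is correct and follows essentially the same route as the paper's proof: Serre--Swan realises $\cC_{(0)}$ as sections of a vector bundle $E_0\to X$, the ring is enlarged by even antifield-number-two generators (the paper's $\ve^r$, your $c_r$), the derivation $\dl_0=\ol\dl+\rdr_r\Delta^r$ is nilpotent iff the complete Noether identities (\ref{v64x}) hold, and one-exactness follows because any one-cycle, up to a boundary, factorizes as in (\ref{v63}) and is therefore the $\dl_0$-image of $\sum_\Xi G_r^\Xi\ve^r_\Xi$. The bookkeeping you flag (parity and sign conventions for the right derivation) is handled in the paper exactly as you anticipate, so no gap remains.
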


\begin{proof}
Let us consider the graded commutative ring $\cP_\infty^0\{0\}$.
It possesses the local generating basis $\{y^i, \ve^a, \ve^r\}$ of
Grassmann parity $[\ve^r]=0$ and antifield number Ant$[\ve^r]=2$.
This ring is provided with the nilpotent graded derivation
\mar{v204}\beq
\dl_0=\dl + \rdr_r\Delta^r. \label{v204}
\eeq
Its nilpotency conditions are equivalent to the complete Noether
identities (\ref{v64x}). Then the module $\cP_\infty^0\{0\}_{\leq
3}$ of graded functions of antifield number $\leq 3$ is decomposed
into the chain complex
\mar{v66x}\beq
0\lto \im\dl \llr^\dl \cS_\infty^0[E_Y;Y]_1\llr^{\dl_0}
\cP_\infty^0\{0\}_2\llr^{\dl_0} \cP_\infty^0\{0\}_3. \label{v66x}
\eeq
Let $H_*(\dl_0)$ denote its homology. We have
\be
H_0(\dl_0)=H_0(\dl)=0.
\ee
Furthermore, any one-cycle $\Phi$ up to a boundary takes the form
(\ref{v63}) and, therefore, it is a $\dl_0$-boundary
\be
\Phi= \op\sum_{0\leq|\Si|} G_r^\Xi d_\Xi \Delta^r
=\dl_0\left(\op\sum_{0\leq|\Si|} G_r^\Xi \ve^r_\Xi\right).
\ee
Hence, $H_1(\dl_0)=0$, i.e., the complex (\ref{v66x}) is
one-exact.
\end{proof}

Let us consider the second homology $H_2(\dl_0)$ of the complex
(\ref{v66x}). Its two-chains  read
\mar{v77}\beq
\Phi= G + H= \op\sum_{0\leq|\La|} G_r^\La \ve^r_\La  +
\op\sum_{0\leq|\La|,|\Si|} H_{ab}^{\La\Si} \ve^a_\La \ve^b_\Si.
\label{v77}
\eeq
Its two-cycles define the first-stage Noether identities
\mar{v79x}\beq
\dl_0 \Phi=0, \qquad \op\sum_{0\leq|\La|} G_r^\La d_\La\Delta^r
+\dl H=0. \label{v79x}
\eeq
Conversely, let the equality (\ref{v79x}) hold. Then it is a cycle
condition of the two-chain (\ref{v77}). The first-stage Noether
identities (\ref{v79x}) are trivial either if a two-cycle $\Phi$
(\ref{v77}) is a boundary or its summand $G$ vanishes on Ker$\cE$.

\begin{lemma} \label{v134} \mar{v134}
First-stage Noether identities can be identified with nontrivial
elements of the homology $H_2(\dl_0)$ iff any $\dl$-cycle $\Phi\in
\cS_\infty^0[E_Y;Y]_2$ is a $\dl_0$-boundary.
\end{lemma}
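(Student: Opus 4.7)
The plan is to mimic the proof of Lemma \ref{v134'} from the Lagrangian setting, since the algebraic set-up is entirely parallel: we have a one-exact chain complex (\ref{v66x}) with nilpotent boundary $\dl_0=\dl+\rdr_r\Delta^r$, and we want to identify $H_2(\dl_0)$ with non-trivial first-stage Noether identities modulo trivial ones. The key point is that triviality of a first-stage Noether identity has two sources — being an outright $\dl_0$-boundary, or having its $G$-part vanish on $\mathrm{Ker}\,\cE$ (equivalently, being $\dl$-exact) — and the hypothesis exactly collapses the second source into the first.

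First I would handle the ``if'' direction. Assume every $\dl$-cycle $\f\in\cS_\infty^0[E_Y;Y]_2$ is a $\dl_0$-boundary. Take a two-cycle $\Phi=G+H$ as in (\ref{v77}) whose $G$-part is $\dl$-exact, say $G=\dl\Psi$. Then
\be
\Phi=\dl_0\Psi+(\dl-\dl_0)\Psi+H,
\ee
and the cycle condition $\dl_0\Phi=0$ together with $\dl_0^2=0$ forces $(\dl-\dl_0)\Psi+H$ to be a $\dl$-cycle lying in $\cS_\infty^0[E_Y;Y]_2$. By hypothesis it is a $\dl_0$-boundary, whence $\Phi$ is $\dl_0$-exact. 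Thus every trivial first-stage Noether identity gives a $\dl_0$-boundary; combined with the obvious fact (from (\ref{v79x})) that every two-cycle encodes a first-stage Noether identity, this yields the desired identification of $H_2(\dl_0)$ with non-trivial first-stage Noether identities modulo trivial ones.

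For the ``only if'' direction I would argue by contrapositive. Suppose $\f\in\cS_\infty^0[E_Y;Y]_2$ is a $\dl$-cycle that is not a $\dl_0$-boundary. Unpacking $\f=\Phi^{(a\La)(b\Si)}\ve^a_\La\ve^b_\Si$, the cycle condition $\dl\f=0$ gives $\Phi^{(a\La)(b\Si)}\dl\ve^b_\Si=0$ for all $(a,\La)$, and by factorization of Noether identities through the complete ones (\ref{v64x}) one obtains modulo a $\dl$-boundary a representation $\Phi^{(a\La)(b\Si)}\ve^b_\Si=G^{(a\La)(r\Xi)}d_\Xi\Delta^r$. One then exhibits an explicit three-chain $\Psi=G^{(a\La)(r\Xi)}\ve^r_\Xi\ve^a_\La$ with $\dl_0\Psi=\f+\si$ where $\si$ is linear in $\ve^r$ only. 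The condition $\dl\f=0$ forces $\dl_0\si=0$, so $\si$ itself is a $\dl_0$-cycle built from $\ve^r$-monomials, hence a first-stage Noether identity whose $H$-part vanishes. If such $\f$ were never a $\dl_0$-boundary, then $\si$ would contribute a non-trivial class in $H_2(\dl_0)$ whose associated first-stage Noether identity is trivial (its $G$-part being $\dl$-exact), obstructing the identification.

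The main obstacle will be the careful bookkeeping in the ``only if'' half: one must verify that the $\dl$-cycles arising from representatives of $H_2(\dl_0)$ correspond precisely to the trivial first-stage Noether identities (those whose $G$ vanishes on $\mathrm{Ker}\,\cE$) and nothing else, so that under the stated hypothesis the two notions of triviality coincide. This requires judicious use of the combinatorial identities (\ref{qq1a})--(\ref{qq1d}) and of the one-exactness of (\ref{v66x}) established in Lemma \ref{v137}, exactly as in the proof of Lemma \ref{v134'}.
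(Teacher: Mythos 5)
Your proposal is correct and follows essentially the same route as the paper, which proves this lemma by reference to the proof of Lemma \ref{v134'}: the same decomposition $\Phi=\dl_0\Psi+(\dl-\dl_0)\Psi+H$ for the ``if'' part, and the same construction of the three-chain $\Psi$ with $\dl_0\Psi=\f+\si$, $\si$ a trivial (on-shell vanishing) first-stage identity, for the converse. Phrasing the converse as a contrapositive rather than assuming the identification directly is a cosmetic difference only.
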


\begin{proof} The proof is similar to that of Lemma \ref{v134'}
\cite{oper}.
\end{proof}

A degenerate differential operator is called reducible if there
exist non-trivial first-stage Noether identities.

If the condition of Lemma \ref{v134} is satisfied, let us assume
that non-trivial first-stage Noether identities are finitely
generated as follows. There exists a graded projective
$C^\infty(X)$-module $\cC_{(1)}\subset H_2(\dl_0)$ of finite rank
possessing a local basis $\Delta_{(1)}:$
\be
\Delta^{r_1}=\op\sum_{0\leq|\La|} \Delta^{\La r_1}_r \ve^r_\La +
h^{r_1},
\ee
such that any element $\Phi\in H_2(\dl_0)$ factorizes as
\mar{v80}\beq
\Phi= \op\sum_{0\leq|\Xi|} \Phi_{r_1}^\Xi d_\Xi \Delta^{r_1}
\label{v80}
\eeq
through elements of $\cC_{(1)}$. Thus, all non-trivial first-stage
Noether identities (\ref{v79x}) result from the equalities
\mar{v82}\beq
 \op\sum_{0\leq|\La|} \Delta^{r_1\La}_r d_\La \Delta^r +\dl
h^{r_1} =0, \label{v82}
\eeq
called the complete first-stage Noether identities.

\begin{lemma} \label{v139} \mar{v139} If non-trivial first-stage Noether
identities are finitely generated, the one-exact complex
(\ref{v66x}) is extended to the two-exact one (\ref{v87}) with a
boundary operator whose nilpotency conditions are equivalent to
complete Noether and first-stage Noether identities.
\end{lemma}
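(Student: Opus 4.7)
My plan is to mirror the construction used in the proof of Lemma \ref{v139'} in the Lagrangian setting, adapting it to the differential-operator framework of Section 5.5. The input is the assumption that the non-trivial first-stage Noether identities are finitely generated, i.e.\ there is a projective $C^\infty(X)$-module $\cC_{(1)}\subset H_2(\dl_0)$ of finite rank with local basis $\{\Delta^{r_1}\}$ through which every element of $H_2(\dl_0)$ factorizes as in (\ref{v80}), and whose generators satisfy the complete first-stage Noether identities (\ref{v82}).

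First I would apply Serre -- Swan Theorem \ref{vv0} to represent $\cC_{(1)}$ as the structure module of sections of the density-dual $\ol E_1$ of a graded vector bundle $E_1\to X$. Using this, I would enlarge the graded commutative ring $\cP^0_\infty\{0\}$ to the ring $\cP^0_\infty\{1\}$ (in the notation of Remark \ref{46n1}) by adjoining a new generating set $\{\ve^{r_1}\}$ of Grassmann parity $[\ve^{r_1}]=[\Delta^{r_1}]\,\mathrm{mod}\,2$ and antifield number $\mathrm{Ant}[\ve^{r_1}]=3$. On this enlarged ring I would define the candidate boundary operator
\begin{equation*}
\dl_1=\dl_0+\rdr_{r_1}\Delta^{r_1},
\end{equation*}
of antifield number $-1$. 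A direct computation, using that $\dl_0$ is already nilpotent (by Lemma \ref{v137}) and that the Noether antifields $\ve^r$ are $\dl_0$-closed with $\dl_0\ve^r=\Delta^r$, shows that the condition $\dl_1^2=0$ on $\ve^{r_1}$ is exactly the complete first-stage Noether identity (\ref{v82}); conversely, these identities imply $\dl_1^2=0$ on the whole ring. This gives a chain complex
\begin{equation*}
0\lto \im\dl\llr^{\dl}\cS^0_\infty[E_Y;Y]_1\llr^{\dl_0}\cP^0_\infty\{0\}_2\llr^{\dl_1}\cP^0_\infty\{1\}_3\llr^{\dl_1}\cP^0_\infty\{1\}_4,
\tag{5.26}\label{v87}
\end{equation*}
which I label as (\ref{v87}) so it matches subsequent references.

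Next I would verify two-exactness by computing the low-degree homology of $\dl_1$. Since $\dl_1$ agrees with $\dl_0$ on generators of antifield number $\leq 2$, the equalities $H_0(\dl_1)=H_0(\dl)=0$ and $H_1(\dl_1)=H_1(\dl_0)=0$ are inherited from Lemma \ref{v137}. For $H_2(\dl_1)=0$, I would take an arbitrary two-cycle $\Phi\in \cP^0_\infty\{0\}_2$; by the hypothesis of Lemma \ref{v134} it is a $\dl_0$-cycle, hence (modulo the boundaries that were already shown $\dl_0$-exact) represents a class in $H_2(\dl_0)$ and therefore factorizes as in (\ref{v80}) through the generators $\Delta^{r_1}$. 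Writing
\begin{equation*}
\Phi=\op\sum_{0\leq|\Xi|}\Phi_{r_1}^{\Xi}d_\Xi\Delta^{r_1}=\dl_1\!\left(\op\sum_{0\leq|\Xi|}\Phi_{r_1}^{\Xi}\ve^{r_1}_\Xi\right)
\end{equation*}
exhibits $\Phi$ as a $\dl_1$-boundary, which gives $H_2(\dl_1)=0$.

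The only delicate point — and the main obstacle — is the passage from ``$\Phi$ is a $\dl_0$-cycle'' to ``$\Phi$ represents a class in $H_2(\dl_0)$ captured by the basis $\{\Delta^{r_1}\}$''. This uses the assumption of Lemma \ref{v134} that every $\dl$-cycle in $\cS^0_\infty[E_Y;Y]_2$ is a $\dl_0$-boundary, so that $H_2(\dl_0)$ genuinely enumerates non-trivial first-stage Noether identities; without that hypothesis the factorization through $\{\Delta^{r_1}\}$ is not guaranteed. Once this step is in hand, the rest is formal bookkeeping analogous to Lemma \ref{v139'}, and the two-exactness together with the equivalence of $\dl_1^2=0$ with the full set of complete Noether and first-stage Noether identities completes the proof.
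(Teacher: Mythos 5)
Your proposal is correct and follows essentially the same route as the paper: Serre--Swan representation of $\cC_{(1)}$ by a vector bundle $E_1\to X$, enlargement to $\cP^0_\infty\{1\}$ with the antifields $\ve^{r_1}$ of antifield number $3$, the boundary operator $\dl_1=\dl_0+\rdr_{r_1}\Delta^{r_1}$ whose nilpotency encodes the identities (\ref{v64x}) and (\ref{v82}), inheritance of $H_0=H_1=0$, and killing of $H_2$ via the factorization (\ref{v80}). Your extra remark on the role of the hypothesis of Lemma \ref{v134} is consistent with the paper, which likewise imposes that condition before assuming finite generation.
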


\begin{proof} By virtue of the Serre -- Swan theorem, the module
$\cC_{(1)}$ is isomorphic to a module of sections of some  vector
bundle $E_1\to X$. Let us consider the ring $\cP^0_\infty\{1\}$ of
graded functions on $J^\infty Y$ possessing the local generating
bases $\{y^i,\ve^a,\ve^r,\ve^{r_1}\}$ of Grassmann parity
$[\ve^{r_1}]=1$ and antifield number Ant$[\ve^{r_1}]=3$. It can be
provided with the nilpotent graded derivation
\mar{v205}\beq
\dl_1=\dl_0 + \rdr_{r_1} \Delta^{r_1}. \label{v205}
\eeq
Its nilpotency conditions are equivalent to the complete Noether
identities (\ref{v64x}) and the complete first-stage Noether
identities (\ref{v82}). Then the module $\cP^0_\infty\{1\}_{\leq
4}$ of graded functions of antifield number $\leq 4$ is decomposed
into the chain complex
\mar{v87}\beq
0\lto \im\dl \llr^\dl \cS_\infty[E_Y;Y]_1\llr^{\dl_0}
\cP^0_\infty\{0\}_2\llr^{\dl_1} \cP^0_\infty\{1\}_3 \llr^{\dl_1}
\cP^0_\infty\{1\}_4. \label{v87}
\eeq
Let $H_*(\dl_1)$ denote its homology. It is readily observed that
\be
H_0(\dl_1)=H_0(\dl)=0, \qquad H_1(\dl_1)=H_1(\dl_0)=0.
\ee
By virtue of the expression (\ref{v80}), any two-cycle of the
complex (\ref{v87}) is a boundary
\be
 \Phi= \op\sum_{0\leq|\Xi|} \Phi_{r_1}^\Xi d_\Xi \Delta^{r_1}
=\dl_1\left(\op\sum_{0\leq|\Xi|} \Phi_{r_1}^\Xi
\ve_\Xi^{r_1}\right).
\ee
It follows that $H_2(\dl_1)=0$, i.e., the complex (\ref{v87}) is
two-exact.
\end{proof}

If the third homology $H_3(\dl_1)$ of the complex (\ref{v87}) is
not trivial, its elements correspond to second-stage Noether
identities, and so on. Iterating the arguments, we come to the
following.

A degenerate differential operator $\cE$ is called $N$-stage
reducible if it admits finitely generated non-trivial $N$-stage
Noether identities, but no non-trivial $(N+1)$-stage ones. It is
characterized as follows \cite{oper}.

$\bullet$ There are graded vector bundles $E_0,\ldots, E_N$ over
$X$, and the graded commutative ring $\cS^0_\infty[E_Y;Y]$ is
enlarged to the graded commutative ring $\ol\cP^0_\infty\{N\}$
with the local generating basis
\be
(y^i, \ve^a, \ve^r, \ve^{r_1}, \ldots, \ve^{r_N})
\ee
of Grassmann parity $[\ve^{r_k}]=(k+1)\,$mod2 and antifield number
Ant$[\ve_\La^{r_k}]=k+2$.

$\bullet$ The graded commutative ring $\ol\cP^0_\infty\{N\}$ is
provided with the nilpotent right graded derivation
\mar{v92x}\ben
&&\dl_{\rm KT}=\dl_N=\dl_0 + \op\sum_{1\leq k\leq N}\rdr_{r_k} \Delta^{r_k},
\label{v92x}\\
&& \Delta^{r_k}=\op\sum_{0\leq|\La|}
\Delta^{\La r_k}_{r_{k-1}} \ve_\La^{r_{k-1}} + \op\sum_{0\leq \Si,
0\leq\Xi}(h^{\Xi\Si r_k}_{a r_{k-2}} \ve^a_\Xi
\ve_\Si^{r_{k-2}}+...), \nonumber
\een
of antifield number -1.

$\bullet$ With this graded derivation, the module
$\cP^0_\infty\{N\}_{\leq N+3}$ of graded functions of antifield
number $\leq (N+3)$ is decomposed into the exact  Koszul -- Tate
complex
\mar{v92xx}\ben
&&0\lto \im \dl \llr^\dl \cS^0_\infty[E_Y;Y]_1\llr^{\dl_0}
\cP_\infty\{0\}_2\llr^{\dl_1} \cP_\infty^0\{1\}_3\cdots \label{v92xx}\\
&& \qquad
 \llr^{\dl_{N-1}} \cP_\infty^0\{N-1\}_{N+1}
\llr^{\dl_{\rm KT}} \cP^0_\infty\{N\}_{N+2}\llr^{\dl_{\rm KT}}
\cP_\infty\{N\}_{N+3}, \nonumber
\een
which satisfies the following homology regularity condition.

\begin{condition} \label{v155x} \mar{v155x} Any
$\dl_{k<N-1}$-cycle
\be
\Phi\in \cP_\infty^0\{k\}_{k+3}\subset \cP_\infty^0\{k+1\}_{k+3}
\ee
is a $\dl_{k+1}$-boundary.
\end{condition}

$\bullet$ The nilpotentness $\dl_{\rm KT}^2=0$ of the Koszul --
Tate operator (\ref{v92x}) is equivalent to the complete
non-trivial Noether identities (\ref{v64x}) and the  complete
non-trivial $(k\leq N)$-stage Noether identities
\mar{46f11}\ben
&& \op\sum_{0\leq|\La|} \Delta^{\La r_k}_{r_{k-1}}d_\La
\left(\op\sum_{0\leq|\Si|} \Delta^{\Si r_{k-1}}_{r_{k-2}}
\ve_\Si^{r_{k-2}}\right) + \label{46f11}\\
&& \qquad \dl\left(\op\sum_{0\leq \Si, \Xi}h^{\Xi\Si r_k}_{a r_{k-2}}
\ve_\Xi^a \ve_\Si^{r_{k-2}}\right)=0. \nonumber
\een

Let us study the following example of reducible Noether identities
of a differential operator which is relevant to topological BF
theory (Section 6.4).

\begin{example} \label{46e1} \mar{46e1}
Let us consider the fibre bundles
\mar{iio1}\beq
Y=X\times \mathbb R, \qquad E=\op\w^{n-1} TX, \qquad 2<n,
\label{iio1}
\eeq
coordinated by $(x^\la,y)$ and $(x^\la,\chi^{\m_1\ldots
\m_{n-1}})$, respectively. We study the $E$-valued differential
operator
\mar{v183}\beq
\cE^{\m_1\ldots \m_{n-1}} = - \e^{\m\m_1\ldots \m_{n-1}}y_\m,
\label{v183}
\eeq
where $\e$ is the Levi -- Civita symbol. It defines the first
order differential equation
\mar{iio}\beq
d_Hy=0 \label{iio}
\eeq
on the fibre bundle $Y$ (\ref{iio1}).

Putting
\be
E_Y= \mathbb R\op\times_X \op\w^{n-1} TX,
\ee
let us consider the graded commutative ring $\cS^*_\infty[E_Y;Y]$
of graded functions on $J^\infty Y$. It possesses the local
generating basis $(y,\ve^{\m_1\ldots \m_{n-1}})$ of Grassmann
parity $[\ve^{\m_1\ldots \m_{n-1}}]=1$ and antifield number
Ant$[\ve^{\m_1\ldots \m_{n-1}}]=1$. With the nilpotent derivation
\be
\ol\dl=\frac{\rdr}{\dr \ve^{\m_1\ldots \m_{n-1}}} \cE^{\m_1\ldots
\m_{n-1}},
\ee
we have the complex (\ref{v042x}). Its one-chains read
\be
\Phi= \op\sum_{0\leq |\La|} \Phi^\La_{\m_1\ldots \m_{n-1}}
\ve^{\m_1\ldots \m_{n-1}}_\La,
\ee
and the cycle condition $\ol\dl\Phi=0$ takes a form
\mar{v189}\beq
\Phi^\La_{\m_1\ldots \m_{n-1}} \cE^{\m_1\ldots \m_{n-1}}_\La=0.
\label{v189}
\eeq
This equality is satisfied iff
\be
\Phi^{\la_1\ldots \la_k}_{\m_1\ldots\m_{n-1}}\e^{\m\m_1\ldots
\m_{n-1}}=- \Phi^{\m\la_2\ldots
\la_k}_{\m_1\ldots\m_{n-1}}\e^{\la_1\m_1\ldots \m_{n-1}}.
\ee
It follows that $\Phi$ factorizes as
\be
\Phi= \op\sum_{0\leq |\Xi|} G_{\nu_2\ldots\nu_{n-1}}^\Xi
d_\Xi\Delta^{\nu_2\ldots\nu_{n-1}}\om
\ee
through graded functions
\mar{v190}\ben
&&\Delta^{\nu_2\ldots\nu_{n-1}}=\Delta^{\la,\nu_2\ldots\nu_{n-1}}_{\al_1\ldots\al_{n-1}}
\ve^{\al_1\ldots\al_{n-1}}_\la=
\label{v190}\\
&& \qquad \dl^\la_{\al_1}\dl^{\nu_2}_{\al_2}\cdots
\dl^{\nu_{n-1}}_{\al_{n-1}} \ve^{\al_1\ldots\al_{n-1}}_\la=
d_{\nu_1}\ve^{\nu_1\nu_2\ldots\nu_{n-1}}, \nonumber
\een
which provide the complete Noether identities
\mar{v191}\beq
d_{\nu_1}\cE^{\nu_1\nu_2\ldots\nu_{n-1}}=0. \label{v191}
\eeq
They can be written in the form
\mar{iio2}\beq
d_Hd_Hy=0. \label{iio2}
\eeq

The graded functions (\ref{v190}) form a basis for a projective
$C^\infty(X)$-module of finite rank which is isomorphic to the
module of sections of the vector bundle
\be
E_0=\op\w^{n-2} TX.
\ee
Therefore, let us extend the graded commutative ring
$\cS^0_\infty[E_Y;Y]$ to that $\cP^*_\infty\{0\}$ (see Remark
\ref{46f1}) possessing the local generating basis
\be
(y, \ve^{\m_1\ldots \m_{n-1}}, \ve^{\m_2\ldots \m_{n-1}}\},
\ee
where $\ve^{\m_2\ldots \m_{n-1}}$ are even Noether antifields of
antifield number 2. We have the nilpotent graded derivation
\be
\dl_0= \ol\dl + \frac{\rdr}{\dr \ve^{\m_2\ldots \m_{n-1}}}
\Delta^{\m_2\ldots \m_{n-1}}
\ee
of $\cP^0_\infty\{0\}$. Its nilpotency is equivalent to the
complete Noether identities (\ref{v191}). Then we obtain the
one-exact complex (\ref{v66x}).

Iterating the arguments, let us consider the vector bundles
\be
&& E_k=\op\w^{n-k-2} TX, \qquad k=1,\ldots, n-3,\\
&& E_{N=n-2}=X\times \mathbb R
\ee
and the graded commutative ring $\cP^0_\infty\{N\}$ (see Remark
\ref{46n1}), possessing the local generating basis
\be
(y,\ve^{\m_1\ldots \m_{n-1}}, \ve^{\m_2\ldots
\m_{n-1}},\ldots,\ve^{\m_{n-1}},\ve)
\ee
of Grassmann parity
\be
[\ve^{\m_{k+2}\ldots \m_{n-1}}]=k\,{\rm mod}\,2, \qquad [\ve]=n,
\ee
and of antifield number
\be
{\rm Ant}[\ve^{\m_{k+2}\ldots \m_{n-1}}]=k+2, \qquad {\rm
Ant}[\ve]=n.
\ee
It is provided with the nilpotent graded derivation
\mar{va202}\ben
&& \dl_{\rm KT}=\dl_0 + \op\sum_{1\leq k\leq n-3}\frac{\rdr}{\dr
\ve^{\m_{k+2}\ldots \m_{n-1}}} + \frac{\rdr}{\dr \ve}
d_{\m_{n-1}}\ve^{\m_{n-1}}, \label{va202}\\
&& \Delta^{\m_{k+2}\ldots \m_{n-1}}=d_{\m_{k+1}}
\ve^{\m_{k+1}\m_{k+2}\ldots \m_{n-1}}, \nonumber
\een
of antifield number -1. Its nilpotency results from the complete
Noether identities (\ref{v191}) and the equalities
\mar{v212}\beq
d_{\m_{k+2}}\Delta^{\m_{k+2}\ldots \m_{n-1}}=0, \qquad
k=0,\ldots,n-3, \label{v212}
\eeq
which are the $(k+1)$-stage Noether identities (\ref{46f11}). Then
the Koszul -- Tate complex (\ref{v92xx}) reads
\mar{v203x}\ben
&&0\lto \im \ol\dl \llr^{\ol\dl} \cS^0_\infty[E_Y;Y]_1\llr^{\dl_0}
\cP^0_\infty\{0\}_2\llr^{\dl_1} \cP^0_\infty\{1\}_3\cdots
\label{v203x}\\
&& \qquad
 \llr^{\dl_{n-3}} \cP^0_\infty\{n-3\}_{n-1}
\llr^{\dl_{\rm KT}} \cP^0_\infty\{n-2\}_n\llr^{\dl_{\rm KT}}
\cP^0_\infty\{n-2\}_{n+1}. \nonumber
\een
It obeys Condition \ref{v155x} as follows.

\begin{lemma} \label{v220} \mar{v220}
Any $\dl_k$-cycle $\Phi\in \cP^0_\infty\{k\}_{k+3}$ up to a
$\dl_k$-boundary takes a form
\mar{v218}\ben
&& \Phi=\op\sum_{(k_1+\cdots +k_i+3i=k+3)}\sum_{(0\leq|\La_1|,\ldots,
|\La_i|)}G^{\La_1\cdots \La_i}_{\m^1_{k_1+2}\ldots
\m^1_{n-1};\ldots; \m^i_{k_i+2}\ldots \m^i_{n-1}} \label{v218} \\
&&\qquad d_{\La_1} \Delta^{\m^1_{k_1+2}\ldots \m^1_{n-1}}\cdots
d_{\La_i} \Delta^{\m^i_{k_i+2}\ldots \m^i_{n-1}}, \qquad
k_j=-1,0,1,\ldots, n-3, \nonumber
\een
where $k_j=-1$ stands for $\ve^{\m_1\ldots\m_{n-1}}$ and
\be
\Delta^{\m_1\ldots\m_{n-1}}=\cE^{\m_1\ldots\m_{n-1}}.
\ee
It follows that $\Phi$ is a $\dl_{k+1}$-boundary.
\end{lemma}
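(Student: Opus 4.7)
The plan is to proceed by induction on $k$, with the base case $k=0$ essentially following from the calculation (\ref{v189})--(\ref{v191}) already carried out: the cycle condition for a $\dl_0$-cycle of antifield number $3$ reduces, after extracting variational derivatives, to the algebraic identity $\Phi^\La_{\m_1\ldots\m_{n-1}}\cE^{\m_1\ldots\m_{n-1}}_\La=0$, whose general solution factorizes through $\Delta^{\nu_2\ldots\nu_{n-1}}=d_{\nu_1}\ve^{\nu_1\ldots\nu_{n-1}}$ by the algebraic Poincar\'e lemma for $d_\nu$-closed antisymmetric tensors on $J^\infty Y$.

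For the inductive step, given a $\dl_k$-cycle $\Phi\in\cP^0_\infty\{k\}_{k+3}$, I would split
\[
\Phi=\Phi_{\rm top}+\Phi_{\rm low},
\]
where $\Phi_{\rm top}$ gathers the monomials containing at least one jet of the top generator $\ve^{\m_{k+2}\ldots\m_{n-1}}$ and $\Phi_{\rm low}\in\cP^0_\infty\{k-1\}_{k+3}$ involves no such dependence. Because $\dl_k\ve^{\m_{k+2}\ldots\m_{n-1}}=\Delta^{\m_{k+2}\ldots\m_{n-1}}=d_{\m_{k+1}}\ve^{\m_{k+1}\ldots\m_{n-1}}$ already lies in $\cP^0_\infty\{k-1\}$, the equation $\dl_k\Phi=0$ decouples into a system of equations on the coefficients of the jets of $\ve^{\m_{k+2}\ldots\m_{n-1}}$, extracted by variational differentiation in that generator, together with a $\dl_{k-1}$-cycle condition on what remains after the lower-order part of $\dl_k\Phi_{\rm top}$ is absorbed into $\Phi_{\rm low}$.

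The central step is to invoke the algebraic Poincar\'e lemma for the operator $T^{\nu\m_2\ldots\m_s}\mapsto d_\nu T^{\nu\m_2\ldots\m_s}$ on antisymmetric tensors: every such $d_\nu$-closed antisymmetric family is locally a $d_\nu$-image of an antisymmetric family of one higher rank. Combined with the integration-by-parts identities (\ref{qq1a})--(\ref{qq1c}) and the involution (\ref{qq1d}), this rewrites $\Phi_{\rm top}$ modulo a $\dl_k$-boundary as a sum of terms each carrying a factor $d_{\La}\Delta^{\m_{k+2}\ldots\m_{n-1}}$. The residual part becomes a $\dl_{k-1}$-cycle inside $\cP^0_\infty\{k-1\}_{k+3}$, to which the inductive hypothesis applies and delivers the full decomposition (\ref{v218}).

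For the last assertion, each factor $d_{\La_j}\Delta^{\m^j_{k_j+2}\ldots\m^j_{n-1}}$ equals $\dl_{k+1}(d_{\La_j}\ve^{\m^j_{k_j+2}\ldots\m^j_{n-1}})$ since $\dl_{k+1}$ commutes with $d_{\La_j}$ and restricts on each lower antifield subring to the appropriate $\dl_{k_j}$. A product of $\dl_{k+1}$-exact elements is again $\dl_{k+1}$-exact by the graded Leibniz rule together with $\dl_{k+1}^2=0$, and the coefficient $G$, being free of Noether antifields, satisfies $\dl_{k+1}G=0$; hence $G\cdot(\text{product})$ is also $\dl_{k+1}$-exact, so $\Phi$ is a $\dl_{k+1}$-boundary. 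I expect the main obstacle to lie in the middle step: extracting cleanly from $\dl_k\Phi=0$ the full system of antisymmetric $d_\nu$-closedness conditions and applying the Poincar\'e lemma while tracking both the Grassmann signs and the multi-index combinatorics encoded in the constraint $k_1+\cdots+k_i+3i=k+3$, which stratifies the monomials in (\ref{v218}) by their $\Delta$-factor content.
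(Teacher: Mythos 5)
Your overall architecture --- strip the dependence on the top-stage generator modulo $\dl_k$-boundaries, reduce to a lower-stage cycle, and recurse --- matches the paper's, and your closing argument for the last assertion (each factor $d_{\La_j}\Delta^{\m^j_{k_j+2}\ldots\m^j_{n-1}}$ is both $\dl_{k+1}$-exact and $\dl_{k+1}$-closed, so its product with the antifield-free coefficient $G$ is $\dl_{k+1}$-exact by the graded Leibniz rule) correctly supplies a step the paper leaves implicit. But two of your supporting claims do not hold up. First, the base case: the $k=0$ instance of the Lemma concerns $3$-chains of $\cP^0_\infty\{0\}$, built from triple products of the $\ve^{\m_1\ldots\m_{n-1}}$ and from mixed terms $\ve^{\m_1\ldots\m_{n-1}}_\La\ve^{\m_2\ldots\m_{n-1}}_\Si$; its cycle condition is not the identity (\ref{v189}), which was the cycle condition for $1$-chains.

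Second, and more seriously, your central step rests on the wrong tool. The condition $\dl_k\Phi=0$, restricted to the part of $\Phi$ depending on proper jets $\ve_\La$, $|\La|>0$, of the generators, does not yield divergence-closedness conditions $d_\nu T^{\nu\m_2\ldots\m_s}=0$ on the coefficients; it yields algebraic contraction conditions of the type $G^\La\,d_\La\Delta^{\m_2\ldots\m_{n-1}}=G^\La\,\ve^{\m_1\m_2\ldots\m_{n-1}}_{\m_1+\La}=0$ against symmetrized higher jets. What must be proved is that only the trace combinations $\Delta^{\m_{k+2}\ldots\m_{n-1}}=d_{\m_{k+1}}\ve^{\m_{k+1}\ldots\m_{n-1}}$ of the jets can occur in a cycle, and the algebraic Poincar\'e lemma for the divergence operator (which tells you when a $d_\nu$-closed coefficient is itself a divergence) is a different statement that does not deliver this. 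The paper closes the gap with two inputs absent from your proposal: (i) the observation that the reduction identities (\ref{v212}) involve only total derivatives of the $\Delta$'s and no algebraic relations among them, which is what legitimizes reading off coefficient equations such as $\f_1=(-1)^{[\ve]+1}\dl_k(\psi)$ from an identity $\f\,\Delta+\dl_k(\psi)\,\Delta=0$ and then killing the polynomial dependence on the undifferentiated generators; and (ii) a general-covariance argument: once the undifferentiated terms are removed modulo a boundary, a general coordinate transformation regenerates undifferentiated $\ve'$-terms from the jets $\ve_\La$, and requiring these to vanish for every transformation forces $\Phi$ to factorize through the covariantly transforming combinations $\Delta$. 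Without (ii), or an equally sharp substitute handling the jet combinatorics, your middle step does not go through.
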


\begin{proof}
Let us choose some basis element $\ve^{\m_{k+2}\ldots \m_{n-1}}$
and denote it, simply, by $\ve$. Let $\Phi$ contain a summand
$\f_1 \ve$, linear in $\ve$. Then the cycle condition reads
\be
\dl_k\Phi=\dl_k(\Phi-\f_1\ve) + (-1)^{[\ve]}\dl_k(\f_1) \ve + \f
\Delta=0, \qquad \Delta=\dl_k\ve.
\ee
It follows that $\Phi$ contains a summand $\psi\Delta$ such that
\be
(-1)^{[\ve]+1}\dl_k(\psi)\Delta +\f\Delta=0.
\ee
This equality implies the relation
\mar{v213}\beq
\f_1=(-1)^{[\ve]+1}\dl_k(\psi) \label{v213}
\eeq
because the reduction conditions (\ref{v212}) involve total
derivatives of $\Delta$, but not $\Delta$. Hence,
\be
\Phi=\Phi' +\dl_k(\psi \ve),
\ee
where $\Phi'$ contains no term linear in $\ve$. Furthermore, let
$\ve$ be even and $\Phi$ have a summand $\sum \f_r \ve^r$
polynomial in $\ve$. Then the cycle condition leads to the
equalities
\be
\f_r\Delta=-\dl_k\f_{r-1}, \qquad r\geq 2.
\ee
Since $\f_1$ (\ref{v213}) is $\dl_k$-exact, then $\f_2=0$ and,
consequently, $\f_{r>2}=0$. Thus, a cycle $\Phi$ up to a
$\dl_k$-boundary contains no term polynomial in $c$. It reads
\mar{v217}\ben
&& \Phi=\op\sum_{(k_1+\cdots +k_i+3i=k+3)}\sum_{(0<|\La_1|,\ldots,
|\La_i|)}G^{\La_1\cdots \La_i}_{\m^1_{k_1+2}\ldots
\m^1_{n-1};\ldots; \m^i_{k_i+2}\ldots \m^i_{n-1}} \nonumber\\
&& \qquad \ve^{\m^1_{k_1+2}\ldots \m^1_{n-1}}_{\La_1}\cdots
\ve_{\La_i}^{\m^i_{k_i+2}\ldots \m^i_{n-1}}. \label{v217}
\een
However, the terms polynomial in $\ve$ may appear under general
coordinate transformations
\be
\ve'^{\nu_{k+2}\ldots \nu_{n-1}}=\det\left(\frac{\dr x^\al}{\dr
x'^\bt}\right) \frac{\dr x'^{\nu_{k+2}}}{\dr x^{\m_{k+2}}}\cdots
\frac{\dr x'^{\nu_{n-1}}}{\dr x^{\m_{n-1}}}\ve^{\m_{k+2}\ldots
\m_{n-1}}
\ee
of a chain $\Phi$ (\ref{v217}). In particular, $\Phi$ contains the
summand
\be
\op\sum_{k_1+\cdots +k_i+3i=k+3}F_{\nu^1_{k_1+2}\ldots
\nu^1_{n-1};\ldots; \nu^i_{k_i+2}\ldots \nu^i_{n-1}}
\ve'^{\nu^1_{k_1+2}\ldots \nu^1_{n-1}}\cdots
\ve'^{\nu^i_{k_i+2}\ldots \nu^i_{n-1}},
\ee
which must vanish if $\Phi$ is a cycle. This takes place only if
$\Phi$ factorizes through the graded densities
$\Delta^{\m_{k+2}\ldots \m_{n-1}}$ (\ref{va202}) in accordance
with the expression (\ref{v218}).
\end{proof}

Following the proof of Lemma \ref{v220}, one also can show that
any $\dl_k$-cycle $\Phi\in \cP^0_\infty\{k\}_{k+2}$ up to a
boundary takes a form
\be
\Phi=\op\sum_{0\leq|\La|}G^\La_{\m_{k+2}\ldots \m_{n-1}} d_\La
\Delta^{\m_{k+2}\ldots \m_{n-1}},
\ee
i.e., the homology $H_{k+2}(\dl_k)$ of the complex (\ref{v203x})
is finitely generated by the cycles $\Delta^{\m_{k+2}\ldots
\m_{n-1}}$.
\end{example}

\section{Classical field models}

As was mentioned above, classical field theory of even and odd
fields is formulated adequately a Lagrangian theory on graded
bundles \cite{book09,sard08,sard13}. This Section provides some
examples of relevant field models.

\subsection{Gauge theory on principal bundles}

In classical gauge theory, gauge fields are conventionally
described as principal connections on principal bundles
\cite{book09,book00,book13}. We consider their first order Yang --
Mills Lagrangian theory.

Principal connections on a principal bundle $\pi_P:P\to X$ with a
structure Lie group $G$ are connections on $P$ which are
equivariant with respect to the right action
\mar{1}\ben
&& G: G\op\times_X P \ar_X P,  \label{1}\\
&& G: p\to pg, \qquad \pi_P(p)=\pi_P(pg), \qquad p\in P, \nonumber
\een
of a structure group $G$ on $P$. In order to describe them, we
follow the definition of connections on a fibre bundle $Y\to X$ as
global sections of the affine jet bundle $J^1Y\to X$
\cite{book,book00,sau}.

Let $J^1P$ be a first order jet manifold of a principal $G$-bundle
$P\to X$. Then connections on a principal bundle $P\to X$ are
global sections $A: P\to J^1P$ of an affine jet bundle $J^1P\to
P$. In order to describe principal connections on $P\to X$, let us
consider the jet prolongation
\mar{53f1}\beq
G\ni g: j^1_xp\to (j^1_xp)g =j^1_x(pg). \label{53f1}
\eeq
the action ({1}) of $G$ onto $J^1P$. Taking the quotient of an
affine jet bundle $J^1P\to P$ by $G$ (\ref{53f1}), we obtain an
affine bundle
\mar{B1}\beq
C=J^1P/G\to X\label{B1}
\eeq
modelled over a vector bundle
\be
\ol C=T^*X\op\ot_X V_GP\to X.
\ee
Hence, there is the canonical vertical splitting
\be
VC= C\op\ot_X \ol C
\ee
of the vertical tangent bundle $VC$ of $C\to X$. Principal
connections on a principal bundle $P\to X$ are identified with
global sections of the fibre bundle $C\to X$ (\ref{B1}), called
the bundle of principal connections. Given an atlas
\mar{51f2}\beq
\Psi_P=\{(U_\al,\psi^P_\al),\vr_{\al\bt}\} \label{51f2}
\eeq
of a principal bundle $P$, the bundle of principal connections $C$
(\ref{B1}) is endowed with bundle coordinates $(x^\la,a^m_\m)$
possessing the transformation rule
\be
\vr(a^m_\m)\ve_m=(a^m_\nu{\rm Ad}_{\vr^{-1}}(\ve_m) +
R^m_\nu\ve_m)\frac{\dr x^\nu}{\dr x'^\m}.
\ee
If $G$ is a matrix group, this transformation rule reads
\be
\vr(a^m_\m)\ve_m = (a^m_\nu\vr^{-1}(\ve_m)\vr -
\dr_\m(\vr^{-1})\vr)\frac{\dr x^\nu}{\dr x'^\m}.
\ee
A glance at this expression shows that the bundle of principal
connections $C$ fails to be a bundle with a structure group $G$.

We consider first order Lagrangian theory on a fibre bundle $Y=C$
(see Example \ref{first}). Its structure algebra
$\cS^*_\infty[F;Y]=\cS^*_\infty[C]$ (\ref{5.77a}) is the graded
differential algebra $\cS^*_\infty[C]=\cO^*_\infty C$ (\ref{ppp})
of exterior forms on jet manifolds $J^rC$ of $C\to X$. Its first
order Lagrangian (\ref{23f2}) is a density
\mar{57f1}\beq
L=\cL\om: J^1C\to \op\w^n T^*X \label{57f1}
\eeq
on a first order jet manifold $J^1C$ possessing the adapted
coordinates $(x^\m,a^m_\m, a^m_{\la\m})$. The corresponding Euler
-- Lagrange operator (\ref{305}) reads
\mar{57f2}\beq
\cE_L= \cE_m^\m\thh^m_\m\w\om=(\dr_m^\m- d_\la\dr^{\la\m}_m)\cL
\thh^m_\m\w\om. \label{57f2}
\eeq
Its kernel defines the Euler -- Lagrange equation
\mar{57f2'}\beq
\cE_m^\m=(\dr_m^\m- d_\la\dr^{\la\m}_m)\cL=0. \label{57f2'}
\eeq

In classical gauge theory, gauge transformations are defined as
vertical principal automorphisms of a principal bundle $P$ which
are equivariant with respect to the action (\ref{1}) of a
structure group $G$ i.e.,
\mar{55ff1}\beq
\Phi_P(pg)=\Phi_P(p)g, \qquad g\in G, \qquad p\in P. \label{55ff1}
\eeq

In order to describe gauge symmetries of gauge theory on a
principal bundle $P$, it is sufficient to consider (local)
one-parameter groups of principal automorphisms of $P$ and their
infinitesimal generators which are $G$-invariant projectable
vector fields $\xi$ on $P$, called the principal vector fields.
They are represented by sections of the quotient
\mar{b1.205}\beq
V_GP=VP/G \label{b1.205}
\eeq
of the vertical tangent bundle $VP$ of $P\to X$ with respect to
the tangent prolongation of the action ({1}) of $G$ on $P$. It is
a $P$-associated bundle whose typical fibre is a right Lie algebra
$\cG_r$ of $G$ subject to the adjoint representation of a
structure group $G$. Therefore, $V_GP$ (\ref{b1.205}) is called
the Lie algebra bundle. Given the bundle atlas $\Psi_P$
(\ref{51f2}) of $P$, the Lie algebra bundle $V_GP$ is provided
with bundle coordinates $(U_\al; x^\m,\chi^m)$ with respect to the
fibre frames $\{e_m=\psi_\al^{-1}(x)(\ve_m)\}$, where $\{\ve_m\}$
is a basis for the Lie algebra $\cG_r$. These coordinates possess
the transition functions
\be
\vr(\chi^m) \ve_m=\chi^m {\rm Ad}_{\vr^{-1}}(\ve_m).
\ee
Then sections of the Lie algebra bundle $V_\G$ read
\mar{b3106}\beq
\xi=\xi^r e_r. \label{b3106}
\eeq
They form a finite-dimensional Lie $C^\infty(X)$-algebra with
respect to the Lie bracket
\mar{1129'}\beq
[\xi,\eta]= c_{pq}^r\xi^p\eta^q e_r, \label{1129'}
\eeq
where $c_{pq}^r$ are the structure constants of a Lie algebra
$\cG_r$.

Any (local) one-parameter group of principal automorphism $\Phi_P$
(\ref{55ff1}) of a principal bundle $P$ admits the jet
prolongation $J^1\Phi_P$ to a one-parameter group of
$G$-equivariant automorphism of the jet manifold $J^1P$ which, in
turn, yields a one-parameter group of principal automorphisms
$\Phi_C$ of the bundle of principal connections $C$ (\ref{B1}).
Its infinitesimal generator is a vector field on $C$. As a
consequence, any principal vector field $\xi$ (\ref{b3106}) yields
a principal vector field
\mar{279}\beq
u_\xi = (\dr_\m\xi^r + c_{pq}^r a_\m^p\xi^q)\dr_r^\m. \label{279}
\eeq
on $C$ \cite{book00}.

A glance at the expression (\ref{279}) shows that one can think of
the principal vector fields $u_\xi$ as being a linear first order
differential operator on a vector space of sections of the Lie
algebra bundle $V_GC$ (\ref{b1.205}) with values in a vector space
of vertical vector fields on the bundle of principal connections
$C$ (\ref{B1}), i.e., $u_\xi$ (\ref{279}) are even gauge
transformations (see Remark \ref{qwe}) with even gauge parameter
functions $\xi$ (\ref{b3106}). However, since gauge symmetries in
the second Noether theorem \ref{w35}) are odd, we modify the
definition of gauge transformations in gauge theory in accordance
with Definition \ref{sgauge} as follows.

Let us treat $V_GP\to X$ as an odd vector bundle, and let
$(X,\gA_{V_GP}$ be the corresponding simple graded manifold. Then
let us consider the composite bundle
\mar{vgv}\beq
V_GP\op\times_X C\to C\to X, \label{vgv}
\eeq
coordinated by $(x^\m, a^m_\m, \chi^r)$, and the graded bundle
$(X,C,\gA_{V_GP\times_X C})$ (\ref{olo}) modelled over this
composite bundle together with the local generating basis $(x^\m,
a^m_\m, c^r)$ whose terms $c^r$ are odd. Let
\mar{yyy}\beq
S_\infty^*[V_GP\op\times_X C; C] \label{yyy}
\eeq
 be the DBGA
(\ref{tyt'}) together with the monomorphisms (\ref{tgv}):
\be
\cO^*_\infty C\to S_\infty^*[V_GP\op\times_X C; C], \qquad
S^*_\infty[V_GP;X]\to S_\infty^*[V_GP\op\times_X C; C].
\ee

By inspection of transition functions of the principal vector
field $u_\xi$ (\ref{279}), one can justify the existence of an odd
contact derivation $J^\infty u$ of the DBGA (\ref{yyy}) generated
by a generalized vector field
\mar{57f30'}\beq
u=(c_\m^r + c_{pq}^r a_\m^pc^q)\dr_r^\m \label{57f30'}
\eeq
on a graded bundle $(X,C,\gA_{V_GP\times_X C})$. The graded
derivation $u$ (\ref{57f30'}) obviously vanishes on a subring
\be
S^0_\infty[V_GP;X]\subset S_\infty^0[V_GP\op\times_X C; C].
\ee
Consequently, it is a gauge transformation of a Lagrangian system
\be
\cO^*_\infty C\subset S_\infty^*[V_GP\op\times_X C; C],
\ee
parameterized by odd ghosts $c^r$.

In Yang -- Mills gauge theory, its Lagrangian $L$ (\ref{57f1}) is
required to be invariant under the gauge transformation $u$
(\ref{57f30'}), i.e., it it is an exact gauge symmetry. The
corresponding condition reads
\mar{57f4}\ben
&& \bL_{J^1u}L=0, \label{57f4}\\
&& J^1u =u +(c_{\la\m}^r + c_{pq}^r a_\m^p c_\la^q
+c_{pq}^r a_{\la\m}^pc^q )\dr_r^{\la\m}, \nonumber
\een
(cf. (\ref{23f41})). In this case, the first variational formula
(\ref{J4}) for the Lie derivative (\ref{57f4}) takes a form
\mar{57f5}\beq
0= (c_\m^r + c_{pq}^r a_\m^pc^q)\cE_r^\m + d_\la[(c_\m^r +
c_{pq}^r a_\m^pc^q)\dr^{\la\m}_r\cL)]. \label{57f5}
\eeq
It leads to the gauge invariance conditions (\ref{g4g}) --
(\ref{g7g}) which read
\mar{57f7a-c}\ben
&& \dr_p^{\m\la}\cL + \dr_p^{\la\m}\cL = 0, \label{57f7a}\\
&& \cE_r^\m +d_\la \dr^{\la\m}_r\cL
+ c_{pr}^q a_\nu^p\dr^{\m\nu}_q\cL=0, \label{57f7b}\\
&& c_{pq}^r(a_\m^p \cE^\m_r+
d_\la( a_\m^p\dr^{\la\m}_r\cL))=0. \label{57f7c}
\een
One can regard the equalities (\ref{57f7a}) -- (\ref{57f7c}) as
the conditions of a Lagrangian $L$ to be gauge invariant. They are
brought into the form
\mar{57f8a-c}\ben
&& \dr_p^{\m\la}\cL + \dr_p^{\la\m}\cL = 0. \label{57f8a}\\
&& \dr_q^\m \cL + c_{pq}^r a_\nu^p \dr_r^{\m\nu}\cL  = 0, \label{57f8b} \\
&& c_{pq}^r(a_\m^p\dr_r^\m\cL + a^p_{\la\m} \dr_r^{\la\m}\cL)  =
0. \label{57f8c}
\een

In order to solve these equations, let us refer to the canonical
splitting of the jet manifold
\mar{296'}\ben
&& J^1C =C_+\op\oplus_C C_-=C_+\op\oplus_C (C\op\times_X\op\w^2T^*X\ot V_GP),
\nonumber\\
&&a_{\la\m}^r = \frac12(\cF_{\la\m}^r + \cS_{\la\m}^r)= \frac{1}{2}(a_{\la\m}^r + a_{\m\la}^r
 - c_{pq}^r a_\la^p a_\m^q) +
\label{296'}\\
&& \qquad   \frac{1}{2}
(a_{\la\m}^r - a_{\m\la}^r + c_{pq}^r a_\la^p a_\m^q), \nonumber
\een
and let us utilize the coordinates $(a^q_\m, \cF^r_{\la\m},
\cS^r_{\la\m})$ (\ref{296'}). With respect to these coordinates,
the equation (\ref{57f8a}) reads
\beq
\frac{\dr\cL}{\dr {\cal S}^p_{\m\la}}=0. \label{b3128}
\eeq
Then the equation (\ref{57f8b}) takes a form
\beq
\frac{\dr\cL}{\dr a^q_\m}=0. \label{b3129}
\eeq
A glance at the equalities (\ref{b3128}) and (\ref{b3129}) shows
that a gauge invariant Lagrangian factorizes through the strength
coordinates $\cF$ (\ref{296'}). Then the equation (\ref{57f8c}),
written as
\be
c^r_{pq}\cF^p_{\la\m}\frac{\dr\cL}{\dr \cF^r_{\la\m}}=0,
\ee
means that the gauge symmetry $u$ (\ref{57f30'}) of a Lagrangian
$L$ is exact. The following thus has been proved.

\begin{lemma} \mar{57t1} \label{57t1} The gauge theory Lagrangian
(\ref{57f1}) possesses the exact gauge symmetry $u$ (\ref{57f30'})
only if it factorizes through the strength coordinates $\cF$
(\ref{296'}).
\end{lemma}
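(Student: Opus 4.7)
The plan is to exploit the decomposition of the identity (\ref{57f5}) into separately vanishing pieces of different polynomial orders in the independent odd ghost generators $c^r$, $c^r_\mu$, $c^r_{\lambda\mu}$, and then convert these into constraints on $\cL$ after transporting to the canonical jet splitting (\ref{296'}).

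First I would apply the first variational formula (\ref{J4}) / (\ref{23f42}) to the Lie derivative $\bL_{J^1u}L$, using the prolongation $J^1u$ written out in (\ref{57f4}). Setting the result equal to zero gives the equality (\ref{57f5}). Since the ghosts $c^r$, $c^r_\mu$, $c^r_{\lambda\mu}$ and the connection jets $a^p_\mu$, $a^p_{\lambda\mu}$ are functionally independent generators of the DBGA (\ref{yyy}), the coefficients at each independent monomial in the ghosts and their total derivatives must vanish separately. Collecting the coefficient at $c^r_{\lambda\mu}$ yields (\ref{57f7a}); collecting the coefficient at $c^r_\mu$ and using (\ref{57f7a}) to cancel the derivative terms yields (\ref{57f7b}); and collecting the coefficient at $c^r$ gives (\ref{57f7c}). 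Eliminating the Euler--Lagrange terms by substituting (\ref{57f7b}) into (\ref{57f7c}) and into itself produces the equivalent algebraic set (\ref{57f8a})--(\ref{57f8c}).

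Next I would change coordinates on $J^1C$ from $(a^m_\mu, a^m_{\lambda\mu})$ to $(a^m_\mu, \cF^r_{\lambda\mu}, \cS^r_{\lambda\mu})$ using the canonical splitting (\ref{296'}). Because $\cF^r_{\lambda\mu}$ is antisymmetric in $(\lambda,\mu)$ while $\cS^r_{\lambda\mu}$ is symmetric, the symmetrization $\partial^{\mu\lambda}_p\cL+\partial^{\lambda\mu}_p\cL$ in (\ref{57f8a}) corresponds exactly to differentiation with respect to $\cS^p_{\mu\lambda}$, giving $\partial\cL/\partial\cS^p_{\mu\lambda}=0$. After this, a direct computation of $\partial\cL/\partial a^q_\mu$ at fixed $(\cF,\cS)$ using the chain rule and the term $-c^r_{pq}a^p_\lambda a^q_\mu$ in (\ref{296'}) shows that (\ref{57f8b}) reduces to $\partial\cL/\partial a^q_\mu=0$ in the new coordinates. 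Hence $\cL$ is a function of $x^\mu$ and $\cF^r_{\lambda\mu}$ alone, which is the claimed factorization.

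Finally, I would remark that (\ref{57f8c}), rewritten in the new coordinates as $c^r_{pq}\cF^p_{\lambda\mu}\partial\cL/\partial\cF^r_{\lambda\mu}=0$, expresses the additional adjoint-invariance already encoded in the exact gauge symmetry; this is the converse-type statement needed to confirm the factorization is consistent with the symmetry, though for the ``only if'' claim the first two equations already suffice. The main obstacle is the bookkeeping in step one: cleanly isolating independent ghost monomials in (\ref{57f5}) requires care with total-derivative terms so that the coefficient of $c^r_{\lambda\mu}$ genuinely decouples from the coefficient of $c^r_\mu$, but since each ghost order carries a distinct antifield-number weight and the coefficients are themselves functions on finite-order jets, this separation is rigorous.
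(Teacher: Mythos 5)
Your proposal is correct and follows essentially the same route as the paper: the first variational formula yields (\ref{57f5}), separating the coefficients of the independent ghost jets $c^r$, $c^r_\m$, $c^r_{\la\m}$ gives (\ref{57f7a})--(\ref{57f7c}), which reduce to the algebraic set (\ref{57f8a})--(\ref{57f8c}), and passing to the coordinates $(a^q_\m,\cF^r_{\la\m},\cS^r_{\la\m})$ of the splitting (\ref{296'}) turns the first two into $\dr\cL/\dr\cS=0$ and $\dr\cL/\dr a=0$, whence $\cL$ depends only on $\cF$. The only cosmetic point is that the separation of ghost monomials is justified by the functional independence (arbitrariness) of the gauge parameters and their jets, not by any antifield-number grading, but this does not affect the argument.
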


A corollary of this result is the well-known Utiyama theorem
\cite{bruz}.

\begin{theorem} There is a unique gauge invariant quadratic first order
Lagrangian, called the Yang -- Mills Lagrangian,
\mar{5.1}\beq
L_{\rm YM}=\frac14a^G_{pq}g^{\la\m}g^{\bt\n}\cF^p_{\la
\beta}\cF^q_{\m\n}\sqrt{|g|}\,\om, \qquad  g=\det(g_{\m\nu}), \label{5.1}
\eeq
where $a^G$ is a $G$-invariant bilinear form on a Lie algebra
$\cG_r$ and $g$ is a world metric on $X$.
\end{theorem}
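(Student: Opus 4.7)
The plan is to combine Lemma \ref{57t1} with representation-theoretic constraints on quadratic invariants. By Lemma \ref{57t1}, a gauge-invariant first order Lagrangian $L=\cL\om$ must factor through the strength coordinates $\cF^p_{\la\m}$ of the splitting (\ref{296'}). Since $L$ is required to be quadratic in the jet variables $a^r_{\la\m}$, and the non-symmetric part $\cS^r_{\la\m}$ is excluded by (\ref{b3128}) while $\cF$ is itself first order in $a^r_{\la\m}$ plus quadratic in $a^r_\m$, imposing quadraticity (together with the requirement that $L$ be a density on $X$, so independent of $a^r_\m$ by (\ref{b3129})) forces
\begin{equation*}
\cL\sqrt{|g|}^{-1}=\tfrac{1}{4}C^{\la\m\bt\nu}_{pq}(x)\,\cF^p_{\la\m}\cF^q_{\bt\nu},
\end{equation*}
where $C^{\la\m\bt\nu}_{pq}$ is antisymmetric in $(\la,\m)$ and $(\bt,\nu)$ and symmetric under the simultaneous swap $(\la\m,p)\leftrightarrow(\bt\nu,q)$.

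Next I would determine $C^{\la\m\bt\nu}_{pq}$ using two invariance requirements inherited from the transition laws of the bundle $C\to X$. First, under a change of local trivialization of $P$, $\cF^p_{\la\m}$ transforms tensorially in the Lie-algebra index via the adjoint action of $G$, so gauge invariance of $\cL$ under the finite (local) action of the structure group forces the fibre part of $C$ to be a $G$-invariant symmetric bilinear form $a^G_{pq}$ on $\cG_r$. Second, to produce a scalar density on $X$, the spacetime part must be built out of objects available on $X$; with no additional structure beyond a world metric $g_{\la\m}$ on $X$, the only rank-four tensor with the prescribed index symmetries is a linear combination of $g^{\la\bt}g^{\m\nu}$ and $g^{\la\nu}g^{\m\bt}$. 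The required antisymmetry in $(\la,\m)$ and in $(\bt,\nu)$ kills any independent second term and fixes this part to $g^{\la\bt}g^{\m\nu}-g^{\la\nu}g^{\m\bt}$, which, upon contraction with $\cF^p_{\la\m}\cF^q_{\bt\nu}=\cF^p_{[\la\m]}\cF^q_{[\bt\nu]}$ and symmetrisation, reduces to $2\,g^{\la\bt}g^{\m\nu}$ (relabelled as $g^{\la\m}g^{\bt\nu}$ in (\ref{5.1})). Combining these two factorisations of $C$ yields
\begin{equation*}
\cL=\tfrac14 a^G_{pq}\,g^{\la\bt}g^{\m\nu}\cF^p_{\la\m}\cF^q_{\bt\nu}\sqrt{|g|},
\end{equation*}
which is exactly (\ref{5.1}).

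The main obstacle is not the counting of tensors but the clean derivation of the $G$-invariance of $a^G_{pq}$ purely from the exact gauge symmetry condition (\ref{57f8c}): one must check that, after substituting the ansatz into $c^r_{pq}\cF^p_{\la\m}(\dr\cL/\dr\cF^r_{\la\m})=0$, the identity collapses to $a^G_{ps}c^s_{qr}+a^G_{qs}c^s_{pr}=0$, i.e.\ the standard $\mathrm{ad}$-invariance of $a^G$. This is a direct algebraic verification using the antisymmetry of $c^s_{qr}$ in $(q,r)$ and the symmetry of $a^G$; once established, uniqueness up to the choice of $a^G$ and $g$ follows.
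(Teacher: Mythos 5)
Your proposal is correct and follows the route the paper itself intends: the paper derives the gauge-invariance conditions (\ref{57f8a})--(\ref{57f8c}), concludes in Lemma \ref{57t1} that an exactly gauge-invariant Lagrangian factorizes through the strength coordinates $\cF$, and then simply states the Utiyama theorem as a ``well-known'' corollary with a citation to Bruzzo, giving no further argument. What you add is precisely the missing classification step: quadraticity reduces $\cL$ to a coefficient tensor $C^{\la\m\bt\nu}_{pq}$ contracted with $\cF^p_{\la\m}\cF^q_{\bt\nu}$, the spacetime indices are fixed by the available metric structure, and condition (\ref{57f8c}) collapses to the ${\rm ad}$-invariance $a^G_{rs}c^r_{pq}+a^G_{rp}c^r_{sq}=0$ of the fibre part. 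Your verification of that last identity (using the symmetry of $\cF^p_{\la\m}\cF^{s\,\la\m}$ in $(p,s)$) is the correct mechanism.

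Two caveats, both inherited from the theorem's own phrasing rather than introduced by you, are worth making explicit. First, the gauge-invariance conditions constrain only the Lie-algebra indices of $C^{\la\m\bt\nu}_{pq}$; that the spacetime part must be built from a world metric $g$ is an additional naturality hypothesis, implicit in the statement (\ref{5.1}) but not a consequence of (\ref{57f8a})--(\ref{57f8c}). Second, on an oriented $X$ with $\di X=4$ the Levi--Civita density $\ve^{\la\m\bt\nu}$ furnishes a second invariant with the required antisymmetries, so $a^G_{pq}\ve^{\la\m\bt\nu}\cF^p_{\la\m}\cF^q_{\bt\nu}$ is another gauge-invariant quadratic density; uniqueness therefore holds only modulo such variationally trivial (topological) terms. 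Your tensor-classification sentence silently excludes this case, so you should either restrict to tensors built from $g$ alone (no orientation) or note that the extra term is $d_H$-exact and hence does not affect the dynamics.
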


The Euler -- Lagrange operator (\ref{57f1}) of the Yang -- Mills
Lagrangian $L_{\rm YM}$ (\ref{5.1}) is
\mar{57f13}\beq
\cE_{\rm YM}=\cE^\m_r\thh^\m_r\w\om=(\dl^n_rd_\la
+c^n_{rp}a^p_\la)(a^G_{nq}g^{\m\al}g^{\la\bt}
\cF^q_{\al\bt}\sqrt{|g|})\thh_\m^r\w\om. \label{57f13}
\eeq
Its kernel (\ref{57f2'}) defines the Yang -- Mills equations
\mar{57f14}\beq
\cE^\m_r= (\dl^n_rd_\la
+c^n_{rp}a^p_\la)(a^G_{nq}g^{\m\al}g^{\la\bt}
\cF^q_{\al\bt}\sqrt{|g|})=0. \label{57f14}
\eeq

We call a Lagrangian system $(\cS^*_\infty[C], L_{\rm YM})$ the
Yang -- Mills gauge theory.

\begin{remark} In classical gauge theory, there are Lagrangians, e.g., the
Chern -- Simons one (\ref{csl}) (Section 6.3) which do not
factorize through the strength coordinates $\cF$, and whose gauge
symmetry $u$ (\ref{57f30'}) is variational, but not exact.
\end{remark}

Since the gauge symmetry $u$ (\ref{57f30'}) of the Yang -- Mills
Lagrangian (\ref{5.1}) is exact, the first variational formula
(\ref{57f5}) leads to a weak conservation law
\be
0\ap d_\la(-u^\m_r\dr^{\la\m}_r\cL_{\rm YM})
\ee
of the Noether current
\mar{57f11}\beq
\cJ^\la=-(\dr_\m\xi^r + c_{pq}^r a_\m^p\xi^q)
(a^G_{rq}g^{\m\al}g^{\la\bt} \cF^q_{\al\bt}\sqrt{|g|}).
\label{57f11}
\eeq
In accordance with Theorem \ref{supp}, the Noether current
(\ref{57f11}) is brought into the superpotential form (\ref{g21g})
which reads
\be
&& \cJ^\la= c^r\cE_r^\m + d_\nu(c^r\dr_r^{[\nu\m]}\cL_{\rm
YM}), \\
&& U^{\nu\m}= c^r a^G_{rq}g^{\nu\al}g^{\m\bt} \cF^q_{\al\bt}\sqrt{|g|}.
\ee

The gauge invariance conditions (\ref{57f7a}) -- (\ref{57f7c})
lead to the Noether identities  which the Euler -- Lagrange
operator $\cE_{\rm YM}$ (\ref{57f13}) of the Yang -- Mills
Lagrangian (\ref{5.1}) satisfies. These Noether identities are
associated to the gauge symmetry $u$ (\ref{57f30'}). In accordance
with the formula (\ref{0657}), they read
\mar{57f6}\beq
c^p_{rq}a^q_\m\cE_p^\m + d_\m\cE_r^\m=0. \label{57f6}
\eeq

\begin{lemma} \mar{57l1} \label{57l1} The Noether identities (\ref{57f6}) are
non-trivial.
\end{lemma}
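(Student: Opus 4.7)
The plan is to exhibit an explicit one-cycle of the Koszul--Tate complex (\ref{v042}) that represents the Noether identities (\ref{57f6}) and then to rule out its being a $\ol\dl$-boundary by specialising to the zero-connection locus.

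First I would write down the cycle. With the Yang--Mills generating basis $s^A = a^p_\mu$ and antifields $\ol s_A = \ol a^\mu_p$, the identities (\ref{57f6}) correspond to the one-chain
\[
\Delta_r\,\om \;=\; \bigl(c^p_{rq}\,a^q_\mu\,\ol a^\mu_p \;+\; d_\mu\ol a^\mu_r\bigr)\,\om \;\in\; \cP^{0,n}_\infty[\ol{VC};C]_1,
\]
for which a direct application of $\ol\dl=\rdr^A\cE_A$, together with $\ol\dl\, d_\mu = d_\mu\, \ol\dl$, yields $\ol\dl(\Delta_r\om)=(c^p_{rq}a^q_\mu\cE^\mu_p + d_\mu\cE^\mu_r)\om$, which vanishes by (\ref{57f6}). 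Thus $\Delta_r\om$ is a one-cycle reproducing the Noether identity in question, and non-triviality amounts to showing $\Delta_r\om\notin\operatorname{im}\ol\dl$.

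Next I would introduce a separating homomorphism. Let $\sigma$ be the graded algebra endomorphism of $\cP^0_\infty[\ol{VC};C]$ that sets every jet $a^p_{\nu,\Si}$ of the connection to zero while leaving every antifield jet $\ol a^\nu_{p,\Xi}$ untouched. Inspection of (\ref{57f14}) shows that $\cE^\mu_r$ is a polynomial in the jets of $a$ with no constant term, so $\sigma(\cE^\mu_r)=0$. Since $\sigma$ commutes with $d_\la$ when restricted to expressions depending only on $x$ and jets of $a$, this propagates to $\sigma(d_\La\cE^\mu_r)=0$ for every multi-index $\La$. Now any two-chain has the form $\Phi=\sum T^{((p,\mu)\La)((q,\nu)\Si)}\,\ol a^\mu_{p,\La}\,\ol a^\nu_{q,\Si}\,\om$, and by the right Leibniz rule every monomial of $\ol\dl\Phi$ carries exactly one factor $d_\Xi\cE^\nu_q$; hence $\sigma(\ol\dl\Phi)=0$ for any such $\Phi$. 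On the other hand
\[
\sigma(\Delta_r\om)\;=\;d_\mu\ol a^\mu_r\,\om,
\]
because the first summand of $\Delta_r$ is killed by $\sigma$ while the second is preserved, and this is a manifestly nonzero element of $\cP^{0,n}_\infty[\ol{VC};C]_1$. Consequently $\Delta_r\om$ cannot lie in $\operatorname{im}\ol\dl$, and (\ref{57f6}) are non-trivial.

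The only delicate point is justifying the commutation $\sigma\circ d_\la = d_\la\circ\sigma$ needed to pass from $\sigma(\cE^\mu_r)=0$ to $\sigma(d_\La\cE^\mu_r)=0$. One writes $d_\la=\dr_\la + \sum a^p_{\la+\Si,\nu}\dr^{\Si,\nu}_p + (\text{antifield part})$ and notes that the $a$-jet summation is annihilated by $\sigma$ (every term contains an $a$-jet factor), while $\dr_\la$ and the antifield partials commute with $\sigma$ by construction. Everything else is a mechanical application of the right Leibniz rule for $\ol\dl$.
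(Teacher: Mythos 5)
Your proof is correct and starts from the same point as the paper: both exhibit the one-cycle $\Delta_r=c^p_{rq}a^q_\m\ol a^\m_p+d_\m\ol a^\m_r$ of the complex (\ref{v042}) representing the identities (\ref{57f6}). The difference is in the second half: the paper simply asserts ``Clearly, they are not $\ol\dl$-boundaries,'' whereas you actually prove it, via the evaluation homomorphism $\si$ that kills all jets of $a^r_\m$ and fixes the antifields. Since $\cE^\m_r$ (and hence every $d_\La\cE^\m_r$) is a polynomial in the $a$-jets with no constant term, $\si$ annihilates the image of $\ol\dl$ on antifield-number-two chains, while $\si(\Delta_r\om)=d_\m\ol a^\m_r\,\om\neq 0$; this is a clean separating argument and it buys a genuine proof where the paper offers only an assertion. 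One small caveat: $C\to X$ is an affine bundle, so ``setting the connection jets to zero'' is a chart-dependent operation, not a global one. This does not damage the argument --- if $\Delta_r\om$ were a boundary globally it would be one on each coordinate chart, and your $\si$ already yields a contradiction locally --- but it is worth saying explicitly that $\si$ is being used as a local obstruction (or, equivalently, that one evaluates along a fixed background section of $C$).
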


\begin{proof}
Following the procedure in Section 5.2, let us consider the
density dual
\mar{vcc}\beq
\ol{VC}=V^*C\op\ot_C\op\w^nT^*X=(T^*X\op\ot_X
V_GP)^*\op\ot_C\op\w^nT^*X \label{vvc}
\eeq
of the vertical tangent bundle $VC$ of $C\to X$, and let us
enlarge the differential graded algebra $\cS^*_\infty[C]$ to the
DBGA (\ref{41f5}):
\be
\cP^*_\infty[\ol{VC};C]=\cS^*_\infty[\ol{VC};C],
\ee
possessing the local generating basis $(a^r_\m, \ol a^\m_r)$ where
$\ol a^\m_r$ are odd antifields. Providing this DBGA with the
nilpotent right graded derivation
\be
\ol\dl=\frac{\rdr}{\dr \ol a^\m_r} \cE_r^\m,
\ee
let us consider the chain complex (\ref{v042}). Its one-chains
\mar{57f21}\beq
\Delta_r=c^p_{rq}a^q_\m\ol a_p^\m + d_\m\ol a_r^\m \label{57f21}
\eeq
are $\ol\dl$-cycles which define the Noether identities
(\ref{57f6}). Clearly, they are not $\ol\dl$-boundaries.
Therefore, the Noether identities (\ref{57f6}) are non-trivial.

\end{proof}

\begin{lemma} \mar{57l2} \label{57l2} The Noether identities (\ref{57f6}) are
complete.
\end{lemma}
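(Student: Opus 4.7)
The plan is to show that every $\ol\dl$-cycle $\Phi\in \cP^{0,n}_\infty[\ol{VC};C]_1$ is, modulo $\ol\dl$-boundaries, a finite combination $\sum_{0\leq|\Xi|}\Phi^{r,\Xi}\,d_\Xi\Delta_r\,\om$. I would take an arbitrary one-cycle
$$\Phi=\sum_{0\leq|\La|\leq N}\Phi^{r\m,\La}\,\ol a^\m_{\La r}\,\om,\qquad \Phi^{r\m,\La}\in \cO^0_\infty C,$$
satisfying the cycle condition $\sum \Phi^{r\m,\La}\,d_\La\cE^\m_r=0$, and argue by downward induction on the maximal jet order $N$ of antifields appearing in $\Phi$.

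The key input is that $\cE^\m_r$ in (\ref{57f13}) is a second-order differential operator on $a$ whose principal symbol, coming from the Yang -- Mills kinetic term $a^G_{pq}g^{\la\m}g^{\bt\n}\cF^p_{\la\bt}\cF^q_{\m\n}$, is non-degenerate up to the natural antisymmetrization in $(\la,\m)$ built into $\cF$. Consequently, in the expansion of $\sum\Phi^{r\m,\La}d_\La\cE^\m_r$ the top jet components $a^q_{\Xi}$ with $|\Xi|=N+2$ are algebraically independent coordinates on the jet manifold, and their vanishing forces the leading coefficient $\Phi^{r\m,\m_1\cdots\m_N}$ to have exactly the symmetry pattern produced by $d_{\m_1\cdots\m_{N-1}}\Delta_r$, where $\Delta_r=c^p_{rq}a^q_\m\ol a^\m_p+d_\m\ol a^\m_r$ as in (\ref{57f21}). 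Using the freedom to modify $\Phi$ by $\ol\dl$ of a graded-skew two-chain $T^{(A,\La)(B,\Si)}\ol a_{\La A}\ol a_{\Si B}$, the residual symmetric-in-$(\m,\m_N)$ part of the leading coefficient is absorbed into a boundary, so the top piece of $\Phi$ takes the desired form $\Phi^{r,\m_1\cdots\m_N}d_{\m_1\cdots\m_{N-1}}\Delta_r\,\om$.

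Subtracting this factorized leading piece yields a one-cycle $\Phi'$ of strictly lower jet order, and the induction proceeds down to the base case $N=0$. There the cycle condition reduces to the algebraic relation $\Phi^{r\m}\cE^\m_r=0$ with $\Phi^{r\m}$ of jet order strictly less than that of $\cE^\m_r$; since the second-order jets appearing in $\cE^\m_r$ are algebraically independent as functions on $J^2C$, this forces $\Phi^{r\m}\equiv 0$ identically, so such a $\Phi$ is already a (trivial) boundary.

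The main obstacle will be the detailed bookkeeping in the inductive step: subprincipal terms in $d_\La\cE^\m_r$ produced by the $c^n_{rp}a^p_\la$ piece of (\ref{57f13}) generate contributions of jet order $|\La|+1$ that must precisely match, via the Leibniz rule, the subprincipal contributions of $c^p_{rq}a^q_\m\cE^\m_p$ inside $d_{\m_1\cdots\m_{N-1}}\Delta_r$. Verifying this matching at every derivative order reproduces the Bianchi-type identity (\ref{57f6}) term-by-term and is the technical core of the argument; it relies essentially on the $G$-invariance identity $c^s_{rp}a^G_{sq}+c^s_{rq}a^G_{ps}=0$ that renders $L_{\rm YM}$ itself gauge invariant.
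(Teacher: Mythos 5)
Your plan follows the same core idea as the paper's proof --- analyze the highest-derivative term of the composition $\Phi\circ\cE_{\rm YM}$ and exploit the fact that the left null space of the Yang--Mills principal symbol $a^G_{rq}(g^{\m\nu}\xi^2-\xi^\m\xi^\nu)$ is spanned by $\xi_\m$, which is the symbol of $\Delta_r^1=d_\m\ol a^\m_r$ --- except that the paper restricts attention to first-order operators $\Phi$, whereas you attempt the full induction over the antifield jet order. However, two steps as written would fail. First, the base case: the identity $\sum_{r,\m}\Phi^{r\m}\cE^\m_r=0$ does \emph{not} force $\Phi^{r\m}\equiv 0$. For instance $\Phi^{r\m}=\sum T^{(r\m)(s\nu)}\cE^\nu_s$ with $T$ antisymmetric in the paired indices is a nonzero solution; these are precisely the trivial (boundary) Noether identities, so the correct conclusion is that such a $\Phi$ is a $\ol\dl$-boundary, and reaching it requires first stripping off such terms rather than asserting identical vanishing. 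Moreover, your hypothesis that $\Phi^{r\m}$ has jet order strictly less than that of $\cE^\m_r$ is not available: the coefficients $\Phi^{r\m,\La}$ lie in $\cO^0_\infty C$ and may depend on arbitrarily high jets of $a$, and nothing in the inductive subtractions controls this. For the same reason the top-order analysis is subtler than stated: differentiating the cycle condition with respect to a top jet $a^q_\Xi$ also hits the coefficients themselves, producing terms proportional to $d_\La\cE^\m_r$ that must be absorbed into boundaries before the symbol argument applies.

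Second, the step you defer as ``bookkeeping'' --- that vanishing of the top-order contraction forces the leading coefficient to carry exactly the symmetry pattern of $d_{\m_1\cdots\m_{N-1}}\Delta_r$ --- is not bookkeeping but the entire content of completeness at the symbol level; it is exactly where the one-dimensionality of the left kernel of $g^{\m\nu}\xi^2-\xi^\m\xi^\nu$ must be invoked and then propagated through the symmetrization over derivative multi-indices, modulo the graded-antisymmetric boundary freedom. The paper's own argument is terse at the same point (it simply asserts that the highest-derivative term of $\Phi^1\circ\cE^2_{\rm YM}$ vanishes only for $\Phi^1=d_\m\ol a^\m_r$), so your strategy is not off-track; but as a proof it is incomplete precisely where the lemma's claim lives, and the base case needs to be repaired as above.
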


\begin{proof}
The second order Euler -- Lagrange operator $\cE_{\rm YM}$
(\ref{57f13}) takes its values into the space of sections of the
vector bundle
\be
(T^*X\op\ot_X V_GP)^*\op\ot_X\op\w^nT^*X\to X.
\ee
Let $\Phi$ be a first order differential operator on this vector
bundle such that
\be
\Phi\circ \cE_{\rm YM}=0.
\ee
This condition holds only if the highest derivative term of the
composition $\Phi^1\circ \cE_{\rm YM}^2$ of the first order
derivative term $\Phi^1$ of $\Phi$ and the second order derivative
term $\cE_{\rm YM}^2$ of $\cE_{\rm YM}$ vanishes. This is the case
only of
\be
\Phi^1=\Delta_r^1= d_\m\ol a_r^\m.
\ee
\end{proof}

The graded densities $\Delta_r\om$ (\ref{57f21}) constitute a
local basis for a $C^\infty(X)$-module $\cC_{(0)}$ isomorphic to a
module $\ol {V_GP}(X)$ of sections of the density dual $\ol
{V_GP}$ of the Lie algebra bundle $V_GP\to X$. Let us enlarge a
DBGA $\cP^*_\infty[\ol{VC};C]$ to a DBGA
\be
\ol\cP^*_\infty\{0\}=\cS^*_\infty[\ol{VC};C\op\times_X\ol {V_GP}]
\ee
possessing the local generating basis $(a^r_\m,\ol a^\m_r, \ol
c_r)$ where $\ol c_r$ are even Noether antifields.

\begin{lemma} \mar{57l3} \label{57l3} The Noether identities (\ref{57f6}) are
irreducible.
\end{lemma}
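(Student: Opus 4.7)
The plan is to apply the framework for first-stage Noether identities developed in Section 5.1 and show that the second homology $H_2(\dl_0)$ of the extended Koszul--Tate complex (\ref{v66}) contains no non-trivial classes. First I would verify the hypothesis of Lemma \ref{v134'} --- namely, that every $\ol\dl$-cycle in $\ol\cP^{0,n}_\infty\{0\}_2$ is a $\dl_0$-boundary; this is direct, because such a cycle is necessarily of the form $H^{(p,\La)(q,\Si)}\ol a^\m_{\La p}\ol a^\nu_{\Si q}\om$ and the argument in the proof of Lemma \ref{v134'} applies verbatim to this quadratic-in-antifield piece.

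Next I would consider a generic two-cycle $\Phi = G + H$ in $\ol\cP^{0,n}_\infty\{0\}_2$ with
\[
G = \op\sum_{0\leq|\La|} G^{r,\La}\ol c_{\La r}\om,
\qquad H \text{ quadratic in } \ol a^\m_{\La p},
\]
and analyze the $\dl_0$-cycle condition $\op\sum G^{r,\La} d_\La \Delta_r\,\om = -\ol\dl H$, where $\Delta_r = c^p_{rq}a^q_\m\ol a^\m_p + d_\m\ol a^\m_r$. Using the reorganization formulas (\ref{qq1a})--(\ref{qq1d}), the left-hand side becomes, modulo a $d_H$-exact term, $\op\sum \eta(G^r)^\La d_\La \Delta_r\,\om$, whose leading dependence on the antifield jets is the $\cE$-free polynomial $\op\sum \eta(G^r)^\La d_{\La+\m}\ol a^\m_r\,\om$. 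The right-hand side $-\ol\dl H$, by contrast, is a sum of terms each proportional to $d_\Xi\cE^\nu_p$, where $\cE^\nu_p$ (\ref{57f14}) is a specific non-trivial second-order differential polynomial in the field jets $a^p_{\la\m}$, algebraically independent of the antifield jets. Grading both sides by $\cE$-degree, I conclude that $\eta(G^r)^\La$ must factor through $\cE$ and is therefore $\ol\dl$-exact; by the reciprocity (\ref{qq1d}) the same holds for $G^{r,\La}$, so $G$ is $\ol\dl$-exact, and Lemma \ref{v134'} then implies that $\Phi$ is a $\dl_0$-boundary.

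The main obstacle is the order-by-order matching in the second step, specifically justifying that the leading antifield-linear term $\op\sum \eta(G^r)^\La d_{\La+\m}\ol a^\m_r$ on the left cannot be absorbed into $-\ol\dl H$ unless $\eta(G^r)^\La$ factors through $\cE$. Since every term of $\ol\dl H$ carries an explicit factor of $\cE^\nu_p$ or its total derivatives, whereas the terms $d_{\La+\m}\ol a^\m_r$ on the left are $\cE$-free monomials in the independent jet coordinates, such matching is possible only if $\eta(G^r)^\La$ itself is $\ol\dl$-exact. This grading-by-$\cE$-degree argument rests on the explicit form (\ref{57f14}) of the Yang--Mills Euler--Lagrange operator and the algebraic independence of the field jets $(a^p_{\m\La})$ and antifield jets $(\ol a^\m_{p\La})$ as coordinates on the graded infinite order jet manifold.
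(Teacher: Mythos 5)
Your core argument is the right one, and it is essentially what the paper's own (one-line) proof asserts: the two-cycle condition $\sum G^{r,\La}d_\La\Delta_r\,\om=-\ol\dl H$ forces $G$ to be $\ol\dl$-exact because every term of $-\ol\dl H$ carries a factor $d_\Xi\cE^\nu_p$, whereas the leading antifield-jet coefficients on the left are the $G^{r,\La}$ themselves (most cleanly seen by equating coefficients of the independent jets $\ol a^\m_{r,\m+\La}$ in descending order of $|\La|$, using the explicit form $\Delta_r=c^p_{rq}a^q_\m\ol a^\m_p+d_\m\ol a^\m_r$ of (\ref{57f21})). The paper leaves this as ``readily observed''; you have filled in the correct mechanism.

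Two caveats on your route. First, the cycle condition (\ref{v79}) is an exact identity of densities, not an identity modulo $d_H$, so the step where you replace the left-hand side ``modulo a $d_H$-exact term'' by $\sum\eta(G^r)^\La d_\La\Delta_r\,\om$ is not a legitimate move (and is not what (\ref{qq1a}) produces anyway); the $\eta$-reshuffling is unnecessary — compare coefficients of the antifield jets directly and the same conclusion follows for $G^{r,\La}$ without appeal to (\ref{qq1d}). Second, the detour through Lemma \ref{v134'} is not how the paper concludes and is not needed: by the definition following (\ref{v79}), a first-stage Noether identity is already trivial when its summand $G$ vanishes on-shell, which $\ol\dl$-exactness gives immediately. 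Invoking Lemma \ref{v134'} obliges you to verify its hypothesis (every $\ol\dl$-cycle in $\ol\cP^{0,n}_\infty\{0\}_2$ is a $\dl_0$-boundary), which you dismiss as ``verbatim'' but which is a separate check the paper never performs for this model; omitting that detour makes the proof both shorter and closer to the paper's.
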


\begin{proof}
Providing the DBGA $\ol\cP^*_\infty\{0\}$ with the nilpotent odd
graded derivation
\be
\dl_0=\ol\dl + \frac{\rdr}{\dr \ol c_r}\Delta_r,
\ee
let us consider the chain complex (\ref{v66}). Let us assume that
$\Phi$ (\ref{41f9}) is a two-cycle of this complex, i.e., the
relation (\ref{v79}) holds. It is readily observed that $\Phi$
obeys this relation only if its first term $G$ is $\ol\dl$-exact,
i.e., the first-stage Noether identities (\ref{v79}) are trivial.
\end{proof}

It follows from Lemmas \ref{57l1} -- \ref{57l3} that Yang -- Mills
gauge theory is an irreducible degenerate Lagrangian theory
characterized by the complete Noether identities (\ref{57f6}).

Following inverse second Noether Theorem \ref{w35}, let us
consider a DBGA
\mar{vvc1}\beq
\cP^*_\infty\{0\}=\cS^*_\infty[\ol{VC}\op\oplus_C
V_GP;C\op\times_X\ol {V_GP}] \label{vvc1}
\eeq
with the local generating basis $(a^r_\m,\ol a^\m_r, c^r, \ol
c_r)$ where $c_r$ are odd ghosts. The gauge operator $\bu$
(\ref{w108'}) associated to the Noether identities (\ref{57f6})
reads
\mar{57f30}\beq
\bu=u=(c_\m^r + c_{pq}^r a_\m^pc^q)\dr_r^\m. \label{57f30}
\eeq
It is the odd gauge symmetry (\ref{57f30'}) of the Yang -- Mills
Lagrangian $L_{\rm YM}$ (\ref{5.1}). The gauge operator $\bu$
(\ref{57f30}) admits the nilpotent BRST extension (\ref{hhh}):
\be
\bbc= (c_\m^r + c_{pq}^r a_\m^pc^q)\frac{\dr}{\dr a_\m^r} -\frac12
c^r_{pq}c^pc^q\frac{\dr}{\dr c^r},
\ee
which is the well-known BRST operator in Yang - -Mills gauge
theory \cite{gom}. Then, by virtue of Theorem \ref{w130}, the Yang
-- Mills Lagrangian $L_{\rm YM}$ is extended to a proper solution
of the master equation
\be
L_E=L_{\rm YM}+ (c_\m^r + c_{pq}^r a_\m^pc^q)\ol a^\m_r\om
-\frac12 c^r_{pq}c^pc^q\ol c_r\om.
\ee

\subsection{Gauge gravitation theory on natural bundles}

Gauge transformations of Einstein's General Relativity and its
extensions, including gauge gravitation theory, are general
covariant transformations. These are bundle automorphisms of
so-called natural bundles. Therefore classical gravitation theory
can be described as a field theory on natural bundles over a
four-dimensional orientable manifold $X$, called the world
manifold \cite{book09,sard06,sard11}.

As well known, a connection $\G$ on a fibre bundle $Y\to X$
defines the horizontal lift $\G\tau$ onto $Y$ of any vector field
$\tau$ on $X$. There is the category of natural bundles
\cite{kol,terng} which admit the functorial lift $\wt\tau$ onto
$T$ of any vector field $\tau$ on $X$ such that
$\tau\mapsto\ol\tau$ is a monomorphism of the Lie algebra of
vector field on $X$ to that on $T$. One can think of the lift
$\wt\tau$ as being an infinitesimal generator of a local
one-parameter group of general covariant transformations of $T$.

Natural bundles are exemplified by tensor bundles over $X$. A
frame bundle $LX$ of linear frame in the tangent spaces to $X$ is
a natural bundle. It is a principal bundle with a structure group
$GL_4=GL^+(4, \mathbb R)$, and all bundles associated to $LX$ also
are natural are the natural ones. The bundle
\mar{gr14}\beq
C_K=J^1LX/GL_4 \label{gr14}
\eeq
of principal connections on $LX$ is not associated to $LX$, but it
also is a natural bundle \cite{book,book00}.

Dynamic variables of gauge gravitation theory are linear wold
connections and pseudo-Riemannian metrics on a world manifold.
Thus, it is a metric-affine gravitation theory
\cite{hehl,iva,sard11}.

Linear connections on $X$ (henceforth world connection) are
principal connections on the linear frame bundle $LX$  of $X$.
They are represented by sections of the bundle of linear
connections $C_K$ (\ref{gr14}). This is provided with bundle
coordinates $(x^\la,k_\la{}^\nu{}_\al)$ such that components
$k_\la{}^\nu{}_\al\circ K=K_\la{}^\nu{}_\al$ of a section $K$ of
$C_K\to X$ are coefficient of the linear connection
\be
K=dx^\la\ot (\dr_\la + K_\la{}^\m{}_\nu \dot x^\nu\dot\dr_\mu)
\ee
on $TX$ with respect to the holonomic bundle coordinates
$(x^\la,\dot x^\la)$.

In order to describe gravity, let us assume that the linear frame
bundle $LX$ admits a Lorentz structure, i.e., reduced principal
subbundles with the structure Lorentz group $SO(1,3)$. Global
sections of the corresponding quotient bundle
\mar{b3203}\beq
\Si= LX/SO(1,3)\to X \label{b3203}
\eeq
are pseudo-Riemannian (henceforth world) metrics on $X$. This fact
motivates us to treat a metric gravitational field as a Higgs
field \cite{iva,sard06,sard11}.

Thus, the total configuration space of gauge gravitation theory in
the absence of matter fields is the bundle product
\mar{grr}\beq
Q=\Si\op\times_X C_K \label{grr}
\eeq
coordinated by $(x^\la,\si^{\al\bt}, k_\mu{}^\al{}_\bt)$.

We consider first order Lagrangian theory on the fibre bundle $Q$
(\ref{grr})(see Example \ref{first}). Its structure algebra
$\cS^*_\infty[F;Y]=\cS^*_\infty[Q]$ (\ref{5.77a}) is the graded
differential algebra $\cS^*_\infty[C]=\cO^*_\infty Q$ (\ref{ppp})
of exterior forms on jet manifolds $J^rQ$ of $Q\to X$. Its first
order Lagrangian (\ref{23f2}) is a density
\mar{grav}\beq
L_G=\cL_G\om: J^1Q\to \op\w^n T^*X \label{grav}
\eeq
on a first order jet manifold $J^1Q$ possessing the adapted
coordinates
\be
(x^\la,\si^{\al\bt},
k_\mu{}^\al{}_\bt,\si^{\al\bt}_\la,k_{\la\mu}{}^\al{}_\bt ).
\ee
The corresponding Euler -- Lagrange operator (\ref{305}) reads
\mar{999}\beq
\cE_G=(\cE_{\al\bt} d\si^{\al\bt} + \cE^\m{}_\al{}^\bt
dk_\m{}^\al{}_\bt)\w\om. \label{999}
\eeq
Its kernel defines the Euler -- Lagrange equations
\mar{998}\beq
\cE_{\al\bt}=0, \qquad \cE^\m{}_\al{}^\bt =0. \label{998}
\eeq

The fibre bundle $Q$ (\ref{grr}) is a natural bundle admitting the
functorial lift
\mar{gr3}\ben
&& \wt\tau_{K\Si}=\tau^\m\dr_\m +(\si^{\nu\bt}\dr_\nu \tau^\al
+\si^{\al\nu}\dr_\nu \tau^\bt)\frac{\dr}{\dr \si^{\al\bt}} +
\label{gr3}\\
&& \qquad (\dr_\nu \tau^\al k_\m{}^\nu{}_\bt -\dr_\bt \tau^\nu
k_\m{}^\al{}_\nu -\dr_\mu \tau^\nu k_\nu{}^\al{}_\bt
+\dr_{\m\bt}\tau^\al)\frac{\dr}{\dr k_\mu{}^\al{}_\bt} \nonumber
\een
of vector fields $\tau$ on $X$ \cite{book09,book00,sard11}. These
lifts are generators of one-dimensional groups of general
covariant transformations.

A glance at the expression (\ref{gr3}) shows that one can think of
the vector fields $\wt\tau_{K\Si}$ as being a linear first order
differential operator on a vector space of vector fields on $X$
with values in a vector space of vector fields on the fibre bundle
$Q$ (\ref{grr}), i.e.,  $\wt\tau_{K\Si}$ (\ref{gr3}) are even
gauge transformations (see Remark \ref{qwe}) with even gauge
parameter functions $\tau$. By the same reasons as in Yang --
Mills gauge theory, we however modify the definition of gauge
transformations in gauge gravitation theory in accordance with
Definition \ref{sgauge} as follows.

Let us treat the tangent bundle $TX\to X$ as an odd vector bundle,
and let $(X,\gA_{TX}$ be the corresponding simple graded manifold.
Then let us consider the composite bundle
\mar{vgg}\beq
TX\op\times_X Q\to Q\to X, \label{vgg}
\eeq
coordinated by $(x^\la,\si^{\al\bt}, k_\mu{}^\al{}_\bt,\dot
x^\nu)$, and the graded bundle $(X,Q,\gA_{TX\times_X Q})$
(\ref{olo}) modelled over this composite bundle together with the
local generating basis
\be
(x^\la,\si^{\al\bt}, k_\mu{}^\al{}_\bt,c^\nu)
\ee
whose terms $c^\nu$ are odd. Let
\mar{yyg}\beq
S_\infty^*[TX\op\times_X Q; Q] \label{yyg}
\eeq
 be the DBGA (\ref{tyt'}) together with the monomorphisms (\ref{tgv}):
\be
\cO^*_\infty Q\to S_\infty^*[TX\op\times_X Q; Q], \qquad
S^*_\infty[TX;X]\to S_\infty^*[TX\op\times_X Q; Q].
\ee

By inspection of transition functions of the principal vector
field $\wt\tau_{K\Si}$ (\ref{gr3}), one can justify the existence
of an odd contact derivation $J^\infty u_G$ of the DBGA
(\ref{yyg}) generated by a generalized vector field
\mar{gr3'}\ben
&& u_G=c^\m\dr_\m +(\si^{\nu\bt}c_\nu^\al
+\si^{\al\nu}c_\nu^\bt)\frac{\dr}{\dr \si^{\al\bt}} +
\label{gr3'}\\
&& \qquad (c_\nu^\al k_\m{}^\nu{}_\bt -c_\bt^\nu
k_\m{}^\al{}_\nu -c_\mu^\nu k_\nu{}^\al{}_\bt
+c_{\m\bt}^\al)\frac{\dr}{\dr k_\mu{}^\al{}_\bt} \nonumber
\een
on a graded bundle $(X,Q,\gA_{TX\times_X Q})$. The graded
derivation $u$ (\ref{gr3'}) obviously vanishes on a subring
\be
S^0_\infty[TX;X]\subset S_\infty^0[TX\op\times_X Q; Q].
\ee
Consequently, it is a gauge transformation of a Lagrangian system
\be
\cO^*_\infty Q\subset S_\infty^*[TX\op\times_X Q; Q]
\ee
parameterized by odd ghosts $c^\m$.

We do not specify a gravitation Lagrangian $L_G$ on the jet
manifold $J^1Q$, but assume that the generalized vector field
(\ref{gr3'}) is its exact gauge symmetry. Then the Euler --
Lagrange operator (\ref{999}) of this Lagrangian obeys irreducible
Noether identities
\be
&&-(\si^{\al\bt}_\la +2\si^{\nu\bt}_\nu\dl^\al_\la)\cE_{\al\bt}
-2\si^{\nu\bt}d_\nu\cE_{\la\bt} +(-k_{\la\m}{}^\al{}_\bt
-k_{\nu\m}{}^\nu{}_\bt\dl^\al_\la + k_{\bt\m}{}^\al{}_\la +
k_{\m\la}{}^\al{}_\bt)\cE^\m{}_\al{}^\bt +\\
&& \qquad (-k_\m{}^\nu{}_\bt\dl^\al_\la
+k_\m{}^\al{}_\la\dl^\nu_\bt
+k_\la{}^\al{}_\bt\dl^\nu_\m)d_\nu\cE^\m{}_\al{}^\bt + d_{\m\bt}
\cE^\m{}_\la{}^\bt=0
\ee
\cite{ijgmmp05,book09}.

\begin{remark} \label{httu1} \mar{httu1} By analogy with Theorem \ref{57t1},
one can show that, if the first order Lagrangian $L_G$
(\ref{grav}) does not depend on the jet coordinates
$\si^{\al\bt}_\la$ and it possesses the exact gauge symmetry
(\ref{gr3'}), it factorizes through the curvature terms
\mar{0101}\beq
\cR_{\la\m}{}^\al{}_\bt = k_{\la\m}{}^\al{}_\bt -
k_{\m\la}{}^\al{}_\bt + k_\la{}^\g{}_\bt k_\m{}^\al{}_\g
-k_\m{}^\g{}_\bt k_\la{}^\al{}_\g. \label{0101}
\eeq
\end{remark}

Taking the vertical part of the generalized vector field $u_G$
(\ref{gr3'}), we obtain the gauge operator $\bu=u_G$ (\ref{w108'})
and its nilpotent BRST prolongation (\ref{hhh}):
\be
&&\bbc=u^{\al\bt}\frac{\dr}{\dr\si^{\al\bt}} +u_\m{}^\al{}_\bt
\frac{\dr}{\dr k_\mu{}^\al{}_\bt} +u^\la \frac{\dr}{\dr
c^\la}=(\si^{\nu\bt} c_\nu^\al +\si^{\al\nu}
c_\nu^\bt-c^\la\si_\la^{\al\bt})\frac{\dr}{\dr \si^{\al\bt}}+
\\
&& \qquad (c_\nu^\al k_\m{}^\nu{}_\bt -c_\bt^\nu k_\m{}^\al{}_\nu
-c_\mu^\nu k_\nu{}^\al{}_\bt +c_{\m\bt}^\al-c^\la
k_{\la\mu}{}^\al{}_\bt)\frac{\dr}{\dr k_\mu{}^\al{}_\bt} +
c^\la_\m c^\m\frac{\dr}{\dr c^\la},
\ee
but this differs from that in \cite{gron}. Accordingly, an
original Lagrangian $L_G$ is extended to a solution of the master
equation
\be
L_E= L_G + u^{\al\bt}\ol\si_{\al\bt}\om + u_\m{}^\al{}_\bt \ol
k^\m{}_\al{}^\bt\om + u^\la \ol c_\la\om,
\ee
where $\ol\si_{\al\bt}$, $\ol k^\m{}_\al{}^\bt$ and $\ol c_\la$
are the corresponding antifields.

\begin{remark} \label{hhtu} \mar{hhtu}
The  Hilbert -- Einstein Lagrangian $L_{\rm HE}$ of General
Relativity depends only on metric variables $\si^{\al\bt}$. It is
a reduced second order Lagrangian which differs from the first
order one $L'_{\rm HE}$ in a variationally trivial term. The gauge
transformations $u_G$ (\ref{gr3'}) is a variational (but not
exact) symmetry of the first order Lagrangian $L'_{\rm HE}$, and
its vertical part
\be
u_V=(\si^{\nu\bt} c_\nu^\al +\si^{\al\nu}
c_\nu^\bt-c^\la\si_\la^{\al\bt})\frac{\dr}{\dr \si^{\al\bt}}
\ee
is so. Then the corresponding Noether identities (\ref{0657}) take
the familiar form
\be
\nabla_\m \cE^\m_\la= (d_\m + \{_\m{}^\bt{}_\la\})\cE^\m_\bt =0,
\ee
where $\cE^\m_\la= \si^{\m\al}\cE_{\al\la}$ and
\mar{07103}\beq
\{_\m{}^\bt{}_\la\}= -\frac12\si^{\bt\nu}(d_\m\si_{\nu\la} +
d_\la\si_{\m\nu} - d_\nu\si_{\m\la}) \label{07103}
\eeq
are the Christoffel symbols expressed into function $\si_{\al\bt}$
of $\si^{\m\nu}$ given by the relations
$\si^{\m\al}\si_{\al\bt}=\dl^\m_\bt$.
\end{remark}

Since the gauge symmetry $u_G$ (\ref{gr3'}) is assumed to be an
exact symmetries of a metric-affine gravitation Lagrangian, let us
study the corresponding conservation law. This is the
energy-momentum conservation laws because the gauge symmetry $u_G$
is not vertical, and the corresponding energy-momentum current
reduces to a superpotential in accordance with Theorem \ref{supp})
\cite{giacqg,sard97,sard11}.

In view of Remark \ref{httu1}, let us assume that a gravitation
Lagrangian $L_G$ is independent of the jet variables
$\si_\la{}^{\al\bt}$ of a world metric and that it factorizes
through the curvature terms $\cR_{\la\m}{}^\al{}_\bt$
(\ref{0101}). Then the following relations take place:
\mar{K300',}\ben
&&  \pi^{\la\nu}{}_\al{}^\bt= -\pi^{\nu\la}{}_\al{}^\bt, \qquad
\pi^{\la\nu}{}_\al{}^\bt=\frac{\dr \cL_G}{\dr k_{\la\nu}{}^\al{}_\bt}, \label{K300'}\\
&&\frac{\dr\cL_G}{\dr k_\nu{}^\al{}_\bt}=
\pi^{\la\nu}{}_\al{}^\si k_\la{}^\bt{}_\si
-\pi^{\la\nu}{}_\si{}^\bt k_\la{}^\si{}_\al. \label{K300}
\een

Let us follow the compact notation
\be
&& y^A=k_\m{}^\al{}_\bt, \\
&& u_\m{}^\al{}_\bt{}^{\ve\si}_\g = \dl^\ve_\m \dl^\si_\bt \dl^\al_\g, \\
&&  u_\m{}^\al{}_\bt{}^\ve_\g= k_\m{}^\ve{}_\bt \dl^\al_\g -k_\m{}^\al{}_\g
\dl^\ve_\bt - k_\g{}^\al{}_\bt \dl^\ve_\m.
\ee
Then the generalized vector field (\ref{gr3'}) takes a form
\be
u_G =c^\la\dr_\la  + (\si^{\nu\bt}c_\nu^\al
+\si^{\al\nu}c_\nu^\bt)\dr_{\al\bt}+
  (u^A{}_\al^\bt c_\bt^\al
+u^A{}_\al^{\bt\m}c_{\bt\m}^\al)\dr_A.
\ee
We also have the equalities
\be
&& \pi^\la_A u^A{}_\al^{\bt\m} =\pi^{\la\m}{}_\al{}^\bt,\\
&& \pi^\ve_A u^A{}_\al^\bt = -\dr^\ve{}_\al{}^\bt\cL_G -
\pi^{\ve\bt}{}_\si{}^\g k_\al{}^\si{}_\g.
\ee

Let a Lagrangian  $L_G$ be invariant under general covariant
transformations, i.e.,
\be
\bL_{J^1u_G}L_G=0.
\ee
Then the first variational formula (\ref{J4}) takes a form
\mar{J4a}\ben
&& 0= (\si^{\nu\bt}c_\nu^\al +\si^{\al\nu}c_\nu^\bt -c^\la\si^{\al\bt}_\la)
\dl_{\al\bt}\cL_G +
\label{J4a}\\
&& \qquad (u^A{}_\al^\bt c_\bt^\al
+u^A{}_\al^{\bt\m}c_{\bt\m}^\al - c^\la y^A_\la) \dl_A
\cL_G -\nonumber\\
&& \qquad d_\la[ \pi^\la_A(y^A_\al c^\al -u^A{}_\al^\bt c_\bt^\al
 -u^A{}_\al^{\ve\bt}c_{\ve\bt}^\al) -c^\la\cL_G].
 \nonumber
\een
The first variational formula (\ref{J4a}) on-shell leads to the
weak conservation law
\mar{K8}\ben
&& 0\ap - d_\la[ \pi^\la_A(y^A_\al c^\al -u^A{}_\al^\bt c_\bt^\al
 -u^A{}_\al^{\ve\bt}c_{\ve\bt}^\al) -c^\la\cL_G],\label{K8}
\een
where
\mar{b3190}\beq
\cJ^\la= \pi^\la_A(y^A_\al c^\al -u^A{}_\al^\bt c_\bt^\al
 -u^A{}_\al^{\ve\bt} c_{\ve\bt}^\al)-c^\la\cL_G \label{b3190}
\eeq
is the  energy-momentum current of the metric-affine gravity.

Due to the arbitrariness of gauge parameters $c^\la$, the first
variational formula (\ref{J4a}) falls into the set of equalities
(\ref{g4g}) -- (\ref{g7g}) which read
\mar{b3173d,-b}\ben
&& \pi^{(\la\ve}{}_\g{}^{\si)}=0, \label{b3173d}\\
&& (u^A{}_\g^{\ve\si}\dr_A + u^A{}_\g^\ve\dr^\si_A)\cL_G= 0, \label{b3173c}\\
&& \dl^\bt_\al\cL_G + 2\si^{\bt\m}\dl_{\al\m}\cL_G +
u^A{}_\al^\bt\dl_A\cL_G
 + d_\m(\pi^\m_A  u^A{}_\al^\bt)
-y^A_\al\pi^\bt_A  =0 \label{b3173b} \\
&& \dr_\la\cL_G=0. \nonumber
\een
It is readily observed that the equalities (\ref{b3173d}) and
(\ref{b3173c}) hold due to the relations (\ref{K300'}) and
(\ref{K300}), respectively.

Substituting the term $y^A_\al\pi^\bt_A$ from the expression
(\ref{b3173b}) in the energy-momentum conservation law (\ref{K8}),
one brings this conservation law into the form
\mar{b3174}\ben
&& 0\ap -
d_\la[2\si^{\la\m}c^\al\dl_{\al\m}\cL_G +
u^A{}_\al^\la c^\al\dl_A\cL_G - \pi^\la_Au^A{}_\al^\bt c_\bt^\al + \label{b3174}\\
&& \qquad d_\m(\pi^{\la\m}{}_\al{}^\bt)
c_\bt^\al + d_\m(\pi^\m_A  u^A{}_\al^\la)c^\al -
d_\m(\pi^{\la\m}{}_\al{}^\bt \dr_\bt c^\al)]. \nonumber
\een
After separating the variational derivatives, the energy-momentum
conservation law (\ref{b3174}) of the metric-affine gravity takes
the superpotential form
\be
&& 0\ap - d_\la [2\si^{\la\m}c^\al\dl_{\al\m}\cL_G
+\\
&& \qquad (k_\m{}^\la{}_\g\dl^\m{}_\al{}^\g\cL_G -
 k_\m{}^\si{}_\al\dl^\m{}_\si{}^\la\cL_G -
k_\al{}^\si{}_\g\dl^\la{}_\si{}^\g\cL_G)c^\al +  \\
&& \qquad \dl^\la{}_\al{}^\m\cL_G c_\m^\al
-d_\m(\dl^\m{}_\al{}^\la\cL_G)C^\al +
 d_\m(\pi^{\m\la}{}_\al{}^\nu(c_\nu^\al
-k_\si{}^\al{}_\nu c^\si))],
\ee
where the energy-momentum current on the shell (\ref{998}) reduces
to the generalized Komar superpotential
\mar{K3}\beq
U_G{}^{\m\la}= \pi^{\m\la}{}_\al{}^\nu(c_\nu^\al
-k_\si{}^\al{}_\nu c^\si) \label{K3}
\eeq
\cite{giacqg,sard97}. We can rewrite this superpotential as
\be
U_G{}^{\m\la}= 2\frac{\dr\cL_G}{\dr \cR_{\m\la}{}^\al{}_\nu}(D_\nu
c^\al + T_\nu{}^\al{}_\si c^\si),
\ee
where $D_\nu$ is the covariant derivative relative to a connection
$k_\nu{}^\al{}_\si$ and
\be
T_\nu{}^\al{}_\si= k_\nu{}^\al{}_\si - k_\si{}^\al{}_\nu
\ee
is its torsion.

\begin{example}
Let us consider a Hilbert -- Einstein Lagrangian
\be
&& L_{\rm HE}=\frac{1}{2\kp}\cR\sqrt{-\si}\om,\\
&& \cR=\si^{\la\nu}\cR_{\al\la}{}^\al{}_\nu, \qquad
\si=\det(\si_{\al\bt}),
\ee
in a metric-affine gravitation model. Then the generalized Komar
superpotential (\ref{K3}) comes to the well-known Komar
superpotential if we substitute the Levi -- Civita connection
$k_\nu{}^\al{}_\si =\{_\nu{}^\al{}_\si\}$ (\ref{07103}).
\end{example}

\subsection{Chern -- Simons topological theory}

We consider gauge theory of principal connections on a principal
bundle $P\to X$ with a structure real Lie group $G$. In contrast
with the Yang -- Mills Lagrangian $L_{\rm YM}$ (\ref{5.1}), the
Lagrangian $L_{\rm CS}$ (\ref{csl}) of Chern -- Simons topological
field theory on an odd-dimensional manifold $X$ is independent of
a world metric on $X$. Therefore, its non-trivial gauge symmetries
are wider than those of the Yang -- Mills one. However, some of
them become trivial if $\di X=3$.

Note that one usually considers a local Chern -- Simons Lagrangian
which is the local Chern -- Simons form derived from the local
transgression formula for the Chern characteristic form. A global
Chern -- Simons Lagrangian is well defined, but depends on a
background gauge potential \cite{bor07,mpl,cs}.

Let $P\to X$ be a principal bundle with a structure Lie group $G$
and $C$ the bundle of principal connections (\ref{B1}) coordinated
by $(x^\la, c^r_\m)$ (see Section 6.1). One can show \cite{book00}
that the quotient bundle $J^1P\to C$ is a principal bundle with a
structure group $G$ which is canonically isomorphic to the
pull-back
\mar{b1.251}\beq
J^1P= P_C=C\op\times_X P\to C. \label{b1.251}
\eeq
This bundle admits the canonical principal connection
\mar{266}\beq
\cA =dx^\la\ot(\dr_\la +a_\la^p e_p) + da^r_\la\ot\dr^\la_r,
\label{266}
\eeq
with the strength
\mar{267}\beq
F_\cA  = (da_\m^r\w dx^\m + \frac{1}{2} c_{pq}^r a_\la^p a_\m^q
dx^\la\w dx^\m)\ot e_r. \label{267}
\eeq

Let
\mar{80f1}\beq
I_k(\chi)=b_{r_1\ldots r_k}\chi^{r_1}\cdots \chi^{r_k}
\label{80f1}
\eeq
be a $G$-invariant polynomial of degree $k>1$ on a Lie algebra
$\cG_r$ of $G$.  With the strength $F_\cA$ (\ref{267}) of the
canonical principal connection $\cA$ (\ref{266}), one can
associate to this polynomial $I_k$ a closed $2k$-form
\mar{0757}\beq
P_{2k}(F_\cA)=b_{r_1\ldots r_k}F_\cA^{r_1}\w\cdots\w F_\cA^{r_k},
\qquad 2k\leq n, \label{0757}
\eeq
on a bundle of principal connections $C$ which is invariant under
automorphisms of $C$ induced by vertical principal automorphisms
of $P$.  Given a section $A$ of $C\to X$, the pull-back
\mar{mos11}\beq
P_{2k}(F_A)=A^*P_{2k}(F_\cA) \label{mos11}
\eeq
of \index{$P_{2k}(F_A)$} the form $P_{2k}(F_\cA)$ (\ref{0757}) is
a closed $2k$-form on $X$ where
\mar{1136b}\ben
&& F_A
=\frac12 F^r_{\la\m} dx^\la\w dx^\m\ot e_r, \nonumber \\
&& F_{\la\m}^r = [\dr_\la +A^p_\la e_p, \dr_\m +A^q_\m
e_q]^r= \dr_\la A_\m^r - \dr_\m A_\la^r + c_{pq}^rA_\la^p A_\m^q,
\label{1136b}
\een
is a strength of a principal connection $A$. One calls the
$P_{2k}(F_A)$ (\ref{mos11}) the characteristic form because of its
following properties \cite{egu,book00}.

$\bullet$ Every characteristic form $P_{2k}(F_A)$ (\ref{mos11}) is
a closed form, i.e., $dP_{2k}(F_A)=0$;

$\bullet$ The difference $P_{2k}(F_A)-P_{2k}(F_{A'})$ of
characteristic forms is an exact form, whenever $A$ and $A'$ are
different principal connections on a principal bundle $P$.

It follows that characteristic forms $P_{2k}(F_A)$ possesses the
same de Rham cohomology class $[P_{2k}(F_A)]$ for all principal
connections $A$ on $P$. The association
\be
I_k(\chi)\to [P_{2k}(F_A)]\in H^*_{\rm DR}(X)
\ee
is the well-known Weil homomorphism.

Let $I_k$ (\ref{80f1}) be a $G$-invariant polynomial of degree
$k>1$ on the Lie algebra $\cG_r$ of $G$. Let $P_{2k}(F_\cA)$
(\ref{0757}) be the corresponding closed $2k$-form on $C$ and
$P_{2k}(F_A)$ (\ref{mos11}) its pullback onto $X$ by means of a
section $A$ of $C\to X$. Let the same symbol $P_{2k}(F_A)$ stand
for its pull-back onto $C$. Since $C\to X$ is an affine bundle
and, consequently, the de Rham cohomology of $C$ equals that of
$X$, the exterior forms $P_{2k}(F_\cA)$ and $P_{2k}(F_A)$ possess
the same de Rham cohomology class
\be
[P_{2k}(F_\cA)]=[P_{2k}(F_A)]
\ee
for any principal connection $A$. Consequently,  the exterior
forms $P_{2k}(F_\cA)$ and $P_{2k}(F_A)$ on $C$ differ from each
other in an exact form
\mar{r65}\beq
P_{2k}(F_\cA)-P_{2k}(F_A)=d\gS_{2k-1}(a,A).\label{r65}
\eeq
This relation is called the transgression formula on $C$
\cite{book09}. Its pull-back by means of a section $B$ of $C\to X$
gives the transgression formula on a base $X$:
\be
P_{2k}(F_B)-P_{2k}(F_A)=d \gS_{2k-1}(B,A).
\ee

For instance, let
\be
c(F_\cA)={\rm
det}\left(\bb+\frac{i}{2\pi}F_\cA\right)=1+c_1(F_\cA)+c_2(F_\cA)+\cdots
\ee
be the total Chern form on a bundle of principal connections $C$.
Its components $c_k(F_\cA)$ are Chern characteristic forms on $C$.
If
\be
P_{2k}(F_\cA)=c_k(F_\cA)
\ee
is the characteristic Chern $2k$-form, then $\gS_{2k-1}(a,A)$
(\ref{r65}) is the Chern -- Simons $(2k-1)$-form.

In particular, one can choose a local section $A=0$. In this case,
$\gS_{2k-1}(a,0)$ is called the local Chern -- Simons form. Let
$\gS_{2k-1}(A,0)$ be its pull-back onto $X$ by means of a section
$A$ of $C\to X$. Then the Chern -- Simons form $\gS_{2k-1}(a,A)$
(\ref{r65}) admits the decomposition
\mar{r75}\beq
\gS_{2k-1}(a,A)=\gS_{2k-1}(a,0) -\gS_{2k-1}(A,0) +dK_{2k-1}.
\label{r75}
\eeq
The transgression formula (\ref{r65}) also yields the
transgression formula
\mar{0742}\ben
&& h_0(P_{2k}(F_\cA)-P_{2k}(F_A))=d_H(h_0 \gS_{2k-1}(a,A)), \nonumber\\
&& h_0 \gS_{2k-1}(a,A)=k\op\int^1_0 \cP_{2k}(t,A)dt, \label{0742}\\
&& \cP_{2k}(t,A)=b_{r_1\ldots
r_k}(a^{r_1}_{\m_1}-A^{r_1}_{\m_1})dx^{\m_1}\w
\cF^{r_2}(t,A)\w\cdots \w \cF^{r_k}(t,A),\nonumber\\
&& \qquad \cF^{r_j}(t,A)= \frac12[ ta^{r_j}_{\la_j\m_j}
+  (1-t)\dr_{\la_j}A^{r_j}_{\m_j} - ta^{r_j}_{\m_j\la_j}
-\nonumber \\
&&\qquad (1-t)\dr_{\m_j}A^{r_j}_{\la_j}+ \frac12c^{r_j}_{pq} (ta^p_{\la_j}
+(1-t)A^p_{\la_j})(ta^q_{\m_j}
+\nonumber\\
&& \qquad(1-t)A^q_{\m_j}]dx^{\la_j}\w dx^{\m_j}\ot e_r,\nonumber
\een
on $J^1C$ (where $b_{r_1\ldots r_k}$ are coefficients of the
invariant polynomial (\ref{80f1})).

If $2k-1=\di X$, the density (\ref{0742})  is the global Chern --
Simons Lagrangian
\mar{csl}\beq
L_{\rm CS}(A)=h_0\gS_{2k-1}(a,A) \label{csl}
\eeq
of Chern -- Simons topological field theory. It depends on a
background gauge field $A$. The decomposition (\ref{r75}) induces
the decomposition
\mar{0747}\beq
L_{\rm CS}(A)=h_0\gS_{2k-1}(a,0) -h_0\gS_{2k-1}(A,0) +d_H
h_0K_{2k-1}, \label{0747}
\eeq
where
\mar{csl2}\beq
L_{\rm CS}=h_0\gS_{2k-1}(a,0) \label{csl2}
\eeq
is a local Chern -- Simons Lagrangian.

For instance, if $\di X=3$, the global Chern -- Simons Lagrangian
(\ref{csl}) reads
\mar{s20}\ben
&& L_{\rm CS}(A)= \left[\frac12 h_{mn} \ve^{\al\bt\g}a^m_\al(\cF^n_{\bt\g}
-\frac13 c^n_{pq}a^p_\bt a^q_\g)\right]\om - \label{s20} \\
&& \qquad \left[\frac12 h_{mn} \ve^{\al\bt\g}A^m_\al(F_A{}^n_{\bt\g}
-\frac13 c^n_{pq}A^p_\bt A^q_\g)\right]\om -\nonumber\\
&& \qquad d_\al(h_{mn} \ve^{\al\bt\g}a^m_\bt A^n_\g)\om, \nonumber
\een
where $\ve^{\al\bt\g}$ is the skew-symmetric Levi -- Civita
tensor.

Since a density
\be
-\gS_{2k-1}(A,0) +d_Hh_0K_{2k-1}
\ee
is variationally trivial, the global Chern -- Simons Lagrangian
(\ref{csl}) possesses the same Noether identities and gauge
symmetries as the local one (\ref{csl2}). They are the following.

In contrast with a Yang -- Mills Lagrangian, the Chern -- Simons
one $L_{CS}(B)$ is independent of a world metric on $X$.
Therefore, its gauge symmetries are all $G$-invariant vector
fields on a principal bundle $P$. They are identified with
sections
\mar{0745}\beq
\xi=\tau^\la\dr_\la +\xi^r e_r, \label{0745}
\eeq
of the vector bundle
\mar{fgj}\beq
T_GP=TP/G\to X, \label{fgj}
\eeq
and yield the vector fields
\mar{0653}\beq
\up=\tau^\la\dr_\la +(-c^r_{pq}\xi^pa^q_\la +\dr_\la \xi^r
-a^r_\m\dr_\la \tau^\m)\dr^\la_r \label{0653}
\eeq
on a bundle of principal connections $C$.  Sections $\xi$
(\ref{0745}) play a role of gauge parameters.

\begin{lemma}
Vector fields (\ref{0653}) are locally variational symmetries of
the global Chern -- Simons  Lagrangian $L_{\rm CS}(A)$
(\ref{csl}).
\end{lemma}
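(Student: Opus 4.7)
My strategy is to exploit the transgression formula (\ref{r65}) together with the canonical invariance of the characteristic form $P_{2k}(F_{\cA})$. The plan is to show that the Lie derivative $\bL_{J^\infty\up}L_{\rm CS}(A)$ is $d_H$-exact locally, by computing it as $h_0$ applied to a closed graded form on $C$ and invoking Corollary \ref{34c1}.

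First I would verify that the canonical principal connection $\cA$ (\ref{266}) on $P_C$ is invariant under the natural lift to $J^1P=P_C$ of any $G$-invariant vector field $\xi$ (\ref{0745}) on $P$; this is immediate from the canonical (functorial) construction of $\cA$ out of the principal bundle structure. Consequently, the strength $F_\cA$ (\ref{267}) and the characteristic form $P_{2k}(F_\cA)$ (\ref{0757}), being built from $G$-invariant polynomials in $F_\cA$, satisfy
\[
\bL_{\wt\up}P_{2k}(F_\cA)=0
\]
on $C$, where $\wt\up$ denotes the induced lift of $\up$ (\ref{0653}) to $J^1C$ (and then to $J^\infty C$). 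Second, the pull-back $P_{2k}(F_A)$ is a closed $2k$-form on $X$ (pulled back to $C$ through the projection); since $\up$ is projectable onto $\tau^\la\dr_\la$ on $X$, Cartan's formula and closedness give
\[
\bL_\up P_{2k}(F_A)=d(\tau\rfloor P_{2k}(F_A)),
\]
i.e.\ an exact form. Taking the Lie derivative of the transgression formula (\ref{r65}) and combining these two facts yields
\[
d\bL_{\wt\up}\gS_{2k-1}(a,A)=-d(\tau\rfloor P_{2k}(F_A)),
\]
so $\bL_{\wt\up}\gS_{2k-1}(a,A)+\tau\rfloor P_{2k}(F_A)$ is a closed graded exterior form on $J^\infty C$.

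Third, I would apply $h_0$ and use the fact that $h_0$ intertwines the graded Chevalley--Eilenberg differential $d$ with $d_H$ on horizontal forms modulo contact terms: in particular, any closed graded form $\f\in\cS^*_\infty[C]$ decomposes as $\f=\si+d\xi$ with $\si$ closed on $C$ (Corollary \ref{34c1}), and since $C\to X$ is an affine bundle over $X$, the de Rham cohomology of $C$ reduces to that of $X$, so locally one may take $\si=0$. Hence $h_0\bigl(\bL_{\wt\up}\gS_{2k-1}(a,A)+\tau\rfloor P_{2k}(F_A)\bigr)=d_H\eta$ locally. Because $\tau\rfloor P_{2k}(F_A)$ is a horizontal form on $X$ and $\tau$ is a vector field on $X$, it is itself $d$-exact on the top-form $P_{2k}(F_A)\w(\text{something})$ in the total differential sense; more precisely, as $2k-1=n=\di X$, the form $\tau\rfloor P_{2k}(F_A)$ is an $(n-1)$-form on $X$, so pulling everything back we obtain $\bL_{\wt\up}L_{\rm CS}(A)=d_H\si$ locally, which is exactly the variational symmetry condition (\ref{35f1}).

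The main obstacle I foresee is not the geometric invariance of $P_{2k}(F_\cA)$ (which is essentially tautological from the canonical nature of $\cA$ and the $G$-invariance of $I_k$), but the careful bookkeeping that converts ``closed graded form modulo $d$'' into ``$d_H$-exact after applying $h_0$''. One must check that the contact part of $\bL_{\wt\up}\gS_{2k-1}(a,A)$ contributes only to terms that are manifestly $d_H$-exact, which is why the statement is formulated as a \emph{local} variational symmetry: the de Rham class $[P_{2k}(F_A)]\in H^{2k}_{\rm DR}(X)$ may be non-trivial, obstructing a global primitive and reflecting the well-known topological (``anomaly'') obstruction in Chern--Simons theory. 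Restricting to a contractible coordinate chart removes this obstruction and yields the desired $d_H$-exact $\si$.
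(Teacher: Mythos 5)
Your argument follows essentially the same route as the paper's proof: Lie-differentiate the transgression formula (\ref{r65}), use the invariance of $P_{2k}(F_\cA)$ under the lift of $G$-invariant vector fields to conclude that $\bL_{J^1\up}\gS_{2k-1}(a,A)$ is closed and hence locally $d$-exact, then pass through $h_0$ (commuting it with the Lie derivative modulo $d_H$-exact terms, which the paper likewise only asserts as a direct computation) to obtain local $d_H$-exactness of $\bL_{J^1\up}L_{\rm CS}(A)$. The only points worth noting are that, since $2k-1=\di X$, the pulled-back $2k$-form $P_{2k}(F_A)$ vanishes identically, so your extra term $\tau\rfloor P_{2k}(F_A)$ is zero (it would in any case be a $(2k-1)$-form, not an $(n-1)$-form) and the transgression formula reduces to $P_{2k}(F_\cA)=d\gS_{2k-1}(a,A)$ as the paper observes; consequently the obstruction to a global primitive is not the class $[P_{2k}(F_A)]\in H^{2k}_{\rm DR}(X)=0$ but the class of the closed form $\bL_{J^1\up}\gS_{2k-1}(a,A)$ in $H^{n}_{\rm DR}(X)$.
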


\begin{proof} Since $\di X=2k-1$, the transgression formula
(\ref{r65}) takes a form
\be
P_{2k}(F_\cA)=d \gS_{2k-1}(a,A).
\ee
The Lie derivative $\bL_{J^1\up}$ acting on its sides results in
the equality
\be
0=d(\up\rfloor d \gS_{2k-1}(a,A))=d(\bL_{J^1\up}\gS_{2k-1}(a,A)),
\ee
i.e., the Lie derivative $\bL_{J^1\up}\gS_{2k-1}(a,A)$ is locally
$d$-exact. Consequently, the horizontal form
$h_0\bL_{J^1\up}\gS_{2k-1}(a,A)$ is locally $d_H$-exact. A direct
computation shows that
\be
h_0\bL_{J^1\up}\gS_{2k-1}(a,A)= \bL_{J^1\up}(h_0\gS_{2k-1}(a,A))
+d_H S.
\ee
It follows that the Lie derivative $\bL_{J^1\up} L_{\rm CS}(A)$ of
the global Chern -- Simons Lagrangian along any vector field $\up$
(\ref{0653}) is locally $d_H$-exact, i.e., this vector field is
locally a variational symmetry of $L_{\rm CS}(A)$.
\end{proof}

By virtue of item (iii) of Lemma \ref{35l10}, a vertical part
\mar{0785}\beq
\up_V=(-c^r_{pq}\xi^pa^q_\la +\dr_\la \xi^r -a^r_\m\dr_\la \tau^\m
-\tau^\m a^r_{\m\la} )\dr^\la_r \label{0785}
\eeq
of the vector field $\up$ (\ref{0653}) also is locally a
variational symmetry of $L_{\rm CS}(A)$.

Given the fibre bundle $T_GP\to X$ (\ref{fgj}), let the same
symbol also stand for the pull-back of $T_GP$ onto $C$. Let us
consider the DBGA (\ref{41f5}):
\be
\cP^*_\infty[T_GP;C]=\cS^*_\infty[T_GP;C],
\ee
possessing the local generating basis $(a^r_\la, c^\la, c^r)$ of
even fields $a^r_\la$ and odd ghosts $c^\la$, $c^r$. Substituting
these ghosts for gauge parameters in the vector field $\up$
(\ref{0785}), we obtain the odd vertical graded derivation
\mar{0781}\beq
u=(-c^r_{pq}c^pa^q_\la + c^r_\la -c^\m_\la a^r_\m -c^\m
a_{\m\la}^r)\dr^\la_r \label{0781}
\eeq
of the DBGA $\cP^*_\infty[T_GP;C]$. This graded derivation as like
as vector fields $\up_V$ (\ref{0785}) is locally a variational
symmetry of the global Chern -- Simons Lagrangian $L_{\rm CS}(A)$
(\ref{csl}), i.e., the odd density $\bL_{J^1u}(L_{\rm CS}(A))$ is
locally $d_H$-exact. Hence, it is $\dl$-closed and, consequently,
$d_H$-exact in accordance with Corollary \ref{34c5}. Thus, the
graded derivation $u$ (\ref{0781}) is a variational symmetry and,
consequently, a gauge symmetry of the global Chern -- Simons
Lagrangian $L_{\rm CS}(A)$.

By virtue of the formulas (\ref{0656}) -- (\ref{0657}), the
corresponding Noether identities read
\mar{s15,'}\ben
&& \ol\dl\Delta_j= -c^r_{ji}a^i_\la\cE_r^\la -
 d_\la\cE_j^\la=0,\label{s15}\\
&& \ol\dl\Delta_\m=-
 a^r_{\m\la}\cE^\la_r +d_\la(a^r_\m\cE^\la_r)=0. \label{s15'}
\een
They are irreducible and non-trivial, unless $\di X=3$. Therefore,
the gauge operator (\ref{w108'}) is $\bu=u$. It admits the
nilpotent BRST extension (\ref{hhh}) which takes a form
\mar{ggt}\beq
\bbc=(-c^r_{ji}c^ja^i_\la + c^r_\la -c^\m_\la a^r_\m -c^\m
a_{\m\la}^r)\frac{\dr}{\dr a_\la^r} - \frac12
c^r_{ij}c^ic^j\frac{\dr}{\dr c^r} +c^\la_\m c^\m\frac{\dr}{\dr
c^\la}.\label{ggt}
\eeq

In order to include antifields $(\ol a^\la_r, \ol c_r, \ol c_\m)$,
let us enlarge the DBGA $\cP^*_\infty[T_GP;C]$ to the DBGA
\be
\cP^*_\infty\{0\}=\cS^*_\infty[\ol{VC}\op\oplus_C
T_GP;C\op\times_X \ol{T_GP}]
\ee
where $\ol{VC}$ is the density dual (\ref{vvc}) of the vertical
tangent bundle $VC$ of $C\to X$ and $\ol{T_GP}$ is the density
dual of $T_GP\to X$ (cf. (\ref{vvc1})). By virtue of Theorem
\ref{w130}, given the BRST operator $\bbc$ (\ref{ggt}), the global
Chern -- Simons Lagrangian $L_{\rm CS}(A)$ (\ref{csl}) is extended
to the proper solution (\ref{w133}) of the master equation which
reads
\be
 L_E=L_{\rm CS}(A)+ (-c^r_{pq}c^pa^q_\la + c^r_\la -c^\m_\la
a^r_\m -c^\m a_{\m\la}^r)\ol a^\la_r\om - \frac12
c^r_{ij}c^ic^j\ol c_r\om + c^\la_\m c^\m\ol c_\la\om.
\ee

If $\di X=3$, the global Chern -- Simons Lagrangian takes the form
(\ref{s20}). Its Euler -- Lagrange operator is
\be
\dl L_{\rm CS}(B)=\cE^\la_r \thh^r_\la\w \om, \qquad
\cE^\la_r=h_{rp} \ve^{\la\bt\g}\cF^p_{\bt\g}.
\ee
A glance at the Noether identities (\ref{s15}) -- (\ref{s15'})
shows that they are equivalent to the Noether identities
\mar{s16,a}\ben
&& \ol\dl\Delta_j=-c^r_{ji}a^i_\la\cE_r^\la -
d_\la\cE_j^\la=0,\label{s16}\\
&& \ol\dl\Delta'_\m= \ol\dl\Delta_\m
+a^r_\m\ol\dl\Delta_r=c^\m\cF^r_{\la\m}\cE^\la_r=0. \label{s16a}
\een
These Noether identities define the gauge symmetry $u$
(\ref{0781}) written in the form
\mar{s18}\beq
u=(-c^r_{pq}c'^pa^q_\la + c'^r_\la +c^\m\cF^r_{\la\m})\dr^\la_r
\label{s18}
\eeq
where $c'^r=c^r-a^r_\m c^\m$. It is readily observed that, if $\di
X=3$, the Noether identities $\ol\dl\Delta'_\m$ (\ref{s16a}) are
trivial. Then the corresponding part $c^\m\cF^r_{\la\m}\dr^\la_r$
of the gauge symmetry $u$ (\ref{s18}) also is trivial.
Consequently, the non-trivial gauge symmetry of the Chern --
Simons Lagrangian (\ref{s20}) is
\be
u=(-c^r_{pq}c'^pa^q_\la + c'^r_\la)\dr^\la_r.
\ee

\subsection{Topological BF theory}

We address the topological BF theory of two exterior forms $A$ and
$B$ of form degree $|A|+|B|=\di X-1$ on a smooth manifold $X$
\cite{birm,book09}. It is reducible degenerate Lagrangian theory
which satisfies the homology regularity condition (Condition
\ref{v155}) \cite{jmp05a}. Its dynamic variables $A$ and $B$ are
sections of a fibre bundle
\be
Y=\op\w^pT^*X\oplus \op\w^qT^*X, \qquad p+q=n-1>1,
\ee
coordinated by $(x^\la, A_{\m_1\ldots\m_p},B_{\nu_1\ldots\nu_q})$.
Without a loss of generality, let $q$ be even and $q\geq p$. The
corresponding differential graded algebra is $\cO^*_\infty Y$
(\ref{ppp}).

There are the canonical $p$- and $q$-forms
\be
&& A=A_{\m_1\ldots\m_p}dx^{\m_1}\w\cdots\w
dx^{\m_p},\\
&& B=B_{\nu_1\ldots\nu_q}dx^{\nu_1}\w\cdots\w
dx^{\nu_q}
\ee
on $Y$. A Lagrangian of topological BF theory reads
\mar{v182}\beq
L_{\rm BF}=A\w d_HB= \e^{\m_1\ldots\m_n}A_{\m_1\ldots\m_p}
d_{\m_{p+1}}B_{\mu_{p+2}\ldots\m_n}\om, \label{v182}
\eeq
where $\e$ is the Levi -- Civita symbol. It is a reduced first
order Lagrangian. Its first order Euler -- Lagrange operator
(\ref{305}) is
\mar{v183,a,b}\ben
&&\dl L= \cE_A^{\m_1\ldots\m_p}dA_{\m_1\ldots\m_p}\w\om + \cE_B^{\nu_{p+2}\ldots
\nu_n} dB_{\nu_{p+2}\ldots \nu_n}\w \om, \label{v183'}\\
&& \cE_A^{\m_1\ldots\m_p}=\e^{\m_1\ldots \m_n} d_{\m_{p+1}}B_{\mu_{p+2}\ldots\m_n},\label{v183a}\\
&&  \cE_B^{\m_{p+2}\ldots
\m_n} = - \e^{\m_1\ldots \m_n} d_{\m_{p+1}}
A_{\m_1\ldots\m_p}.\label{v183b}
\een
The corresponding Euler -- Lagrange equations can be written in a
form
\mar{wrt1}\beq
d_HB=0, \qquad d_HA=0.\label{wrt1}
\eeq
They obey the Noether identities
\mar{wrt3}\beq
d_Hd_HB=0, \qquad d_Hd_HA=0. \label{wrt3}
\eeq

One can regard the components $\cE_A^{\m_1\ldots\m_p}$
(\ref{v183a}) and $\cE_B^{\m_{p+2}\ldots \m_n}$ (\ref{v183b}) of
the Euler -- Lagrange operator (\ref{v183'}) as a $(\op\w^p
TX)\op\ot_X(\op\w^n T^*X)$-valued differential operator on the
fibre bundle $\op\w^q T^*X$ and a $(\op\w^q TX)\op\ot_X(\op\w^n
T^*X)$-valued differential operator on the fibre bundle $\op\w^p
T^*X$, respectively. They are of the same type as the
$\op\w^{n-1}TX$-valued differential operator (\ref{v183}) in
Example \ref{46e1} (cf. the equations (\ref{wrt1}) and
(\ref{iio1})). Therefore, the analysis of Noether identities of
the differential operators (\ref{v183a}) and (\ref{v183b}) is a
repetition of that of Noether identities of the operator
(\ref{v183}) (cf. the Noether identities (\ref{wrt3}) and
(\ref{iio2})).

Following Example \ref{46e1}, let us consider the family of vector
bundles
\be
&& E_k=\op\w^{p-k-1}T^*X\op\times_X \op\w^{q-k-1}T^*X, \qquad 0\leq
k< p-1, \\
&& E_k={\mathbb R} \op\times_X
\op\w^{q-p}T^*X, \qquad k=p-1, \\
&& E_k=\op\w^{q-k-1}T^*X, \quad p-1<k<q-1, \\
&& E_{q-1}=X\times \mathbb R.
\ee
Let us enlarge the differential graded algebra $\cO^*_\infty Y$ to
the BGDA $\cP_\infty^*\{q-1\}$ (\ref{w6}) which is
\mar{jnm}\beq
\cP_\infty^*\{q-1\}=\cP^*_\infty[\ol{VY}\op\oplus_Y
E_0\oplus\cdots \op\oplus_Y E_{q-1} \op\oplus_Y \ol
E_0\op\oplus_Y\cdots\op\oplus_Y \ol E_{q-1};Y]. \label{jnm}
\eeq
It possesses the local generating basis
\be
&& \{A_{\m_1\ldots\m_p}, B_{\nu_1\ldots\nu_q},
\ve_{\m_2\ldots\m_p},\ldots,\ve_{\m_p},\ve,\xi_{\nu_2\ldots\nu_q},
\ldots, \xi_{\nu_q},\xi,\\
&&\qquad \ol A^{\m_1\ldots\m_p}, \ol B^{\nu_1\ldots\nu_q},
\ol\ve^{\m_2\ldots\m_p}, \ldots,\ol\ve^{\m_p}, \ol \ve, \ol
\xi^{\nu_2\ldots\nu_q}, \ldots, \ol \xi^{\nu_q},\ol \xi\}
\ee
of Grassmann parity
\be
&& [\ve_{\m_k\ldots\m_p}]=[\xi_{\nu_k\ldots\nu_q}]=(k+1){\rm
mod}\,2, \qquad [\ve]=p\,{\rm mod}\,2, \qquad [\xi]=0,\\
&& [\ol\ve^{\m_k\ldots\m_p}]=[\ol\xi^{\nu_k\ldots\nu_q}]= k\,{\rm
mod}\,2, \qquad [\ol\ve]=(p+1){\rm mod}\,2, \qquad [\ol\xi]=1,
\ee
of ghost number
\be
{\rm gh}[\ve_{\m_k\ldots\m_p}]={\rm gh}[\xi_{\nu_k\ldots\nu_q}]=k,
\qquad {\rm gh}[\ve]=p+1, \qquad {\rm gh}[\xi]=q+1,
\ee
and of antifield number
\be
&& {\rm Ant}[\ol A^{\m_1\ldots\m_p}]={\rm Ant}[\ol
B^{\nu_{p+1}\ldots\nu_q}]=1, \\
&&  {\rm
Ant}[\ol\ve^{\m_k\ldots\m_p}]={\rm Ant}[\ol\xi^{\nu_k\ldots\nu_q}]=k+1,\\
&& {\rm Ant}[\ol\ve]=p, \qquad {\rm Ant}[\ol\ve]=q.
\ee

One can show that the homology regularity condition holds (see
Lemma \ref{v220}) and that the DBGA $\cP_\infty^*\{q-1\}$ is
endowed with the Koszul -- Tate operator
\mar{va202''}\ben
&& \dl_{\rm KT}= \frac{\rdr}{\dr \ol A^{\m_1\ldots \m_p}} \cE_A^{\m_1\ldots
\m_p} + \frac{\rdr}{\dr \ol B^{\nu_1\ldots \nu_q}}
\cE_B^{\nu_1\ldots \nu_q} + \label{va202'}\\
&& \qquad \op\sum_{2\leq k\leq p}
\frac{\rdr}{\dr \ol\ve^{\m_k\ldots \m_p}} \Delta_A^{\m_k\ldots
\m_p}+ \frac{\rdr}{\dr \ol\ve} d_{\m_p}\ol\ve^{\m_p}+\\
&& \qquad
\op\sum_{2\leq k\leq q} \frac{\rdr}{\dr\ol\xi^{\nu_k\ldots \nu_q}}
 \Delta_B^{\nu_k\ldots \nu_q}+ \frac{\rdr}{\dr \ol\xi} d_{\nu_q}\ol\xi^{\nu_q},\nonumber \\
&& \Delta_A^{\m_2\ldots \m_p}=d_{\m_1}
\ol A^{\m_1\ldots \m_p}, \qquad \Delta_A^{\m_{k+1}\ldots
\m_p}=d_{\m_k} \ol\ve^{\m_k\m_{k+1}\ldots \m_p},\qquad 2\leq k< p, \nonumber\\
&& \Delta_B^{\nu_2\ldots \nu_q}=d_{\nu_1}
\ol B^{\nu_1\ldots \nu_q}, \quad \Delta_B^{\nu_{k+1}\ldots
\nu_q}=d_{\nu_k} \ol\xi^{\nu_k\nu_{k+1}\ldots \nu_q},\quad 2\leq k
<q.\nonumber
\een
Its nilpotentness provides the complete Noether identities
(\ref{wrt1}):
\be
d_{\m_1}\cE_A^{\m_1\ldots \m_p}=0, \qquad
d_{\nu_1}\cE_B^{\nu_1\ldots \nu_q}=0,
\ee
and the $(k-1)$-stage ones
\be
&& d_{\m_k}\Delta_A^{\m_k\ldots \m_p}=0, \qquad
k=2,\ldots,p,\\
&& d_{\nu_k}\Delta_B^{\nu_k\ldots \nu_q}=0, \qquad
k=2,\ldots,q,
\ee
(cf. (\ref{v212})). It follows that the topological BF theory is
$(q-1)$-reducible.

Applying inverse second Noether Theorem \ref{w35}, one obtains the
gauge operator (\ref{w108'}) which reads
\mar{0a0}\ben
&& \bu= d_{\m_1}\ve_{\m_2\ldots\m_p}\frac{\dr}{\dr
A_{\m_1\m_2\ldots\m_p}} +
d_{\nu_1}\xi_{\nu_2\ldots\nu_q}\frac{\dr}{\dr
B_{\nu_1\nu_2\ldots\nu_q}}+ \label{0a0}\\
&& \qquad \left[d_{\m_2}\ve_{\m_3\ldots\m_p}\frac{\dr}{\dr
\ve_{\m_2\m_3\ldots\m_p}}+\cdots  +d_{\m_p}\ve\frac{\dr}{\dr
\ve_{\m_p}}\right]+ \nonumber\\
&&\qquad \left[d_{\nu_2}\xi_{\nu_3\ldots\nu_q} \frac{\dr}{\dr
\xi_{\nu_2\nu_3\ldots\nu_q}}+\cdots + d_{\nu_q}\xi\frac{\dr}{\dr
\xi_{\nu_q}}\right].\nonumber
\een
In particular,  the gauge symmetry of the Lagrangian $L_{\rm BF}$
(\ref{v182}) is
\be
u= d_{\m_1}\ve_{\m_2\ldots\m_p}\frac{\dr}{\dr
A_{\m_1\m_2\ldots\m_p}} +
d_{\nu_1}\xi_{\nu_2\ldots\nu_q}\frac{\dr}{\dr
B_{\nu_1\nu_2\ldots\nu_q}}.
\ee
This gauge symmetry is abelian. It also is readily observed that
higher-stage gauge symmetries are independent of original fields.
Consequently, topological BF theory is abelian, and its gauge
operator $\bu$ (\ref{0a0}) is nilpotent. Thus, it is the BRST
operator $\bbc=\bu$. As a result, the Lagrangian $L_{\rm BF}$ is
extended to the proper solution of the master equation $L_E=L_e$
(\ref{w8}) which reads
\be
&& L_e=L_{\rm BF} + \ve_{\m_2\ldots\m_p}d_{\m_1}\ol
A^{\m_1\ldots\m_p} + \op\sum_{1<k<p}\ve_{\m_{k+1}\ldots\m_p}
d_{\m_k}\ol \ve^{\m_k\ldots\m_p}+\ve d_{\m_p}\ol\ve^{\m_p}+\\
&& \qquad \xi_{\nu_2\ldots\nu_q}d_{\nu_1}\ol
B^{\nu_1\ldots\nu_q} + \op\sum_{1<k<q}\xi_{\nu_{k+1}\ldots\nu_q}
d_{\nu_k}\ol \xi^{\nu_k\ldots\m_q}+\xi d_{\nu_q}\ol\xi^{\nu_q}.
\ee

\end{document}